\documentclass[11pt]{article}
\usepackage[utf8]{inputenc}
\usepackage{mathtools,amsmath,amssymb}
\usepackage{graphicx}
\usepackage{lmodern}
\usepackage[T1]{fontenc}
\usepackage{tikz}
\usepackage[pagebackref=false]{hyperref}
\usepackage{bbm}
\usepackage{bm}
\usepackage{appendix}
\usepackage{comment}
\usepackage{mathtools}
\usepackage{array}
\usepackage{amsthm}
\usepackage[mathscr]{euscript}
\usepackage[left=2cm,right=2cm,top=2cm,bottom=2cm]{geometry}
\usepackage{ulem}

\numberwithin{equation}{section}

\newtheorem{theorem}{Theorem}[section]
\newtheorem{proposition}[theorem]{Proposition}
\newtheorem{lemma}[theorem]{Lemma}
\newtheorem{remark}[theorem]{Remark}
\newtheorem{corollary}[theorem]{Corollary}
\newtheorem{definition}[theorem]{Definition}

\newcommand{\oa}{\mathbf{a}}
\newcommand{\oad}{\mathbf{\bar{a}}}
\newcommand{\of}{\mathbf{a}}
\newcommand{\ofd}{\mathbf{\bar{a}}}

\newcommand{\OA}{\mathbf{A}}
\newcommand{\OAD}{\mathbf{\bar{A}}}

\newcommand{\OF}{\mathbf{A}}
\newcommand{\OFD}{\mathbf{\bar{A}}}

\newcommand{\mmu}{\mu}

\newcommand{\M}{M}
\newcommand{\N}{N}

\newcommand{\OZ}{\bm{\partial}_{\theta}}
\newcommand{\OZD}{\bm{\Theta}}

\newcommand{\OT}{\bm{\partial}_z}
\newcommand{\OTD}{\bm{Z}}

\newcommand{\hmH}{\mathcal{H}}

\newcommand{\ID}{\mathrm{I}}

\newcommand{\eR}{\mathcal{R}}

\newcommand{\abrOSCR}{\mathrm{R}}
\newcommand{\ecR}{\bm{\mathscr{R}}}
\newcommand{\ecdR}{\mathtt{R}}
\newcommand{\fundR}{R}
\newcommand{\eH}{\mathcal{H}}

\newcommand{\abrOSCH}{\mathrm{H}}
\newcommand{\ecH}{\bm{\mathscr{H}}}
\newcommand{\ecdH}{\mathtt{H}}
\newcommand{\eL}{\mathcal{L}}

\newcommand{\ecL}{\bm{\mathscr{L}}}
\newcommand{\ecdL}{\mathtt{L}}
\newcommand{\Khat}{\mathcal{K}^{0}}
\newcommand{\K}{\widetilde{\mathcal{K}}^{0}}
\newcommand{\KhatFull}{\mathcal{K}}
\newcommand{\Kfull}{\widetilde{\mathcal{K}}}
\newcommand{\Khatfund}{K^{0}}
\newcommand{\Kfund}{\widetilde{K}^{0}}
\newcommand{\KhatFundFull}{K}
\newcommand{\KFundFull}{\widetilde{K}}
\newcommand{\eJ}{J}

\newcommand{\hidden}{\theta}
\newcommand{\heat}{z}
\newcommand{\Hidden}{\Theta}
\newcommand{\Heat}{Z}
 \DeclareMathOperator{\spa}{span}

\newcommand{\rep}{\tau}
\newcommand{\dualrep}{\overline{\tau}}

\DeclareMathOperator{\tr}{\text{tr}}

\usepackage{xspace}

\usepackage{graphicx}

\usepackage{subcaption} 

\allowdisplaybreaks
\usepackage{tikz}
\usetikzlibrary{decorations.pathreplacing}
\title{(Integrability of the multispecies harmonic process}
\author{Francesco Casini, Rouven Frassek, Cristian Giardinà}

\begin{document}

\begingroup
\begin{center}

\vspace{5em}
 \begingroup\LARGE
 \bf The boundary-driven multispecies harmonic process
\par\endgroup
 \vspace{3.5em}
 \begingroup\large \bf
Francesco Casini$^{\,a}$, Rouven Frassek$^{\,b}$, Cristian Giardinà$^{\,b}$
 \par\endgroup
\vspace{2em}
\begin{center}
$^{\,a}$ {\normalsize Department of Physics and Astronomy, KU Leuven, Belgium.}
\\[0.1in]
$^{\,b}$ University of Modena and Reggio Emilia, FIM,\\
Via G. Campi 213/b, 41125 Modena, Italy
\\[0.6in]
 \end{center}

\vspace{-2em}

\end{center}
 \begin{abstract}
We introduce the multispecies version of the harmonic process on a one-dimensional chain in contact with boundary reservoirs. This process is a continuous-time Markov chain where each site can host an unbounded number of colored particles. The symmetric bulk dynamics is put in contact with reservoirs, which inject and remove particles driving the system out-of-equilibrium. The Markov generator of the process is identified with the integrable Hamiltonian of an open rational Heisenberg spin chain of higher rank. We derive the underlying R- and K-matrices in operator form and construct the double-row transfer matrix following Sklyanin. Similar to the monospecies case the R-matrix factorises into two R-operators, each factor corresponding to left and right moving particles.
We further define three dual models: an absorbing particle model, a hidden parameter model and a heat conduction model.

\end{abstract}

\addtocontents{toc}{\protect\setcounter{tocdepth}{3}}
\tableofcontents

\section{Introduction and main results} 

\subsection{Integrable stochastic processes and duality}
The Simple Symmetric Exclusion Process (SSEP) plays a central role in modern probability theory and statistical physics \cite{SPITZER1970246,liggett1985interacting}. Its underlying integrable structure, originally unveiled for the asymmetric version ASEP \cite{Gwa-Spohn}, allows to compute microscopic properties exactly, even when the process is driven out of equilibrium by boundary reservoirs. 
For example this led to the stationary measure, density profiles, large deviations and transition probabilities, see e.g. \cite{schutz2000exactly,Derrida,blythe2007nonequilibrium,Mallick} for reviews. It was soon noticed that for a certain choice of transition rates, exact solvability extends also to multispecies generalisations of the  exclusion processes \cite{Alcaraz:1992zc,2009JPhA...42p5004P,crampe2016matrix,crampe2016integrable,vanicat2017exact,casini2024duality}.

In the exclusion process each site can be void or occupied by at most one particle.
Integrable particle processes that also allow for higher-occupancy were formalized in \cite{borodin2016stochastic,2016arXiv160105770B,2016CMaPh.343..651C}  as the stochastic higher-spin six-vertex model. For the case with asymmetry, integrability and duality have been exploited in the
study of scaling limits of these  probabilistic systems, which have been shown to belong to the Kardar-Parisi-Zhang (KPZ) universality class. 
Lattice weights of this type were first constructed by Povolotsky \cite{povolotsky2013integrability} and considered as a model with jumps in both directions in \cite{barraquand2016q}. The process is now referred to as the  q-Hahn process, see also  \cite{8203134}. Here we also point out the work of Bytsko \cite{Bytsko:2001uh} who gave a concise expression of the R-matrix that contains the non-stochastic lattice weights based on the representation worked out for the rational case \cite{Faddeev:1996iy}. The non-stochastic lattice weights were  worked out explicitly by Mangazeev \cite{Mangazeev:2014gwa}. It is by now well understood that those lattice weights can be rendered stochastic by a suitable choice of normalisation and an additional diagonal Drinfeld twist like in the ASEP \cite{deGier:2005zz}, see also \cite{Kuniba:2016fpi} for the discussion of higher spin. For negative spin values one can go to continuous time, and extract the Markov generator from the lattice weights. 
For a particular spin value, the resulting process reduces  to the Multiparticle Asymmetric Diffusion Model (MADM) that was proposed  earlier in \cite{sasamoto1998one}. The multispecies generalisation also known as the coloured stochastic vertex model has been obtained in \cite{Kuniba:2016fpi,2018arXiv180801866B}.  In analogy to the case of ASEP/SSEP the rates of the higher spin stochastic vertex model with negative spin are related to integrable Hamiltonians of non-compact spin chains \cite{Frassek:2019vjt,Frassek:2019isa} that allows to construct commuting transfer matrices following the Quantum Inverse Scattering Method \cite{Faddeev:1996iy} where the stochastic Hamiltonian (Markov generator) of the continuous-time process arises as the logarithmic derivative of the (stochastic) R-matrix. This implementation has been crucial to equip the non-compact continuous-time models as above with integrable boundary reservoirs, which is the main focus of this paper.

In the case of boundary-driven systems, the construction of the stochastic Hamiltonian is based on the work of Sklyanin \cite{Sklyanin:1988yz} who provided a generalisation of the Quantum Inverse Scattering Method (QISM) to 1d quantum integrable systems with boundaries. Combined with stochasticity of the R-matrix and K-matrix it provides a systematic construction of integrable stochastic particle processes on a line with boundary reservoirs in continuous time, see~\cite{crampe2014integrable}. As discussed in detail below, for the non-compact vertex models discussed previously, the classification of stochastic integrable boundaries in terms of K-matrices is still an open problem and only partial results are available \cite{2014JPhA...47C5202L,2019NuPhB.94514665M,2024JPhA...57x5201K,Frassek:2019vjt,Frassek:2022fjs}.

The  analysis of integrable stochastic particle systems often relies on Markov duality relations \cite{liggett1985interacting, schutz1997duality, demasi2006mathematical, dualityBook}. Stochastic duality connects two Markov processes through a duality function,
which serves as ``gate''
to transfer information 
from one process to another.
This allows to express observables of the original system, such as correlation functions, as expectations of a  simpler dual process, often involving finitely many particles. As a result, duality yields concise and exact formulas for the expectations of several relevant observables. A standard example is the computation of the $n$-th moment of an observable, such as the current, or of an $n$-point correlation function in terms of $n$ particles evolving in the dual system. The latter dynamics can then be solved by exploiting integrability  \cite{borodin2014duality}. For boundary-driven systems, the construction of the dual processes requires enlarging the state space by introducing additional sites \cite{Spohn_1983,kipnis1982heat}, see also \cite{2023JPhA...56A4001S}.
Lie-algebraic methods provide a powerful tool for deriving new duality relations, see  \cite{dualityBook} for an introduction, and the existence of a dual process is closely related to the presence of symmetries in the original dynamics.
The formulation of a stochastic process within the algebraic framework of the QISM makes the underlying symmetries apparent and simplifies the search for dualities.

\subsection{The boundary-driven multispecies harmonic process}

In this paper we introduce 
the boundary-driven multispecies version of the harmonic process. The monospecies case has been shown to be integrable in \cite{Frassek:2019vjt}.  
Here we describe the process
in words; the mathematical definition will be given in Section \ref{sub-harmonic}.

We first recall the monospecies case where the bulk dynamics is as follows. If a given site is occupied with $m$ particles then $k$ of them jump symmetrically to its neighbor sites with a rate proportional to an improper Beta-Binomial random variable with parameter $(m;0,2s)$. This means that $k$ particle are selected at random with an $m$ binomial sampling in which the success probability is $P$,
an improper Beta$(0,2s)$. For instance, in the simplest case of spin $s=1/2$, one moves $k$ particles (out of available $m$ ones) at rate $1/k$.  

In the multispecies case one has $M$ possible species at each site.
If a given site is occupied with $m_i$ particles of type $i$, where $i=1,\ldots,M$, then the transitions to a new configuration are dictated by a vector $(k_1,k_2,\ldots,k_M)$ 
which specifies that $k_i$ particles of type $i$ are moved to neighboring sites. A natural choice for
the rates of the multispecies
process would be to take them proportional to an improper
Dirichlet-Multinomial random vector with parameter $(m_1,m_2,\ldots,m_M;0,2s,\ldots,2s)$. This would mean that the vector $(k_1,k_2,\ldots,k_M)$ is selected at random with a $(m_1,m_2,\ldots,m_M)$ multinomial sampling in which the success probability is the vector $(P_1,P_2,\ldots,P_M)$
having an improper Dirichlet $(0,2s,\ldots,2s)$ distribution. However, with this choice of the rates, integrability of the process would be lost.

The choice for the rates which preserve integrability of the harmonic process 
is the one in which 
the vector $(k_1,k_2,\ldots,k_M)$
is obtained by performing 
$M$ binomial samplings,
the first from a sample of size $m_1$, the second from a sample of size $m_2$, etc.. These sampling are however not independent since they all use the same success probability $P$, provided by
an improper Beta$(0,2s)$
random variable as in the monospecies case.
As we shall prove, this choice for the bulk rates follows from  the solution of the Yang-Baxter equation.
This choice of transition rates in the bulk coincides  with the $q\to1$ limit of the integrable rates studied in \cite{Kuniba:2016fpi,2018arXiv180801866B}.
Furthermore, to have a boundary-driven process, we introduce integrable reservoirs, i.e.
processes at the boundaries
which inject and remove particles and fix different average particle densities 
at the boundaries. 
The corresponding intensities follow from the $K$-matrix
and the boundary Yang-Baxter equation. See Section  \ref{sub-harmonic} for the expression of those rates.

\subsection{Integrability}

We establish the Yang-Baxter integrability of the boundary driven multispecies harmonic processes by constructing the corresponding transfer matrix within the framework of the quantum inverse scattering method \cite{Faddeev:1996iy,Sklyanin:1988yz}. The Markov generator of the process is identified with the Hamiltonian of a non-compact open $gl(M+1)$-invariant Heisenberg spin chain with appropriately chosen open boundary condition. The construction of the transfer matrix relies on the rational limit of the stochastic R-matrix \cite{Kuniba:2016fpi,Bosnjak:2016oze} and employs novel solutions to the boundary Yang-Baxter equation, generalising the singlespecies  case \cite{Frassek:2019vjt} with $M=1$.

\paragraph{R-matrix}
The  R-matrix relevant for the construction of the transfer matrix has been known explicitly in terms of its eigenvalues for quite some time, see e.g.~\cite{NJMacKay_1991}. This representation, however, is not convenient when studying stochastic processes as the action on the tensor product of two sites describes the dynamics of the particles. It is well known that not all R-matrices are stochastic and the choice of the basis is important in order to define a stochastic matrix. As discussed above, such form of the R-matrix of non-compact symmetric representations of $gl(M+1)$ was only given recently in components for the $q$-deformed case \cite{Kuniba:2016fpi,Bosnjak:2016oze}. The rational limit yields the R-matrix governing the multispecies harmonic process studied here.  It has been pointed out in \cite{Bosnjak:2016oze} that the R-matrix factorises and that the factorisation in the case of $M=1$ coincides with the algebraic factorisation in terms of Heisenberg pairs (oscillators) studied in Derkachov \cite{Derkachov:2005hw} for the rational case. The case of arbitrary $M$ has been discussed in \cite{Bosnjak} where a mapping from the factorised R-matrix of \cite{Derkachov:2006fw} has been proposed. However, while the stochastic R-matrix factorises into two terms for all $M$, the R-matrix of \cite{Derkachov:2006fw} factorises into the product of $M+1$ terms. The reason for this mismatch is that the latter R-matrix is constructed for arbitrary representations of $gl(M+1)$ using  $\tfrac{1}{2}M(M+1)$ oscillator pairs   while the stochastic R-matrix arises from the symmetric tensor representations that can be realised in terms of only $M$ oscillator pairs. We derive such reduced R-matrix from the Yang-Baxter relation following  \cite{Derkachov:2005hw} that is realised in terms of $M$ pairs of oscillators and factorises into two operators, as in the $gl(2)$ case. To the best of our knowledge this R-matrix has not appeared before in this form in the literature and can be found in Theorem~\ref{Theorem-main}. We further show that the single terms in the factorisation then correspond exactly to the terms in the stochastic R-matrix that describe left and right moving particles. The R-matrix allows us to extract the algebraic form of the stochastic Hamiltonian density.

\paragraph{K-matrix} The construction of K-matrices for a class of non-compact representations, including the symmetric tensor representations relevant here, has been discussed in \cite{Tsuboi:2018gfd} for diagonal boundaries. The boundary reservoirs of the harmonic process however are non-diagonal which allows for insertion and extraction of particles at the boundaries. To our knowledge a K-matrix of this type is not known, we also refer the reader to \cite{2024JPhA...57x5201K} where triangular boundaries are studied for the q-deformed case as well as \cite{2014JPhA...47C5202L}.  We derive the relevant K-matrix from the boundary Yang-Baxter relation following \cite{Frassek:2019vjt}, see also \cite{Antonenko:2024bgw}. Together with the dual K-matrix this allows to construct the transfer matrix. We then show that the logarithmic derivative of the transfer matrix at the permutation point yields the stochastic Hamiltonian of the boundary-driven multispecies harmonic process in Theorem~\eqref{Thm-transfer-Ham}, thus proving its integrability.

\subsection{Dual processes}

The expression
for the Hamiltonian corresponding
to the boundary-driven multispecies harmonic process discussed above
is given in terms of the 
$gl(M+1)$ Lie algebra.
Based on this result,
we then establish three stochastic dualities.
They are obtained by
considering the same 
abstract Hamiltonian in different representations.
In particular, the novel processes
include stochastic energy exchange models with a continuum state space
and an integrable energy redistribution rule.
We here describe informally these dual processes, see Section \ref{subsection-relatedModels} for their
mathematical definition.

\paragraph{Absorbing dual process.}
The main result
is in Theorem \ref{Thm-duality}
which proves the duality of the boundary-driven multispecies harmonic process with a Markov process with identical bulk dynamics and absorbing boundaries following \cite{Spohn_1983,kipnis1982heat}. This follows by
exploiting the $gl(M+1)$
symmetry of the bulk generator
and using a change of representation for the boundary
part. 
This duality implies that
the non-equilibrium steady state
can be studied in terms of
the absorption probabilities
of the dual particles.

\paragraph{Hidden parameter process.}
By considering representations of $gl(M+1)$ on polynomial spaces, we introduce two additional processes.
One of them is 
the hidden parameter process, which can be viewed as a multispecies analogue of the processes defined in~\cite{2025JSP...192...21G}. This process is 
shown to be dual to the multispecies harmonic process with absorbing boundaries, suggesting a probabilistic characterization of the stationary state in terms of mixtures (see \cite{2024JSP...191..150D,2025JSP...192...21G} for the mixture representation in the monospecies case). We note that representations with integral operators of the type studied here have also appeared in the context of QCD \cite{Lipatov:1993yb,Faddeev:1994zg,Braun:1999te}.

\paragraph{Integrable heat conduction model.}

The other process which arises
from representations of $gl(M+1)$ on polynomial spaces,
is the multispecies boundary-driven integrable heat conduction model, which provides a multispecies extension of energy transport models in the spirit of KMP~\cite{kipnis1982heat}.
This process is 
also dual to the multispecies harmonic process with absorbing boundaries.

\subsection{Organization of the paper} Section~\ref{sec:model} introduces the boundary-driven multispecies harmonic process. We further define the dual absorbing process, the boundary-driven integrable heat conduction process and the hidden parameter model, which are related via stochastic duality.
In Section~\ref{sec:facR}, we solve the Yang-Baxter equation associated with $gl(M+1)$ without fixing a priori a representation. Appropriate choices of the representation then yield three $R$-matrices, whose logarithmic derivatives provide the corresponding bulk Hamiltonians of the original and dual models.
Section~\ref{sec:openchain} focuses on the Fock representation, where we solve both the boundary Yang-Baxter equation and its dual, obtaining off-diagonal K-matrices. These allow for the construction of the transfer matrix, whose logarithmic derivative reproduces the Hamiltonian of the boundary-driven multispecies harmonic process.
Finally, Section~\ref{section-duality} establishes several duality relations. Using the $gl(M+1)$ symmetry inherited from the R-matrix, we show duality with a process that has absorbing boundaries. Intertwining relations between Fock and integral representations imply additional dualities.

\paragraph{Acknowledgemets}
RF thank Matthias Staudacher, Jan de Gier, Vladimir Mangazeev, Vladimir Kazakov, Gwena\"{e}l Ferrando and De-Liang Zhong for discussions on factorised R-matrices.
CG thanks Frank Redig for  useful discussions; he also acknowledges financial support from INdAM through INdAM Project 2024 CUP-E53C23001740001.
The work was further  supported by the FAR
UNIMORE project CUP-E93C23002040005, and by the PRIN project “2022ABPBEY” CUP-E53D23002220006. RF was supported by the INFN Bologna (GAST). We thank the mathematical research institute MATRIX in Australia where part of this research was performed. This research was supported in part by the ICTS in India for participating in the program ICTS/DISDECAP2024/10.

\newpage
  
\section{Models}\label{sec:model}

In  Section \ref{sub-harmonic} we introduce the $M$-species harmonic process with boundary reservoirs which is the main process studied in this paper. As shown in Section~\ref{sec:facR} and~\ref{sec:openchain}, its Markov generator can be related directly with the Hamiltonian of the non-compact $gl(M+1)$ integrable XXX Heisenberg spin chain with a certain choice of basis and boundaries.  We further introduce in Section \ref{subsection-relatedModels} three related models:
the multispecies harmonic model with absorbing boundaries, the multispecies integrable heat-conduction model with boundary reservoirs and the multispecies hidden parameter model with conditions. These models are related by duality relations that will be discussed in Section \ref{section-duality}.

\subsection{Boundary-driven multispecies harmonic process}
\label{sub-harmonic}

 \subsubsection{Markov generator}\label{sec:markov}

We start by introducing the notation that is used throughout the paper.
 We consider the geometry of a chain of length $\N\in \mathbb{N}$. The sites of the chain are denoted by $\ell\in \{1,\ldots, \N\}$. To each site $\ell$ we assign an $M$-tuple $m^{\ell }=(m_{1}^{\ell },\ldots,m_{M}^{\ell })$
which encodes a local configuration. Then, the local state space is
\begin{align}\label{single-site-space}
	 \Omega_{\ell}=\left\{m^{\ell }=(m_{1}^{\ell },\ldots,m_{M}^{\ell })\,:\, m_{a}^{\ell }\in \mathbb{N}_{0},\; \forall a=1,\ldots,\M\right\}\,.
    	\end{align}
Here $m_a^\ell$ denotes the number of particle of type $a$ (also refered to as species) that occupy site $\ell$. The number $\M\in \mathbb{N}$ denotes the total number of species   and we use the notation $\mathbb{N}_{0} = \mathbb{N} \cup\{0\}$.
The state space of the model is given by the $N$-fold Cartesian product
\begin{align}\label{cartpro}
	{\Omega}=\bigtimes_{\ell =1}^{\N}\Omega_\ell\,.
\end{align}
We denote the process configurations $\bm{m}\in\Omega$ by the $N$-tuple $\bm{m}=(m^{1},\ldots,m^{\N})$.
To alleviate formulas, we introduce the notation
\begin{align}
	\bm{m}\pm k \bm{\delta}_{\ell }= (m^{1},\ldots,m^{\ell -1},m^\ell \pm k,m^{\ell +1},\ldots,m^{\N})\,,
\end{align}
where  $k\in\mathbb{N}_0^\M$ such that
\begin{align}
	m^{\ell }\pm k=(m_{1}^{\ell }\pm k_{1},\ldots,m_{\M}^{\ell }\pm k_{\M})\,.
\end{align}
Thus, $\bm{m}\pm k \bm{\delta}_{\ell }$ denotes the configuration  obtained from $\bm{m}$ by adding or removing particles at site $\ell$.
The number of species that are moved is specified by $k$.
We use the notation
\begin{equation}
 |k|=k_{1}+\ldots+k_{\M}\,,
\end{equation}
 to indicate the total number of  particles.

 The boundary-driven multispecies harmonic process is a family of processes labeled by a parameter $s>0$.
 The  dynamics of the continuous-time Markov  process is described by its infinitesimal generator of nearest-neighbor type
\begin{align}\label{boundary-driven-generator}
	\mathcal{L}=\mathcal{L}_{\text{left}}+\sum_{\ell =1}^{\N-1}\mathcal{L}_{\ell ,\ell +1}+\mathcal{L}_{\text{right}}.
\end{align}
Here we have separated the bulk part and the boundary part, which we now describe.

	\paragraph{The bulk dynamics.} We consider functions $f:\Omega\to \mathbb{R}$ and a generic bond of two neighboring sites $(\ell ,\ell +1)$. The bulk generator $\mathcal{L}_{\ell ,\ell +1}$ modifies the occupation numbers on the two neighboring sites $(\ell ,\ell +1)$. Furthermore it can be decomposed into a part  associated to right jumps $\mathcal{L}^{-}_{\ell ,\ell +1}$ and into a part associated to left jumps $\mathcal{L}^{+}_{\ell ,\ell +1}$. It acts as
	\begin{equation}\label{bulk-generatorDensity}
		\eL_{\ell ,\ell +1}f(\bm{m})=\eL^{-}_{\ell ,\ell +1}f(\bm{m})+\eL^{+}_{\ell ,\ell +1}f(\bm{m})
	\end{equation}
	where
	\begin{equation}
		\eL^{-}_{\ell ,\ell +1}f(\bm{m})=\sum_{k_{1}=0}^{m_{1}^{\ell }}\ldots\sum_{k_{\M}=0}^{m_{\M}^{\ell }}\varphi_{s}(k,m^{\ell })\left[f(\bm{m}-k\bm{\delta}_{\ell }+k\bm{\delta}_{\ell +1})-f(\bm{m})\right]\,,
	\end{equation}
	and   
		\begin{equation}
		\eL^{+}_{\ell ,\ell +1}f(\bm{m})=\sum_{k_{1}=0}^{m_{1}^{\ell +1}}\ldots\sum_{k_{\M}=0}^{m_{\M}^{\ell +1}}\varphi_{s}(k,m^{\ell +1})\left[f(\bm{m}-k\bm{\delta}_{\ell +1}+k\bm{\delta}_{\ell })-f(\bm{m})\right]\,.
	\end{equation}
	Here, we have introduced the jump rates
\begin{align}\label{rates}
	\varphi_{s}({k},{m})=\frac{\Gamma(|k|)\Gamma(2s+|m|-|k|)}{\Gamma(2s+|m|)}\prod_{a=1}^{\M}\frac{\Gamma(m_{a}+1)}{\Gamma(k_{a}+1)\Gamma(m_{a}-k_{a}+1)}\mathbbm{1}_{\{|k|>0\}}\,,
\end{align}
that depend on the particle numbers of the departing site, the number of particles that move, and the parameter $s$. We further note that the jump rates are symmetric, so that particles move at the same rates to the left and right of the chain. The rates generalize those of the mono-species case, as introduced in \cite{Frassek:2019vjt}.

\begin{remark}\label{Remark-MqHahn}[Multispecies q-Hahn process]
    Considering the right and left transition rates written in \cite[Eq. (51),(59)]{Kuniba:2016fpi}, one can verify that by choosing $\mu=q^{2s}$, multiplying by $(1-q)$ and taking the rational limit $q\to 1$, the rates of the multispecies q-Hahn process  converge to the transition rate~\eqref{rates} of the multispecies harmonic process.
\end{remark}
\paragraph{The boundary dynamics.}
Next, we define the boundary dynamics, which can be decomposed into two parts: the out-part, which removes particles from the chain, using the same rate as the bulk generator; the in-part, that instead injects particles into the chain, with a rate that depends on the boundary parameters. More precisely,
the left boundary generator changes only occupation numbers at site $1$. It acts
on functions $f:\Omega \to \mathbb{R}$
as
\begin{align}\label{left-boundary-generator}
	\eL_{\text{left}}f(\bm{m})=\eL_{\text{left}}^{\text{out}}f(\bm{m})+\eL_{\text{left}}^{\text{in}}f(\bm{m})\,,
\end{align}
with
\begin{align}\label{boundary-out-generator}
	\eL_{\text{left}}^{\text{out}}f(\bm{m})=\sum_{k_1=0}^{m_{1}^{1}}\ldots\sum_{k_{\M}=0}^{m_{\M}^{1}}\varphi_{s}(k,m^{1})\left[f(\bm{m}-\bm{\delta}_{1}k)-f(\bm{m})\right]\,,
\end{align}
and 
\begin{align}\label{boundary-in-generator}
\eL_{\text{left}}^{\text{in}}f(\bm{m})=\sum_{k_{1}=0}^{\infty}\ldots\sum_{k_{\M}=0}^{\infty}\mathbbm{1}_{\{|k|>0\}}\Gamma(|k|)\left(\prod_{a=1}^{\M}\frac{\beta_{l,a}^{k_{a}}}{\Gamma(k_{a}+1)}\right)\,\left[f(\bm{m}+\bm{\delta}_{1}k)-f(\bm{m})\right]\,.
\end{align}
Here, $(\beta_{l,a})_{a\in \{1,\ldots,\M\}}$ are the boundary parameters of the left reservoir, taking values in the interval $(0,1)$ and satisfying $\sum_{a=1}^{\M}\beta_{l,a}<1$. 

Similarly, the right boundary generator acts as
\begin{align}\label{right-boundary-generator}
	\eL_{\text{right}}f(\bm{m})=\eL_{\text{right}}^{\text{out}}f(\bm{m})+\eL_{\text{right}}^{\text{in}}f(\bm{m})\,,
\end{align}
where
\begin{align}
\eL_{\text{right}}^{\text{out}}f(\bm{m})=\sum_{k_{1=0}}^{m_{1}^{\N}}\ldots\sum_{k_{\M}=0}^{m_{\M}^{\N}}\varphi_{s}(k,m^{\N})\left[f(\bm{m}-\bm{\delta}_{\N}k)-f(\bm{m})\right]\,,
\end{align}
and
\begin{align}\label{boundary-right-in}
	\eL_{\text{right}}^{\text{in}}f(\bm{m})=\sum_{k_{1}=0}^{\infty}\ldots\sum_{k_{\M}=0}^{\infty}\mathbbm{1}_{\{|k|>0\}}\Gamma(|k|)\left(\prod_{a=1}^{\M}\frac{\beta_{r,a}^{k_{a}}}{\prod_{a=1}^{\M}\Gamma(k_{a}+1)}\right)\left[f(\bm{m}+\bm{\delta}_{\N}k)-f(\bm{m})\right]\,.
\end{align}
Here, $(\beta_{r,a})_{a\in \{1,\ldots,\M\}}$ are the boundary parameters of the right reservoir,  taking values in the interval $(0,1)$ and satisfying $\sum_{a=}^{\M}\beta_{r,a}<1$.

\begin{remark}
    If the left and right boundary parameters are equal for all species, namely $\beta_{l,a}=\beta_{r,a}$ for all $a\in \{1,\ldots, \M\}$, then the process has a reversible measure given by a product of Negative Multinomials, see Lemma \ref{lemma-reversibleMeasure-M-harmonic-boundaries}. 
    In the following we denote by 
\begin{equation}\label{rhos}
 \rho_{r,a}=\frac{\beta_{r,a}}{1-\sum_{a=1}^{\M}\beta_{r,a}}\,,\qquad 
 \rho_{l,a}=\frac{\beta_{l,a}}{1-\sum_{a=1}^{\M}\beta_{l,a}}\,.
\end{equation}
\end{remark}

One of the main results of this paper will be the relation that
is established between
the Markov generator \eqref{boundary-driven-generator} and the
Hamiltonian of the integrable open spin chain with $gl(M+1)$
symmetry. Remarkably, the Hamiltonian becomes a stochastic matrix in a well-chosen basis,
as we discuss in the next paragraph.

\subsubsection{The stochastic Hamiltonian.}
In order to relate the $M$-species harmonic process to the stochastic Hamiltonian of the non-compact  $gl(M+1)$ spin chain we identify the Hilbert space of the latter with the configuration space of the process.

For $\ell\in \{1,\ldots,N\}$, we introduce the  vector space $V_\ell$ over $\mathbb{C}$ whose basis is labeled by the configuration space $\Omega_\ell$ in~\eqref{single-site-space} such that
\begin{equation}\label{basis}
    V_\ell =\spa\left(\left\{|m_1^\ell,\ldots, m_M^\ell\rangle :m^\ell\in\Omega_\ell\right\}\right)\,,
\end{equation}
where $|{m_1,\ldots, m_M}\rangle$ denotes the tensor product of $M$ elementary infinite-dimensional unit vectors $e_{m_1}\otimes\ldots\otimes e_{m_M}$ with $(e_m)_{n}=\delta_{mn}$ and $n,m\in \mathbb{N}_0$. The so-called quantum or physical space of the spin chain then takes the form
\begin{equation}
 V_{\N}=\bigotimes_{\ell=1}^N V_{\ell}\,,
\end{equation}
cf.~\eqref{cartpro}. Its basis vectors will be denoted by $|\bm{m}\rangle=|m^1,\ldots,m^N\rangle$. 

We introduce the Hamiltonian of the open non-compact $gl(M+1)$ XXX spin chain as
\begin{equation}\label{Hamiltonian-MHarmonic}
 \eH=\eH_{\text{left}}+\sum_{\ell=1}^{N-1}\eH_{\ell,\ell+1}+\eH_{\text{right}}\,.
\end{equation}
The bulk Hamiltonian decomposes as
\begin{equation}\label{eq:dens}
    \eH_{\ell,\ell+1}=\eH_{\ell,\ell+1}^{-}+\eH_{\ell,\ell+1}^{+}
\end{equation}
and,  for all $\ell\in \{1,\ldots,\N-1\}$, they act on the two-site tensor product space as
\begin{equation}\label{hbulk}
\begin{split}
&\eH_{\ell,\ell+1}^{-}|m^{\ell}\rangle \otimes |m^{\ell+1}\rangle=h_s(|m^{\ell}|)|m^{\ell}\rangle \otimes |m^{\ell+1}\rangle-\sum_{k_{1}=0}^{m_{1}^{\ell}}\cdots \sum_{k_{N}=0}^{m_{N}^{\ell}}\varphi_{s}(k,m^{\ell}) |m^{\ell}-k\rangle \otimes |m^{\ell+1}+k\rangle\,,\\
&\eH_{\ell,\ell+1}^{+}|m^{\ell}\rangle \otimes |m^{\ell+1}\rangle=h_s(|m^{\ell+1}|)|m^{\ell}\rangle \otimes |m^{\ell+1}\rangle-\sum_{k_{1}=0}^{m_{1}^{\ell+1}}\cdots \sum_{k_{N}=0}^{m_{N}^{\ell+1}}\varphi_{s}(k,m^{\ell+1}) |m^{\ell}+k\rangle \otimes |m^{\ell+1}-k\rangle\,,
 \end{split}
\end{equation}
with $|m\pm k\rangle =|m_1\pm k_1,\ldots,m_M\pm k_M\rangle$ and
  \begin{align}
 	h_{s}(|m|)=\sum_{k=1}^{|m|}\frac{1}{2s+k-1}. \end{align}
 At the boundaries we have \begin{equation}\label{hleft}
\begin{split}
 \eH_{\text{left}}|m\rangle  =(h_s(|m|)-\log(1-|\beta_l|))|m\rangle &-\sum_{k_{1}=0}^{m_{1}}\cdots \sum_{k_{N}=0}^{m_{N}}\varphi_{s}(k,m) |m-k\rangle\\
  &-\sum_{k_{1},\ldots, k_{\M}=0\,:\,|k|\neq 0}^{\infty}\Gamma(|k|)\left(\prod_{a=1}^{\M}\frac{\beta_{l,a}^{k_{a}}}{\Gamma(k_{a}+1)}\right) |m+k\rangle
 \end{split}
\end{equation}
and
\begin{equation}\label{hright}
\begin{split}
 \eH_{\text{right}}|m\rangle =(h_s(|m|)-\log(1-|\beta_r|))|m\rangle &-\sum_{k_{1}=0}^{m_{1}}\cdots \sum_{k_{N}=0}^{m_{N}}\varphi_{s}(k,m) |m-k\rangle\\
  &-\sum_{k_{1},\ldots, k_{\M}=0\,:\,|k|\neq 0}^{\infty}\Gamma(|k|)\left(\prod_{a=1}^{\M}\frac{\beta_{r,a}^{k_{a}}}{\Gamma(k_{a}+1)}\right) |m+k\rangle
 \end{split}
\end{equation}
\begin{remark}[Diagonal terms]
 The diagonal terms in the Hamiltonian density~\eqref{hbulk} are obtained when summing over all possible transition rates.
More precisely, we have that
 \begin{align}\label{digamma-ratesBulk}
\begin{split}
 	\sum_{k_{1}=0}^{m_{1}}\cdots\sum_{k_{\M}=0}^{m_{\M}}\varphi_{s}(k,m)
 	&=\psi(|m|+2s)-\psi(2s)
    = h_{s}(|m|)\,.
 	\end{split}
 \end{align}
This relation can be obtained by writing the rates in terms of the Beta function
 \begin{align}\label{rates2}
	\varphi_{s}(k,m)=B(2s+|m|-|k|,|k|)\prod_{a=1}^{\M}\binom{m_a}{k_a}\,,
\end{align}
and subsequently using the integral representation of the Beta and Digamma function
\begin{equation}\label{eq:intrp}
 B(x,y)=\int_0^1 t^{x-1}(1-t)^{y-1}dt\,,\qquad  \psi(x+2s)-\psi(2s)=\int_0^1 \frac{t^{2s-1}}{1-t}(1-t^x) dt\,.
\end{equation}

The diagonal terms in the boundary terms~\eqref{hleft} and~\eqref{hright} arise similarly noting that  for all $0\leq \beta_{a}<1$ such that $|\beta|=\sum_{a=1}^{\M}\beta_{a} <1$ we have that
\begin{align}\label{log-rateBoundary}
	\sum_{k_{1},\ldots, k_{\M}=0\,:\,|k|\neq 0}^{\infty}\Gamma(|k|)\left(\prod_{a=1}^{\M}\frac{\beta_{a}^{k_{a}}}{\Gamma(k_{a}+1)}\right)=-\log\left(1-|\beta|\right)\,.
\end{align}
\end{remark}
As we will prove in Section~\ref{sec:openchain}, the Hamiltonian $\eH$ can be identified with the integrable open higher-rank XXX spin chain with non-compact highest-weight representations of $gl(M+1)$ labeled by the Dynkin labels $\mmu=(\mmu_1,\mmu_2,\ldots,\mmu_{2})$ with $\mmu_1<\mmu_2$. The parameter $s$ in the rates defined above in Section~\ref{sec:markov}  is related to the Dynkin labels via $2s=\mmu_2-\mmu_1$ such that $s>0$.

If we further define the dual basis $\langle \bm{n}|=\langle n^1,\ldots,n^N|$ with the orthogonality condition
\begin{equation}
\langle n^1,\ldots,n^N|m^1,\ldots,m^N\rangle=\prod_{\ell=1}^N\prod_{a=1}^M\delta_{n^\ell_a,m^\ell_a}\,,
\end{equation}
we can spell out the relation of the Markov generator $\mathcal{L}$ and the stochastic Hamiltonian $\eH$ explicitly.
Indeed, by a direct computation, one shows that
\begin{equation}\label{gen-ham1}
    \mathcal{L}f(\bm{m})=-\langle f|\eH^t|\bm{m}\rangle \qquad\forall\bm{m}\in \Omega\,,
\end{equation}
where
\begin{equation}\label{gen-ham2}
    \langle f|=\sum_{\bm{n}\in \Omega} f(\bm{n})\langle \bm{n}|\,.
\end{equation}

\subsection{Dual models}\label{subsection-relatedModels}
We introduce three Markov processes by means of their Markov generator: the absorbing dual harmonic model, the hidden parameter model and the boundary-driven multispecies integrable heat conduction model. These processes
will be discussed in Section \ref{section-duality}.
The first is useful to 
compute the $n$-th moment
of the stationary state 
using $n$ dual particles.
The other two arise by considering a representation of the $gl(M+1)$ algebra in terms of differential operators.
The hidden process describes
the evolution of the parameters of a mixture of
products Negative Multinomials (see \cite{2025JSP...192...21G} for the mono-species case).
The integrable heat conduction is a stochastic exchange model where energy
is redistributed among particles (see \cite{Franceschini:2022vmr} for the mono-species case).

\subsubsection{The multispecies absorbing dual harmonic model} We denote the multispecies harmonic process with absorbing boundaries (or absorbing dual process) by $(\bm{\xi}(t))_{t\geq 0}$ and we define it on the enlarged chain with sites denoted by $\ell\in\{0,1,\ldots,\N,\N+1\}$, namely the chain with two extra sites $0$ and $\N+1$ attached to the end sites $1$ and $\N$, respectively. The state space for the dual process is given by
\begin{equation}
	\widetilde{\Omega}=\bigtimes_{\ell =0}^{\N+1}\Omega_{\ell }\,.
\end{equation}
with $\Omega_{\ell}$ defined in~\eqref{single-site-space}. We write a dual configuration $\bm{\xi}\in \widetilde{\Omega}$ as 
\begin{equation}
	\bm{\xi}=(\xi^0,\xi^{1},\ldots,\xi^{\N},\xi^{\N+1})\,,
\end{equation}
where $\xi^{\ell}=(\xi_1^{\ell},\ldots,\xi_{\M}^{\ell})$ and the component $\xi_{a}^{\ell}$ denotes the number of dual particles of species $a$ at site $\ell$. The generator of $(\bm{\xi}(t))_{t\geq 0}$ reads 
\begin{equation}\label{abs-Harmonic-Generator}
		\widetilde{\mathcal{L}}=\widetilde{\mathcal{L}}_{\text{left}}+\sum_{\ell=1}^{N-1}\mathcal{L}_{\ell,\ell+1}+\widetilde{\mathcal{L}}_{\text{right}}
	\end{equation}
    where $\mathcal{L}_{\ell,\ell+1}$ is  defined in~\eqref{bulk-generatorDensity} and where the boundary generators act on functions $f:\widetilde{\Omega}\to \mathbb{R}$ as
	\begin{equation}
		\widetilde{\mathcal{L}}_{\text{left}}f(\bm{\xi})=\sum_{k_{1}=0}^{\xi^{1}_{1}}\cdots \sum_{k_{\M}=0}^{\xi^{1}_{\M}}\varphi_{s}(k,\xi^{1})\left[f(\bm{\xi}-\bm{\delta}_{1}k+\bm{\delta}_{0}k)-f(\bm{\xi})\right]
	\end{equation}
    and
    \begin{equation}
		\widetilde{\mathcal{L}}_{\text{right}}f(\bm{\xi})=\sum_{k_{1}=0}^{\xi^{\N}_{1}}\cdots \sum_{k_{\M}=0}^{\xi^{\N1}_{\M}}\varphi_{s}(k,\xi^{\N})\left[f(\bm{\xi}-\bm{\delta}_{\N}k+\bm{\delta}_{\N+1}k)-f(\bm{\xi})\right]\,.
	\end{equation} 
\subsubsection{Multispecies hidden parameter model}\label{sec:hidden}
The process is defined on a chain of length $\N\in\mathbb{N}$. A configuration is denoted by $(\hidden^{\ell})_{\ell\in\{1,\ldots,\N\}}$, where for each site $\ell$ we set $\hidden^{\ell}=(\hidden_{1}^{\ell},\ldots,\hidden_{\M}^{\ell})$ with $\hidden_{a}^{\ell}\in \mathbb{R}_{+}$. 
The infinitesimal generator reads 
\begin{equation}\label{generator-Hidden} 
	\ecL=\ecL_{\text{left}}+\sum_{\ell=1}^{\N-1}\ecL_{\ell,\ell+1}+\ecL_{\text{right}}\,,
\end{equation}
where the bulk generator can be decomposed on each bond as
\begin{equation}\label{generator-bulk-hidden}
	\ecL_{\ell,\ell+1}=\ecL^-_{\ell,\ell+1}+\ecL^+_{\ell,\ell+1}\,.
\end{equation}
The linear operators above act on functions $f:\mathbb{R}_{+}^{\M}\times \mathbb{R}_{+}^{\M}\to \mathbb{R}$ as 
\begin{equation}\label{generatorMinus-heat}
	\begin{split}
		\ecL^{-}_{\ell,\ell+1}f(\hidden^{\ell},\hidden^{\ell+1})&= \int_0^1d\alpha\,\frac{\alpha ^{2s-1}}{1-\alpha}\left[f(\alpha \hidden^{\ell}+(1-\alpha)\hidden^{\ell+1},\hidden^{\ell+1})-f(\hidden^{\ell},\hidden^{\ell+1})\right]\,.
	\end{split}
\end{equation}
and
\begin{equation}\label{generatorPlus-heat}
	\begin{split}
		\ecL^{+}_{\ell,\ell+1} f(\hidden^{\ell},\hidden^{\ell+1})&= \int_0^1d\alpha\,\frac{\alpha ^{2s-1}}{1-\alpha }\left[f(\hidden^{\ell},
        (1-\alpha)\hidden^{\ell} +
        \alpha \hidden^{\ell+1})-f(\hidden^{\ell},\hidden^{\ell+1})\right]\,,
	\end{split}
\end{equation}
The left boundary generator depends on the $\M$ dimensional vector of left boundary parameters $\rho_{l}=(\rho_{l,1},\ldots,\rho_{l,\M})$ and acts on functions $f:\mathbb{R}_{+}^{\M}\to \mathbb{R}$ as
\begin{equation}\label{redR2sl2int4w}
	\begin{split}
		\ecL_{\text{left}} f(\hidden^{1})&= \int_0^1d\alpha\,\frac{\alpha ^{2s-1}}{1-\alpha }\left[f(\alpha \hidden^{1}+(1-\alpha)\rho_{l})-f(\hidden^{1})\right]\,.
	\end{split}
\end{equation}
The right boundary generator depends on the $\M$ dimensional vector of right boundary parameters $\rho_{r}=(\rho_{r,1},\ldots,\rho_{r,\M})$ and acts on functions $f:\mathbb{R}_{+}^{\M}\to \mathbb{R}$ as  
\begin{equation}	\begin{split}
		\ecL_{\text{right}} f(\hidden^{\N})&= \int_0^1d\alpha\,\frac{\alpha ^{2s-1}}{1-\alpha }\left[f(\alpha \hidden^{\N}+(1-\alpha)\rho_{l})-f(\hidden^{\N})\right]\,.
	\end{split}
\end{equation} 
This model may be seen as the multispecies analogue of the   model introduced in \cite{Derkachov:1999pz,2025JSP...192...21G}.
\begin{remark}
    For the remainder of this paper, we assume that the infinitesimal generator~\eqref{generator-Hidden} acts on the space of polynomial functions. 
The characterization of the domain of this generator goes beyond the scope of this paper.
\end{remark}
\subsubsection{Boundary-driven multispecies integrable heat conduction model}\label{sec:heat}
The process is defined on a chain of length $\N\in\mathbb{N}$. A configuration is denoted by $(\heat^{\ell})_{\ell\in \{1,\ldots,\N\}}$, where for each site $\ell$ we set  $\heat^{\ell}=(\heat_{1}^{\ell},\ldots,\heat_{\M}^{\ell})$, with $\heat_{a}^{\ell}\in \mathbb{R}_{+}$. The infinitesimal generator reads 
\begin{equation}\label{heatGenerato}
	\ecdL=\ecdL_{\text{left}}+\sum_{\ell=1}^{\N-1}\ecdL_{\ell,\ell+1}+\ecdL_{\text{right}}\,.
\end{equation}
Here the bulk generator can be decomposed on each bond as
\begin{equation}\label{bulk-Heat}
	\ecdL_{\ell,\ell+1}=\ecdL^{+}_{\ell,\ell+1}+\ecdL^{-}_{\ell,\ell+1}\,.
\end{equation}
The linear operators above act on functions $f:\mathbb{R}_{+}^{\M}\times \mathbb{R}_{+}^{\M}\to \mathbb{R}$ as 
\begin{equation}\label{redR2sl2int4b}
	\begin{split}
		\ecdL^{-}_{\ell,\ell+1}f(\heat^{\ell},\heat^{\ell+1})&= \int_0^1d\alpha\,\frac{\alpha ^{2s-1}}{1-\alpha}\left[f(\alpha \heat^{\ell},\heat^{\ell+1}+(1-\alpha)\heat^{\ell})-f(\heat^{\ell},\heat^{\ell+1})\right]
	\end{split}
\end{equation}
and
\begin{equation}\label{redR2sl2int24b}
	\begin{split}
		\ecdL^{+}_{\ell,\ell+1}f(\heat^{\ell},\heat^{\ell+1})&= \int_0^1d\alpha\,\frac{\alpha ^{2s-1}}{1-\alpha}\left[f(\heat^{\ell}+(1-\alpha)\heat^{\ell+1},\alpha \heat^{\ell+1})-f(\heat^{\ell},\heat^{\ell+1})\right]\,.
	\end{split}
\end{equation}
The left boundary generator depends on  the left boundary parameters $\rho_{l}=(\rho_{l,1},\ldots,\rho_{l,\M})$, with $\rho_{l,a}>0$, and can be decomposed into
\begin{equation}
	\ecdL_{\text{left}}=\ecdL_{\text{left}}^{\text{in}}+\ecdL_{\text{left}}^{\text{out}}
\end{equation}
where the action of the two linear operators above  on functions $f:\mathbb{R}_{+}^{\M}\to \mathbb{R}$ reads
\begin{equation}
	\ecdL_{\text{left}}^{\text{out}}f(\heat^{1})=\int_0^1 \frac{\alpha^{2s-1}}{1-\alpha}\left[f(\alpha \heat^{1})-f(\heat^{1}) \right]d\alpha
\end{equation}
and
\begin{equation}
	\ecdL_{\text{left}}^{\text{in}}f(\heat^{1})=\int_{0}^{\infty}\frac{e^{-\alpha}}{\alpha}\left[f(\heat^{1}+\alpha \rho_{l})- f(\heat^{1})\right]d \alpha\,.
\end{equation}
Similarly, the right boundary generator depend on the right boundary parameters $\rho_{r}=(\rho_{r,1},\ldots,\rho_{r,\M})$, with $\rho_{r,a}>0$, and can be decomposed into
\begin{equation}
	\ecdL_{\text{right}}=\ecdL_{\text{right}}^{\text{in}}+\ecdL_{\text{right}}^{\text{out}}
\end{equation}
where the action of the two linear operators above  on functions $f:\mathbb{R}_{+}^{\M}\to \mathbb{R}$ reads
\begin{equation}
	\ecdL_{\text{right}}^{\text{out}}f(\heat^{\N})=\int_0^1 \frac{\alpha^{2s-1}}{1-\alpha}\left[f(\alpha \heat^{\N})-f(\heat^{\N}) \right]d\alpha
\end{equation}
and
\begin{equation}
	\ecdL_{\text{right}}^{\text{in}}f(\heat^{\N})=\int_{0}^\infty\frac{e^{-\alpha}}{\alpha}\left[f(\heat^{\N}+\alpha \rho_{r})- f(\heat^{\N})\right]d \alpha\,.
\end{equation}
This model may be seen as the multispecies analogue of  model introduced in \cite{Frassek:2019vjt,Franceschini:2022vmr}.

\section{The factorised R-matrix}\label{sec:facR}
In this section we derive the factorised R-matrix for a certain class of representations of ${gl(\M+1)}$ following the construction of Derkachov \cite{Derkachov:2005hw}. The R-matrices obtained in this section are new but can be seen as a non-trivial degeneration of the generic construction in \cite{Derkachov:2008aq}. In contrast to loc.~cit., the R-matrix presented in the following has $M$ conjugate Heisenberg pairs and factorises into two parts only. We further show that the matrix elements of this R-matrix coincide with the stochasic R-matrix obtained in \cite{Kuniba:2016fpi,Bosnjak:2016oze} in the rational limit. We  derive the Hamiltonian density and observe that the two factors in the R-matrix yield the left and right hopping terms of the stochastic Hamiltonian density, cf.~\eqref{eq:dens}. In Appendix~\ref{section-integral-reps},  we also present two integral representation of the R-matrix following \cite{Derkachov:2005hw} that are immediately related to the dual processes of Section~\ref{subsection-relatedModels}.

	\subsection{Lax matrices}
    
    We begin with the R-matrix of $gl(\M+1)$ in fundamental representation that is  explicitly given by
	\begin{equation}\label{fundamental-R}
		\fundR(x)=x+\sum_{A,B=0}^{\M} e_{AB}\otimes e_{BA}
	\end{equation}
	with $(e_{AB})_{CD}=\delta_{AC}\delta_{DB}$ and the spectral parameter $x\in\mathbb{C}$ such that  $\fundR(x):\mathbb{C}^{\M+1}\otimes \mathbb{C}^{\M+1}\to \mathbb{C}^{\M+1}\otimes \mathbb{C}^{\M+1}$. To alleviate notation we suppressed the $(\M+1)^2\times(\M+1)^2$ identity matrix multiplying the spectral parameter. It is well known that this R-matrix is a solution of the Yang-Baxter equation
	\begin{equation}
	 \fundR_{12}(x-y)\fundR_{13}(x-z)\fundR_{23}(y-z)=\fundR_{23}(y-z)\fundR_{13}(x-z)\fundR_{12}(x-y)
	\end{equation}
	acting on the tensor product of three spaces $V_1\otimes V_2\otimes V_3$ where $V_i=\mathbb{C}^{\M+1}$. The subscripts indicate on which spaces the R-matrices act non-trivially, e.g.
\begin{equation}
 \fundR_{12}(x)=\fundR(x)\otimes \ID\,,
\end{equation}
see \cite{Faddeev:1996iy} for an introduction.

Furthermore, we introduce the Lax matrix
	\begin{equation}\label{eq:lax}
		\mathrm{L}(x)=x+\sum_{A,B=0}^{\M} e_{AB} \mathrm{J}_{BA}
	\end{equation}
	which is an $(\M+1)\times(\M+1)$ matrix whose entries contain the generators $\mathrm{J}_{AB}$ of the  $gl(\M+1)$ Lie algebra. They obey the commutation relations
	\begin{equation}\label{gl-commutators}
[\mathrm{J}_{AB},\mathrm{J}_{CD}]=\delta_{BC}\mathrm{J}_{AD}-\delta_{AD}\mathrm{J}_{CB}\,.
	\end{equation}
    These commutation relations guarantee that the RLL equation
	\begin{equation}\label{RLL1}
		\fundR(x-y)(\mathrm{L}(x)\otimes\ID)(\ID\otimes \mathrm{L}(y))=(\ID\otimes \mathrm{L}(y))(\mathrm{L}(x)\otimes\ID)\fundR(x-y)
	\end{equation}
is satisfied. The Lax matrix~\eqref{eq:lax} is also referred to as the evaluation map of the Yangian $Y(gl(M+1))$, which is generated by the RLL equation~\eqref{RLL1}, into the Lie algebra $ev:Y(gl(M+1))\to gl(M+1)$, see e.g.~\cite{molev2003yangians}.

We are interested in a certain class realisations of $gl(\M+1)$ that are obtained using $\M$ copies of the Heisenberg (oscillator) algebra satisfying the commutation relation
  \begin{equation}\label{eq:osc}
   [\oa_{ a},\oad_{ b}]=\delta_{a,b}\qquad \text{with}\qquad a, b=1,\ldots,\M\,.
  \end{equation}
  Denoting the universal enveloping  algebra of the Heisenberg algebra by $\mathcal{A}_M$ we can define an algebra homomorphism $\phi:gl(M+1)\to \mathcal{A}_M$. For this purpose it is convenient to introduce the vectors
\begin{equation}
		\OAD=(\oad_{1},\ldots,\oad_{\M})\,,\qquad \OA=(\oa_1,\ldots,\oa_{\M})^t,
	\end{equation}
	such that
	\begin{equation}\label{GL-oscRealization}
		J_{BA}=\phi(\mathrm{J}_{BA})=\left(\begin{array}{cc}
			\mmu_1-\OAD\OA&\OAD\left((\mmu_1-\mmu_2+1)\ID-\OA\OAD\right)\\
			\OA&(\mmu_2-1)\ID+\OA\OAD
		\end{array}\right) _{AB}
	\end{equation}
	where $\ID$ denotes the $\M\times\M$ identity matrix and where $\mu_{1},\mu_{2}\in \mathbb{R}$. Depending on convenience, we may also use the notation
    \begin{equation}\label{CandS}
        c:=\frac{\mu_{1}+\mu_{2}}{2}\quad \text{and}\quad s:=\frac{\mu_{2}-\mu_{1}}{2}\,.
    \end{equation}
The corresponding Lax matrix
\begin{equation}\label{lax2}
    L(x)=x+\sum_{A,B=0}^{\M} e_{AB}J_{BA}
    \end{equation} 
is obtained when inserting the oscillator realisation~\eqref{GL-oscRealization} in~\eqref{lax2}. It can be written in the factorised form
	\begin{equation}\label{eq:laxfac}
		L(x)=\left(\begin{array}{cc}
			1&-\OAD\\
			0&\ID
		\end{array}\right)\left(\begin{array}{cc}
			x+\mmu_1&0\\
			\OA&(x+\mmu_2-1)\ID
		\end{array}\right)\left(\begin{array}{cc}
			1&\OAD\\
			0&\ID
		\end{array}\right)\,,
	\end{equation}
	see also \cite{Derkachov:2006fw}.
\begin{remark}
A q-deformation of this Lax matrix appeared in the context of Macdonald polynomials in
\cite{2017CMaPh.352..773G} which are relevant for the study of Macdonald processes \cite{2011arXiv1111.4408B}  and   the stochastic vertex model \cite{2016arXiv160105770B}.
\end{remark}

\begin{remark}
The algebra $gl(M+1)$ has a non-trivial center $Z(gl(M+1))$. The generators of $sl(M+1)\simeq gl(M+1)/Z(gl(M+1))$
can be defined in the standard way \cite{Humphreys1972} as
\begin{equation}
 \tilde J_{AB}=J_{AB}-\frac{\delta_{AB}}{M+1}\sum_{C=0}^{M}J_{CC}\,.
\end{equation}
We see that the term proportional to the central element $\sum_{C}J_{CC}$  can be absorbed by a shift in the spectral parameter when using the parametrisation~\eqref{CandS} and can the Lax matrix above can equally well be expressed in terms of the $sl(M+1)$ generators.
\end{remark}

	\subsection{Derkachov's construction for $gl(M+1)$}

		In this section we will derive the R-operator $\abrOSCR(x)=\sum_k \abrOSCR_k x^k $ with $\abrOSCR_k\in \mathcal{A}_M\otimes \mathcal{A}_M$ that solves the Yang-Baxter  equation
		\begin{equation}\label{RLL}
			\abrOSCR_{12}(x-y) L_1(x)L_2(y)= L_2(y) L_1(x)\abrOSCR_{12}(x-y)\,.
		\end{equation}
        Here the subscripts $1$ and $2$ denote two copies of the algebra $\mathcal{A}_M$ whose elements we denote by $(\OA^{[i]},\OAD^{[i]})$ with $i=1,2$,  respectively.
		        Defining $x_i=x+\mmu_i^{[1]}$ and $y_i=y+\mmu_i^{[2]}$ we write the Yang-Baxter equation~\eqref{RLL} as
		\begin{equation}
			\abrOSCR_{12}(x-y)L_{1}(x_{1},x_{2})L_{2}(y_{1},y_{2})=L_{2}(y_{1},y_{2})L_{1}(x_{1},x_{2})\abrOSCR_{12}(x-y)
		\end{equation}
where we now made explicit the dependence on the parameters $x_i$ and $y_i$. More precisely we have
		\begin{equation}\label{L1-u1u2}
			L_{1}(x_{1},x_{2})= \left(\begin{array}{cc}
			x_{1}-\OAD^{[1]}\OA^{[1]}&	\OAD^{[1]}\left((x_{1}-x_{2}+1)\ID-\OA^{[1]}\OAD^{[1]}\right)\\
				\OA^{[1]}&(x_{2}-1)\ID+\OA^{[1]}\OAD^{[1]}
			\end{array}\right) \,,
		\end{equation}
				and
		\begin{equation}\label{L2-u1u2}
			L_{2}(y_{1},y_{2})= \left(\begin{array}{cc}
			y_{1}-\OAD^{[2]}\OA^{[2]}&	\OAD^{[2]}\left((y_{1}-y_{2}+1)\ID-\OA^{[2]}\OAD^{[2]}\right)\\
				\OA^{[2]}&(y_{2}-1)\ID+\OA^{[2]}\OAD^{[2]}
			\end{array}\right) \,,
		\end{equation}
			It is further convenient to introduce the permuted R-operator
    \begin{equation}
\check{\abrOSCR}=P\abrOSCR,\end{equation} where $P$ is the permutation operator. The Yang-Baxter relation~\eqref{RLL} then turns into the intertwining relation
		\begin{equation}
			\check{\abrOSCR}_{12}(x-y)L_{1}(x_{1},x_{2})L_{2}(y_{1},y_{2})=L_{1}(y_{1},y_{2})L_{2}(x_{1},x_{2})\check{\abrOSCR}_{12}(x-y)\,.
		\end{equation}
																																							 		 In the following theorem, which is our first main result, we determine the R-operator $\abrOSCR(x)=P\check{\abrOSCR}(x)$.
		\begin{theorem}[Factorized form of the R-matrix]\label{Theorem-main}
	The R-operator $\abrOSCR(x-y)$ that solves the Yang-Baxter equation~\eqref{RLL} factorizes as
		 \begin{equation}\label{Facrtorized-R}
		 	{\abrOSCR}(x-y) =P\,\check {\abrOSCR}(x-y)=P \,{\abrOSCR}^+(x_2|x_1,y_2){\abrOSCR}^-(x_1,x_2|y_1)=P\, {\abrOSCR}^-(x_1,y_2|y_1){\abrOSCR}^+(x_2|y_1,y_2)\,.
		 \end{equation}
The single factors of the R-operator solve the equations
		\begin{equation}\label{eq-Rp}
			{\abrOSCR}^+_{12}(x_2|y_1,y_2) L_1(x_1,x_2)L_2(y_1,y_2)=L_1(x_1,y_2)L_2(y_1,x_2){\abrOSCR}^+_{12}(x_2|y_1,y_2)
		\end{equation} 
		and 
		\begin{equation}\label{eq-Rm}
			{\abrOSCR}^-_{12}(x_1,x_2|y_1) L_1(x_1,x_2)L_2(y_1,y_2)=L_1(y_1,x_2)L_2(x_1,y_2){\abrOSCR}^-_{12}(x_1,x_2|y_1)\,,
		\end{equation} 
        where the Lax matrices $L_1(x_1,x_2)$ and $L_2(x_1,x_2)$ are  given in~\eqref{L1-u1u2} and~\eqref{L2-u1u2}.
		 A solution  to~\eqref{eq-Rp} and~\eqref{eq-Rm}  in terms of oscillators is given by
		\begin{equation}\label{Rp-solution}
			{\abrOSCR}_{12}^+(x_2|y_1,y_2) =\kappa_+(x_2|y_1,y_2)e^{-\OAD^{[1]}\OA^{[2]}}\frac{\Gamma(\OAD^{[2]}\OA^{[2]}-y_1+x_2)}{\Gamma(\OAD^{[2]}\OA^{[2]}-y_1+y_2)}e^{\OAD^{[1]}\OA^{[2]}}
		\end{equation}
		and
		\begin{equation}\label{Rm-solution}
			{\abrOSCR}_{12}^-(x_1,x_2|y_1)=\kappa_-(x_1,x_2|y_1)e^{-\OAD^{[2]}\OA^{[1]}}\frac{\Gamma(\OAD^{[1]}\OA^{[1]}+x_2-y_1)}{\Gamma(\OAD^{[1]}\OA^{[1]}+x_2-x_1)}e^{\OAD^{[2]}\OA^{[1]}}\,.
		\end{equation}
		{Here $\kappa_{\pm}$ denotes an overall normalization that is not fixed by the Yang-Baxter equation~\eqref{RLL}. }
		 		\end{theorem}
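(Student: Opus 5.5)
We follow Derkachov's strategy for $gl(2)$ and reduce everything to the factorised form~\eqref{eq:laxfac} of the Lax matrix. Write $L_i(x_1,x_2)=S_i\,\Lambda_i(x_1,x_2)\,S_i^{-1}$ with $S_i=\left(\begin{smallmatrix}1&-\OAD^{[i]}\\0&\ID\end{smallmatrix}\right)$ and $\Lambda_i$ the lower block-triangular middle factor. The product $L_1(x_1,x_2)L_2(y_1,y_2)$ depends on four parameters. The intertwining relation for $\check{\abrOSCR}$ requires the exchange $(x_1,x_2)\leftrightarrow(y_1,y_2)$. This exchange decomposes into two elementary transpositions: first $x_2\leftrightarrow y_2$ via~\eqref{eq-Rp}, then $x_1\leftrightarrow y_1$ via~\eqref{eq-Rm}, or the same two moves in the opposite order. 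So the first step is purely formal. Assuming operators $\abrOSCR^\pm$ satisfying~\eqref{eq-Rp} and~\eqref{eq-Rm} exist, compose them:
\begin{equation*}
\abrOSCR^+_{12}(x_2|x_1,y_2)\,\abrOSCR^-_{12}(x_1,x_2|y_1)\,L_1(x_1,x_2)L_2(y_1,y_2)=L_1(y_1,y_2)L_2(x_1,x_2)\,\abrOSCR^+_{12}(x_2|x_1,y_2)\,\abrOSCR^-_{12}(x_1,x_2|y_1).
\end{equation*}
This shows that the product satisfies the intertwining relation for $\check{\abrOSCR}$, and similarly for the other ordering. Multiplying by $P$ gives~\eqref{RLL}. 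That the two orderings agree up to normalisation would follow from uniqueness of the intertwiner, by irreducibility of the generic tensor product. Alternatively, it can be checked directly once explicit formulas are known.

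The second step solves~\eqref{eq-Rp} and~\eqref{eq-Rm}. Take~\eqref{eq-Rp}, where only $x_2$ and $y_2$ are exchanged. Conjugate both sides by $e^{\OAD^{[1]}\OA^{[2]}}$ and use the gauge freedom $S_1S_2$ to move the operator into a frame where the product $L_1L_2$ becomes (block) triangular. Concretely:
\begin{itemize}
\item Write $L_1(x_1,x_2)L_2(y_1,y_2)=S_1\Lambda_1S_1^{-1}S_2\Lambda_2S_2^{-1}$.
\item Note that $S_1^{-1}S_2$ is upper unitriangular with entry $\OAD^{[1]}-\OAD^{[2]}$.
\item Compute how $e^{\pm\OAD^{[1]}\OA^{[2]}}$ acts by conjugation on oscillators: it shifts $\OA^{[2]}\mapsto\OA^{[2]}\mp\OAD^{[1]}$ and leaves $\OAD^{[1]}$, $\OA^{[1]}$ unchanged, up to sign conventions.
\end{itemize}
After this conjugation, the ansatz $\abrOSCR^+=e^{-\OAD^{[1]}\OA^{[2]}}\,f(\OAD^{[2]}\OA^{[2]})\,e^{\OAD^{[1]}\OA^{[2]}}$ reduces~\eqref{eq-Rp} to a relation involving only the second site's oscillators and the number operator $\mathrm{N}_2=\OAD^{[2]}\OA^{[2]}$. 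Only $x_2$ and $y_2$ appear in the diagonal blocks that get swapped. Matrix entry by entry, the relation becomes commutation identities of the form $f(\mathrm{N}_2)\OA^{[2]}_a(\ldots)=(\ldots)\OA^{[2]}_af(\mathrm{N}_2)$. Using $\OA_a g(\mathrm{N})=g(\mathrm{N}+1)\OA_a$, these collapse to the scalar functional equation
\begin{equation*}
f(n+1)\,(n+y_2-y_1)=f(n)\,(n+x_2-y_1).
\end{equation*}
Its solution is the Gamma ratio in~\eqref{Rp-solution}, up to a periodic factor that we absorb into $\kappa_+$. The $-1$ shifts in~\eqref{L1-u1u2} fix the exact arguments. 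The case of $\abrOSCR^-$ is the mirror image: exchange the roles of sites 1 and 2, use the conjugation $e^{\OAD^{[2]}\OA^{[1]}}$, and obtain a recursion in $\OAD^{[1]}\OA^{[1]}$ with parameters $x_1,x_2,y_1$.

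The main obstacle is the second step for $M>1$. There the off-diagonal blocks are vectors and matrices, for example $\OAD((x_1-x_2+1)\ID-\OA\OAD)$, whose components mix species. We must check that every matrix entry reduces to the same scalar recursion in the total number operator $\mathrm{N}=\sum_a\OAD_a\OA_a$, with no species-dependent constraints. This works because of identities such as $\OA\OAD=\ID+(\text{rank one})$ and $\OAD\OA=\mathrm{N}$, which let each entry be rewritten using only $\mathrm{N}$ and single oscillators. If the direct check becomes unwieldy, I would instead verify~\eqref{eq-Rp} on the generating vectors $e^{\sum_a z_a\OAD_a}|0\rangle$. On these vectors the gauge factors act as shifts of $z$, and $f(\mathrm{N})$ acts as a one-variable operator in the degree, so the verification reduces to the monospecies computation of Derkachov~\cite{Derkachov:2005hw}.
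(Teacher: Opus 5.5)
Your plan is correct and reaches the right recursion, but the second step takes a different route from the paper. The composition step is the paper's Lemma~\ref{lem1}.

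\textbf{How the paper solves the factor equations.} It does not start from an ansatz. Using shift invariance, it shifts $x_1,x_2,y_1$ by $\lambda$ and $y_2$ by $\mu$. This shows that \eqref{eq-Rm} is equivalent to the \emph{linear} relation \eqref{goal-2a} together with $[\abrOSCR^-,\oad^{[2]}_a]=0$. Only then does it conjugate by $e^{\OAD^{[2]}\OA^{[1]}}$ to get \eqref{set1}--\eqref{set3}. Those conditions are what motivate $\mathbf r=\mathbf r(\OAD^{[1]}\OA^{[1]})$.

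\textbf{Your route.} You posit $\abrOSCR^+=e^{-\OAD^{[1]}\OA^{[2]}}f(\mathrm N_2)e^{\OAD^{[1]}\OA^{[2]}}$ and check the quadratic relation directly, and this goes through.
\begin{enumerate}
\item Your ansatz commutes with every $\oad^{[1]}_a$, so you can strip $S_1$ off $L_1$.
\item By the paper's identity $G^{-1}(\mathbf t)LG(\mathbf t)=e^{-\mathbf t\OA}Le^{\mathbf t\OA}$ with $\mathbf t=\OAD^{[1]}$, $L_2$ becomes $e^{-\OAD^{[1]}\OA^{[2]}}L_2e^{\OAD^{[1]}\OA^{[2]}}$.
\item Conjugating by $e^{\OAD^{[1]}\OA^{[2]}}$ then gives
\[
f(\mathrm N_2)\,\tilde\Lambda_1(x_1,x_2)\,L_2(y_1,y_2)=\tilde\Lambda_1(x_1,y_2)\,L_2(y_1,x_2)\,f(\mathrm N_2),
\]
where $\tilde\Lambda_1$ is the middle factor of $L_1$ with lower-left block $\OA^{[1]}-\OA^{[2]}$.
\item The entries then check out as follows. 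The $(0,0)$ entry is trivial. The $(0,c)$ and $(a,0)$ entries each give exactly $f(n+1)(n+y_2-y_1)=f(n)(n+x_2-y_1)$. The $(a,c)$ entries give the same recursion plus a remainder that balances identically.
\end{enumerate}
The species-mixing terms $\oa^{[2]}_a\oad^{[2]}_c$ commute with $f(\mathrm N_2)$. So the $M>1$ check you call the main obstacle is short, and the generating-vector fallback is unnecessary.

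Your route is a shorter verification; it amounts to the mirror of the paper's step (iii). The paper's route derives the form of the solution rather than guessing it.

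\textbf{Two statements to correct when you execute this.}
\begin{itemize}
\item Conjugation $e^{\pm\OAD^{[1]}\OA^{[2]}}(\cdot)e^{\mp\OAD^{[1]}\OA^{[2]}}$ fixes $\OA^{[2]}$ and $\OAD^{[1]}$. It sends $\OA^{[1]}\mapsto\OA^{[1]}\mp\OA^{[2]}$ and $\OAD^{[2]}\mapsto\OAD^{[2]}\pm\OAD^{[1]}$. The map you wrote, $\OA^{[2]}\mapsto\OA^{[2]}\mp\OAD^{[1]}$ with $\OA^{[1]}$ fixed, is not an automorphism: it breaks $[\oa^{[2]}_a,\oa^{[1]}_b]=0$. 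It is precisely the shift of $\OA^{[1]}$ that produces $\tilde\Lambda_1$.
\item $L_1L_2$ does not become block triangular. Only the first factor does; $L_2$ stays full.
\end{itemize}

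On the equality of the two orderings in \eqref{Facrtorized-R}: it holds only after matching $\kappa_\pm$. The paper only shows that each ordering solves \eqref{RLL}. Your appeal to uniqueness of the intertwiner is the right way to close this, but it still needs to be argued.
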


\subsubsection{Proof of Theorem~\ref{Theorem-main}.}
The proof of Theorem~\ref{Theorem-main} is constructive and closely follows the case of $M=1$ worked out in \cite{Derkachov:2005hw}. We first show that the two equations~\eqref{eq-Rp} and~\eqref{eq-Rm} that determine $\abrOSCR^\pm$ imply the  Yang-Baxter equation~\eqref{RLL} for the R-matrix in the factorised form~\eqref{Facrtorized-R}, see Lemma~\ref{lem1}. Next, focussing on the proof of~\eqref{eq-Rm} (the one of~\eqref{eq-Rp} is analogous), we derive in Lemma~\ref{Lemma-equivalence} a set of equations that is equivalent to~\eqref{eq-Rm}. Finally, Lemma~\ref{Lemma-solution-def} states that $\abrOSCR^-$ solves the equations obtained by Lemma~\ref{Lemma-equivalence}. This concludes the derivation of formula \eqref{Rm-solution} for $\abrOSCR^-$. The derivation of \eqref{Rp-solution} for $\abrOSCR^+$ works similarly and is briefly discussed in Remark~\ref{Remark-differnce-Rp-Rm-derivation}.

\begin{lemma}\label{lem1}
Given two solutions  $\abrOSCR^\pm$ of the equations~\eqref{eq-Rp} and~\eqref{eq-Rm}, a solution $\abrOSCR$ to the Yang-Baxter equation~\eqref{RLL} can be obtained by taking the product~\eqref{Facrtorized-R}.
\end{lemma}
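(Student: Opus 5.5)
The plan is to work with the permuted intertwining relation and chain the two factor equations~\eqref{eq-Rp} and~\eqref{eq-Rm}. Since $\abrOSCR=P\check{\abrOSCR}$, the Yang-Baxter equation~\eqref{RLL} is equivalent to
\begin{equation*}
\check{\abrOSCR}_{12}(x-y)L_{1}(x_{1},x_{2})L_{2}(y_{1},y_{2})=L_{1}(y_{1},y_{2})L_{2}(x_{1},x_{2})\check{\abrOSCR}_{12}(x-y)\,.
\end{equation*}
I will verify this identity directly for each of the two factorised forms in~\eqref{Facrtorized-R}. The key observation is that~\eqref{eq-Rp} and~\eqref{eq-Rm} hold for arbitrary values of the four parameters. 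Each factor therefore exchanges one pair of parameters between the two Lax matrices: $\abrOSCR^+$ swaps the second parameters, and $\abrOSCR^-$ swaps the first ones.

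For the first form, $\check{\abrOSCR}=\abrOSCR^+(x_2|x_1,y_2)\abrOSCR^-(x_1,x_2|y_1)$. I apply~\eqref{eq-Rm} first:
\begin{equation*}
\abrOSCR^-_{12}(x_1,x_2|y_1)L_1(x_1,x_2)L_2(y_1,y_2)=L_1(y_1,x_2)L_2(x_1,y_2)\abrOSCR^-_{12}(x_1,x_2|y_1)\,.
\end{equation*}
Next I use~\eqref{eq-Rp}, with its parameters renamed as $(x_1,x_2;y_1,y_2)\to(y_1,x_2;x_1,y_2)$. This means the operator is $\abrOSCR^+(x_2|x_1,y_2)$, acting on $L_1(y_1,x_2)L_2(x_1,y_2)$, and the result is $L_1(y_1,y_2)L_2(x_1,x_2)\abrOSCR^+(x_2|x_1,y_2)$. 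Combining the two steps gives exactly the intertwining relation above.

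The second form, $\abrOSCR^-(x_1,y_2|y_1)\abrOSCR^+(x_2|y_1,y_2)$, is handled the same way. First~\eqref{eq-Rp} takes $L_1(x_1,x_2)L_2(y_1,y_2)$ to $L_1(x_1,y_2)L_2(y_1,x_2)$. Then~\eqref{eq-Rm} with $(x_1,x_2;y_1,y_2)\to(x_1,y_2;y_1,x_2)$ takes this to $L_1(y_1,y_2)L_2(x_1,x_2)$.

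The step that needs care is the claim that the composed operator depends only on $x-y$, so that it deserves the name $\abrOSCR(x-y)$, and that the two factorisations agree. The intertwining computation does not show either of these directly. I would argue the first from the explicit solutions~\eqref{Rp-solution} and~\eqref{Rm-solution}. The Gamma-function arguments involve the differences $x_2-y_1$, $y_2-y_1$, $x_2-x_1$ and so on. Since $x_i=x+\mu_i^{[1]}$ and $y_i=y+\mu_i^{[2]}$, each such difference depends on $x,y$ only through $x-y$ (plus constants), provided the normalisations $\kappa_\pm$ are chosen accordingly.

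For the equality of the two forms, I would appeal to uniqueness of the intertwiner up to a scalar factor. The operator commuting $L_1L_2$ into $L_1L_2$ with swapped parameters is unique up to normalisation, by irreducibility of the tensor product of generic highest-weight representations. Since the statement leaves $\kappa_\pm$ free, the two expressions then coincide up to this normalisation.
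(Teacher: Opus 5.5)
Your argument is the paper's own proof. You apply \eqref{eq-Rm}, then use \eqref{eq-Rp} with renamed parameters to move $\abrOSCR^+(x_2|x_1,y_2)$ through $L_1(y_1,x_2)L_2(x_1,y_2)$, multiply by $P$ to recover \eqref{RLL}, and handle the other ordering symmetrically.

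Your extra remarks go beyond what the paper proves; it only shows that each ordering separately solves \eqref{RLL}. The $x-y$ dependence argument is fine. For the uniqueness step, however, $V_1\otimes V_2$ is reducible as a $gl(M+1)$-module, so what you actually need is irreducibility of the tensor product as a Yangian module for generic $x-y$.
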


\begin{proof}
Let us first consider the factorisation $\abrOSCR(x-y)=P\abrOSCR^{+}(x_{2}|x_{1},y_{2})\abrOSCR^{-}(x_{1},x_{2}|y_{1})$.  Starting from~\eqref{eq-Rm}
and multiplying both sides by $\abrOSCR^{+}(x_{2}|x_{1},y_{2})$ we have that
\begin{equation}\label{here-R}
\abrOSCR^{+}_{12}(x_{2}|x_{1},y_{2})\abrOSCR^-_{12}(x_1,x_2|y_1) L_1(x_1,x_2)L_2(y_1,y_2)=\abrOSCR^{+}_{12}(x_{2}|x_{1},y_{2})L_1(y_1,x_2)L_2(x_1,y_2)\abrOSCR^-_{12}(x_1,x_2|y_1)\,.
\end{equation}
On right hand side, the operator $\abrOSCR^+$  can then be exchanged with the two L-operators using
\eqref{eq-Rp}:
\begin{equation}
\abrOSCR^{+}_{12}(x_{2}|x_{1},y_{2})\abrOSCR^-_{12}(x_1,x_2|y_1) L_1(x_1,x_2)L_2(y_1,y_2)=L_1(y_1,y_2)L_2(x_1,x_2)\abrOSCR^{+}_{12}(x_{2}|x_{1},y_{2})\abrOSCR^-_{12}(x_1,x_2|y_1)\,,
\end{equation}
Multiplying with the permutation operator $P$ than yields  the Yang-Baxter relation~\eqref{RLL}. A similar argument can be used for the case where the R-matrix factorises as $\abrOSCR(x-y)= P\abrOSCR^-(x_1,y_2|y_1)\abrOSCR^+(x_2|y_1,y_2)$.

\end{proof}

\begin{lemma}\label{Lemma-equivalence}
	For a shift-invariant operator $\abrOSCR^{-}(x_1,x_2|y_1)=\abrOSCR^{-}(x_1+\lambda,x_2+\lambda|y_1+\lambda)$, the equation in~\eqref{eq-Rm} is equivalent to
	\begin{equation}\label{goal-2a}
		\begin{split}
			&\abrOSCR_{12}^{-}(x_1,x_2|y_1)\left[L_{1}(x_{1},x_{2})+L_{2}(y_{1},y_{2})\right]=\left[L_{1}(y_{1},x_{2})+L_{2}(x_{1},y_{2})\right]\abrOSCR_{12}^{-}(x_1,x_2|y_1),\\
			\end{split}
	\end{equation}
	and
	\begin{equation}\label{goal-2b}
		\begin{split}
			&\abrOSCR_{12}^{-}(x_1,x_2|y_1)\oad_{a}^{[2]}=\oad^{[2]}_a\abrOSCR_{12}^{-}(x_1,x_2|y_1)\qquad \forall a\in \{1,2,\ldots,\M\}\,.
		\end{split}
	\end{equation}
\end{lemma}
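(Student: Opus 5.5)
Following Derkachov's $M=1$ argument, I would expand both sides of \eqref{eq-Rm} as quadratic polynomials in a common spectral shift and compare coefficients. Write $L_i(x_1,x_2)=x_1\,\mathcal{E}+ x_2\,\mathcal{F}+L_i^{(0)}$, where in the block form \eqref{L1-u1u2} the dependence on the parameters is explicit and linear. Shift-invariance of $\abrOSCR^{-}$ allows us to replace $(x_1,x_2,y_1)$ by $(x_1+\lambda,x_2+\lambda,y_1+\lambda)$ with $\abrOSCR^-$ unchanged, while $y_2$ is independent of the others and may also be shifted by $\lambda$ (the operator does not depend on $y_2$). Under the simultaneous shift of all four parameters by $\lambda$, each Lax operator becomes $L_i+\lambda\cdot 1$, because the off-diagonal entry depends only on a difference and the diagonal entries shift by $\lambda$. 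So \eqref{eq-Rm} with shifted parameters reads
\[
\abrOSCR^-\,(L_1+\lambda)(L_2+\lambda)=(L_1'+\lambda)(L_2'+\lambda)\,\abrOSCR^-,
\]
where $L_1'=L_1(y_1,x_2)$ and $L_2'=L_2(x_1,y_2)$. It holds for all $\lambda$ if and only if it holds separately at orders $\lambda^2$, $\lambda^1$ and $\lambda^0$. The $\lambda^2$ term is trivial, the $\lambda^1$ term is exactly \eqref{goal-2a}, and the $\lambda^0$ term is \eqref{eq-Rm} itself. Hence \eqref{eq-Rm} is equivalent to the pair consisting of \eqref{goal-2a} and the unshifted \eqref{eq-Rm}, and the remaining task is to replace the latter by \eqref{goal-2b} given \eqref{goal-2a}.

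For this second reduction I would use the factorised form \eqref{eq:laxfac}, or work directly with matrix entries. The idea is that \eqref{eq-Rm} only exchanges $x_1\leftrightarrow y_1$ between the two sites, which is the "lower-left" part of the product. Concretely, I would compute the product $L_1(x_1,x_2)L_2(y_1,y_2)$ blockwise. The $(0,0)$ and $(0,b)$ blocks involve $\OAD^{[2]}$ through $L_2$'s first row, and the lower blocks involve $\OA^{[1]}$. Subtracting the sum relation \eqref{goal-2a}, appropriately multiplied, I expect the product relation to reduce to a relation of the form
\[
\abrOSCR^-\,X\,L_2=X'\,L_2'\,\abrOSCR^-
\]
for a simpler piece $X$. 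Using that $L_2(y_1,y_2)$ and $L_2(x_1,y_2)$ differ only in the terms involving $y_1$ versus $x_1$ (the entries $y_1-\OAD^{[2]}\OA^{[2]}$ and $\OAD^{[2]}(y_1-y_2+1-\dots)$), the residual condition should become exactly commutation of $\abrOSCR^-$ with the row $\OAD^{[2]}$ appearing there. Conversely, assuming \eqref{goal-2b} and \eqref{goal-2a}, I would reassemble the product identity by the same algebra run backwards.

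The main obstacle is this second step: organising the quadratic terms so that, after using \eqref{goal-2a}, the remainder is manifestly proportional to $[\abrOSCR^-,\oad^{[2]}_a]$ with invertible coefficients (so the equivalence holds in both directions). My first attempt would conjugate everything by the triangular factors in \eqref{eq:laxfac}, namely $\begin{pmatrix}1&\pm\OAD\\0&\ID\end{pmatrix}$. In that frame the middle factor is lower triangular, and the parameter swap $x_1\leftrightarrow y_1$ touches only the $(0,0)$ entries, so the residual terms can be read off from one block column.
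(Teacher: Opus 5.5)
There is a genuine gap. Your first step is correct but proves only \eqref{eq-Rm}$\Rightarrow$\eqref{goal-2a}: shifting all four parameters by one $\lambda$ turns every Lax matrix into $L+\lambda$, and the $\lambda^1$ coefficient is \eqref{goal-2a}. The resulting ``equivalence'' of \eqref{eq-Rm} with \eqref{goal-2a} plus \eqref{eq-Rm} is a tautology. The whole content of the lemma, namely \eqref{eq-Rm}$\Rightarrow$\eqref{goal-2b} and the converse, therefore sits in your second step, which contains expectations rather than arguments.

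For \eqref{eq-Rm}$\Rightarrow$\eqref{goal-2b}, the uniform shift throws away exactly the information you need. Because $\abrOSCR^-$ does not depend on $y_2$, shift $y_2\to y_2+\mu$ independently of $(x_1,x_2,y_1)\to(x_1+\lambda,x_2+\lambda,y_1+\lambda)$. Then $L_2(y_1+\lambda,y_2+\mu)=L_2(y_1,y_2)+\lambda+(\mu-\lambda)F^{[2]}$ with $F=\begin{pmatrix}0&-\OAD\\0&\ID\end{pmatrix}$, and likewise for $L_2(x_1+\lambda,y_2+\mu)$. The $\lambda\mu$ coefficient is $\abrOSCR^-F^{[2]}=F^{[2]}\abrOSCR^-$, i.e.\ \eqref{goal-2b}; the $\mu$ coefficient gives a relation implied by the other two. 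Your heuristic that \eqref{goal-2b} comes from the entries where $L_2(y_1,y_2)$ and $L_2(x_1,y_2)$ differ looks in the wrong place. It comes from the $y_2$-coefficient of $L_2$, which is where $\OAD^{[2]}$ appears.

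For the converse, ``running the algebra backwards'' is not possible. Coefficient extraction is one-way: the $\lambda^0\mu^0$ coefficient is \eqref{eq-Rm} itself, and the higher coefficients do not give it back. A separate argument is required, and the ingredient missing from your sketch is the $gl(\M+1)$-invariance of the Lax matrix. Conjugating in auxiliary space by $\begin{pmatrix}1&\pm\OAD\\0&\ID\end{pmatrix}$ alone cannot work, because $L_1$ and $L_2$ carry different triangular factors, built from $\OAD^{[1]}$ and $\OAD^{[2]}$. So nothing can yet be read off a single block column.

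The paper's route for the converse is as follows.
\begin{itemize}
\item It writes $L_2(u,y_2)=G(\OAD^{[2]})\begin{pmatrix}u&0\\\OA^{[2]}&(y_2-1)\ID\end{pmatrix}G^{-1}(\OAD^{[2]})$ by \eqref{eq:laxfac}, and commutes $\abrOSCR^-$ with $G(\OAD^{[2]})$ using \eqref{goal-2b}.
\item It moves $G(\OAD^{[2]})$ through $L_1$ via \eqref{useful-relation} with $\mathbf t=\OAD^{[2]}$. This converts auxiliary-space conjugation into conjugation by $e^{\OAD^{[2]}\OA^{[1]}}$ in quantum space.
\item In terms of $\mathbf r=e^{\OAD^{[2]}\OA^{[1]}}\abrOSCR^-e^{-\OAD^{[2]}\OA^{[1]}}$, the $y_2$-dependence becomes a c-number right factor $\mathrm{diag}(1,(y_2-1)\ID)$ that can be stripped.
\item The same conjugation turns \eqref{goal-2a} and \eqref{goal-2b} into the entrywise relations \eqref{set1}--\eqref{set3}.
\item The remaining block column is checked from these; the last component requires cancelling a left factor $\oad^{[1]}_b$.
\end{itemize}
Without this mechanism, or an equivalent one, the harder half of the lemma is not proved.
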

\begin{proof}
    
 First, we show that~\eqref{eq-Rm} implies~\eqref{goal-2a} and~\eqref{goal-2b}.
In the following we often suppress the arguments of $\abrOSCR^{-}(x_{1},x_{2}|y_{1})$ and just write $\abrOSCR^{-}$ to shorten the presentation.
Let us consider the following shift of spectral parameters in~\eqref{eq-Rm}:
\begin{equation}\label{variable-shift}
	\begin{split}
		x_{1}\to x_{1}+\lambda,\quad x_{2}\to x_{2}+\lambda,\quad y_{1}\to y_{1}+\lambda \quad \text{and}\quad y_{2}\to y_{2}+\mu\,.
	\end{split}
\end{equation}
Under the shift the Lax matrices  transform as
\begin{equation}
	\begin{split}
		&L(x_{1}+\lambda,x_{2}+\lambda)= L(x_{1},x_{2})+\lambda\ID,\;\quad\qquad\qquad\qquad L(y_{1}+\lambda,x_{2}+\lambda)= L(y_{1},x_{2})+\lambda \ID\\
		&L(y_{1}+\lambda,y_{2}+\mu)=L(y_{1},y_{2})+\lambda\ID+(\mu-\lambda)F,\qquad 	L(x_{1}+\lambda,y_{2}+\mu)= L(x_{1},y_{2})+\lambda\ID+(\mu-\lambda)F
	\end{split}
\end{equation}
where
						\begin{eqnarray}
	F=\begin{pmatrix}
		0&-\OAD\\
		0&\ID
	\end{pmatrix}\,.
\end{eqnarray}
Therefore, using the assumption that $\abrOSCR^{-}$ is shift invariant, the relation~\eqref{eq-Rm} becomes
\begin{equation}\label{consequence-shift-1}
	\begin{split}
		&\abrOSCR_{12}^{-}(x_1,x_2|y_1)\left[\left(L_{1}(x_{1},x_{2})+\lambda\ID\right)\left(L_{2}(y_{1},y_{2})+\lambda\ID+(\mu-\lambda)F^{[2]}\right)\right]\\
		=&
		\left[\left(L_{1}(y_{1},x_{2})+\lambda\ID\right)\left(L_{2}(x_{1},y_{2})+\lambda\ID+(\mu-\lambda)F^{[2]}\right)\right]\abrOSCR_{12}^{-}(x_1,x_2|y_1)\,.
	\end{split}
\end{equation}
By equating the powers of $\lambda$ and $\mu$ one obtains the three conditions
	\begin{align}
			&\abrOSCR_{12}^{-}(x_1,x_2|y_1)\left[L_{1}(x_{1},x_{2})+L_{2}(y_{1},y_{2})\right]=\left[L_{1}(y_{1},x_{2})+L_{2}(x_{1},y_{2})\right]\abrOSCR_{12}^{-}(x_1,x_2|y_1)\label{list-1-first}\\
			&\abrOSCR_{12}^{-}(x_1,x_2|y_1)F^{[2]}=F^{[2]}\abrOSCR_{12}^{-}(x_1,x_2|y_1)\label{list-1-second}\\
			&\abrOSCR_{12}^{-}(x_1,x_2|y_1)L_{1}(x_{1},x_{2})F^{[2]}=L_{1}(y_{1},x_{2})F^{[2]}\abrOSCR_{12}^{-}(x_1,x_2|y_1)\label{list-1-third}\,.
		\end{align}
  Notice that~\eqref{list-1-third} is implied by~\eqref{list-1-first} and~\eqref{list-1-second}. This can be seen as follows.
   First use~\eqref{list-1-second} to bring~\eqref{list-1-third} to the form
   \begin{equation}
       \left[\abrOSCR^{-}(x_1,x_2|y_1)L_{1}(x_{1},x_{2})-L_{1}(y_{1},x_{2})\abrOSCR^{-}(x_1,x_2|y_1)\right]F^{[2]}=0\,.
   \end{equation}
   Next apply~\eqref{list-1-first} and again~\eqref{list-1-second} to obtain
 \begin{equation}
     \abrOSCR^{-}(x_1,x_2|y_1)L_{2}(x_{1},y_{2})F^{[2]}=L_{2}(y_{1},y_{2})F^{[2]}\abrOSCR^{-}(x_1,x_2|y_1)\,.
   \end{equation}
Finally  noting that
\begin{equation}
    L_{2}(y_{1},y_{2})F^{[2]}=(y_{2}-1)F^{[2]}\,,
\end{equation}
we see that~\eqref{list-1-third} follows from~\eqref{list-1-first} and~\eqref{list-1-second}.

Now, we show that~\eqref{goal-2a} and~\eqref{goal-2b} imply~\eqref{eq-Rm}.  The logic of this proof consists of three steps. First, one derives an equation equivalent to~\eqref{eq-Rm}. Second,  using the same shift of spectral parameters performed in~\eqref{variable-shift}, one obtains a set of equations that is equivalent to~\eqref{goal-2a} and~\eqref{goal-2b}. Third, one shows that this last set of equations implies~\eqref{eq-Rm}.

\paragraph{Step (i).} One shows that~\eqref{eq-Rm} is equivalent to
\begin{equation}\label{equivalent-def-2}
	\begin{split}
		\mathbf{r}L_{1}(x_{1},x_{2})\left(\begin{array}{cc}
			y_1&0\\
			\OA^{[2]}-\OA^{[1]}&\ID
		\end{array}\right)=
		L_{1}(y_{1},x_{2})\left(\begin{array}{cc}
			x_1&0\\
			\OA^{[2]}-\OA^{[1]}&\ID
		\end{array}\right)\mathbf{r}\,,
	\end{split}
\end{equation}
where
\begin{equation}\label{r-definition}
	\mathbf{r}:=\exp{\left(\OAD^{[2]}\OA^{[1]}\right)}\abrOSCR^{-}\exp{\left(-\OAD^{[2]}\OA^{[1]}\right)}\,.
\end{equation}
This is obtained as follows. The Lax operator is $gl(\M+1)$ invariant, i.e.
\begin{equation}\label{glinv}
	\left(e_{A,B}+ J_{A,B}\right)L(x)=L(x)(e_{A,B}+ J_{A,B})\qquad \forall A,B\in \{0,1,\ldots, \M\}\,.
\end{equation}
Here, $e_{A,B}$ denotes the algebra generators of the fundamental representation with $(e_{AB})_{CD}=\delta_{AC}\delta_{BD}$ and $J_{A,B}$ the algebra generators of the oscillator representation~\eqref{GL-oscRealization}. Consider $\mathbf{t}=(t_{1},\ldots,t_{M})$ such that $\mathbf{t}\OA=\sum_{a=1}^{\M}t_{a}\mathbf{a}_{a}$. By fixing $A=0$ in~\eqref{glinv} such that $J_{0,b}=\mathbf{a}_{b}$, $b=1,\ldots,\M$ and multiplying by $\mathbf{t}$ one finds that
\begin{equation}
	\begin{split}
		\left[\left(\sum_{b=1}^{\M}t_{b}e_{0,b}\right)+ \mathbf{t}\OA\right]L(x)=L(x)\left[\left(\sum_{b=1}^{\M}t_{b}e_{0,b}\right)+ \mathbf{t}\OA\right]\,.
	\end{split}
\end{equation}
This implies that
\begin{equation}
	\begin{split}G^{-1}(\mathbf{t})\exp{\left(\mathbf{t}\OA\right)}L(u)=L(u)G^{-1}(\mathbf{t})\exp{\left(\mathbf{t}\OA\right)}\,.
	\end{split}
\end{equation}
where
\begin{equation}
	G^{-1}(\mathbf{t})=\exp{\left[\sum_{b=1}^{\M}t_{b}e_{0,b}\right]}=\ID+\sum_{b=1}^{\M}t_{b}e_{0,b}
	 \,.
\end{equation}
Then, it follows that
\begin{align}\label{useful-relation}
	G^{-1}(\mathbf{t})L(u)G(\mathbf{t})=\exp{\left(-\mathbf{t}\OA\right)} L(u)\exp{\left(\mathbf{t}\OA\right)}\,.
\end{align}
Equation~\eqref{eq-Rm} is rewritten using~\eqref{eq:laxfac} and~\eqref{useful-relation}, where $t_{a}={\oad}_{a}^{[2]}$ for all $a\in \{1,2,\ldots,\M\}$. One obtains
\begin{equation}
	\begin{split}
		&\abrOSCR^{-}L_{1}(x_{1},x_{2})G(\OAD^{[2]})\left(\begin{array}{cc}
			y_{1}&0\\
			\OA^{[2]}&(y_{2}-1)\ID
		\end{array}\right)G^{-1}(\OAD^{[2]})
		\\=&
		L_{1}(y_{1},x_{2})G(\OAD^{[2]})\left(\begin{array}{cc}
			x_1&0\\
			\OA^{[2]}&(y_{2}-1)\ID
		\end{array}\right)G^{-1}(\OAD^{[2]})\abrOSCR^{-}\,,
	\end{split}
\end{equation}
which is equivalent to
\begin{equation}
	\begin{split}
		&G^{-1}(\OAD^{[2]})\abrOSCR^{-}L_{1}(x_{1},x_{2})G(\OAD^{[2]})\left(\begin{array}{cc}
			y_{1}&0\\
			\OA^{[2]}&(y_{2}-1)\ID
		\end{array}\right)
		\\=&
		G^{-1}(\OAD^{[2]})L_{1}(y_{1},x_{2})G(\OAD^{[2]})\left(\begin{array}{cc}
			x_1&0\\
			\OA^{[2]}&(y_{2}-1)\ID
		\end{array}\right)G^{-1}(\OAD^{[2]})\abrOSCR^{-}G(\OAD^{[2]})\,.
	\end{split}
\end{equation}
Since it has been shown that $\abrOSCR^{-}{\oad}_{a}^{[2]}={\oad}_{a}^{[2]}\abrOSCR^{-}$  for any solution of~\eqref{eq-Rm} one obtains
\begin{equation}
	\begin{split}
		&\abrOSCR^{-}G^{-1}(\OAD^{[2]})L_{1}(x_{1},x_{2})G(\OAD^{[2]})\left(\begin{array}{cc}
			y_{1}&0\\
			\OA^{[2]}&(y_{2}-1)\ID
		\end{array}\right)
		\\=&
		G^{-1}(\OAD^{[2]})L_{1}(x_{1},y_{2})G(\OAD^{[2]})\left(\begin{array}{cc}
			x_1&0\\
			\OA^{[2]}&(y_{2}-1)\ID
		\end{array}\right)\abrOSCR^{-}\,.
	\end{split}
\end{equation}
Next, using~\eqref{useful-relation} one finds
\begin{equation}
	\begin{split}
		&\abrOSCR^{-}\exp{\left(-\OAD^{[2]}\OA^{[1]}\right)}L_{1}(x_{1},x_{2})\exp{\left(\OAD^{[2]}\OA^{[1]}\right)}\left(\begin{array}{cc}
			y_{1}&0\\
			\OA^{[2]}&(y_{2}-1)\ID
		\end{array}\right)
		\\=&
		\exp{\left(-\OAD^{[2]} \OA^{[1]}\right)}L_{1}(y_{1},x_{2})\exp{\left(\OAD^{[2]}\OA^{[1]}\right)}\left(\begin{array}{cc}
			x_1&0\\
			\OA^{[2]}&(y_{2}-1)\ID
		\end{array}\right)\abrOSCR^{-}\,.
	\end{split}
\end{equation}
By computing the commutator
\begin{equation}
	\left[\OAD^{[2]}\OA^{[1]},\mathbf{a}_{b}^{[2]}\right]=-\mathbf{a}_{b}^{[1]}
\end{equation}
 and it follows, by the Hadamard formula,
 that
\begin{equation}
	\exp{\left(\OAD^{[2]}\OA^{[1]}\right)}\mathbf{a}_{a}^{[2]}=\left(\mathbf{a}_{a}^{[2]}-\mathbf{a}_{a}^{[1]}\right)	\exp{\left(\OAD^{[2]}\OA^{[1]}\right)} \qquad \forall a\in \{1,2,\ldots,\M\}\,.
\end{equation}
Therefore, one obtains that
\begin{equation}
	\begin{split}
		&\abrOSCR^{-}\exp{\left(-\OAD^{[2]}\OA^{[1]}\right)}L_{1}(x_{1},x_{2})\left(\begin{array}{cc}
			y_1&0\\
			\OA^{[2]}-\OA^{[1]}&(y_{2}-1)\ID
		\end{array}\right)\exp{\left(\OAD^{[2]}\OA^{[1]}\right)}
		\\=&
		\exp{\left(-\OAD^{[2]}\OA^{[1]}\right)}L_{1}(y_{1},x_{2})\left(\begin{array}{cc}
			x_1&0\\
			\OA^{[2]}-\OA^{[1]}&(y_{2}-1)\ID
		\end{array}\right)\exp{\left(\OAD^{[2]}\OA^{[1]}\right)}\abrOSCR^{-}\,,
	\end{split}
\end{equation}
which can be rewritten as
\begin{equation}
	\begin{split}
		&\exp{\left(\OAD^{[2]}\OA^{[1]}\right)}\abrOSCR^{-}\exp{\left(-\OAD^{[2]}\OA^{[1]}\right)}L_{1}(x_{1},x_{2})\left(\begin{array}{cc}
			y_1&0\\
			\OA^{[2]}-\OA^{[1]}&(y_{2}-1)\ID
		\end{array}\right)
		\\=&
		L_{1}(y_{1},x_{2})\left(\begin{array}{cc}
			x_1&0\\
			\OA^{[2]}-\OA^{[1]}&(y_{2}-1)\ID
		\end{array}\right)\exp{\left(\OAD^{[2]}\OA^{[1]}\right)}\abrOSCR^{-}\exp{\left(-\OAD^{[2]}\OA^{[1]}\right)}\,.
	\end{split}
\end{equation}
By recalling~\eqref{r-definition}
one writes~\eqref{eq-Rm} as
\begin{equation}\label{equivalent-def}
	\begin{split}
		\mathbf{r}L_{1}(x_{1},x_{2})\left(\begin{array}{cc}
			y_1&0\\
			\OA^{[2]}-\OA^{[1]}&(y_{2}-1)\ID
		\end{array}\right)=
		L_{1}(y_{1},x_{2})\left(\begin{array}{cc}
			x_1&0\\
			\OA^{[2]}-\OA^{[1]}&(y_{2}-1)\ID
		\end{array}\right)\mathbf{r}\,.
	\end{split}
\end{equation}
Finally,  observing that
\begin{equation}
	\left(\begin{array}{cc}
			y_1&0\\
			\OA^{[2]}-\OA^{[1]}&(y_{2}-1)\ID
		\end{array}\right)=\left(\begin{array}{cc}
		y_1&0\\
		\OA^{[2]}-\OA^{[1]}&\ID
	\end{array}\right)\left(\begin{array}{cc}
		1&0\\
		0&(y_{2}-1)\ID
	\end{array}\right)\,,
\end{equation}
\eqref{equivalent-def-2} is obtained.

\paragraph{Step (ii).} Next, a set of equations equivalent to~\eqref{goal-2a} and~\eqref{goal-2b} is derived. In particular, it  will be shown that~\eqref{goal-2a} and~\eqref{goal-2b} are equivalent to the following system of equations:
\begin{align}
				&\mathbf{r}{\oad}_{a}^{[1]}\mathbf{a}_{b}^{[1]}={\oad}_{a}^{[1]}\mathbf{a}_{b}^{[1]}\mathbf{r}\qquad \forall a,b \in \{1,\ldots,\M\}
    \label{set1} \\
		&\mathbf{r}\mathbf{a}_{a}^{[2]}=\mathbf{a}_{a}^{[2]}\mathbf{r}\qquad \forall a \in \{1,\ldots,\M\}
    \label{set2} \\
		&\mathbf{r}{\oad}_{a}^{[1]}(x_{2}-x_{1}+\OAD^{[1]}\OA^{[1]})={\oad}_{a}^{[1]}(x_{2}-y_{1}+\OAD^{[1]}\OA^{[1]})\mathbf{r}\qquad \forall a\in \{1,\ldots,\M\}\,.
    \label{set3}
        \end{align}

First, the same shift of parameters~\eqref{variable-shift} in~\eqref{equivalent-def} is applied, obtaining that \begin{equation}
	\begin{split}
		&\mathbf{r}\left(L_{1}(x_{1},x_{2})+\lambda \ID\right) \left[\left(\begin{array}{cc}
			y_1&0\\
			\OA^{[2]}-\OA^{[1]}&(y_{2}-1)\ID
		\end{array}\right)+\lambda\ID+(\mu-\lambda)\begin{pmatrix}
			0&0\\
			0&\ID
		\end{pmatrix}\right]\\
		=&
		\left(L_{1}(y_{1},x_{2})+\lambda \ID\right) \left[\left(\begin{array}{cc}
			x_1&0\\
			\OA^{[2]}-\OA^{[1]}&(y_{2}-1)\ID
		\end{array}\right)+\lambda\ID+(\mu-\lambda)\begin{pmatrix}
			0&0\\
			0&\ID
		\end{pmatrix}\right]\mathbf{r}\,.
	\end{split}
\end{equation}
Then, by equating the powers of $\lambda$ and $\mu$ one obtains
\begin{equation}\label{eqqq}
	\begin{split}
		&\mathbf{r}\left[L_{1}(x_{1},x_{2})+\left(\begin{array}{cc}
			y_1&0\\
			\OA^{[2]}-\OA^{[1]}&(y_{2}-1)\ID
		\end{array}\right)\right]=\left[L_{1}(y_{1},x_{2})+\left(\begin{array}{cc}
			x_1&0\\
			\OA^{[2]}-\OA^{[1]}&(y_{2}-1)\ID
		\end{array}\right)\right]\mathbf{r}
	\end{split}
\end{equation}
and
\begin{equation}\label{eq:rLI}
	\begin{split}
		\mathbf{r}L_{1}(x_{1},x_{2})\begin{pmatrix}
			0&0\\
			0&\ID
		\end{pmatrix}=L_{1}(y_{1},x_{2})\begin{pmatrix}
			0&0\\
			0&\ID
		\end{pmatrix}\mathbf{r}\,.
	\end{split}
\end{equation}
The second equation above is implied by the first one.
By inserting the explicit form of $L_{1}(x_{1},x_{2})$ and $L_{1}(y_{1},x_{2})$, cf.~\eqref{L1-u1u2}, into~\eqref{eqqq} the following equation is obtained
\begin{equation}\label{intermediate}
	\begin{split}
		&\mathbf{r} \left(\begin{array}{cc}
			x_{1}+y_{1}-\OAD^{[1]}\OA^{[1]}&-\OAD^{[1]}\left((x_{2}-x_{1}-1)\ID+\OA^{[1]}\OAD^{[1]}\right)\\
			\OA^{[2]}&(x_{2}+y_{2}-2)\ID+\OA^{[1]}\OAD^{[1]}
		\end{array}\right)
		\\=&
		\left(\begin{array}{cc}
			x_{1}+y_{1}-\OAD^{[1]}\OA^{[1]}&-\OAD^{[1]}\left((x_{2}-y_{1}-1)\ID+\OA^{[1]}\OAD^{[1]}\right)\\
			\OA^{[2]}&(x_{2}+y_{2}-2)\ID+\OA^{[1]}\OAD^{[1]}
		\end{array}\right)\mathbf{r}\,,
	\end{split}
\end{equation}
The elements of the matrix yield the relations in~\eqref{set1}-\eqref{set3}.
Thus, one concludes that they are equivalent to~\eqref{goal-2a} and~\eqref{goal-2b}. 
\paragraph{Step (iii).}  Finally, it is shown that the relations in~\eqref{set1}-~\eqref{set3} imply~\eqref{equivalent-def-2}.
                                  To achieve this, by using~\eqref{eq:rLI} and~\eqref{set2}, equation~\eqref{equivalent-def-2} is reduced to
\begin{equation}\label{eq:relRLv}
	\mathbf{r}L_{1}(x_{1},x_{2})\begin{pmatrix}
		y_{1}\\
		-\OA^{[1]}
	\end{pmatrix}=L_{1}(y_{1},x_{2})\begin{pmatrix}
		x_{1}\\
		-\OA^{[1]}
	\end{pmatrix}\mathbf{r}\,.
\end{equation}
By substituting the matrices $L_{1}(x_{1},x_{2})$ and $L_{1}(y_{1},x_{2})$ one finds
\begin{equation}\label{quasi-final}
	\begin{split}
		&\mathbf{r}
		\begin{pmatrix}
			\OAD^{[1]}\OA^{[1]}\left(x_{2}-y_{1}-x_{1}-1+\OAD^{[1]}\OA^{[1]}\right)\\
				-\OA^{[1]}\left(x_{2}-1-y_{1}+\OAD^{[1]}\OA^{[1]}\right)
            		\end{pmatrix}=
				\begin{pmatrix}
			\OAD^{[1]}\OA^{[1]}\left(x_{2}-y_{1}-x_{1}-1+\OAD^{[1]}\OA^{[1]}\right)\\
			-\OA^{[1]}\left(x_{2}-1-x_{1}+\OAD^{[1]}\OA^{[1]}\right)
            		\end{pmatrix}\mathbf{r}
	\end{split}
\end{equation}
The relation in the upper
component of~\eqref{quasi-final} follows directly from~\eqref{set1}.

Now, it will be shown that the first equation of~\eqref{quasi-final} follows from~\eqref{set3} and~\eqref{set1}. From~\eqref{set3} it follows that
\begin{equation}
    	\mathbf{r}{\OAD}^{[1]}(x_{2}-x_{1}+\OAD^{[1]}\OA^{[1]})\OA^{[1]}={\OAD}^{[1]}(x_{2}-y_{1}+\OAD^{[1]}\OA^{[1]})\mathbf{r}\OA^{[1]}\,,
\end{equation}
by using~\eqref{set1} one obtains
\begin{equation}
    	\mathbf{r}{\OAD}^{[1]}(x_{2}-x_{1}+\OAD^{[1]}\OA^{[1]})\OA^{[1]}={\OAD}^{[1]}\mathbf{r}(x_{2}-y_{1}+\OAD^{[1]}\OA^{[1]})\OA^{[1]}\,.
\end{equation}
And thus
\begin{equation}
    	\mathbf{r}{\OAD}^{[1]}\OA^{[1]}(x_{2}-x_{1}-1+\OAD^{[1]}\OA^{[1]})={\OAD}^{[1]}\mathbf{r}\OA^{[1]}(x_{2}-y_{1}-1+\OAD^{[1]}\OA^{[1]})\,.
\end{equation}
Then, using~\eqref{set1} again, one obtains
\begin{equation}
    	{\OAD}^{[1]}\left\{\OA^{[1]}(x_{2}-x_{1}-1+\OAD^{[1]}\OA^{[1]})\mathbf{r}-\mathbf{r}\OA^{[1]}(x_{2}-y_{1}-1+\OAD^{[1]}\OA^{[1]})\right\}=0\,,
\end{equation}
that implies the first equation of~\eqref{quasi-final}.
\end{proof}

\begin{lemma}\label{Lemma-solution-def}
	The following operator
	\begin{equation}
		\abrOSCR^-(x_1,x_2|y_1)=\kappa_-(x_1,x_2|y_1)\,e^{-\OAD^{[2]}\OA^{[1]}}\frac{\Gamma(\OAD^{[1]}\OA^{[1]}+x_2-y_1)}{\Gamma(\OAD^{[1]}\OA^{[1]}+x_2-x_1)}e^{\OAD^{[2]}\OA^{[1]}}
	\end{equation}
	solves equations~\eqref{goal-2a} and~\eqref{goal-2b}.
\end{lemma}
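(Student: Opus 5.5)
The plan is to use the reduction from the proof of Lemma~\ref{Lemma-equivalence} and not verify~\eqref{goal-2a} directly. Step (ii) there shows that~\eqref{goal-2a} and~\eqref{goal-2b} are equivalent to the elementary system~\eqref{set1}--\eqref{set3} for the conjugated operator $\mathbf{r}$ of~\eqref{r-definition}. So I will check~\eqref{goal-2b} directly, and then check~\eqref{set1}--\eqref{set3} for $\mathbf{r}$.

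\textbf{Compute $\mathbf{r}$ and check~\eqref{goal-2b}.} Write $N^{[1]}:=\OAD^{[1]}\OA^{[1]}=\sum_a \oad^{[1]}_a\oa^{[1]}_a$ for the total number operator of the first copy. In $\mathbf{r}=e^{\OAD^{[2]}\OA^{[1]}}\abrOSCR^- e^{-\OAD^{[2]}\OA^{[1]}}$ the exponentials cancel against those in $\abrOSCR^-$, so
\begin{equation*}
\mathbf{r}=\kappa_-\,\frac{\Gamma(N^{[1]}+x_2-y_1)}{\Gamma(N^{[1]}+x_2-x_1)} .
\end{equation*}
For~\eqref{goal-2b}, note that $\oad^{[2]}_a$ commutes with $\OAD^{[2]}\OA^{[1]}$ and with $N^{[1]}$. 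Hence it commutes with every factor of $\abrOSCR^-$, so~\eqref{goal-2b} holds. Shift invariance, which is the hypothesis of Lemma~\ref{Lemma-equivalence}, is also clear. The operator depends only on $x_2-y_1$ and $x_2-x_1$, provided the normalisation $\kappa_-$ is chosen shift invariant, e.g.\ as a function of these differences.

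\textbf{Check~\eqref{set1}--\eqref{set3}.} Relation~\eqref{set2} is immediate, because $\mathbf{r}$ involves only first-copy oscillators. Relation~\eqref{set1} holds because $\oad^{[1]}_a\oa^{[1]}_b$ commutes with $N^{[1]}$, since it preserves the total number. Therefore it commutes with any function of $N^{[1]}$, defined spectrally on the Fock basis or as a formal power series.

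The only nontrivial relation is~\eqref{set3}. It follows from the shift rule $f(N^{[1]})\,\oad^{[1]}_a=\oad^{[1]}_a\, f(N^{[1]}+1)$. Applying it on the left-hand side gives
\begin{equation*}
\oad^{[1]}_a\,\frac{\Gamma(N^{[1]}+1+x_2-y_1)}{\Gamma(N^{[1]}+1+x_2-x_1)}\,(N^{[1]}+x_2-x_1).
\end{equation*}
Using $\Gamma(z+1)=z\Gamma(z)$ in both numerator and denominator, the factor $(N^{[1]}+x_2-x_1)$ cancels. What remains is $\oad^{[1]}_a (N^{[1]}+x_2-y_1)\,\Gamma(N^{[1]}+x_2-y_1)/\Gamma(N^{[1]}+x_2-x_1)$, which is exactly the right-hand side of~\eqref{set3}. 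This is precisely the functional equation that singles out the ratio of Gamma functions.

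\textbf{Main obstacle.} The computational content is trivial. The step that needs care is the logical one: confirming that the equivalence in Lemma~\ref{Lemma-equivalence} is used in the direction ``\eqref{set1}--\eqref{set3} together with~\eqref{goal-2b} imply~\eqref{goal-2a}''. This requires reading Step (ii) as a genuine equivalence, so that the matrix entries of~\eqref{intermediate} reassemble into~\eqref{eqqq} and, after conjugating back by $e^{\pm\OAD^{[2]}\OA^{[1]}}$, into~\eqref{goal-2a}. I will spell this out by reversing the conjugation steps of Step (i). Each of those steps is an identity of conjugations and uses only~\eqref{goal-2b}, which has already been established. A secondary point is making sense of the Gamma functions of $N^{[1]}$ on the Fock space. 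I will restrict to generic $x_i,y_1$ so that no poles are hit, and treat the relations as identities of diagonal operators on each eigenspace of $N^{[1]}$.
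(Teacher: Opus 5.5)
Your proof is correct and follows the paper's route: both reduce to the system \eqref{set1}--\eqref{set3} for $\mathbf{r}=\kappa_-\,\Gamma(N^{[1]}+x_2-y_1)/\Gamma(N^{[1]}+x_2-x_1)$, and both settle \eqref{set3} via $f(N^{[1]})\,\oad^{[1]}_a=\oad^{[1]}_a\, f(N^{[1]}+1)$ and $\Gamma(z+1)=z\Gamma(z)$; the only difference is that the paper derives this Gamma ratio by solving the resulting recurrence, while you verify it. Your separate direct check of \eqref{goal-2b} is a useful extra, since \eqref{set1}--\eqref{set3} constrain only $\mathbf{r}$ and do not by themselves force \eqref{goal-2b}, which is exactly what is needed to undo the conjugations of Step (i) and recover \eqref{goal-2a}.
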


\begin{proof}
Recall that~\eqref{goal-2a} and~\eqref{goal-2b} are equivalent to~\eqref{set1}-~\eqref{set3} then, it is enough to construct a solution of the latter. The first two conditions in~\eqref{set1}-\eqref{set3} are satisfied by a function of the color-blind number operator $\OAD^{[1]}\OA^{[1]}$, therefore one writes
\begin{align}
	\mathbf{r}=\mathbf{r}(\OAD^{[1]}\OA^{[1]})\,.
\end{align}
Then, by substituting the above $\mathbf{r}$ in~\eqref{set3}  one obtains
\begin{equation}
	\mathbf{r}(\OAD^{[1]}\OA^{[1]}){\oad}_{a}^{[1]}(x_{2}-x_{1}+\OAD^{[1]}\OA^{[1]})={\oad}_{a}^{[1]}(x_{2}-y_{1}+\OAD^{[1]}\OA^{[1]})\mathbf{r}(\OAD^{[1]}\OA^{[1]})\,.
\end{equation}
Using the exchange relations
\begin{equation}
	\mathbf{r}(\OAD^{[1]}\OA^{[1]})\oad_{a}=\oad_{a}\mathbf{r}(\OAD^{[1]}\OA^{[1]}+1)\,,
\end{equation}
if follows that
\begin{equation}
	\mathbf{r}(\OAD^{[1]}\OA^{[1]}+1)(x_{2}-x_{1}+\OAD^{[1]}\OA^{[1]})=\mathbf{r}(\OAD^{[1]}\OA^{[1]})(x_{2}-y_{1}+\OAD^{[1]}\OA^{[1]})\,.
\end{equation}
The above recurrence relation is solved by
\begin{align}
	\mathbf{r}(x_{1},x_{2}|y_{1})=\kappa_-(x_{1},x_{2}|y_{1})\,\frac{\Gamma(\OAD^{[1]}\OA^{[1]}+x_{2}-y_{1})}{\Gamma(\OAD^{[1]}\OA^{[1]}+x_{2}-x_{1})}\,,
\end{align}
where $\kappa_-(x_{1},x_{2}|y_{1})$ denotes a normalisation factor.
Then, using the definition~\eqref{r-definition}, one has that
\begin{equation}
	\abrOSCR^{-}(x_{1},x_{2}|y_{1})=\exp{\left(-\OAD^{[2]}\OA^{[1]}\right)}\mathbf{r}(x_{1},x_{2}|y_{1})\exp{\left(\OAD^{[2]}\OA^{[1]}\right)}\,.
\end{equation}
\end{proof}

\begin{remark}\label{Remark-differnce-Rp-Rm-derivation}
    The proof for $\abrOSCR^{+}(x_{2}|y_{1},y_{2})$ is similar. The main difference is that one has to do the following shift of the spectral parameters:
    \begin{equation}
        x_{1}\to x_{1}+\mu,\quad x_{2}\to x_{2}+\lambda,\quad y_{1}\to y_{1}+\lambda,\quad y_{2}\to y_{2}+\lambda\,.
    \end{equation}
    After this shift, the logic and the steps of the proof are analogous. 
 A more elegant way  is provided by Proposition~\ref{permsym} presented in the following section.
\end{remark}

\subsection{Symmetries}
In this section, we establish some symmetry properties of the Lax operators \eqref{eq:laxfac} and of the R-matrix. The inversion symmetry presented in Section~\ref{sec:inL} is essential for solving the boundary Yang–Baxter equation, see Section \ref{Section-BYBE}, whereas the permutation symmetry presented in Section~\ref{sec:persym} yields an elegant construction of $\abrOSCR^{+}$ from $\abrOSCR^{-}$ of Theorem \ref{Theorem-main}.

\subsubsection{Unitarity relations of the Lax matrix}\label{sec:inL}
We prove here two propositions of unitarity and crossing-unitarity of the Lax matrices that will be used in the construction of the transfer matrix in relation to the boundary Yang-Baxter equation in Section~\ref{sec:openchain}.
\begin{proposition}\label{prop:Lbar}
The Lax matrix
	\begin{equation}\label{L-bar-definition}
		\bar L(x):=L(x-\mu_1-\mu_2+1)
	\end{equation}
satisfies the so-called unitarity relation 
	\begin{equation}
		L(x)\bar L(-x)=-(x+\mu_1)(x+\mu_2-1)\ID
	\end{equation}
    that relies on the characteristic identity 
    \begin{equation}\label{eq:genrec}
       \sum_{C=0}^{\M} J_{CA}J_{BC}-(\mu_1+\mu_2-1)J_{AB}+\mu_1(\mu_2-1)\delta_{AB}=0\,.
    \end{equation}
\end{proposition}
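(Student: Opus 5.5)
Our plan is to compute the product $L(x)\bar L(-x)$ entrywise from the definition~\eqref{lax2}, reduce it to a quadratic expression in the generators $J_{AB}$, and then show that this expression collapses to a multiple of the identity by the characteristic identity~\eqref{eq:genrec}. Since $\bar L(-x)=L(-x-\mu_1-\mu_2+1)$, we write $L(x)=x+\mathcal{J}$ and $\bar L(-x)=-x-\mu_1-\mu_2+1+\mathcal{J}$, where $\mathcal{J}=\sum_{A,B}e_{AB}J_{BA}$, so that $\mathcal{J}_{AB}=J_{BA}$. The product is then
\begin{equation*}
L(x)\bar L(-x)=-x(x+\mu_1+\mu_2-1)+(1-\mu_1-\mu_2)\mathcal{J}+\mathcal{J}^2,
\end{equation*}
because the two terms linear in $x$, namely $x\mathcal{J}$ and $-x\mathcal{J}$, cancel. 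The $(A,B)$ entry of $\mathcal{J}^2$ is $\sum_C J_{CA}J_{BC}$, so the matrix identity $\mathcal{J}^2-(\mu_1+\mu_2-1)\mathcal{J}+\mu_1(\mu_2-1)\ID=0$ is precisely~\eqref{eq:genrec}. Substituting it gives $-x^2-x(\mu_1+\mu_2-1)-\mu_1(\mu_2-1)=-(x+\mu_1)(x+\mu_2-1)$, which is the claimed unitarity relation.

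The remaining work is to verify~\eqref{eq:genrec} in the oscillator realisation~\eqref{GL-oscRealization}, and this is the main obstacle. We intend to avoid expanding the products of oscillator blocks directly and to use instead the factorisation~\eqref{eq:laxfac}, $L(x)=S\,D(x)\,S^{-1}$, where $S=\begin{pmatrix}1&-\OAD\\0&\ID\end{pmatrix}$ and $D(x)=\begin{pmatrix}x+\mu_1&0\\ \OA&(x+\mu_2-1)\ID\end{pmatrix}$.

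Care is needed because the entries of these matrices are noncommuting operators. The matrix product $L(x)\bar L(-x)$ only multiplies entries of $L(x)$ on the left of entries of $\bar L(-x)$, so it equals $S\,D(x)\,(S^{-1}S)\,D(-x-\mu_1-\mu_2+1)\,S^{-1}$ provided $S^{-1}S=\ID$ as operator-valued matrices. This holds because $S$ involves only the commuting $\oad_a$.

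The middle factor is $D(x)D(x')$ with $x'=-x-\mu_1-\mu_2+1$, and it is lower triangular:

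- the $(0,0)$ entry is $(x+\mu_1)(x'+\mu_1)=-(x+\mu_1)(x+\mu_2-1)$;
- the lower-right block is $(x+\mu_2-1)(x'+\mu_2-1)\ID=-(x+\mu_1)(x+\mu_2-1)\ID$;
- the off-diagonal block is $\OA\bigl[(x+\mu_1)+(x'+\mu_2-1)\bigr]=0$.

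Hence $D(x)D(x')=-(x+\mu_1)(x+\mu_2-1)\ID$ is scalar, and conjugating by $S$ gives the result immediately. This route proves the unitarity relation directly.

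The characteristic identity~\eqref{eq:genrec} then follows as a corollary. Expanding the product as in the first paragraph and comparing coefficients of $x^0$ yields exactly~\eqref{eq:genrec}, since the relation holds for all $x$. As a final check on the ordering conventions, we would compare the $(0,0)$ component of~\eqref{eq:genrec} directly against the explicit oscillator expressions in~\eqref{GL-oscRealization}.
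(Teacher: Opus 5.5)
Your proof is correct and is essentially the paper's argument. Both rest on the factorisation \eqref{eq:laxfac} and the lower-triangular middle factor; the paper first squares $D(0)$ to get the characteristic identity and then substitutes it into $L(x)\bar L(-x)$, whereas you multiply $D(x)D(-x-\mu_1-\mu_2+1)$ directly (the off-diagonal block vanishes by the choice of shift) and read the identity off afterwards.

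One caveat: in components your matrix identity has $J_{BA}$, not $J_{AB}$, in the linear term, so the printed \eqref{eq:genrec} has those two indices swapped, which matters for $A\neq B$ whenever $\mu_1+\mu_2\neq 1$. Your proposed consistency check should therefore use an off-diagonal entry such as $(A,B)=(0,1)$, since the $(0,0)$ entry cannot detect it.
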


\begin{proof} It is convenient to write $L(x)=x+J^t$. We then find
\begin{equation}
    L(x)\bar L(-x)=-x(x+\mu_1+\mu_2-1)-(\mu_1+\mu_2-1)J^t+(J^t)^2
\end{equation}
Next we use the factorised form of the Lax matrix to obtain
\begin{equation}
\begin{split}
		(J^t)^2&=\left(\begin{array}{cc}
			1&-\OAD\\
			0&\ID
		\end{array}\right)\left(\begin{array}{cc}
			\mu_1&0\\
			\OA&(\mu_2-1)\ID
		\end{array}\right)\left(\begin{array}{cc}
			\mu_1&0\\
			\OA&(\mu_2-1)\ID
		\end{array}\right)\left(\begin{array}{cc}
			1&\OAD\\
			0&\ID
		\end{array}\right)\\
        &=\left(\begin{array}{cc}
			1&-\OAD\\
			0&\ID
		\end{array}\right)\left(\begin{array}{cc}
			\mu_1^2&0\\
			(\mu_1+\mu_2-1)\OA&(\mu_2-1)^2\ID
		\end{array}\right)\left(\begin{array}{cc}
			1&\OAD\\
			0&\ID
		\end{array}\right)\\
        &=\left(\begin{array}{cc}
			1&-\OAD\\
			0&\ID
		\end{array}\right)\left[(\mu_1+\mu_2-1)\left(\begin{array}{cc}
			\mu_1&0\\
			\OA&(\mu_2-1)\ID
		\end{array}\right)-\mu_1(\mu_2-1)\left(\begin{array}{cc}
			1&0\\
			0&\ID
		\end{array}\right)\right]\left(\begin{array}{cc}
			1&\OAD\\
			0&\ID
		\end{array}\right)\\
        &=(\mu_1+\mu_2-1)J^t-\mu_1(\mu_2-1)
        \end{split}
	\end{equation}
    which concludes the proof.
\end{proof} 
\begin{remark}
   Noting that
    \begin{equation}
        \bar L(-x)=L(1-x_2,1-x_1)
    \end{equation}
    it follows 
    \begin{equation}
    \label{L-symmetry-eq}
		L(x_1,x_2)L(1-x_2,1-x_1)=-x_1(x_2-1)\ID
	\end{equation}
\end{remark}
We further note that
\begin{proposition}\label{crossL}
The Lax matrix satisfies the crossing-unitarity relation
\begin{equation}
 L^t(x)L^t(-x-\mu_1-\mu_2-M)=L^t(x)\bar L^t(-x-(M+1))=-(x+\mu_1+M)(x+\mu_2)\ID\,.
\end{equation}
\end{proposition}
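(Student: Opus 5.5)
The plan is to reduce the crossing-unitarity relation to a quadratic characteristic identity for the matrix $\mathcal{J}:=(J_{AB})_{A,B=0}^{\M}$. This is the analogue of~\eqref{eq:genrec}, but with the opposite ordering of the non-commuting entries. From~\eqref{lax2} we have $L(x)_{AB}=x\delta_{AB}+J_{BA}$, so $L^t(x)=x+\mathcal{J}$ as a matrix with operator entries. Hence, for any $y$,
\begin{equation*}
\big(L^t(x)L^t(y)\big)_{AB}=xy\,\delta_{AB}+(x+y)J_{AB}+\sum_{C=0}^{\M}J_{AC}J_{CB}\,.
\end{equation*}
In Proposition~\ref{prop:Lbar} the relevant quadratic object was $\sum_C J_{CA}J_{BC}$, the entries of $(J^t)^2$. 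Here the product $J_{AC}J_{CB}$ appears in the other order, so the identity of Proposition~\ref{prop:Lbar} cannot be used directly. This reordering is precisely where the shift by $M$ in the argument $-x-\mu_1-\mu_2-M$ will come from.

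The first step is to reorder using the commutation relations~\eqref{gl-commutators}. They give
\begin{equation*}
J_{AC}J_{CB}=J_{CB}J_{AC}+\delta_{CC}J_{AB}-\delta_{AB}J_{CC}\,,
\end{equation*}
and summing over $C$ yields
\begin{equation*}
\sum_{C}J_{AC}J_{CB}=\sum_C J_{CB}J_{AC}+(\M+1)J_{AB}-\delta_{AB}\sum_C J_{CC}\,.
\end{equation*}
For the first term on the right I use the matrix identity proved in Proposition~\ref{prop:Lbar},
\begin{equation*}
(J^t)^2=(\mu_1+\mu_2-1)J^t-\mu_1(\mu_2-1)\,.
\end{equation*}
Taking its $(B,A)$ entry, I will express $\sum_C J_{CB}J_{AC}$ as $(\mu_1+\mu_2-1)J_{AB}-\mu_1(\mu_2-1)\delta_{AB}$. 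I will read this off the matrix identity rather than from the index form~\eqref{eq:genrec}, to avoid any index mismatch. For the last term I need the central element $\sum_C J_{CC}$. From~\eqref{GL-oscRealization} it evaluates to
\begin{equation*}
\mu_1-\OAD\OA+\M(\mu_2-1)+\sum_a\oa_a\oad_a=\mu_1+\M\mu_2\,.
\end{equation*}
Combining these gives a closed identity of the form $\mathcal{J}^2=\alpha\,\mathcal{J}+\beta$ with explicit constants $\alpha,\beta$ depending on $\mu_1,\mu_2,\M$.

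Inserting $\mathcal{J}^2=\alpha\,\mathcal{J}+\beta$ into the product formula gives
\begin{equation*}
L^t(x)L^t(y)=xy+\beta+(x+y+\alpha)\,\mathcal{J}\,.
\end{equation*}
The crossing point is fixed by $y=-x-\alpha$, which kills the operator part. The remaining scalar $xy+\beta$ should then factor as $-(x+\mu_1+\M)(x+\mu_2)$. The second equality in the statement, $L^t(-x-\mu_1-\mu_2-\M)=\bar L^t(-x-(\M+1))$, is immediate from the definition~\eqref{L-bar-definition}.

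The main obstacle is the bookkeeping of signs and index orderings. The transposition, the reversed operator ordering and the trace term each shift $\alpha$ by amounts of order $\M$ or $1$. A slip in any one of them produces a leftover term proportional to $\mathcal{J}$ instead of a scalar. To guard against this, I will cross-check $\alpha$ and $\beta$ by computing a few entries of $\sum_C J_{AC}J_{CB}$ directly from the oscillator realization~\eqref{GL-oscRealization}: the $(0,0)$ entry, the $(0,b)$ entries and the $(a,b)$ entries. For these I use only $[\oa_a,\oad_b]=\delta_{ab}$. These entries are short polynomials in the oscillators, and matching them against $\alpha J_{AB}+\beta\delta_{AB}$ pins down the constants unambiguously. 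That in turn confirms the precise crossing shift and the scalar factor claimed in the proposition.
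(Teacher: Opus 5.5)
Your proposal is correct, but it takes a different route from the paper's proof.

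Your bookkeeping closes: from the matrix form of the unitarity identity, the reordering and the central value you get
\[
\alpha=(\mu_1+\mu_2-1)+(\M+1)=\mu_1+\mu_2+\M,\qquad
\beta=-\mu_1(\mu_2-1)-(\mu_1+\M\mu_2)=-\mu_2(\mu_1+\M),
\]
and then $xy+\beta$ at $y=-x-\alpha$ is exactly $-(x+\mu_1+\M)(x+\mu_2)$. The oscillator realization does satisfy \eqref{gl-commutators}; for example, $[J_{0a},J_{b0}]=\delta_{ab}J_{00}-J_{ba}$ checks directly. You were also right to read the identity off the matrix form $(J^t)^2=(\mu_1+\mu_2-1)J^t-\mu_1(\mu_2-1)$. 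Its $(A,B)$ entry gives $J_{BA}$, not the $J_{AB}$ written in \eqref{eq:genrec}.

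The paper does not reduce to Proposition~\ref{prop:Lbar}. It proves $\mathcal{J}^2=(\mu_1+\mu_2+\M)\mathcal{J}-\mu_2(\mu_1+\M)$ from scratch with a second, transposed Gauss factorization of $\mathcal{J}$. The middle triangular factor has diagonal $(\mu_1+\M,\mu_2\ID)$, and the shift by $\M$ comes from normal ordering, $\sum_a\oa_a\oad_a=\OAD\OA+\M$. The paper then squares this middle factor exactly as in Proposition~\ref{prop:Lbar}.

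What each approach buys:
\begin{itemize}
\item The paper's route is self-contained and shows the roots $\mu_1+\M$ and $\mu_2$ directly as diagonal entries. It does require guessing the new factorization.
\item Your route reuses the already proven unitarity identity and needs only the $gl(\M+1)$ relations and the scalar central element $\sum_C J_{CC}=\mu_1+\M\mu_2$. So it works for any realization with these two properties. It also explains why the crossing shift exceeds the unitarity shift by $\M+1=\sum_C\delta_{CC}$.
\end{itemize}
The entrywise oscillator cross-check you plan is harmless but not needed.
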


\begin{proof}
 As above we write $L(x)=x+J^t$ such that
 \begin{equation}
  L^t(x)\bar L^t(-x-(M+1))=-x(x+\mu_1+\mu_2+M)-(\mu_1+\mu_2+M)J +J^2
 \end{equation}
 Further noting that
\begin{equation}
		J=\left(\begin{array}{cc}
			1&\OAD\\
			0&\ID
		\end{array}\right)^t\left(\begin{array}{cc}
			\mmu_1+M&0\\
			\OA&\mmu_2\ID
		\end{array}\right)^t\left(\begin{array}{cc}
			1&-\OAD\\
			0&\ID
		\end{array}\right)^t\,,
	\end{equation}
	we proceed as above and find
	\begin{equation}
\begin{split}
		J^2&=\left(\begin{array}{cc}
			1&\OAD\\
			0&\ID
		\end{array}\right)^t\left(\begin{array}{cc}
			\mmu_1+M&0\\
			\OA&\mmu_2\ID
		\end{array}\right)^t\left(\begin{array}{cc}
			\mmu_1+M&0\\
			\OA&\mmu_2\ID
		\end{array}\right)^t\left(\begin{array}{cc}
			1&-\OAD\\
			0&\ID
		\end{array}\right)^t\\
        &=\left(\begin{array}{cc}
			1&\OAD\\
			0&\ID
		\end{array}\right)^t\left(\begin{array}{cc}
			(\mu_1+M)^2&0\\
			(\mu_1+\mu_2+M)\OA&\mu_2^2\ID
		\end{array}\right)^t\left(\begin{array}{cc}
			1&-\OAD\\
			0&\ID
		\end{array}\right)^t\\
        &=\left(\begin{array}{cc}
			1&\OAD\\
			0&\ID
		\end{array}\right)^t\left[(\mu_1+\mu_2+M)\left(\begin{array}{cc}
			\mu_1+M&0\\
			\OA&\mu_2\ID
		\end{array}\right)-(M+\mu_1)\mu_2\left(\begin{array}{cc}
			1&0\\
			0&\ID
		\end{array}\right)\right]\left(\begin{array}{cc}
			1&-\OAD\\
			0&\ID
		\end{array}\right)^t\\
        &=(\mu_1+\mu_2+M)J-(M+\mu_1)\mu_2
        \end{split}
	\end{equation}
	which concludes the proof.
\end{proof}

  \subsubsection{Permutation symmetry} \label{sec:persym}

\begin{proposition}\label{permsym}
    Given a solution $\abrOSCR^-$ of~\eqref{eq-Rm} with normalization $\kappa^{-}$, we can construct a solution  $\abrOSCR^+$ to~\eqref{eq-Rp} with normalization $\kappa^{+}$ via 
    \begin{equation}
      \abrOSCR_{12}^{+}(x_{2}|y_{1},y_{2})=f(x_2|y_1,y_2) P\abrOSCR^-_{12}(1-y_2,1-y_1|1-x_2)P\,,
    \end{equation}
where     \begin{equation}
        f(x_2|y_1,y_2) =\frac{\kappa_+ (x_2|y_1,y_2) }{\kappa_- (1-y_2,1-y_1|1-x_2)}\,.
    \end{equation}
\end{proposition}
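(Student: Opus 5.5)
The plan is to start from~\eqref{eq-Rm} for $\abrOSCR^-$ with arguments substituted as $(x_1,x_2|y_1)\to(1-y_2,1-y_1|1-x_2)$, and to use the inversion relation~\eqref{L-symmetry-eq}, $L(x_1,x_2)L(1-x_2,1-x_1)=-x_1(x_2-1)\ID$, to trade Lax matrices with arguments $(u_1,u_2)$ for inverses of Lax matrices with arguments $(1-u_2,1-u_1)$. Conjugating by the permutation $P$ then exchanges the oscillator copies $[1]\leftrightarrow[2]$. The expected outcome is exactly the intertwining relation~\eqref{eq-Rp}. The scalar $f$ only fixes the normalisation, and the formula for $f$ follows by comparing prefactors.

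\textbf{Step 1: substitute into~\eqref{eq-Rm}.} Write~\eqref{eq-Rm} at the substituted arguments; the auxiliary second parameter $y_2$ of~\eqref{eq-Rm} is free, so set it equal to $1-x_1$:
\begin{equation*}
\abrOSCR^-_{12}\, L_1(1-y_2,1-y_1)L_2(1-x_2,1-x_1)=L_1(1-x_2,1-y_1)L_2(1-y_2,1-x_1)\,\abrOSCR^-_{12},
\end{equation*}
where $\abrOSCR^-_{12}$ stands for $\abrOSCR^-_{12}(1-y_2,1-y_1|1-x_2)$.

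\textbf{Step 2: invert the Lax matrices.} By~\eqref{L-symmetry-eq}, $L(1-u_2,1-u_1)$ is proportional to $L(u_1,u_2)^{-1}$, with scalar factor $-(1-u_2)(-u_1)$ up to sign conventions. Replace each of the four Lax matrices by its inverse times the corresponding scalar. The scalar products on the two sides are $u$-factors from $(y_1,y_2),(x_1,x_2)$ and from $(y_1,x_2),(x_1,y_2)$ respectively. These are equal, since both equal $y_1(y_2-1)x_1(x_2-1)$ up to the same sign, so they cancel. The relation becomes
\begin{equation*}
\abrOSCR^-\, L_1(y_1,y_2)^{-1}L_2(x_1,x_2)^{-1}=L_1(y_1,x_2)^{-1}L_2(x_1,y_2)^{-1}\,\abrOSCR^-.
\end{equation*}
Since the entries of $L_1$ and $L_2$ commute with each other but not with $\abrOSCR^-$, the products of inverses must be handled carefully as matrices in auxiliary space. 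Both sides are products of two auxiliary-space matrices with mutually commuting quantum entries, so $(AB)^{-1}=B^{-1}A^{-1}$ holds. Inverting both sides and moving $\abrOSCR^-$ across gives
\begin{equation*}
L_2(x_1,y_2)L_1(y_1,x_2)\,\abrOSCR^-=\abrOSCR^-\, L_2(x_1,x_2)L_1(y_1,y_2).
\end{equation*}

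\textbf{Step 3: conjugate by $P$.} Conjugation by $P$ swaps the oscillator labels in the Lax matrices, turning $L_2$ into $L_1$ and vice versa, while the auxiliary matrix order is kept. This yields
\begin{equation*}
(P\abrOSCR^-P)\,L_1(x_1,x_2)L_2(y_1,y_2)=L_1(x_1,y_2)L_2(y_1,x_2)\,(P\abrOSCR^-P),
\end{equation*}
which is~\eqref{eq-Rp}. Multiplying by the scalar $f$ preserves the relation. Finally, I would check against~\eqref{Rp-solution} and~\eqref{Rm-solution} that $P\abrOSCR^-(1-y_2,1-y_1|1-x_2)P$ equals $\kappa_-(1-y_2,1-y_1|1-x_2)\,e^{-\OAD^{[1]}\OA^{[2]}}\frac{\Gamma(\OAD^{[2]}\OA^{[2]}-y_1+x_2)}{\Gamma(\OAD^{[2]}\OA^{[2]}-y_1+y_2)}e^{\OAD^{[1]}\OA^{[2]}}$. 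This justifies the stated $f$.

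\textbf{Main obstacle.} The delicate step is the inversion in Step 2. The inverse of an operator-valued Lax matrix must be justified: it exists only where the scalar factors are nonzero, which holds generically, and one must confirm that the order reversal really produces the factor order required by~\eqref{eq-Rp} rather than its mirror. I would verify this ordering directly in the $M=1$ case first.
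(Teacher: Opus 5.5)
Your proof is correct and is essentially the paper's argument. Both use the inversion relation~\eqref{L-symmetry-eq} to replace the Lax matrices in~\eqref{eq-Rm} by (multiples of) their inverses, conjugate by $P$ to swap the oscillator copies, and relabel $(x_1,x_2,y_1,y_2)\to(1-y_2,1-y_1,1-x_2,1-x_1)$; the paper inverts and conjugates first and relabels last, while you relabel first, which makes no difference. Your closing check that $P\abrOSCR^-(1-y_2,1-y_1|1-x_2)P$ reproduces the operator part of~\eqref{Rp-solution} is a consistency check the paper omits, and $(AB)^{-1}=B^{-1}A^{-1}$ holds in any ring, so you do not need commuting entries for that step.
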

\begin{proof}
We start from the relation 
		\begin{equation}
			\abrOSCR^-_{12}(x_1,x_2|y_1) L_1(x_1,x_2)L_2(y_1,y_2)=L_1(y_1,x_2)L_2(x_1,y_2)\abrOSCR^-_{12}(x_1,x_2|y_1)\,,
		\end{equation}   
which is equivalent, by~\eqref{L-symmetry-eq}, to 
\begin{equation}
			P\abrOSCR^-_{12}(x_1,x_2|y_1)P L_1(1-y_2,1-y_1)L_2(1-x_2,1-x_1)=L_1(1-y_2,1-x_1)L_2(1-x_2,1-y_1)P\abrOSCR^-_{12}(x_1,x_2|y_1)P\,.
		\end{equation}
By performin a shift of spectral paramters, we obtain
\begin{equation}
			P\abrOSCR^-_{12}(1-y_2,1-y_1|1-x_2)P L_1(x_1,x_2)L_2(y_1,y_2)=L_1(x_1,y_2)L_2(y_1,x_2)P\abrOSCR^-_{12}(1-y_2,1-y_1|1-x_2)P\,.
\end{equation}
Finally, after multiplying by the function $ f(x_2|y_1,y_2) $, we find that this is the same equation stated in~\eqref{eq-Rp} when identifying 
 \begin{equation}
       \abrOSCR^{+}(x_{2}|y_{1},y_{2})= f(x_2|y_1,y_2) P\abrOSCR^-(1-y_2,1-y_1|1-x_2)P\,.
    \end{equation}
    This concludes the proof.
\end{proof}

\subsection{Hamiltonian density}

In this section, we obtain the Hamiltonian density associated to the R-matrix of Theorem~\eqref{Theorem-main} by taking the logarithmic derivative at the permutation point, see e.g. \cite{Faddeev:1996iy}.

\begin{proposition}\label{proposition-Hdensity}
Consider the factorised R-matrix with equal representation labels in both spaces $\mu_1^{[i]}=\mu_{1}$ and $\mu_{2}^{[i]}=\mu_2$ for $i\in\{1,2\}$ 
 \begin{equation}
 \abrOSCR(x)=P\,\abrOSCR^-(x+\mu_1,\mu_2|\mu_1)\abrOSCR^+(x+\mu_2|\mu_1,\mu_2)\,.
 \end{equation}
 Assume that  $\abrOSCR^-(\mu_1,\mu_2|\mu_1)=\abrOSCR^+(\mu_2|\mu_1,\mu_2)=\ID$,
  and define the Hamiltonian density as
 \begin{equation}
  \abrOSCH:=\frac{\partial}{\partial x}\log \abrOSCR(x)\bigg|_{x=0}\,.
 \end{equation}
 Then, it decomposes into 
 \begin{equation}
     \abrOSCH=\abrOSCH^++\abrOSCH^-\,,
 \end{equation}
where, in the oscillator realization \eqref{eq:osc}, 
\begin{equation}\label{Hp-from-Rp}
\begin{split}
			\abrOSCH^+=&\left.                     \frac{\partial}{\partial x}\abrOSCR^+(x+\mu_2|\mu_1,\mu_2)\right|_{x=0}
            = {\color{black}\frac{\partial}{\partial x} \kappa_{+}(x+\mu_2|\mu_1,\mu_2)\bigg|_{x=0}
                                  +e^{-\OAD^{[1]}\OA^{[2]}}\psi(\OAD^{[2]}\OAD^{[2]}+2s)e^{\OAD^{[1]}\OA^{[2]}},}\\
                                        \end{split}
		\end{equation}
        and
        \begin{equation}\label{Hm-from-Rm}
            \begin{split}
                \abrOSCH^- =&                \left.\frac{\partial}{\partial x}\abrOSCR^-(x+\mu_1,\mu_2|\mu_1)\right|_{x=0}
                =
                                {\color{black}\frac{\partial}{\partial x}\kappa_{-}(x+\mu_1,\mu_2|\mu_1)\bigg|_{x=0}
                                +e^{-\OAD^{[2]}\OA^{[1]}}\psi(\OAD^{[1]}\OAD^{[1]}+2s)e^{\OAD^{[2]}\OA^{[1]}}}\,.
            \end{split}
        \end{equation}
        Above, we set $2s=\mu_2-\mu_1$.
\end{proposition}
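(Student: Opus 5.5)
The plan is to differentiate the product $\abrOSCR(x)=P\,\abrOSCR^-(x+\mu_1,\mu_2|\mu_1)\abrOSCR^+(x+\mu_2|\mu_1,\mu_2)$ directly, using the normalisation $\abrOSCR(0)=P$. Since $\abrOSCR(0)=P$, the logarithmic derivative should be read as $\abrOSCR(0)^{-1}\abrOSCR'(0)$, which equals $P\,\abrOSCR'(0)$. By the Leibniz rule, and because both factors equal $\ID$ at $x=0$, this is
\begin{equation*}
P\,P\left(\partial_x\abrOSCR^-\big|_{0}+\partial_x\abrOSCR^+\big|_{0}\right)=\abrOSCH^-+\abrOSCH^+ .
\end{equation*}
This gives the decomposition $\abrOSCH=\abrOSCH^++\abrOSCH^-$, with each piece being the derivative of the corresponding factor, as in the first equalities of \eqref{Hp-from-Rp} and \eqref{Hm-from-Rm}.

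Next I compute each piece from the explicit formulas \eqref{Rm-solution} and \eqref{Rp-solution}. For $\abrOSCR^-$, set $x_1=x+\mu_1$, $x_2=\mu_2$, $y_1=\mu_1$. The Gamma ratio becomes
\begin{equation*}
\frac{\Gamma(\OAD^{[1]}\OA^{[1]}+2s)}{\Gamma(\OAD^{[1]}\OA^{[1]}+2s-x)},
\end{equation*}
with $2s=\mu_2-\mu_1$. The conjugating exponentials $e^{\mp\OAD^{[2]}\OA^{[1]}}$ do not depend on $x$. At $x=0$ the Gamma ratio is the identity, so the normalisation assumption forces $\kappa_-(\mu_1,\mu_2|\mu_1)=1$.

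Differentiating the product $\kappa_-\cdot(\text{conjugated ratio})$ at $x=0$ then gives $\partial_x\kappa_-|_{0}$ plus the conjugated derivative of the ratio. The derivative of $1/\Gamma(N+2s-x)$ at $x=0$ is $\psi(N+2s)/\Gamma(N+2s)$. Hence the ratio differentiates to $\psi(\OAD^{[1]}\OA^{[1]}+2s)$, using functional calculus on the number operator $N=\OAD^{[1]}\OA^{[1]}$, which is diagonal in the Fock basis. This yields \eqref{Hm-from-Rm}.

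For $\abrOSCR^+$, set $x_2=x+\mu_2$, $y_1=\mu_1$, $y_2=\mu_2$. The Gamma ratio becomes
\begin{equation*}
\frac{\Gamma(\OAD^{[2]}\OA^{[2]}+2s+x)}{\Gamma(\OAD^{[2]}\OA^{[2]}+2s)},
\end{equation*}
whose derivative at $x=0$ is again $\psi(\OAD^{[2]}\OA^{[2]}+2s)$. It is conjugated by $e^{\mp\OAD^{[1]}\OA^{[2]}}$, giving \eqref{Hp-from-Rp}. (The argument printed as $\OAD^{[i]}\OAD^{[i]}$ in the statement should read $\OAD^{[i]}\OA^{[i]}$.)

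The only delicate point is justifying manipulations with infinite-dimensional operators. On the Fock basis $|m\rangle$, conjugation by $e^{\OAD^{[2]}\OA^{[1]}}$ maps each vector to a finite sum: $\OA^{[1]}$ lowers the occupation, so the exponential series truncates. The Gamma functions act diagonally, and $2s>0$ keeps the arguments away from poles. Therefore all derivatives can be taken matrix element by matrix element, and no convergence issues arise.
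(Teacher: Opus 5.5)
Your proposal is correct and follows essentially the same route as the paper: you use $\abrOSCR(0)=P$ (which forces $\kappa_\pm=1$ at the permutation point) and the Leibniz rule to reduce $\abrOSCR(0)^{-1}\abrOSCR'(0)$ to the sum of the derivatives of the two factors. You then differentiate the Gamma ratios via $\psi=\partial\log\Gamma$, and you are right that $\OAD^{[i]}\OAD^{[i]}$ in the statement is a typo for $\OAD^{[i]}\OA^{[i]}$.
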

\begin{proof}
   First we observe that assuming that $\abrOSCR^{\pm}(0)=\ID$ implies that $\kappa_{-}(\mu_{1},\mu_{2}|\mu_{1})=\kappa_{+}(\mu_{2}|\mu_{1},\mu_{2})=1$. A direct computation shows that
\begin{equation}
\begin{split}
  \abrOSCH=&\bigg(P\,\abrOSCR^-(\mu_1,\mu_2|\mu_1)\abrOSCR^+(\mu_2|\mu_1,\mu_2)\bigg)^{-1}\frac{\partial}{\partial x}P\,\abrOSCR^-(x+\mu_1,\mu_2|\mu_1)\abrOSCR^+(x+\mu_2|\mu_1,\mu_2)\bigg|_{x=0}
      \\=&  \left.\frac{\partial}{\partial x}\abrOSCR^-(x+\mu_1,\mu_2|\mu_1)\right|_{x=0}\abrOSCR^+(\mu_2|\mu_1,\mu_2)
 +\left.\abrOSCR^-(\mu_1,\mu_2|\mu_1)\frac{\partial}{\partial x}\abrOSCR^+(x+\mu_2|\mu_1,\mu_2)\right|_{x=0}
 \\=& \left.\frac{\partial}{\partial x}\abrOSCR^-(x+\mu_1,\mu_2|\mu_1)\right|_{x=0}
 +\left.\frac{\partial}{\partial x}\abrOSCR^+(x+\mu_2|\mu_1,\mu_2)\right|_{x=0}\,,
 \end{split}
\end{equation}
 where in the second equality we used $\abrOSCR(0)=P$ and in the last equality we used $\abrOSCR^-(\mu_1,\mu_2|\mu_1)=\abrOSCR^+(\mu_2|\mu_1,\mu_2)=\ID$. Finally, by replacing formulas \eqref{Rp-solution} \eqref{Rm-solution} and by using the identity $\psi(x)=\partial_x\log \Gamma(x)$ we obtain the result. 
\end{proof}

We now specify the normalization of $\abrOSCR^{\pm}$ as
 \begin{equation}\label{normali}
		 				\kappa_+(x_2|y_1,y_2)=\frac{\Gamma(-y_1+y_2)}{\Gamma(-y_1+x_2)}\,,\qquad
		 				\kappa_-(x_1,x_2|y_1)=\frac{\Gamma(x_2-x_1)}{\Gamma(x_2-y_1)}\,.
		 			\end{equation}
Choosing this normalization, the R-operators $ \abrOSCR^\pm$ reduce to the identity and, as a consequence, the R-operator reduces to permutation at $x=0$, i.e. $ \abrOSCR(0)=P$.
Furthermore, the choice of normalization \eqref{normali} yields a stochastic Hamiltonian density for the representations specified in Section \ref{sec:stoch} and Appendix \ref{section-integral-reps}.
\begin{corollary}\label{Hamalg}
 Fix the normalization \eqref{normali}. Then, the Hamiltonian density in the oscillator realization~\eqref{eq:osc} decomposes as follows 
 \begin{equation}\label{bulk-density-Hamiltonian}
  \abrOSCH=\abrOSCH^++\abrOSCH^-
 \end{equation}
with
			\begin{equation}\label{Rp-solutionD}
					\abrOSCH^+ = e^{-\OAD^{[1]}\OA^{[2]}}\left[\psi(\OAD^{[2]}\OA^{[2]}+2s)-
                    \psi(2s)
                    \right]e^{\OAD^{[1]}\OA^{[2]}}
		\end{equation}
        and
\begin{equation}\label{Rm-solutionD}
			\abrOSCH^-=e^{-\OAD^{[2]}\OA^{[1]}}\left[\psi(\OAD^{[1]}\OA^{[1]}+2s)-
            \psi(2s)
            \right]e^{\OAD^{[2]}\OA^{[1]}}\,,
		\end{equation}
where $2s=\mu_2-\mu_1$.

\end{corollary}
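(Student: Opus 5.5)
The plan is to specialise Proposition~\ref{proposition-Hdensity} to the normalization~\eqref{normali}. The only new input is the derivative of the scalar prefactors $\kappa_\pm$ at the permutation point.

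First I check the hypothesis of Proposition~\ref{proposition-Hdensity}. Setting $y_1=\mu_1$, $y_2=\mu_2$ and $x_2=\mu_2$ in~\eqref{normali} gives $\kappa_+(\mu_2|\mu_1,\mu_2)=\Gamma(2s)/\Gamma(2s)=1$. In~\eqref{Rp-solution} the ratio of Gamma functions becomes the identity operator, so the conjugation by $e^{\mp\OAD^{[1]}\OA^{[2]}}$ is trivial and $\abrOSCR^+(\mu_2|\mu_1,\mu_2)=\ID$. Similarly, setting $x_1=y_1=\mu_1$ and $x_2=\mu_2$ gives $\abrOSCR^-(\mu_1,\mu_2|\mu_1)=\ID$. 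Hence $\abrOSCR(0)=P$ and Proposition~\ref{proposition-Hdensity} applies, so $\abrOSCH=\abrOSCH^++\abrOSCH^-$.

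Next I compute $\abrOSCH^+$ directly, since this also fixes the argument of the digamma function (the display~\eqref{Hp-from-Rp} contains a typo, $\OAD^{[2]}\OAD^{[2]}$ where $\OAD^{[2]}\OA^{[2]}$ is meant). With the normalization~\eqref{normali},
\begin{equation*}
\abrOSCR^+(x+\mu_2|\mu_1,\mu_2)=\frac{\Gamma(2s)}{\Gamma(2s+x)}\,e^{-\OAD^{[1]}\OA^{[2]}}\frac{\Gamma(\OAD^{[2]}\OA^{[2]}+2s+x)}{\Gamma(\OAD^{[2]}\OA^{[2]}+2s)}e^{\OAD^{[1]}\OA^{[2]}}\,.
\end{equation*}
The conjugating exponentials are independent of $x$. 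The operator $\OAD^{[2]}\OA^{[2]}$ commutes with itself, so the derivative can be taken by functional calculus on the number-operator eigenbasis. Differentiating at $x=0$ gives
\begin{equation*}
-\psi(2s)\,\ID+e^{-\OAD^{[1]}\OA^{[2]}}\psi(\OAD^{[2]}\OA^{[2]}+2s)e^{\OAD^{[1]}\OA^{[2]}}\,.
\end{equation*}
The scalar term commutes with the conjugation, so it can be moved inside, which yields~\eqref{Rp-solutionD}.

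For $\abrOSCH^-$ the same computation applies to
\begin{equation*}
\kappa_-(x+\mu_1,\mu_2|\mu_1)=\frac{\Gamma(2s-x)}{\Gamma(2s)}\,,
\end{equation*}
where the Gamma ratio in~\eqref{Rm-solution} is $\Gamma(\OAD^{[1]}\OA^{[1]}+2s)/\Gamma(\OAD^{[1]}\OA^{[1]}+2s-x)$. This is the step that needs care, because a naive reading gives the opposite sign. Since $x$ enters the denominator with a minus sign, its derivative at $x=0$ is $+\psi(\OAD^{[1]}\OA^{[1]}+2s)$. The prefactor contributes $-\psi(2s)$. Together these give~\eqref{Rm-solutionD}.

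I expect no genuine obstacle in any of this; the only delicate points are keeping track of which argument $x$ shifts and the signs in the $\abrOSCH^-$ computation. The argument is formal in the oscillator algebra, with the exponentials understood as acting on polynomial or Fock states, as elsewhere in the paper.
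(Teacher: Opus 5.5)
Your proposal is correct and follows the paper's proof: you insert the normalization \eqref{normali} into Proposition~\ref{proposition-Hdensity} and differentiate using $\psi=\partial_x\log\Gamma$, with the derivatives of the scalar prefactors $\kappa_\pm$ supplying the $-\psi(2s)$ terms. Your explicit check that $\abrOSCR^\pm$ reduce to the identity at the permutation point, your sign bookkeeping for $\abrOSCH^-$, and your correction of $\OAD^{[2]}\OAD^{[2]}$ to $\OAD^{[2]}\OA^{[2]}$ in \eqref{Hp-from-Rp} make explicit what the paper leaves implicit.
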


\begin{proof}
Inserting the normalizations \eqref{normali} into the expression for Hamiltonian matrices $\abrOSCH^{\pm}$ \eqref{Hp-from-Rp} and \eqref{Hm-from-Rm} of Proposition \ref{proposition-Hdensity} and using the identity $\psi(x)=\partial_x\log \Gamma(x)$ we find the expressions above. 
		\end{proof}

\subsection{Fock representation}\label{sec:stoch}
In this section we represent the oscillator algebra $\mathcal{A}_M$ on the Fock space
\begin{equation}\label{fock-space}
 \mathcal{F}=\spa \{|m_1,\ldots,m_M\rangle\}_{m_1,\ldots,m_N=0}^\infty
\end{equation}
such that $\pi:\mathcal{A}_M\to End(\mathcal{F})$. 
For convenience, we use the notation
    \begin{equation}
\pi(\oa_{a})={\of_{a}}\,,\qquad \pi(\oad_{a})=\ofd_{a}\,.
	\end{equation}
	such that
	\begin{equation}\label{osc-notation-descrete}
\pi(\OA)=\OF\,,\qquad \pi(\OAD)=\OFD\,.
	\end{equation}
	where  $\OF=(\of_1,\ldots,\of_M)^{t}$ and   $\OFD=(\ofd_1,\ldots,\ofd_M)$ .

The {orthonormal} basis is
obtained from the Fock vacuum $|0\rangle$, with $\of_a|0\rangle=0$ for $a=1,\ldots,M$,
via
\begin{equation}\label{basis-vectors-Fock}
|m_1,\ldots,m_M\rangle=\ofd_1^{m_1}\cdots \ofd_M^{m_M}|0\rangle
\end{equation}
and will be identified with the basis introduced in~\eqref{basis} for a given site $\ell$ such that $V_\ell\simeq \mathcal{F}$. It follows that the action of the creation and annihilation operators on the basis vectors~\eqref{basis-vectors-Fock} reads
\begin{equation}\label{discrete-representation}
\begin{split}
	&\oa_{a}|m\rangle=m_{a}|m_{1},\ldots,m_{a}-1,\ldots,m_{\M}\rangle,\quad
    \oad_a|m\rangle=|m_{1},\ldots,m_{a}+1,\ldots,m_{\M}\rangle
    \end{split}\,.
\end{equation}
This induces a representation of $gl(M+1)$ through the homomorphism~\eqref{GL-oscRealization}, which has the following generators
\begin{equation}\label{E-matrix}
		\pi(\mathrm{J}_{BA})=\left(\begin{array}{cc}
			\mmu_1-\OAD\OA&\OAD\left((\mmu_1-\mmu_2+1)\ID-\OA\OAD\right)\\
			\OA&(\mmu_2-1)\ID+\OA\OAD
		\end{array}\right) _{AB}\,,
	\end{equation}
    where we have used the oscillator notation of equation~\eqref{osc-notation-descrete}. 
The resulting representation is characterised as follows. The Fock vacuum is the highest-weight state, i.e.  $\pi(J_{AB})|0\rangle=0$ for $A<B$. The Dynkin labels of the resulting representation  are read of from the action of the Cartan elements on the Fock vacuum
    \begin{equation}\label{HW-discreteReps}
        \pi(J_{AA})|0\rangle=\begin{cases*}
                    \mu_1|0\rangle & if  $A =0$  \\
                     \mu_2|0\rangle & if $A=1,\ldots,M$
                 \end{cases*}\,,
    \end{equation}
For $\mu_1< \mu_2$, we have $2s=\mu_2-\mu_1\in \mathbb{R}_{>0}$ and the infinite-dimensional representation of $gl(M+1)$ is irreducible. For negative integer values $2s\in \mathbb{Z}_{<0}$ the representation becomes reducible and contains a finite-dimensional submodule. In the following we assume that $2s>0$ if not specified otherwise to avoid complications.

We further define the inner product via
\begin{equation}
 \langle n_1,\ldots,n_M|m_1,\ldots,m_M\rangle =\prod_{a=1}^M\delta_{n_a,m_a}\,.
\end{equation}
The definitions above naturally extend to two sites and we can represent the R-operator on the tensor product of two Fock spaces
\begin{equation}
 \pi(\abrOSCR)\in End(\mathcal{F}\otimes\mathcal{F})\,.
\end{equation}
This allows to regard the R-operator $\pi(\abrOSCR)$ as an infinite-dimensional matrix with entries
\begin{equation}
 \langle m,n|  \pi(\abrOSCR(x)) |m',n'\rangle\,.
\end{equation}
In the following, the representation on the Fock space $\pi$ will be suppressed when using the bra-ket notation and the R-matrix in this representation will be denoted by $\eR= \pi(\abrOSCR)$ and $\eR^\pm= \pi(\abrOSCR^\pm)$.

		 		\begin{proposition}\label{Proposition-R-components}
		 			When fixing the normalisation~\eqref{normali},
		 			the factors of the R-matrix~\eqref{Rp-solution} and~\eqref{Rm-solution} have components
		 			\begin{equation}\label{element-M}
		 				\begin{split}
		 					&\langle m,n|  \eR^+(x_2|y_1,y_2) |m',n'\rangle =  \frac{(x_2-y_1)_{|n|} (y_2-x_2)_{|n'|-|n|}}{(y_2-y_1)_{|n'|} }\prod_{a=1}^{\M}\binom{n'_a}{n_a}\delta_{n'_a+m'_a,n_a+m_a}\,,
		 				\end{split}
		 			\end{equation}
		 			
		 			\begin{equation}\label{element-N}
		 				\begin{split}
		 					&\langle m,n|  \eR^-(x_1,y_2|y_1) |m',n'\rangle = \frac{(y_2-y_1)_{|m|} (y_1-x_1)_{|m'|-|m|}}{(y_2-x_1)_{|m'|} }\prod_{a=1}^{\M}\binom{m'_a}{m_a}\delta_{n'_a+m'_a,n_a+m_a}\,,
		 				\end{split}
		 			\end{equation}
		 			while the permutation yields
		 			\begin{equation}\label{permutation-elements}
		 				\begin{split}
		 					&\langle m,n|  P |m',n'\rangle =\prod_{a=1}^{\M}\delta_{m_a,n'_a}\delta_{n_a,m'_a}\,.
		 				\end{split}
		 			\end{equation}
                 Moreover, for $x,y\in \mathbb{R}$ such that $(\mu_{1}^{[2]}-\mu_{2}^{[1]})\leq (x-y)\leq \min\{(\mu_{2}^{[2]}-\mu_{2}^{[1]}),(\mu_{1}^{[2]}-\mu_{1}^{[1]}),(\mu_{2}^{[2]}-\mu_{1}^{[1]})\}$ the matrices $\eR^{+}(x_2|y_1,y_2)$ and $\eR^{-}(x_1,y_2|y_1)$ are stochastic.
		 		\end{proposition}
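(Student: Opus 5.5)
The plan is to compute the matrix elements of $\eR^+$ directly in the Fock basis \eqref{basis-vectors-Fock}, acting with the three factors of \eqref{Rp-solution} from right to left on $|m',n'\rangle$. The element of $\eR^-$ in \eqref{element-N} then follows either by the same computation with the roles of the two sites exchanged, or from Proposition~\ref{permsym}. The permutation elements \eqref{permutation-elements} are immediate from the definition of $P$.

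\textbf{Step 1 (the three factors).} Since $\ofd^{[1]}_a$ and $\of^{[2]}_a$ commute for different $a$, \eqref{discrete-representation} gives
\begin{equation*}
e^{\OFD^{[1]}\OF^{[2]}}|m',n'\rangle=\sum_{k\le n'}\prod_{a=1}^{\M}\binom{n'_a}{k_a}\,|m'+k,n'-k\rangle .
\end{equation*}
The middle factor is diagonal, with eigenvalue determined by $\OFD^{[2]}\OF^{[2]}=|n'|-|k|$. Combined with the normalisation \eqref{normali}, the $\Gamma$-ratio collapses to the Pochhammer ratio $(x_2-y_1)_{|n'|-|k|}/(y_2-y_1)_{|n'|-|k|}$. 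Finally,
\begin{equation*}
e^{-\OFD^{[1]}\OF^{[2]}}|m'+k,n'-k\rangle=\sum_j(-1)^{|j|}\prod_a\binom{n'_a-k_a}{j_a}\,|m'+k+j,n'-k-j\rangle .
\end{equation*}

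\textbf{Step 2 (collecting the coefficient of $|m,n\rangle$).} Fix the output $n=n'-k-j$ and set $d=n'-n$. The factor $\delta_{n'_a+m'_a,n_a+m_a}$ appears automatically from $m=m'+k+j$. Next use
\begin{equation*}
\binom{n'_a}{k_a}\binom{n'_a-k_a}{d_a-k_a}=\binom{n'_a}{n_a}\binom{d_a}{k_a}
\end{equation*}
to pull out $\prod_a\binom{n'_a}{n_a}$. The remaining summand depends on $k$ only through $|k|$ and the signs $(-1)^{|d|-|k|}$. Chu--Vandermonde, $\sum_{|k|=K}\prod_a\binom{d_a}{k_a}=\binom{|d|}{K}$, then reduces the multi-sum to a single sum. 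With $a:=x_2-y_1$, $b:=y_2-y_1$ and $D:=|d|$, what remains is
\begin{equation*}
S=\sum_{j=0}^{D}\binom{D}{j}(-1)^{j}\frac{(a)_{|n|+j}}{(b)_{|n|+j}} .
\end{equation*}

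\textbf{Step 3 (the identity, main obstacle).} The target is $S=(a)_{|n|}(b-a)_{D}/(b)_{|n|+D}$. I would write
\begin{equation*}
\frac{(a)_p}{(b)_p}=\frac{\Gamma(b)}{\Gamma(a)\Gamma(b-a)}\int_0^1t^{a+p-1}(1-t)^{b-a-1}\,dt
\end{equation*}
as in \eqref{eq:intrp}, valid for $\operatorname{Re} b>\operatorname{Re} a>0$ and extended elsewhere by analytic continuation. The binomial sum under the integral gives $t^{|n|}(1-t)^{D}$. The resulting Beta integral yields exactly $(a)_{|n|}(b-a)_D/(b)_{|n'|}$, which is \eqref{element-M}.

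\textbf{Step 4 (stochasticity).} Nonnegativity requires the Pochhammer arguments $x_2-y_1$, $y_2-x_2$, $y_2-x_1$, $y_1-x_1$ (and their partners) to be nonnegative with positive denominators. Rewriting these via $x_i=x+\mu_i^{[1]}$ and $y_i=y+\mu_i^{[2]}$ translates them into the stated window for $x-y$. For the column sums, fix $(m',n')$ and sum over $n\le n'$. Vandermonde again gives $\sum_{|n|=p}\prod_a\binom{n'_a}{n_a}=\binom{|n'|}{p}$, leaving
\begin{equation*}
\sum_{p}\binom{|n'|}{p}\frac{(a)_p(b-a)_{|n'|-p}}{(b)_{|n'|}}=1 ,
\end{equation*}
which is the normalisation of the Beta-binomial law (equivalently, the Vandermonde identity for rising factorials). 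The argument for $\eR^-$ is identical.
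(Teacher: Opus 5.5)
Your proposal is correct and follows essentially the paper's route: expand the two exponentials in the Fock basis, use the normalisation \eqref{normali} to turn the $\Gamma$-ratio into Pochhammer symbols, evaluate the resulting alternating sum in closed form, and get stochasticity from the signs of the Pochhammer arguments together with a Beta/binomial column-sum identity. The differences are minor: you work out $\eR^+$ where the paper works out $\eR^-$; you make the Chu--Vandermonde reduction to a single sum explicit and evaluate that sum by a Beta integral, which is what the unit-argument Lauricella $F_D^{(\M)}$ formula quoted by the paper encodes; and your Vandermonde identity for rising factorials in the column sums is a polynomial identity, so it also covers the endpoints of the window, where the paper's Beta integral would need a continuity argument.
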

\begin{proof}
The details are only provided for~\eqref{element-N}, since the proof of equation~\eqref{element-M} is analogous. First, the following matrix components are evaluated
\begin{equation}
	\begin{split}
	&\langle m,n|{\eR}^{-}(x_1,y_{2}|y_{1})|m',n'\rangle
	=\frac{\Gamma(y_{2}-x_{1})}{\Gamma(y_{2}-y_{1})}
	\langle m, n|e^{-\OFD^{[2]}\OF^{[1]}}\frac{\Gamma(\OFD^{[1]}\OF^{[1]}+y_2-y_1)}{\Gamma(\OFD^{[1]}\OF^{[1]}+y_2-x_1)}e^{\OFD^{[2]}\OF^{[1]}}|m',n'\rangle\,.
	\end{split}
\end{equation}
Notice that
\begin{equation}
	\begin{split}
e^{\OFD^{[2]}\OF^{[1]}}|m',n'\rangle =\sum_{k_{1}=0}^{\infty}\cdots\sum_{k_{\M}=0}^{\infty}\left(\prod_{a=1}^{\M}\binom{m_{a}{'}}{k_{a}} \left(\oad_{a}^{[2]}\right)^{k_{a}}\right)|m'-k,n'\rangle \,.
	\end{split}
\end{equation}
Therefore, one has that
\begin{equation}
	\begin{split}
&\langle m,  n|{\eR}^{-}(x_{1},y_{2}|y_{1})|m',n'\rangle
\\=&
\delta_{m+n,m'+n'}\frac{\Gamma(y_{2}-x_{1})}{\Gamma(y_{2}-y_{1})}\sum_{j_{1}=0}^{\infty}\cdots\sum_{j_{\M}=0}^{\infty}(-1)^{|j|}\frac{\Gamma(|m|+|j|+y_{2}-y_{1})}{\Gamma(|m|+|j|+y_{2}-x_{1})}\prod_{a=1}^{\M}\binom{m_{a}{'}}{m_{a}{'}-m_{a}-j_{a}} \binom{m_{a}+j_{a}}{j_{a}}
\\=&
\delta_{m+n,m'+n'}\frac{\Gamma(y_{2}-x_{1})}{\Gamma(y_{2}-y_{1})}\left(\prod_{a=1}^{\M}\binom{m_{a}{'}}{m_{a}}\right)\frac{\Gamma(|m|+y_{2}-y_{1})}{\Gamma(|m|+y_{2}-x_{1})}
\\&\times\text{F}_{\text{D}}^{(\M)}(|m|+y_{2}-y_{1},m_{1}-m_{1}{'},\ldots,m_{\M}-m_{\M}{'};|m|+y_{2}-x_{1};1,\ldots,1)\,.
	\end{split}
\end{equation}
Here,  $\text{F}_{\text{D}}^{(\M)}(a,b_1,\ldots,b_\M;c;z_1,\ldots,z_\M)$ denotes the Lauricella's hypergeometric series of type D, cf. \cite{lauricella1893sulle}. Using the integral representation for the Lauricella's hypergeometric series, one may writes the following relation
\begin{equation}
	\text{F}_{\text{D}}^{(\M)}(a,b_{1},\ldots,b_{\M};c;1,\ldots,1)=\frac{\Gamma(c)\Gamma(c-|b|-a)}{\Gamma(c-a)\Gamma(c-|b|)}\,,
\end{equation}
where we have denoted $|b|=\sum_{a=1}^{\M}b_{a}$. In the specific case of this proof one chooses 
\begin{equation}
	a=y_2-y_1+|m|,\qquad c=y_{2}-x_{1}+|m|,\qquad |b|=|m|-|m'|\,
\end{equation}
then it follows that
\begin{equation}
	\begin{split}
		\langle m, n|{\eR}^{-}(x_{1},y_{2}|y_{1})|m',n'\rangle
		&=
		\delta_{m+n,m'+n'}\frac{\Gamma(y_{2}-x_{1})}{\Gamma(y_{2}-y_{1})}\frac{\Gamma(|m|+y_{2}-y_{1})\Gamma(y_{1}-x_{1}+|m'|-|m|)}{\Gamma(y_{1}-x_{1})\Gamma(y_{2}-x_{1}+|m'|)}\prod_{a=1}^{\M}\binom{m_{a}{'}}{m_{a}}
		\\&=
		\delta_{m+n,m'+n'}\frac{\left(y_{2}-y_{1}\right)_{|m|}\left(y_{1}-x_{1}\right)_{|m'|-|m|}}{\left(y_{2}-x_{1}\right)_{|m'|}}\prod_{a=1}^{\M}\binom{m_{a}{'}}{m_{a}}\,.
		\end{split}
	\end{equation}
    We now prove the stochasticity. First, it is straightforward to show that the matrix elements are non-negative for $x,y\in \mathbb{R}$ such that  $(\mu_{1}^{[2]}-\mu_{2}^{[1]})\leq (x-y)\leq \min\{(\mu_{2}^{[2]}-\mu_{2}^{[1]}),(\mu_{1}^{[2]}-\mu_{1}^{[1]}),(\mu_{2}^{[2]}-\mu_{1}^{[1]})\}$. Indeed, under this condition all the arguments of the Pochhammer symbols of the matrix elements of $\eR^{\pm}$ written in \eqref{element-M} and \eqref{element-N} are non-negative.

    Then, we show that the sum of the components of each column is $1$. Due to the presence of the binomial coefficients, we prove that
        \begin{equation}\label{stoch-RpRm}
    \sum_{n=0}^{n'}\sum_{m=0}^{m'}\langle m,n|  {\eR}^+(x_2|y_1,y_2) |m',n'\rangle=1 \qquad\text{and}\qquad \sum_{n=0}^{n'}\sum_{m=0}^{m'}\langle m,n|  {\eR}^-(x_1,y_2|y_1) |m',n'\rangle=1\,.
\end{equation} 
Above we used the notation $\sum_{n=0}^{n'}=\sum_{n_{1}=0}^{n'_{1}}\cdots \sum_{n_{\M}=0}^{n'_{\M}}$. We only show the first relation, since the second one can be proven similarly. First, we observe that it is enough to sum over $n$ only, because of the term $\prod_{a=1}^{\M}\delta_{m_{a}+n_{a},m'_{a}+n'_{a}}$. Moreover, by using the integral representation of the Beta function, we obtain 
\begin{equation}
\begin{split}
                     & \sum_{n=0}^{n'}\sum_{m=0}^{m'}\langle m,n| \eR^{+}(x_2| y_1, y_2) |m',n'\rangle\\
        &=\sum_{n=0}^{n'} \frac{(x_2-y_1)_{|n|} (y_2-x_2)_{|n'|-|n|}}{(y_2-y_1)_{|n'|} }\prod_{a=1}^{\M}\binom{n'_a}{n_a}
        \\&=
         \frac{\Gamma(y_{2}-y_{1})}{\Gamma(x_{2}-y_{1})\Gamma(y_{2}-x_{2})}\int_{0}^{1}t^{x_{2}-y_{1}-1}(1-t)^{y_{2}-x_{2}+|n'|-1}
         \sum_{n_{1}=0}^{n'_{1}}\cdots \sum_{n_{\M}=0}^{n'_{\M}}\prod_{a=1}^{\M}\binom{n'_{a}}{n_{a}}\left(\frac{t}{1-t}\right)^{n_{a}}dt\,.
        \end{split}
    \end{equation}
     By the Newton binomial we have that  
    \begin{equation}
         \sum_{n=0}^{n'}\prod_{a=1}^{\M}\binom{n'_{a}}{n_{a}}\left(\frac{t}{1-t}\right)^{n_{a}}=\prod_{a=1}^{\M}\left(1+\frac{t}{1-t}\right)^{n'_{a}}=(1-t)^{-|n'|}\,.
    \end{equation}    
    Therefore, we obtain that
    \begin{equation}
        \begin{split}
            &\sum_{n=0}^{n'}\sum_{m=0}^{m'}\langle m,n| \eR^{+}(x_2| y_1, y_2) |m',n'\rangle=
            \frac{\Gamma(y_{2}-y_{1})}{\Gamma(x_{2}-y_{1})\Gamma(y_{2}-x_{2})}\int_{0}^{1}t^{x_{2}-y_{1}-1}(1-t)^{y_{2}-x_{2}-1}dt=1\,.
                                 \end{split}
    \end{equation}
                                  	\end{proof}

\begin{corollary}\label{Corollary-R-components-total}
		 			The R-matrix in components takes the form
		 			\begin{equation}\label{eq:fullR}
		 				\begin{split}
		 					\langle m,n|  {\eR}(x-y) |m',n'\rangle& =\frac{(x_2-x_1)_{|m|} }{(x_2-x_1)_{|m'|}}\frac{(y_1-x_1)_{|m'|}(y_2-x_2)_{|n|}}{(y_2-x_1)_{|m|+|n|}} \prod_{a=1}^{\M}\binom{m_a+n_a}{m_a} \delta_{m_a+n_a,m'_a+n'_a} \\
		 					&\quad\times\sum_{m''=0}^{m+n}
		 					\frac{\left(x_2-y_1\right)_{|m''|}\left(x_1-y_2+1-|m|-|n|\right)_{|m''|}}{\left(x_2-y_2+1-|n|\right)_{|m''|}\left(x_1-y_1+1-|m'|\right)_{|m''|}}
		 					\prod_{a=1}^{\M}\frac{\left({-n_{a}}\right)_{m''_{a}}\left(-m'_{a}\right)_{m''_{a}}}{\left(-m_{a}-n_{a}\right)_{m''_{a}}\,m''_{a}!}
		 					\,,
		 				\end{split}
		 			\end{equation}
		 			where we used the notation $\sum_{m''=0}^{m+n}=\sum_{m''_1=0}^{m_1+n_1}\cdots \sum_{m''_\M=0}^{m_\M+n_\M}$. {\color{black}Changing the order of factors in the factorization}, cf.~\eqref{Facrtorized-R}, we may also write
		 			\begin{equation}\label{eq:fullR1}
		 				\begin{split}
		 					\langle m,n|  {\eR}(x-y) |m',n'\rangle& =\frac{(y_2-y_1)_{|n|}}{(y_2-y_1)_{|{n'}|}} \frac{\left( y_2-x_2\right)_{|n'|} \left(y_1-x_1\right)_{|{m}|}}{\left(y_2-x_1\right)_{|m|+|n|}} \prod_{a=1}^{\M}\binom{m_a+n_a}{n_a} \delta_{n'_a+m'_a,n_a+m_a} \\
		 					&\quad\times\sum_{n''=0}^{m+n}\frac{\left(x_2-y_1\right)_{|n''|}\left(x_1-y_2+1-|m|-|n|\right)_{|n''|}}{\left( x_2-y_2+1-|n'|\right)_{|n''|}\left(x_1-y_1+1-|m|\right)_{|n''|}}  \prod_{a=1}^{\M} \frac{({-m_a})_{n_a''}(-n_a')_{n_a''}}{(-m_a-n_a)_{n_a''}\,n_a''!}
		 					\,.
		 				\end{split}
		 			\end{equation} 
                     Moreover,  for $x,y\in \mathbb{R}$ such that $(\mu_{1}^{[2]}-\mu_{2}^{[1]})\leq (x-y)\leq \min\{(\mu_{2}^{[2]}-\mu_{2}^{[1]}),(\mu_{1}^{[2]}-\mu_{1}^{[1]}),(\mu_{2}^{[2]}-\mu_{1}^{[1]})\}$ the matrix $\eR(x-y)$ is stochastic.  
                     		 		\end{corollary}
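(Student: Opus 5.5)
The plan is to compute the matrix elements of the product $\eR(x-y)=P\,\eR^+(x_2|x_1,y_2)\,\eR^-(x_1,x_2|y_1)$ from Theorem~\ref{Theorem-main}, using the components of the single factors given in Proposition~\ref{Proposition-R-components}. First, insert a resolution of the identity $\sum_{m'',n''}|m'',n''\rangle\langle m'',n''|$ between the two factors. The permutation elements~\eqref{permutation-elements} swap the outgoing labels, so the left factor is evaluated at $\langle n,m|$. Substituting the relabelled parameters in~\eqref{element-M} (that is, $y_1\to x_1$) and in~\eqref{element-N} (with $y_2\to x_2$) gives a product of two Pochhammer/binomial expressions. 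The Kronecker deltas $\delta_{m''_a+n''_a,m'_a+n'_a}$ and $\delta_{n_a+m_a,\ldots}$ eliminate one intermediate label, leaving a single $M$-fold sum over $m''$ (respectively $n''$). Particle-number conservation $\delta_{m_a+n_a,m'_a+n'_a}$ then factors out.

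Next, put the summand into the hypergeometric form displayed in~\eqref{eq:fullR}. I will use the standard identities
\[
(a)_{N-k}=\frac{(-1)^k(a)_N}{(1-a-N)_k},\qquad \binom{N}{k}=\frac{(-1)^k(-N)_k}{k!},
\]
and write $(a)_{N+k}=(a)_N(a+N)_k$. This extracts the $m''$-independent prefactor $\frac{(x_2-x_1)_{|m|}}{(x_2-x_1)_{|m'|}}\frac{(y_1-x_1)_{|m'|}(y_2-x_2)_{|n|}}{(y_2-x_1)_{|m|+|n|}}\prod_a\binom{m_a+n_a}{m_a}$. What remains is a Lauricella-type sum whose coefficients depend on $|m''|$ through four Pochhammer symbols and on each $m''_a$ through $\frac{(-n_a)_{m''_a}(-m'_a)_{m''_a}}{(-m_a-n_a)_{m''_a}m''_a!}$. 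The binomials in the summation variable produce the factors $(-n_a)_{m''_a}$, $(-m'_a)_{m''_a}$ and the $m''_a!$. The ratio $\binom{m'_a}{\cdot}\big/\binom{m_a+n_a}{\cdot}$ yields $(-m_a-n_a)_{m''_a}$ in the denominator, and the collected signs $(-1)^{|m''|}$ cancel pairwise.

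The second formula~\eqref{eq:fullR1} follows in the same way from the other factorisation $P\,\eR^-(x_1,y_2|y_1)\eR^+(x_2|y_1,y_2)$ in~\eqref{Facrtorized-R}, with the roles of $m$ and $n$ interchanged. Stochasticity is immediate and needs no computation. A product of stochastic matrices is stochastic, and $P$ is a permutation matrix. So it suffices to check that, for $x-y$ in the stated window, the parameter combinations entering $\eR^+(x_2|x_1,y_2)$ and $\eR^-(x_1,x_2|y_1)$ satisfy the inequalities of Proposition~\ref{Proposition-R-components}, i.e.\ that all Pochhammer arguments $x_2-x_1$, $y_2-x_2$, $y_1-x_1$, $y_2-x_1$, $x_2-y_1$ are nonnegative. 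Recalling $x_i=x+\mu_i^{[1]}$ and $y_i=y+\mu_i^{[2]}$, this is a direct check. Absolute convergence of column sums is not an issue, since for fixed $(m',n')$ only finitely many entries are nonzero.

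The main obstacle is the bookkeeping in the second step. One has to choose the summation variable so that the shifted Pochhammers come out exactly as $(x_2-y_2+1-|n|)_{|m''|}$ and $(x_1-y_1+1-|m'|)_{|m''|}$ with the stated offsets. I would first run the computation for $M=1$ and compare with the known $gl(2)$ formula of \cite{Frassek:2019vjt} to fix the index conventions. Only the $|m''|$-dependence changes for general $M$, because the color structure enters solely through the products of binomials.
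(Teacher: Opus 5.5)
Your proposal is correct and follows the paper's proof: you insert intermediate states in $P\,\eR^+(x_2|x_1,y_2)\,\eR^-(x_1,x_2|y_1)$, eliminate $n''=m+n-m''$ by conservation, and rewrite the result with the Pochhammer and binomial identities; stochasticity follows, as in the paper, from the unit column sums of the factors together with nonnegativity of the five Pochhammer arguments, which is exactly the stated window. One small attribution slip: the factor $(-n_a)_{m''_a}/(-m_a-n_a)_{m''_a}$ comes from $\binom{m_a+n_a-m''_a}{m_a}\big/\binom{m_a+n_a}{m_a}$, while $\binom{m'_a}{m''_a}$ supplies $(-m'_a)_{m''_a}/m''_a!$.
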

                
\begin{proof}

	Here only~\eqref{eq:fullR} is proved, since the demonstration of~\eqref{eq:fullR1} is analogous. By a direct computation we find
	\begin{equation}\label{eq:evR}
		\begin{split}
			\langle m,n|  {\eR}(x-y) |m',n'\rangle
		&=\sum_{n'',m''=0}^\infty \langle n,m|  {R}^+(x_2|x_1,y_2) |m'',n''\rangle \langle m'',n''|  {R}^-(x_1,x_2|y_1) |m',n'\rangle
			\\&=\delta_{m+n,{m'}+{n'}}\frac{(x_2-x_1)_{|m|}}{(x_2-x_1)_{|{m'}|}} \sum_{m''=0}^{\infty}\frac{\left( y_2-x_2\right)_{|n|-|m''|}\left(x_2-y_1\right)_{|m''|}\left(y_1-x_1\right)_{|{m'}|-|m''|}}{\left(y_2-x_1\right)_{|m|+|n|-|m''|}}
			\\&\qquad\qquad\qquad\qquad\qquad\qquad\qquad\times \prod_{a=1}^{\M}\binom{m_{a}+n_{a}-m''_{a}}{m_{a}}\binom{m'_{a}}{m''_{a}}
		\end{split}
	\end{equation}
	where we chose $n''=m+n-m''$ to respect the conservation rules. The indices in the bra-vector on the right hand side of the first line of~\eqref{eq:evR} have changed place because of the action of the permutation. Then for $m+n-m''\geq 0$ we use
	\begin{equation}
		\begin{split}
			\binom{m+n-m''}{m}\binom{m'}{m''}
			&=\binom{m+n}{m}\frac{(-1)^{m''}}{m''!}\frac{({-n})_{m''}(-m')_{m''}}{(-m-n)_{m''}}
		\end{split}
	\end{equation}
	that follows from the relation
	\begin{align}
		(-m)_{n}  =(-1)^{n}\frac{m!}{(m-n)!}\,,
	\end{align}
	for $m,n \in \mathbb{N}$.
	Finally using
	\begin{align}
		(x)_{m-n}=(-1)^{n}\frac{(x)_{m}}{(1-x-m)_{n}}
	\end{align}
	one obtains~\eqref{eq:fullR}.

Moreover, from the fact that, for fixed $m'$ and $n'$, we have
         \begin{equation}
	\begin{split}
		&\sum_{m,n}\langle m,n|{\eR}(x-y)|m',n'\rangle
		\\=&
		\sum_{m,n}\sum_{m'',n''}\langle n,m|  {\eR}^+(x_2|x_1,y_2) |m'',n''\rangle \langle m'',n''|  {\eR}^-(x_1,x_2|y_1) |m',n'\rangle
		\\=&
														              \sum_{m'',n''}\langle m'',n''|  {\eR}^-(x_1,x_2|y_1) |m',n'\rangle
				\sum_{i,j}\langle i,j|{\eR}^{+}(x_{2}|x_{1},y_{2})|m'',n''\rangle
        \\=&1\,.
	\end{split}
\end{equation}
Above, in the next to last equality we have used~\eqref{permutation-elements} and performed a change of summation indices, while in the last equality we used equations~\eqref{stoch-RpRm}. The non-negativity of the elements of $\eR(x-y)$ follows from $\eR^{\pm}$, see Proposition \ref{Proposition-R-components}.
 \end{proof}
We further observe that the elements of the matrix $\eR(x-y)$ of equation \eqref{eq:fullR} may be written as
\begin{equation}
\begin{split}
   & \langle m,n|  {\eR}(x-y) |m',n'\rangle =\frac{(x_2-x_1)_{|m|} }{(x_2-x_1)_{|m'|}}\frac{(y_1-x_1)_{|m'|}(y_2-x_2)_{|n|}}{(y_2-x_1)_{|m|+|n|}} \prod_{a=1}^{\M}\binom{m_a+n_a}{m_a} \delta_{m_a+n_a,m'_a+n'_a}
    \\&\times 
    {F}_{2;1;\ldots;1}^{2;2;\ldots;2}\left(
\begin{matrix}
(x_{2}-y_{2}),(x_{1}-y_{2}+1-|m|-|n|);\;\;(-n_{1}),\ldots,(-n_{\M});\;\;(-m_{1}),\ldots,(-m_{\M})\\
(x_{2}-y_{2}+1-|n'|), (x_{1}-y_{1}+1-|m|);\;\;(-n_{1}-m_{1}),\ldots,(-n_{\M}-m_{\M})
\end{matrix};\; 1,\ldots, 1
 \right)\,.
\end{split}
\end{equation}

Above we used the Srivastava–Daoust (Kampé de Fériet) type hypergeometric function, see \cite{Srivastava1987NeumannExpansions}, defined as 
\begin{equation}\label{SD-HPGF}
\begin{split}
        F_{q_{0},q_{1}\ldots,q_{\M}}^{p_{0},p_{1},\ldots, p_{\M}}\left(\begin{matrix}
        \bm{a}_{0};\; \bm{a}_{1};\ldots; \bm{a}_{\M}\\
        \bm{b_{0}};\; \bm{b}_{1};\ldots;\bm{b}_{\M}
    \end{matrix};\; x_{1},\ldots,x_{\M}\right):= \sum_{m_{1},\ldots,m_{\M}=0}^{\infty}\frac{\prod_{k=1}^{p_{0}}(a_{0}^{k})_{|m|}}{\prod_{k=1}^{q_{0}}(b_{0}^{k})_{|m|}}\prod_{j=1}^{\M}\frac{\prod_{k=1}^{p_{j}}(a_{j}^{k})_{m_{j}}}{\prod_{k=1}^{q_{j}}(b_{j}^{k})_{m_{j}}}\frac{x^{m_{j}}}{m_{j}!}\,.
    \end{split}
\end{equation}
Here we have denoted 
\begin{equation}
    \bm{a}_{j}= (a_{1},\ldots,a_{p_{j}})\quad \text{and}\quad \bm{b}_{j}=(b_{1},\ldots,b_{q_{j}})\qquad \text{for}\quad j\in \{0,1,\ldots, \M\}\,,
\end{equation}
where $p_{j},q_{j}\in \mathbb{N}$. A similar formula can be obtained from the matrix elements written in \eqref{eq:fullR1}. 

When restricting to $\M=1$, one has that the Srivastava–Daoust hypergeometric function \eqref{SD-HPGF} reduces to the generalized hypergeometric function ${}_4F_3\left(\cdot
\right)$, see \cite{Mangazeev:2014gwa}. Namely, for $\M=1$, one has that 
\begin{equation}
    \begin{split}
        &\langle m,n|  {\eR}(x-y) |m',n'\rangle =\frac{(x_2-x_1)_{m} }{(x_2-x_1)_{m'}}\frac{(y_1-x_1)_{m'}(y_2-x_2)_{n}}{(y_2-x_1)_{m+n}} \binom{m+n}{m} \delta_{m+n,m'+n'}
    \\\times & 
    {}_4F_3\left(
\begin{matrix}
(x_{2}-y_{2});(x_{1}-y_{2}+1-m-n);\;(-n)\\
(x_{2}-y_{2}+1-n'); (x_{1}-y_{1}+1-m);\;(-n-m)
\end{matrix};\; 1
 \right)\,.
    \end{split}
\end{equation}

We now fix $\mu_{i}^{[1]}=\mu_{i}^{[2]}$ for $i=1,2$, that is, we assign the same representation labels to the spaces 1 and 2. By means of the logarithmic derivative of the discrete representation of the stochastic R-matrix we obtain the Hamiltonian density of the harmonic process with generator~\eqref{bulk-generatorDensity}. \begin{corollary}\label{corollary-bulk-elements}
The component form of the Hamiltonian density $\eH$, obtained via the logarithmic derivative of the R-matrix $\eR(x)$ with components~\eqref{eq:fullR}, is given by~\eqref{hbulk}. Namely, we have that 
\begin{equation}
\label{ham-discrete}
\begin{split}
 \eH|m',n'\rangle=\eH^{+}|m',n'\rangle+\eH^{-}|m',n'\rangle\,,
  \end{split}
\end{equation}
where
\begin{equation}
    \eH^{+}|m',n'\rangle=h_s(|n'|)|m',n'\rangle-\sum_{k_{1}=0}^{n_{1}'}\cdots \sum_{k_{N}=0}^{n_{N}'}\varphi_{s}(k,n') |m'+k,n'-k\rangle\,
\end{equation}
and \begin{equation}
    \eH^{-}|m',n'\rangle=h_s(|m'|)|m',n'\rangle-\sum_{k_{1}=0}^{m_{1}'}\cdots \sum_{k_{N}=0}^{m_{N}'}\varphi_{s}(k,m') |m'-k,n'+k\rangle\,.
\end{equation}
Above the $\varphi_{s}(k,m)$ is the transition rate written in \eqref{rates}. 

 \end{corollary}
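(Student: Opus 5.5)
The plan is to start from Corollary~\ref{Hamalg}, which already gives the density in operator form with the normalisation \eqref{normali}:
\begin{equation*}
\abrOSCH^{-}=e^{-\OAD^{[2]}\OA^{[1]}}\bigl[\psi(\OAD^{[1]}\OA^{[1]}+2s)-\psi(2s)\bigr]e^{\OAD^{[2]}\OA^{[1]}},
\end{equation*}
with the analogous expression for $\abrOSCH^{+}$. The task then reduces to computing the matrix elements of these two operators in the Fock representation, $\eH^\pm=\pi(\abrOSCH^\pm)$. By the permutation symmetry of Proposition~\ref{permsym}, $\eH^{+}=P\eH^{-}P$, so it suffices to treat $\eH^{-}$.

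An alternative route is to differentiate the factor matrix elements \eqref{element-N} directly. Setting $x_1=x+\mu_1$, $y_1=\mu_1$, $y_2=\mu_2$, we get
\begin{equation*}
\langle m,n|\eR^-|m',n'\rangle=\frac{(2s)_{|m|}\,(-x)_{|m'|-|m|}}{(2s-x)_{|m'|}}\prod_a\binom{m'_a}{m_a}\,\delta_{m+n,m'+n'}.
\end{equation*}
At $x=0$ the factor $(-x)_{j}$ vanishes unless $j=0$, which reproduces the identity. For $j=|k|\ge1$, with $k=m'-m$, we have $\partial_x(-x)_{j}|_{x=0}=-\Gamma(j)$. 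The off-diagonal derivative is therefore
\begin{equation*}
-\Gamma(|k|)\frac{(2s)_{|m'|-|k|}}{(2s)_{|m'|}}\prod_a\binom{m'_a}{k_a}=-\frac{\Gamma(|k|)\Gamma(2s+|m'|-|k|)}{\Gamma(2s+|m'|)}\prod_a\binom{m'_a}{k_a}=-\varphi_s(k,m'),
\end{equation*}
using $\binom{m'_a}{m_a}=\binom{m'_a}{k_a}$. The transition is $|m',n'\rangle\to|m'-k,n'+k\rangle$, which is exactly the right-jump term.

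For the diagonal term ($k=0$) only the denominator depends on $x$. This gives $-\partial_x\log(2s-x)_{|m'|}|_{x=0}=\sum_{j=0}^{|m'|-1}\frac{1}{2s+j}=h_s(|m'|)$. This agrees with the operator form, since $\psi(|m'|+2s)-\psi(2s)=h_s(|m'|)$ by \eqref{digamma-ratesBulk}.

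Since Proposition~\ref{proposition-Hdensity} shows $\abrOSCH=\partial_x\eR^-|_0+\partial_x\eR^+|_0$, the two contributions assemble into \eqref{ham-discrete}. For $\eR^+$ we repeat the computation with \eqref{element-M} at $x_2=x+\mu_2$, or invoke the permutation conjugation, obtaining the left-jump term with rate $\varphi_s(k,n')$ and diagonal $h_s(|n'|)$.

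The only delicate point is bookkeeping. We must check that the parameter specialisation in Proposition~\ref{proposition-Hdensity} matches the argument pattern of \eqref{element-N}, i.e.\ that $\eR^-(x_1,x_2|y_1)$ with $x_2=\mu_2=y_2$ is the object whose components are given. We must also handle the sign convention of $\eH$ relative to $\abrOSCH$. The paper's $\eH$ has positive diagonal and negative off-diagonal entries, so $\eH$ should be identified with $-\abrOSCH$ up to this sign. The derivative computed above already carries the right signs if the logarithmic derivative is taken with respect to $-x$, and I would fix this convention explicitly.
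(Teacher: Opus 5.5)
Your main computation is the paper's own proof. You specialise \eqref{element-N} at $x_1=x+\mu_1$, $y_1=\mu_1$, $y_2=\mu_2$ (and \eqref{element-M} at $x_2=x+\mu_2$). At $x=0$ the factor $(-x)_{|k|}$ vanishes except through its derivative $-\Gamma(|k|)$, which gives $-\varphi_s(k,m')$ off the diagonal, while the denominator gives $h_s(|m'|)$ on it. The bookkeeping you flag is consistent. Proposition~\ref{proposition-Hdensity} uses the second factorisation in \eqref{Facrtorized-R}, namely $P\abrOSCR^-(x_1,y_2|y_1)\abrOSCR^+(x_2|y_1,y_2)$ at $y=0$, and this is exactly the argument pattern of \eqref{element-N} and \eqref{element-M}.

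You should drop your final remark on signs, because no sign change is needed. The relation is $\eH=\pi(\abrOSCH)$ with $\abrOSCH=\partial_x\log\abrOSCR(x)|_{x=0}$ exactly as defined. Your own derivative with respect to $x$ already yields $+h_s$ on the diagonal and $-\varphi_s$ off it. Consistently, $\pi(\abrOSCH^-)$ from Corollary~\ref{Hamalg} has nonnegative diagonal entries $\psi(|m|+2s)-\psi(2s)$.

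The minus sign you have in mind belongs to the generator, $\mathcal{L}f(\bm{m})=-\langle f|\eH^t|\bm{m}\rangle$ in \eqref{gen-ham1}. If you differentiated with respect to $-x$, or identified $\eH$ with $-\abrOSCH$, you would obtain the wrong sign.
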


\begin{proof}
 Using Proposition~\ref{Proposition-R-components} we may write

\begin{equation}\label{element-Mt}
		 				\begin{split}
		 					& \eR^+(x_2|y_1,y_2) |m',n'\rangle =  \sum_{k=0}^{n'}\frac{(x_2-y_1)_{|n'|-|k|} (y_2-x_2)_{|k|}}{(y_2-y_1)_{|n'|} }\left[\prod_{a=1}^{\M}\binom{n_a'}{n_a'-k_a}\right] |m'+k,n'-k\rangle\,,
		 				\end{split}
		 			\end{equation}

		 			\begin{equation}\label{element-Nt}
		 				\begin{split}
		 					& \eR^-(x_1,y_2|y_1) |m',n'\rangle =\sum_{k=0}^{m'} \frac{(y_2-y_1)_{|m'|-|k|} (y_1-x_1)_{|k'|}}{(y_2-x_1)_{|m'|} }\left[\prod_{a=1}^{\M}\binom{m_a'}{m_a'-k_a}\right] | m'-k,n'+k\rangle \,,
		 				\end{split}
		 			\end{equation}
Then we  find
\begin{equation}\label{element-Mr}
		 				\begin{split}
		 					\frac{\partial}{\partial x}  \eR^+(x+\mu_2|\mu_1,\mu_2)& |m',n'\rangle|_{x=0} =\eH^{+}|m',n'\rangle=\left[\psi(n'+2s)-\psi (2s)\right] |m',n'\rangle\\& -\sum_{k=0}^{n'}\mathbbm{1}_{\{|k|>0\}}\frac{\Gamma (|k|) \Gamma (|n'|-|k|+2s)}{\Gamma (|n'|+2s)}\left[\prod_{a=1}^{\M}\binom{n_a'}{n_a'-k_a}\right]  |m'+k,n'-k\rangle
		 					 \,,
		 				\end{split}
		 			\end{equation}

		 			\begin{equation}\label{element-Nr}
		 				\begin{split}
		 					\frac{\partial}{\partial x}  \eR^-(x+\mu_1,\mu_2|\mu_1) &|m',n'\rangle |_{x=0} =\eH^{-}|m',n'\rangle=\left[\psi(m'+2s)-\psi (2s)\right] |m',n'\rangle\\& -\sum_{k=0}^{m'}\mathbbm{1}_{\{|k|>0\}}\frac{\Gamma (|k|) \Gamma (|m'|-|k|+2s)}{\Gamma (|m'|+2s)}\left[\prod_{a=1}^{\M}\binom{m_a'}{m_a'-k_a}\right]  |m'-k,n'+k\rangle
		 					 \,,
		 				\end{split}
		 			\end{equation}
which concludes the proof.
                    
\end{proof}

\subsection{Comparison with existing models.}

We relate the $\eR$-operators introduced in Proposition~\ref{Proposition-R-components} (and in Corollary \ref{Corollary-R-components-total}) as well as the Hamiltonian defined in Corollary~\ref{corollary-bulk-elements} to previously known results.

    \paragraph{R-matrix as a rational limit.}
   Considering the $q$-Hahn weights $\Phi_q(\gamma\mid\beta;\lambda,\mu)$ introduced in equation~(19) of~\cite{Kuniba:2016fpi} and the matrices, 
     $\eR^{+}(x+\mu_{2}\mid y+\mu_{1},y+\mu_{2})$ and 
    $\eR^{-}(x+\mu_{1},y+\mu_{2}\mid y+\mu_{1})$,  given in~\eqref{element-M} and~\eqref{element-N}, we have that
    \begin{equation}
    \begin{split}
        &\lim_{q\to 1}\Phi_q(n\mid n';q^{2s+x-y},q^{2s})\delta_{m+n,m'+n'}
        = \langle m,n|\eR^{+}(x+\mu_{2}\mid y+\mu_{1},y+\mu_{2})|m',n'\rangle, \\
        &\lim_{q\to 1}\Phi_q(m\mid m';q^{2s},q^{2s+y-x})\delta_{m+n,m'+n'}
        = \langle m,n|\eR^{-}(x+\mu_{1},y+\mu_{2}\mid y+\mu_{1})|m',n'\rangle.
    \end{split}
    \end{equation}
    Above we denoted $2s=\mu_{2}-\mu_{1}>0$. These expressions naturally provide a multispecies generalization of the $q\to 1$ limit of the $q$-Hahn weights introduced in equation~(8) of~\cite{povolotsky2013integrability}.
    Furthermore, consider the choice of representation parameters
    \begin{equation}\label{label-choice-compact}
        \mu_{1}^{[1]}=\frac{I}{2}+\frac{1}{2},\quad 
        \mu_{2}^{[1]}=-\frac{I}{2}+\frac{1}{2},\qquad
        \mu_{1}^{[2]}=\frac{J}{2}+\frac{1}{2},\quad 
        \mu_{2}^{[2]}=-\frac{J}{2}+\frac{1}{2}
    \end{equation}
    with $I,J\in \mathbb{N}$.
    Then the component form of the $\eR$-matrix in~\eqref{eq:fullR1} coincides, up to a normalization factor $B_{I,J}$, with the $q\to 1$ limit of the $R$-matrix derived in~\cite[(3.16)--(3.17)]{Kuniba:2016fpi,Bosnjak:2016oze}. In particular \footnote{In the $q\to 1$ limit, the normalization factor reduces to
$        \lim_{q\to 1}B_{I,J}(q^{x})
        = \frac{\left(x-\frac{I+J}{2}\right)_{I+J+1}}
        {\left(x-\frac{I+J}{2}\right)_{I+1}\left(x-\frac{I+J}{2}\right)_{J+1}}
    $. },
    \begin{equation}\label{eq:fin-R} 
        \lim_{q\to 1}\left[R_{I,J}(q^{x})\right]_{m,n}^{m',n'}=\langle m,n|\eR(x)|m',n'\rangle\,,
    \end{equation}
    where the total number of particles per site depends on $I$ and $J$ and is $|m|,|m'|\leq I$ and $|n|,|n'|\leq J$\footnote{For $|m|$ or $|m'|> I$ and $|n|$ or $|n'|\leq J$ the elements of the R-matrix are not well defined.}.

    We remark that the Drinfeld twist becomes trivial in the rational limit.  The factorisation of the stochastic R-matrix has been noted in \cite[(5.7)]{Bosnjak:2016oze}.     
    
    \begin{remark}\label{Remark-factorisationFail}
The factorisation  \eqref{Facrtorized-R} of the R-matrix relies on an infinite-dimensional auxiliary space. While the full R-matrix admits a consistent truncation to finite-dimensional representations, this property does not extend to the factors $\eR^{\pm}$. More precisely, the finite-dimensional R-matrix fails to factorise unless one allows for infinite-dimensional intermediate states, i.e. the factorisation maps out of the configuration space.
    
 To see this, let consider $I, J\in \mathbb{N}$, fix $\mu_{1}^{[1]}-\mu_{2}^{[1]}=J$, $\mu_{1}^{[2]}-\mu_{2}^{[2]}=I$, and let us restrict to a configuration space where the total number of particles per site is $|m|,|m'|\leq I$ and $|n|,|n'|\leq J$. Then, using Corollary \ref{Corollary-R-components-total}, one obtains that 
\begin{equation}
    \begin{split}
    \eR(x-y)=    &\sum_{m'',{n}''}\langle {n},{m}|\eR^{-}(x_{1},y_{2}|y_{1})|{m}'',{n}''\rangle\langle {m}'',{n}''|\eR^{+}(x_{2}|y_{1},y_{2})|{m}',{n}'\rangle \,.
                  \end{split}
\end{equation}

The summation index  does not satisfy the particle number constraint and there are non-zero terms in the summation corresponding, which are being neglected when restricting $\eR^{\pm}$ to finite dimension. \end{remark}

   We fix the fundamental representation on both spaces in \eqref{eq:fin-R} by imposing $\mu_1^{[i]}=1$ and $\mu_2^{[i]}=0$ for $i=1,2$. Let $(|A\rangle)_{A=0}^{\M}$ denote the canonical basis of $\mathbb{C}^{\M+1}$, and let $\eR_{\square,\square}(x)$ denote the resulting R-matrix. Then we obtain that
\begin{equation}\label{R-double-fund}
\langle A,B|\eR_{\square,\square}(x)|C,D\rangle
=\frac{1}{(x+1)}\,
[\fundR(x)]^{C,D}_{A,B}\,,
\qquad \forall\, A,B,C,D\in \{0,1,\ldots,\M\},
\end{equation}
where $\fundR(x)=x+P$ denotes the fundamental R-matrix defined in \eqref{fundamental-R}.
 \paragraph{Hamiltonian as a rational limit.}
    
    Using Corollary~\ref{corollary-bulk-elements}, for $|n'|>|n|$ and for $|m'|>|m|$, the jump rates of the Hamiltonian~\eqref{rates} can be expressed as 
    \begin{equation}\label{rate+}
        \varphi_{s}(n'-n,n')\delta_{m+n,m'+n'}
        = -\lim_{x\to 0}
        \frac{\langle m,n|\eR^{+}(x+\mu_{2}\mid \mu_{1},\mu_{2})|m',n'\rangle}{x},
    \end{equation}
    and
    \begin{equation}\label{rate-}
        \varphi_{s}(m'-m,m')\delta_{m+n,m'+n'}
        =- \lim_{x\to 0}
        \frac{\langle m,n|\eR^{-}(x+\mu_{1},\mu_{2}\mid \mu_{1})|m',n'\rangle}{x}.
    \end{equation}
    Above we have denoted $2s=\mu_{2}-\mu_{1}>0$.   This identifies the process as the $q\to 1$ limit of a multispecies generalization of the $q$-Hahn zero-range process in~\cite{sasamoto1998one,barraquand2016q} (see also Remark \ref{Remark-MqHahn}).
    On the diagonal, we obtain 
\begin{equation}\label{rate-diag+}
        \psi(2s+|n'|)-\psi(2s)
        = -\lim_{x\to 0}
        \frac{\langle m',n'|\eR^{+}(x+\mu_{2}\mid \mu_{1},\mu_{2})|m',n'\rangle}{x},
    \end{equation}
    and 
    \begin{equation}\label{rate-diag-}
        \psi(2s+|m'|)-\psi(2s)
        = -\lim_{x\to 0}
        \frac{\langle m',n'|\eR^{-}(x+\mu_{2}\mid \mu_{1},\mu_{2})|m',n'\rangle}{x}\,.
    \end{equation}
          
	When restricting to finite-dimensional representations, namely by fixing $2s\in -\mathbb{N}$ and assuming that the particle number is bounded by $|2s|$, the factorization of the $R$-matrix $\eR(x)$ no longer holds (see Remark \ref{Remark-factorisationFail}) and therefore the equations \eqref{rate-diag+}-\eqref{rate-diag-} are not defined in this setup. However, we observe that for the fundamental representation of $gl(\M+1)$, namely for $2s = -1$ and upon restricting the space to $\mathbb{C}^{\M+1} \otimes \mathbb{C}^{\M+1}$, one obtain the Hamiltonian density of the multispecies stirring process \cite{vanicat2017exact, casini2024duality}, given by
\begin{equation}\label{Mstirring-hamiltonian}
    h := \frac{\partial}{\partial x} \log\!\left(\fundR(x)\right)\bigg|_{x=0}\,.
\end{equation}
Above, $\fundR(x)$ is the fundamental $R$-matrix \eqref{fundamental-R}. In particular, upon further restriction to $M=1$, one recovers the Hamiltonian of the simple symmetric exclusion process (SSEP); see the review \cite{crampe2014integrable}.

We remark that this is the only finite-dimensional representation for which a stochastic Hamiltonian is obtained via the logarithmic derivative of $\eR(x)$. For $2s\in -\mathbb{N}$ with $|2s|>1$, the Hamiltonian matrix has negative off-diagonal elements, breaking Markovianity. This can be verified by direct computation, see also \cite{2016CMaPh.343..651C}.

 \subsection{Integral representations for the R-matrix}\label{section-integral-reps}
Starting from the abstract algebraic form of the R-matrix in Theorem \ref{Theorem-main}, we now realize the algebra using differential and multiplication operators and represent the action of $gl(\M+1)$ on the space of polynomials following the ideas of \cite{Derkachov:1999pz}.
 \subsubsection{Integral form}\label{sec:intrepp}

We introduce the representation $\rep:\mathcal{A}_M\to End(\mathbb{C}[\hidden_1,\ldots,\hidden_M])$ of the Heisenberg albegra where  $\mathbb{C}[\hidden_1,\ldots,\hidden_M]$ denotes the infinite-dimensional space
    of polynomials in variables $\hidden_1,\ldots,\hidden_M$. We have
    \begin{equation}\label{rep-infiniteDim-Poly}
\rep(\oa_{a})=\partial_{\hidden_{a}}\,,\qquad \rep(\oad_{a})=\hidden_{a}\,.
	\end{equation}
	such that
	\begin{equation}\label{rep-infiniteDim-Poly2}
\rep(\OA)=\bm{\partial}_{\hidden}\,,\qquad \rep(\OAD)=\bm{\Hidden}\,.
	\end{equation}
	where  $\bm{\partial}_{\hidden}=(\partial_{\hidden_1},\ldots,\partial_{\hidden_\M})^{t}$ and $\bm{\Hidden}=(\hidden_{1},\ldots,\hidden_{\M})$.

 The generators of $gl(M+1)$ then take the form of first-order differential operators
 \begin{equation}\label{GL-continuous-reps}
		\rep(J_{BA})=\left(\begin{array}{cc}
			\mmu_1-\OZD\OZ&\OZD\left((\mmu_1-\mmu_2+1)\ID-\OZ\OZD\right)\\
			\OZ&(\mmu_2-1)\ID+\OZ\OZD
		\end{array}\right) _{AB}
	\end{equation}
    with the  highest-weight vector $v_0 = 1$.  This infinite-dimensional representation is reducible for $\mu_1-\mu_2\in\mathbb{N}$. This can be see by inspecting the upper right block of~\eqref{GL-continuous-reps}, i.e. we have that
    \begin{equation}
        \bm{\Hidden}\left((\mu_{1}-\mu_{2}+1)\ID-\bm{\partial}_{\hidden}\bm{\Hidden}\right)=\bm{\Hidden}(\mu_{1}-\mu_{2}-\bm{\Hidden}\bm{\partial}_{\hidden})\,.
    \end{equation}
    As a consequence, for all $a\in \{1,\ldots, \M\}$ we obtain that
    \begin{equation}
        \hidden_{a}(\mu_{1}-\mu_{2}-\bm{\Hidden}\bm{\partial}_{\hidden})\prod_{b=1}^{\M}\hidden_{b}^{n_{b}}=\left(\mu_{1}-\mu_{2}-\sum_{b=1}^{\M}n_{b}\right)\hidden_{a}^{n_{a}+1}\prod_{b=1\,b\neq a}^{\M}\hidden_{b}^{n_{b}}\,.
    \end{equation}
    Therefore, for $\mu_{1}-\mu_{2}\in \mathbb{N}$ the representation admits a non-trivial invariant subspace and thus is reducible.

    An element  in the oscillator algebra $\mathcal{O}\in\mathcal{A}_M$ in  representation $\pi$ is related to its representation $\rep$ through the intertwining relation
     \begin{equation}\label{intertw}
 \rep(\mathcal{O} )g_{m}(z)=\sum_{m'=0}^\infty g_{m'}(z)\langle m'| \,\pi(\mathcal{O} ) \,|m\rangle  \,,
 \end{equation}
 where $g_{m}(z)$ is a polynomial of the form $g_{m}(z)=\prod_{a=1}^{\M}(\hidden_a)^{m_a}$, which can be thought of as the basis of representation $\rep $. 
    
As for the Fock representation, these definitions naturally lift to the R-operator such that
 \begin{equation}\label{Rp-continuous}
			\ecR ^+(x_2|y_1,y_2)f(\hidden^{1},\hidden^{2}) =\kappa^+(x_2|y_1,y_2)e^{-\bm{\Hidden}^{[1]}\bm{\partial}_{\hidden}^{[2]}}\frac{\Gamma(\bm{\Hidden}^{[2]}\bm{\partial}_{\hidden}^{[2]}-y_1+x_2)}{\Gamma(\bm{\Hidden}^{[2]}\bm{\partial}_{\hidden}^{[2]}-y_1+y_2)}e^{\bm{\Hidden}^{[1]}\bm{\partial}_{\hidden}^{[2]}}f(\hidden^{1},\hidden^{2})
		\end{equation}
        and
    \begin{equation}\label{Rm-continuous}
		\begin{split}
\ecR ^{-}(x_{1},x_{2}|y_{1})f(\hidden^{1},\hidden^{2}
)&=\kappa^{-}(x_{1},x_2|y_{1})e^{-\bm{\Hidden}^{[2]}\bm{\partial}_{\hidden}^{[1]}}\frac{\Gamma\left(\bm{\Hidden}^{[1]}\bm{\partial}_{\hidden}^{[1]}+x_{2}-y_{1}\right)}{\Gamma\left(\bm{\Hidden}_{\hidden}^{[1]}\bm{\partial}_{\hidden}^{[1]}+x_{2}-x_{1}\right)}e^{\bm{\Hidden}^{[2]}\bm{\partial}_{\hidden}^{[1]}}f(\hidden^{1},\hidden^{2})
		\end{split}
	\end{equation}
  	where $[1]$ and $[2]$ denote the two spaces on which the R-operator acts and $f(\hidden^1,\hidden^2)\in \mathbb{C}[\hidden^1_1,\ldots,\hidden^1_M]\otimes \mathbb{C}[\hidden^2_1,\ldots,\hidden^2_M]$ a polynomial function. In the following we will suppress $\rep$ whenever it is apparent that we are acting on functions.

	The integral representation of the operators $\abrOSCR^\pm$ is obtained manipulating~\eqref{Rp-continuous} and~\eqref{Rm-continuous}.
	\begin{proposition}\label{proposition-intergrtalReps}
		The  operators $\ecR ^{+}(x_{2}|y_{1},y_{2})$ and $\ecR ^{-}(x_{1},x_{2}|y_{1})$ have the following integral representation:
        		\begin{equation}\label{redR2sl2int}
			\begin{split}
				\ecR ^+(x_2|y_1,y_2) f(\hidden^{1},\hidden^{2})&=\frac{\kappa_+(x_{2}|y_{1},y_{2})}{ \Gamma(y_2-x_2)} \int_0^1d\alpha\, \alpha^{x_2-y_1-1}(1-\alpha)^{y_2-x_2-1}  f(\hidden^{1},(1-\alpha)\hidden^{1}+\alpha \hidden^{2})\,,
			\end{split}
		\end{equation}
		and
		\begin{equation}\label{redR2sl2int2}
			\begin{split}
				\ecR ^-(x_1,x_2|y_1) f(\hidden^{1},\hidden^{2})&=\frac{\kappa_-(x_{1},x_2|y_{1})}{ \Gamma(y_1-x_1)}  \int_0^1d\alpha\,\alpha^{x_2-y_1-1}(1-\alpha)^{y_1-x_1-1}  f((1-\alpha)\hidden^{2}+\alpha \hidden^{1},\hidden^{2})\,,
			\end{split}
		\end{equation}
																							where $\hidden^{1}=(z_{1}^{1},\ldots,z_{\M}^{1})$ and $\hidden^{1}=(z_{1}^{2},\ldots,z_{\M}^{2})$.	\end{proposition}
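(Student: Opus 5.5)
The plan is to verify both formulas on the monomial basis $g_{m}(\hidden^1)g_{n}(\hidden^2)$, with $g_m(\hidden)=\prod_a \hidden_a^{m_a}$. Both sides are linear, and the right-hand sides are integrals over a compact interval of polynomial compositions, so they map polynomials to polynomials. It therefore suffices to check agreement on monomials. I treat $\ecR^-$ in detail; $\ecR^+$ follows by the same computation with the roles of the two spaces exchanged, or alternatively by invoking Proposition~\ref{permsym}, which relates $\abrOSCR^+$ to $P\abrOSCR^-P$ with transformed arguments.

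\textbf{Step 1 (outer exponentials as translations).} In the representation~\eqref{rep-infiniteDim-Poly} we have $\bm{\Hidden}^{[2]}\bm{\partial}_\hidden^{[1]}=\sum_a \hidden^2_a\partial_{\hidden^1_a}$. Since the $\hidden^2_a$ commute with $\partial_{\hidden^1_a}$, this is a constant-coefficient derivation in $\hidden^1$, and Taylor's formula gives
\begin{equation*}
e^{\pm\bm{\Hidden}^{[2]}\bm{\partial}_\hidden^{[1]}}f(\hidden^1,\hidden^2)=f(\hidden^1\pm\hidden^2,\hidden^2),
\end{equation*}
which is exact on polynomials.

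\textbf{Step 2 (the Gamma ratio as a dilation integral).} The middle operator is a function of the Euler operator $D=\bm{\Hidden}^{[1]}\bm{\partial}^{[1]}_\hidden$, which acts on a monomial homogeneous of total degree $d$ in $\hidden^1$ by multiplication by $d$. Using the Beta integral~\eqref{eq:intrp},
\begin{equation*}
\frac{\Gamma(d+x_2-y_1)}{\Gamma(d+x_2-x_1)}=\frac{1}{\Gamma(y_1-x_1)}\int_0^1\alpha^{d+x_2-y_1-1}(1-\alpha)^{y_1-x_1-1}d\alpha .
\end{equation*}
Because $\alpha^{D}f(\hidden^1,\hidden^2)=f(\alpha\hidden^1,\hidden^2)$, this yields the operator identity
\begin{equation*}
\frac{\Gamma(D+x_2-y_1)}{\Gamma(D+x_2-x_1)}f=\frac{1}{\Gamma(y_1-x_1)}\int_0^1\alpha^{x_2-y_1-1}(1-\alpha)^{y_1-x_1-1}f(\alpha\hidden^1,\hidden^2)\,d\alpha ,
\end{equation*}
valid on polynomials, since each homogeneous component can be treated separately.

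\textbf{Step 3 (composition).} Apply the three operators to $f$ from right to left. First, $e^{+}$ sends $f$ to $f_1(\hidden^1,\hidden^2)=f(\hidden^1+\hidden^2,\hidden^2)$. Next, the dilation produces $f_1(\alpha\hidden^1,\hidden^2)=f(\alpha\hidden^1+\hidden^2,\hidden^2)$. Finally, $e^{-}$ replaces $\hidden^1$ by $\hidden^1-\hidden^2$, giving $f(\alpha\hidden^1+(1-\alpha)\hidden^2,\hidden^2)$. Multiplying by $\kappa_-$ gives exactly~\eqref{redR2sl2int2}.

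For $\ecR^+$ the roles of the spaces are exchanged. The exponentials translate $\hidden^2\mapsto\hidden^2\pm\hidden^1$, and $D$ becomes the Euler operator in $\hidden^2$. The Beta integral, with exponents $x_2-y_1$ and $y_2-x_2$, then gives $f(\hidden^1,(1-\alpha)\hidden^1+\alpha\hidden^2)$ with prefactor $1/\Gamma(y_2-x_2)$, which is~\eqref{redR2sl2int}.

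\textbf{Main obstacle: convergence.} The Beta integral converges only when $\mathrm{Re}(d+x_2-y_1)>0$ for all degrees $d\ge0$ and $\mathrm{Re}(y_1-x_1)>0$ (respectively $\mathrm{Re}(y_2-x_2)>0$ for $\ecR^+$). I would first prove the formulas in this parameter region. For other values of the parameters, I would interpret the identity by analytic continuation in the spectral parameters: on each monomial both sides are meromorphic in these parameters, so equality extends from the convergent region to wherever both sides are defined.

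A second point needs care. The functional calculus of $D$ must commute correctly with the two translations, so the order of composition has to be kept exactly as in~\eqref{Rm-continuous}, and I would check it on monomials. The cleanest way is to expand $f(\hidden^1+\hidden^2,\hidden^2)$ into components that are homogeneous in $\hidden^1$ before applying $D$.
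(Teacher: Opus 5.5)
Your proposal is correct and follows the same route as the paper, which also realizes the outer exponentials as translations via $e^{\alpha\partial_z}f(z)=f(z+\alpha)$, writes the Gamma ratio as a Beta integral of the dilation $\alpha^{z\partial_z}f(z)=f(\alpha z)$, and composes the three operators. Your composition order and Beta exponents check out for both $\ecR^{-}$ and $\ecR^{+}$, and your remarks on convergence and analytic continuation in the spectral parameters are a detail the paper leaves implicit.
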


\begin{proof}
  		                 
The integral representation follows by acting with~\eqref{Rp-continuous} and~\eqref{Rm-continuous} on functions when using the relations
	\begin{equation}\label{shift-formulas}
		\begin{split}
			&e^{\alpha\partial_{\hidden}}f(\hidden)=f(\hidden+\alpha),\qquad \alpha^{\hidden\partial_{\hidden}}f(z)=f(\alpha \hidden)\,,
		\end{split}
	\end{equation}
	and from the integral representation of the Beta-function~\eqref{eq:intrp}
that is used to express the fraction of Gamma functions as an integral,
see \cite{Derkachov:1999pz}.
\end{proof}

\begin{remark}\label{InterRe}
As a consequence of the intertwining relation~\eqref{intertw},
 the R-operators~\eqref{redR2sl2int} and~\eqref{redR2sl2int2} in representation $\rep$  are related  to  the one in representation $\pi$ with elements written in~\eqref{element-M} and~\eqref{element-N} via
 \begin{equation}
  \rep (\abrOSCR ^\pm) g_{m,n}(\hidden^{1},\hidden^{2})=\sum_{m',n'} g_{m',n'}(\hidden^{1},\hidden^{2})\langle m'|\otimes \langle n'|\,\pi(\abrOSCR^\pm) \,|m\rangle \otimes |n\rangle\,.
 \end{equation}
where $g_{m,n}(\hidden^{1},\hidden^{2})=\prod_{a=1}^{\M}(\hidden_a^{1})^{m_a}\prod_{b=1}^{\M}(\hidden_{b}^{2})^{n_{b}}$.

\end{remark}

We now fix $\mu_{i}^{[1]}=\mu_{i}^{[2]}$ for $i=1,2$, that is, the same representation label for both spaces $V_i$ and we choose the normalization $\kappa_{\pm}$ for the matrices $\ecR^{\pm}$ as \eqref{normali}. Then, by means of the logarithmic derivative, we obtain the Hamiltonian density acting on $\mathbb{C}[\hidden^1]\otimes \mathbb{C}[\hidden^2]$.
\begin{corollary}\label{cor:int}
 For the Hamiltonian density we obtain the integral representation
 \begin{equation}\label{eq:inthamm}
  \ecH f(\hidden^{1},\hidden^{2})=\ecH ^+f(\hidden^{1},\hidden^{2})+\ecH ^-f(\hidden^{1},\hidden^{2})
 \end{equation}
with
	\begin{equation}\label{redR2sl2int24}
			\begin{split}
				\ecH ^-f(\hidden^{1},\hidden^{2})&= \int_0^1d\alpha\,\frac{\alpha ^{2s-1}}{1-\alpha}\left[f(\hidden^{1},\hidden^{2})-f(\alpha \hidden^{1}+(1-\alpha )\hidden^{2},\hidden^{2})\right]\,.
			\end{split}
		\end{equation}
        and
\begin{equation}\label{redR2sl2int4}
			\begin{split}
				\ecH ^+ f(\hidden^{1},\hidden^{2})&= \int_0^1d\alpha\,\frac{\alpha ^{2s-1}}{1-\alpha }\left[f(\hidden^{1},\hidden^{2})-f(\hidden^{1},\alpha \hidden^{2}+(1-\alpha)\hidden^{1})\right]\,,
			\end{split}
		\end{equation}
\end{corollary}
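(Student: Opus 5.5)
We take the logarithmic derivative of the integral representation of Proposition~\ref{proposition-intergrtalReps}, mirroring the proof of Proposition~\ref{proposition-Hdensity} and Corollary~\ref{Hamalg}. With the normalization~\eqref{normali} and equal labels, $\ecR(0)=P$ and $\ecR^\pm$ reduce to the identity at the permutation point. Hence, exactly as in Proposition~\ref{proposition-Hdensity}, $\ecH=\partial_x\ecR^-(x+\mu_1,\mu_2|\mu_1)|_{x=0}+\partial_x\ecR^+(x+\mu_2|\mu_1,\mu_2)|_{x=0}$. It therefore suffices to differentiate each factor separately.

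For $\ecR^-$ we insert $x_1=x+\mu_1$, $x_2=\mu_2$, $y_1=\mu_1$ into~\eqref{redR2sl2int2}. Since $\kappa_-=\Gamma(2s-x)/\Gamma(2s)$, this gives
\begin{equation*}
\ecR^-f=\frac{\Gamma(2s-x)}{\Gamma(2s)\Gamma(-x)}\int_0^1 d\alpha\,\alpha^{2s-1}(1-\alpha)^{-x-1} f(\alpha\hidden^1+(1-\alpha)\hidden^2,\hidden^2)\,.
\end{equation*}
The prefactor vanishes at $x=0$ like $-x$, while the integral diverges at $\alpha\to1$. To handle this, we split $f(\alpha\hidden^1+\ldots)=f(\hidden^1,\hidden^2)+[f(\alpha\hidden^1+(1-\alpha)\hidden^2,\hidden^2)-f(\hidden^1,\hidden^2)]$.

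The constant part integrates to $B(2s,-x)$, which is legitimate for $x<0$ and then continued analytically. It produces exactly $f$, consistent with the total mass being one (this is the stochasticity computation of Proposition~\ref{Proposition-R-components}). The bracket vanishes linearly as $\alpha\to1$ for polynomial $f$, so its integral converges at $x=0$. Hence its contribution is $\frac{\Gamma(2s-x)}{\Gamma(2s)\Gamma(-x)}\cdot(\text{finite})$. Differentiating at $x=0$ and using $1/\Gamma(-x)=-x+O(x^2)$ yields
\begin{equation*}
-\int_0^1 d\alpha\,\frac{\alpha^{2s-1}}{1-\alpha}\big[f(\alpha\hidden^1+(1-\alpha)\hidden^2,\hidden^2)-f(\hidden^1,\hidden^2)\big]\,,
\end{equation*}
which is~\eqref{redR2sl2int24}.

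The same steps apply to $\ecR^+$ with $x_2=x+\mu_2$, $y_1=\mu_1$, $y_2=\mu_2$ in~\eqref{redR2sl2int}. Here $\kappa_+=\Gamma(2s)/\Gamma(2s+x)$ and the prefactor is $1/\Gamma(-x)$, with weight $\alpha^{2s+x-1}(1-\alpha)^{-x-1}$. After substituting $\alpha\to$ the appropriate variable, this gives~\eqref{redR2sl2int4}. The factor $\alpha^x$ does not matter at first order, since it multiplies a prefactor that is already $O(x)$.

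The main obstacle is justifying the differentiation under the integral near the singular point $x=0$. We control this by working on polynomials: there the bracket is $O(1-\alpha)$ uniformly, so dominated convergence applies. Alternatively, we can check consistency on the monomial basis $g_{m,n}$ against Corollary~\ref{corollary-bulk-elements} via Remark~\ref{InterRe}, using the Beta/digamma identities~\eqref{eq:intrp}.
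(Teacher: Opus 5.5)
Your proof is correct, but it follows a different route from the paper's. The paper never differentiates the integral kernels. It applies $\rep$ to the algebraic density of Corollary~\ref{Hamalg}, $\ecH^-=e^{-\bm{\Hidden}^{[2]}\bm{\partial}_{\hidden}^{[1]}}[\psi(\bm{\Hidden}^{[1]}\bm{\partial}_{\hidden}^{[1]}+2s)-\psi(2s)]e^{\bm{\Hidden}^{[2]}\bm{\partial}_{\hidden}^{[1]}}$ (and similarly for $\ecH^+$). It then inserts the digamma integral~\eqref{eq:intrp} with the Euler operator as argument and finishes with the shift/dilation rules~\eqref{shift-formulas}. Because the spectral-parameter derivative is taken at the level of Gamma functions of the number operator, no singular limit ever appears.

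You start instead from the kernels of Proposition~\ref{proposition-intergrtalReps}, which degenerate at the permutation point: the factor $(1-\alpha)^{-x-1}/\Gamma(-x)$ concentrates at $\alpha=1$. So you have to subtract $f(\hidden^1,\hidden^2)$, use the exact Beta normalisation (total mass one), and continue analytically from $x<0$. You handle this soundly. For polynomial $f$ the subtracted expression is analytic near $x=0$ and agrees with the Gamma-ratio operator for $x<0$. Since the prefactor is $-x+O(x^2)$, you only need continuity of the subtracted integral at $x=0$, which dominated convergence supplies. In effect you re-derive the digamma identity from the Beta integral.

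What your route buys is independence from Corollary~\ref{Hamalg} and a transparent probabilistic reading: the R-operator is a Beta-mixing Markov kernel, and the generator is its first-order deviation from the identity. The paper's route is shorter and avoids the regularisation entirely. One minor point: for $\ecR^+$ no change of variables is needed, since the kernel already yields~\eqref{redR2sl2int4} with the same $\alpha$.
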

\begin{proof}
Fixing the representation~\eqref{rep-infiniteDim-Poly} the Hamiltonian density is given by the sum of 
\begin{equation}			\ecH ^-f(\hidden^{1},\hidden^{2})=e^{-\bm{\Hidden}^{[2]}\bm{\partial}_{\hidden}^{[1]}}\left[\psi(\bm{\Hidden}^{[1]}\bm{\partial}_{\hidden}^{[1]}+2s)-\psi(2s)\right]e^{\bm{\Hidden}^{[2]}\bm{\partial}_{\hidden}^{[1]}}f(\hidden^{1},\hidden^{2})\,,
		\end{equation}
and
			\begin{equation}					\ecH ^+f(\hidden^{1},\hidden^{2}) = e^{-\bm{\Hidden}^{[1]}\bm{\partial}_{\hidden}^{[2]}}\left[\psi(\bm{\Hidden}^{[2]}\bm{\partial}_{\hidden}^{[2]}+2s)-\psi(2s)\right]e^{\bm{\Hidden}^{[1]}\bm{\partial}_{\hidden}^{[2]}}f(\hidden^{1},\hidden^{2})\,.
		\end{equation}
Then, we use the integral representation of the Digamma function~\eqref{eq:intrp} and proceed in analogy to Proposition~\ref{proposition-intergrtalReps}.
		\end{proof}
    We observe that the Hamiltonian operator~\eqref{eq:inthamm} is related to the bulk generator of the hidden parameter model~\eqref{generator-bulk-hidden} by
    \begin{equation}
    \ecL_{\ell,\ell+1}f(\hidden^{\ell},\hidden^{\ell+1})=-\ecH _{\ell,\ell+1}f(\hidden^{\ell},\hidden^{\ell+1})\,,
    \end{equation} for all $\ell\in\{1,\ldots, \N-1\}$.

\subsubsection{Dual integral form}\label{Heat-RDensity}
 We construct a second integral representation from the R-operator on the space of polynomials as done in the previous section.
For this purpose, we introduce the representation $\dualrep:\mathcal{A}_M\to End(\mathbb{C}[\heat_1,\ldots,\heat_M])$ with
\begin{equation}\label{rep-infiniteDim-Poly_dual}
\dualrep(\oa_{a})=\heat_a\,,\qquad \dualrep(\oad_{a})=-\partial_{\heat_{a}}\,.
	\end{equation}
	such that
	\begin{equation}\label{rep-infiniteDim-Poly2_dual}
\dualrep(\OA)=\bm{\heat}\,,\qquad \dualrep(\OAD)=-\bm{\partial}_\heat\,.
	\end{equation}
	where  $\bm{\partial}_\heat=(\partial_{\heat_1},\ldots,\partial_{\heat_\M})$ and $\bm{\Heat}=(\heat_{1},\ldots,\heat_{\M})^{t}$.

 The generators of $gl(M+1)$ are first-order differential operators in this representation, as
 \begin{equation}\label{bargen}
		\dualrep(J_{BA})=\left(\begin{array}{cc}
			\mmu_1+\OT\OTD&-\OT\left((\mmu_1-\mmu_2+1)\ID+\OTD\OT\right)\\
			\OTD&(\mmu_2-1)\ID-\OTD\OT
		\end{array}\right) _{AB}
	\end{equation}
    with lowest-weight vector $\bar v_0 = 1$. The Cartan eigenvalues on the lowest weight state are
    \begin{equation}
   \dualrep (J_{AA})\cdot 1=\begin{cases*}
                    \lambda_1\cdot 1 & if  $A =0$  \\
                     \lambda_2\cdot 1 & if $A=1,\ldots,M$
                 \end{cases*}\,,
    \end{equation}
with $\lambda_1=\mu_1+M$ and $\lambda_2=\mu_2-1$. This infinite dimensional representation is reducible for $\lambda_{2}-\lambda_{1}=\mu_2-\mu_1-M-1\in\mathbb{N}$. This can be seen by inspecting the upper right block of~\eqref{bargen}.
By using the commutation relations we have that
\begin{equation}
    \begin{split}
        -\OT\left((\mmu_1-\mmu_2+1)\ID+\OTD\OT\right)=\left(\lambda_{2}-\lambda_{1}-\OTD^{t}\OT^{t}\right)\OT\,,
    \end{split}
\end{equation}
after normal ordering.
As a consequence, for all $a\in \{1,\ldots,\M\}$, we obtain that
\begin{equation}
    \begin{split}
       (\lambda_{2}-\lambda_{1}-\OTD^{t}\OT^{t}) \partial_{\heat_{a}}\prod_{b=1}^{\M}\heat_{b}^{n_{b}}=n_{a}\left(\lambda_{2}-\lambda_{1}+1-\sum_{b=1}^{\M}n_{b}\right)\heat_{a}^{n_{a}-1}\prod_{b=1\,b\neq a}^{\M}\heat_{b}^{n_{b}}\,.
    \end{split}
\end{equation}
Above we denoted $\bm{\partial}_\heat^{t}=(\partial_{\heat_1},\ldots,\partial_{\heat_\M})^{t}$.
Therefore, for $\lambda_{2}-\lambda_{1}\in \mathbb{N}$, the representation admits a non-trivial invariant sub-space and then it is reducible.

 An element  in the oscillator algebra $\mathcal{O}\in\mathcal{A}_M$ in the representation $\pi$ is related to $\dualrep $ through the intertwining relation
   \begin{equation}\label{intertw-dual}
\dualrep (\mathcal{O} )g_{m}(\heat)=\sum_{m'=0}^\infty g_{m'}(\heat)\langle m'| \,Q^{-1}\pi(\mathcal{O} )^{t}Q \,|m\rangle  \,,
 \end{equation}
 where the upper-case $t$ stands for transposition and where 
 \begin{equation}\label{Q-matrix}
 Q:=\prod_{a=1}^{\M}\Gamma(\pi(\oad_{a})\pi(\oa_{a})+1)\,.
 \end{equation}
 Above $g_{m}(\heat)$ is a polynomial of the form $g_{m}(\heat)=\prod_{a=1}^{\M}(\heat_a)^{m_a}$, which can be thought of as the basis of representation $\dualrep $. 
\begin{proposition}
     The operators $\ecdR ^\pm=\dualrep (\abrOSCR^\pm)$ can be written as
\begin{equation}\label{Rp-continuous2}
			\ecdR ^+(x_2|y_1,y_2)f(\heat^{1},\heat^{2}) =\bar\kappa_+(x_2|y_1,y_2)e^{\OT^{[1]}\OTD^{[2]}}\frac{\Gamma(\OT^{[2]}\OTD^{[2]}+y_1-y_2+1)}{\Gamma(\OT^{[2]}\OTD^{[2]}+y_1-x_2+1)}e^{-\OT^{[1]}\OTD^{[2]}}f(\heat^{1},\heat^{2})
		\end{equation}
        and
    \begin{equation}\label{Rm-continuous2}
		\begin{split}
\ecdR ^{-}(x_{1},x_{2}|y_{1})f(\heat^{1},\heat^{2}
)&=\bar \kappa_-(x_{1},x_2|y_{1})e^{\OT^{[2]}\OTD^{[1]}}\frac{\Gamma\left(\OT^{[1]}\OTD^{[1]}-x_{2}+x_{1}+1\right)}{\Gamma\left(\OT^{[1]}\OTD^{[1]}-x_{2}+y_{1}+1\right)}e^{-\OT^{[2]}\OTD^{[1]}}f(\heat^{1},\heat^{2})
		\end{split}
	\end{equation}
where  $x_i=x-\mu_i^{[1]}$, $y_i=y-\mu_i^{[2]}$ and    $\bar \kappa_\pm$ denotes a normalisation.
\end{proposition}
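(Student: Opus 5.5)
The plan is to apply the representation $\dualrep$ of \eqref{rep-infiniteDim-Poly_dual} directly to the abstract operators \eqref{Rp-solution} and \eqref{Rm-solution} of Theorem~\ref{Theorem-main}. Since $\dualrep$ is an algebra homomorphism ($[\heat_a,-\partial_{\heat_b}]=\delta_{ab}$, so it respects \eqref{eq:osc}), any function of oscillators is mapped by substituting $\oa_a\mapsto \heat_a$ and $\oad_a\mapsto-\partial_{\heat_a}$. Under this substitution the exponents $\OAD^{[1]}\OA^{[2]}$ and $\OAD^{[2]}\OA^{[1]}$ become $-\OT^{[1]}\OTD^{[2]}$ and $-\OT^{[2]}\OTD^{[1]}$. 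Each exponential therefore simply flips sign, which already reproduces the outer factors $e^{\OT^{[1]}\OTD^{[2]}}(\cdots)e^{-\OT^{[1]}\OTD^{[2]}}$ and $e^{\OT^{[2]}\OTD^{[1]}}(\cdots)e^{-\OT^{[2]}\OTD^{[1]}}$ in \eqref{Rp-continuous2} and \eqref{Rm-continuous2}.

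Next I would handle the number operator. We have $\dualrep(\OAD\OA)=-\sum_a\partial_{\heat_a}\heat_a=-(\OT\OTD)$, where $\OT\OTD=\sum_a\partial_{\heat_a}\heat_a=\sum_a\heat_a\partial_{\heat_a}+M$ acts diagonally on monomials. The Gamma ratio in $\abrOSCR^-$ then becomes $\Gamma(-\OT^{[1]}\OTD^{[1]}+x_2-y_1)/\Gamma(-\OT^{[1]}\OTD^{[1]}+x_2-x_1)$, with negative argument in the number operator. The reflection formula $\Gamma(z)\Gamma(1-z)=\pi/\sin\pi z$ turns this into
\[
\frac{\Gamma(\OT^{[1]}\OTD^{[1]}-x_2+x_1+1)}{\Gamma(\OT^{[1]}\OTD^{[1]}-x_2+y_1+1)}\cdot\frac{\sin\pi(\OT^{[1]}\OTD^{[1]}-x_2+x_1)}{\sin\pi(\OT^{[1]}\OTD^{[1]}-x_2+y_1)}.
\]
Because $\OT\OTD$ has integer spectrum on polynomials, the sine ratio equals $\sin\pi(x_1-x_2)/\sin\pi(y_1-x_2)$, a scalar; the parity signs cancel since both sines shift by the same integer. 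This scalar is absorbed into $\bar\kappa_-$. The same computation gives the ratio in \eqref{Rp-continuous2} with the scalar absorbed into $\bar\kappa_+$.

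Finally there is the relabelling of spectral parameters. The statement uses $x_i=x-\mu_i$, $y_i=y-\mu_i$ rather than $x+\mu_i$. I expect this to be the main bookkeeping obstacle: checking that the dual representation has labels $(\lambda_1,\lambda_2)$ and uses transposition via \eqref{intertw-dual}, so that the roles of the parameters are consistent with the stated convention. If the naive substitution gives arguments of the form $\OT\OTD+x_1-x_2+1$ under the old convention, I would verify the new one by redoing the recurrence of Lemma~\ref{Lemma-solution-def} in $\dualrep$. Concretely, I would take $\mathbf r=\mathbf r(\OT^{[1]}\OTD^{[1]})$, use the exchange $\mathbf r(\OT\OTD)\,\partial_{\heat_a}=\partial_{\heat_a}\,\mathbf r(\OT\OTD-1)$, and solve the first-order recurrence from \eqref{set3} directly. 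Its Gamma solution fixes the arguments up to a periodic factor, which is constant on the integer spectrum and thus a normalisation. Alternatively I would confirm the result through \eqref{intertw-dual}, computing matrix elements on monomials $g_m(\heat)$ and comparing with Proposition~\ref{Proposition-R-components} conjugated by $Q$.
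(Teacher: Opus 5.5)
Your proposal is correct and follows the paper's proof: apply $\dualrep$ to \eqref{Rp-solution} and \eqref{Rm-solution}, so the exponents flip sign and $\OAD\OA\mapsto-\OT\OTD$, then use the reflection formula together with the integer spectrum of $\OT\OTD$ on polynomials to reduce the sine ratio to a constant that is absorbed into $\bar\kappa_\pm$. The relabelling worry in your last paragraph is unnecessary: your direct substitution already produces exactly the stated Gamma arguments in the same symbols $x_i,y_i$, the convention for $x_i,y_i$ plays no role in the derivation, and the paper's proof does nothing beyond these two steps.
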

\begin{proof}
 Applying the algebra automorphism $\dualrep $ to equations~\eqref{Rp-solution} and~\eqref{Rm-solution} for $\abrOSCR^{+}$ and $\abrOSCR^{-}$, yields the expressions
    \begin{equation}\label{Rp-continuous22}
			\ecdR ^+(x_2|y_1,y_2)f(\heat^{1},\heat^{2}) =\bar\kappa_+(x_2|y_1,y_2)e^{\OT^{[1]}\OTD^{[2]}}\frac{\Gamma(-\OT^{[2]}\OTD^{[2]}-y_1+x_2)}{\Gamma(-\OT^{[2]}\OTD^{[2]}-y_1+y_2)}e^{-\OT^{[1]}\OTD^{[2]}}f(\heat^{1},\heat^{2})
		\end{equation}
        and
    \begin{equation}\label{Rm-continuous22}
		\begin{split}
\ecdR ^{-}(x_{1},x_{2}|y_{1})f(\heat^{1},\heat^{2}
)&=\bar\kappa_-(x_{1},x_2|y_{1})e^{\OT^{[2]}\OTD^{[1]}}\frac{\Gamma\left(-\OT^{[1]}\OTD^{[1]}+x_{2}-y_{1}\right)}{\Gamma\left(-\OT^{[1]}\OTD^{[1]}+x_{2}-x_{1}\right)}e^{-\OT^{[2]}\OTD^{[1]}}f(\heat^{1},\heat^{2})
		\end{split}
	\end{equation}
Given that we are acting on the space of polynomials $\mathbb{C}[\heat_1,\ldots,\heat_M]\otimes \mathbb{C}[\heat_1,\ldots,\heat_M]$ it follows that the operators $\OT^{[1]}\OTD^{[1]}$ and $\OT^{[2]}\OTD^{[2]}$ have integer spectrum when applied to a basis vector as $\OT\OTD \heat_a^n=(M+n)\heat_a^n$ for any $a$. As a consequence we can use the reflection formula
	\begin{equation}
	 \Gamma(z)\Gamma(1-z)=\frac{\pi}{\sin(\pi z)}
	\end{equation}
to obtain the operators in~\eqref{Rp-continuous2} and~\eqref{Rm-continuous2}.
\end{proof}
We now fix the normalization constants as
 \begin{equation}\label{normali2}
		 				\bar\kappa_+(x_2|y_1,y_2)=\frac{\Gamma(M+y_1-x_2+1)}{\Gamma(M+y_1-y_2+1)}\,,\qquad
		 				\bar\kappa_-(x_1,x_2|y_1)=\frac{\Gamma(M-x_2+y_1+1)}{\Gamma(M-x_2+x_1+1)}\,.
		 			\end{equation}
\begin{proposition}
    The operators $\ecdR^{+}(x_2|y_1,y_2)$ and $\ecdR^{-}(x_1,x_2|y_1)$ have the following integral representation
   		\begin{equation}\label{redR2sl2int-dual}
			\begin{split}
				\ecdR ^+(x_2|y_1,y_2) f(\heat^{1},\heat^{2})&=                \frac{k_{+}(x_{2}|y_{1},y_{2})}{\Gamma(y_{2}-x_{2})}\int_{0}^{1}\alpha^{y_{1}-y_{2}+\M}(1-\alpha)^{y_{2}-x_{2}-1}f(\heat^1+(1-\alpha)\heat^{2},\alpha \heat^{2})d\alpha\,,
			\end{split}
		\end{equation}
		and
		\begin{equation}\label{redR2sl2int2-dual}
			\begin{split}
				\ecdR ^-(x_1,x_2|y_1) f(\heat^{1},\heat^{2})&=                \frac{k_{-}(x_{1},x_{2}|y_{1})}{\Gamma(y_{1}-x_{1})}\int_{0}^{1}\alpha^{x_{1}-x_{2}+\M}(1-\alpha)^{y_{1}-x_{1}-1}f(\alpha \heat^{1},\heat^{2}+(1-\alpha)\heat^{1})d\alpha\,.
			\end{split}
		\end{equation}
\end{proposition}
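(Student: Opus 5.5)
The plan mirrors the proof of Proposition~\ref{proposition-intergrtalReps}. The two shift identities in \eqref{shift-formulas} turn the exponential conjugations and the Euler operator into explicit substitutions, and the Beta-function integral \eqref{eq:intrp} turns the ratio of Gamma functions into an integral over $\alpha$. We start from the forms \eqref{Rp-continuous2} and \eqref{Rm-continuous2} with the normalisations \eqref{normali2}. We treat $\ecdR^-$ in detail; $\ecdR^+$ follows by the same steps with the roles of the two sites exchanged, or alternatively via Proposition~\ref{permsym}.

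\textbf{Step 1: normal-order the Euler operator.} Since $\OT^{[1]}\OTD^{[1]}=\sum_a\partial_{\heat^1_a}\heat^1_a=M+\heat^{1}\cdot\partial_{\heat^1}$, the middle factor is a function of the Euler operator $E_1=\heat^1\cdot\partial_{\heat^1}$. On a monomial of total degree $n$ in $\heat^1$ it acts by the scalar
\begin{equation*}
\frac{\Gamma(n+M+x_1-x_2+1)}{\Gamma(n+M+y_1-x_2+1)}\,.
\end{equation*}
Multiplying by $\bar\kappa_-$, this becomes $\Gamma(b)\,\Gamma(y_1-x_1)^{-1}B(n+M+x_1-x_2+1,\,y_1-x_1)$, where the prefactor $\Gamma(b)$ is exactly what the normalisation absorbs.

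\textbf{Step 2: write it as a dilation integral.} By \eqref{eq:intrp}, the Beta factor equals $\int_0^1\alpha^{n+M+x_1-x_2}(1-\alpha)^{y_1-x_1-1}\,d\alpha$. Since $\alpha^{E_1}g(\heat^1)=g(\alpha\heat^1)$, the middle operator acts on polynomials as
\begin{equation*}
g\mapsto \frac{k_-}{\Gamma(y_1-x_1)}\int_0^1 \alpha^{x_1-x_2+M}(1-\alpha)^{y_1-x_1-1}g(\alpha\heat^1,\heat^2)\,d\alpha\,.
\end{equation*}
This is valid first for $\mathrm{Re}(y_1-x_1)>0$ and the other exponent positive, and is then extended by analytic continuation, as for the Beta integral.

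\textbf{Step 3: conjugate by the translations.} The operator $e^{-\OT^{[2]}\OTD^{[1]}}=e^{-\heat^1\cdot\partial_{\heat^2}}$ acts as $f(\heat^1,\heat^2)\mapsto f(\heat^1,\heat^2-\heat^1)$. After the dilation, $\heat^1\to\alpha\heat^1$. The outer operator $e^{\heat^1\cdot\partial_{\heat^2}}$ then shifts $\heat^2\to\heat^2+\heat^1$. The order of these substitutions must be tracked carefully, since composition of substitution operators reverses the order. The net result is
\begin{equation*}
f\big(\alpha\heat^1,\ \heat^2+\heat^1-\alpha\heat^1\big)=f\big(\alpha\heat^1,\ \heat^2+(1-\alpha)\heat^1\big)\,,
\end{equation*}
which gives \eqref{redR2sl2int2-dual}.

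\textbf{Step 4: the $+$ factor.} For $\ecdR^+$ the same argument applies, with the Euler operator $E_2$ on site 2 and the Beta parameters $(n+M+y_1-y_2+1,\;y_2-x_2)$. This gives the dilation $\heat^2\to\alpha\heat^2$. Conjugation by $e^{\pm\heat^2\cdot\partial_{\heat^1}}$ then produces $\heat^1\to\heat^1+(1-\alpha)\heat^2$, giving \eqref{redR2sl2int-dual}.

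\textbf{Main obstacle.} The delicate step is the bookkeeping of signs and ordering in Step~3: we must check that the sign choice $\dualrep(\oad)=-\bm{\partial}_\heat$ makes the exponentials in \eqref{Rm-continuous2} act as $\heat^2\mapsto\heat^2\pm\heat^1$ with the correct signs, and verify the composition order on a single monomial. A secondary point is that the Gamma ratio must be read via $\OT\OTD=M+E$. This shift is where the $M$ in the exponent of $\alpha$ comes from, and it has to match the $M$ in the normalisation \eqref{normali2} so that the constant reduces exactly to $k_\pm/\Gamma(\cdot)$.
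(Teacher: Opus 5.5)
Your proposal is correct and is essentially the paper's argument: translate with one exponential, use $\OT\OTD=M+\heat\cdot\partial_\heat$ and the Beta integral to turn the Gamma ratio into an $\alpha$-dilation, then translate back. The only differences are that the paper writes out $\ecdR^{+}$ and calls $\ecdR^{-}$ analogous while you do the reverse, and a cosmetic slip in your Step 1: the stray factor $\Gamma(b)$ should read $\bar\kappa_-$, since the Gamma ratio already equals $B(n+M+x_1-x_2+1,\,y_1-x_1)/\Gamma(y_1-x_1)$ and the normalisation simply survives as the prefactor $k_-$.
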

\begin{proof} We only prove in detail the $\ecdR^{+}$ case, since the other is analogous. By using~\eqref{shift-formulas} we write that 
    \begin{equation}
        e^{-\bm{\Heat}^{[2]}\bm{\partial}^{[1]}_{\heat}}f(\heat^{1},\heat^{2})=f(\heat^{1}-\heat^{2},\heat^{2})\,.
    \end{equation}
    We notice that
    \begin{equation}
    \frac{\Gamma(1+\bm{\partial}_{\heat}^{[2]}\bm{\Heat}^{[2]}+y_{1}-y_{2})}{\Gamma(1+\bm{\partial}_{\heat}^{[2]}\bm{\Heat}^{[2]}+y_{1}-x_{2})}=\frac{\Gamma(1+\M+(\bm{\Heat}^{[2]})^{t}(\bm{\partial}_{\heat}^{[2]})^{t}+y_{1}-y_{2})}{\Gamma(1+\M+(\bm{\Heat}^{[2]})^{t}(\bm{\partial}_{\heat}^{[2]})^{t}+y_{1}-x_{2})}\,.
    \end{equation}
    Then, by using the integral representation of the Beta function we write 
    \begin{equation}
        \begin{split}
         &\frac{\Gamma(1+\M+(\bm{\Heat}^{[2]})^{t}(\bm{\partial}_{\heat}^{[2]})^{t}+y_{1}-y_{2})}{\Gamma(1+\M+(\bm{\Heat}^{[2]})^{t}(\bm{\partial}_{\heat}^{[2]})^{t}+y_{1}-x_{2})}f(\heat^{1}-\heat^{2},\heat^{2})
         \\=&
         \frac{1}{\Gamma(y_{2}-x_{2})}\int_{0}^{1}\alpha^{\M+y_{1}-y_{2}+\sum_{a=1}^{\M}\heat_{a}^{[2]}\partial_{a}^{[2]}}(1-\alpha)^{y_{2}-x_{2}-1}f(\heat^{1}-\heat^{2},\heat^{2})d\alpha
         \\=&
          \frac{1}{\Gamma(y_{2}-x_{2})}\int_{0}^{1}\alpha^{\M+y_{1}-y_{2}}(1-\alpha)^{y_{2}-x_{2}-1}f(\heat^{1}-\alpha \heat^{2},\alpha \heat^{2})d\alpha
        \end{split}
    \end{equation}
    where in the last equality we have applied formulas~\eqref{shift-formulas}. Finally, we compute 
    \begin{equation}
        e^{\bm{\Heat}^{[1]}\bm{\partial}_{\heat}^{[2]}}f(\heat^{1}-\alpha \heat^{2},\alpha \heat^{2})=f(\heat^{1}+(1-\alpha)\heat^{2},\alpha \heat^{2})\,,
    \end{equation}
    concluding the proof. 
\end{proof}

\begin{remark}\label{InterRe-dual}
As a consequence of the intertwining relation~\eqref{intertw-dual},
 the R-operators~\eqref{redR2sl2int-dual} and~\eqref{redR2sl2int2-dual} in representation $\dualrep$  are related  to  the one in representation $\pi$ with elements written in~\eqref{element-M} and~\eqref{element-N} via
 \begin{equation}
  \dualrep (\abrOSCR ^\pm)g_{m,n}(\heat^{1},\heat^{2})=\sum_{m',n'} g_{m',n'}(\hidden^{1},\hidden^{2})\langle m'|\otimes \langle n'|\left(Q^{-1}\otimes Q^{-1}\right)\,\pi(\abrOSCR^\pm)^{t}\left(Q\otimes Q\right) \,|m\rangle \otimes |n\rangle\,,
 \end{equation}
where $g_{m,n}(\heat^{1},\heat^{2})=\prod_{a=1}^{\M}(\heat_a^{1})^{m_a}\prod_{b=1}^{\M}(\heat_{b}^{2})^{n_{b}}$.

\end{remark}

We now fix equal representation labels, namely $\lambda_{i}^{[1]}=\lambda_{i}^{[2]}$ sucht that  $\mu_{i}^{[1]}=\mu_{i}^{[2]}$  for $i=1,2$ and we take $\lambda_{1}-\lambda_{2}=\mu_1-\mu_2+M+1=2s>0$.
\begin{corollary}
    For the Hamiltonian density we have that 
    \begin{equation}
        \ecdH f(\heat^{1},\heat^{2})=\ecdH^{-}f(\heat^{1},\heat^{2})+\ecdH^{+}f(\heat^{1},\heat^{2})
    \end{equation}
    where
     \begin{equation}
        \ecdH^{-}f(\heat^{1},\heat^{2})=\int_{0}^{1}\frac{\alpha^{2s-1}}{(1-\alpha)}\left[f(\heat^{1},\heat^{2})-f(\alpha\heat^{1}, \heat^{2}+(1-\alpha)\heat^{1})\right]d\alpha
    \end{equation}
    and
    \begin{equation}
        \ecdH^{+}f(\heat^{1},\heat^{2})=\int_{0}^{1}\frac{\alpha^{2s-1}}{(1-\alpha)}\left[f(\heat^{1},\heat^{2})-f(\heat^{1}+(1-\alpha)\heat^{2},\alpha \heat^{2})\right]d\alpha\,.
    \end{equation}
\end{corollary}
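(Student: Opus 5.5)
The plan is to mirror the proofs of Proposition~\ref{proposition-Hdensity} and Corollary~\ref{cor:int}, working in the dual representation $\dualrep$.

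\textbf{Step 1: reduce to the two factors.} With equal representation labels in both spaces, the argument of Proposition~\ref{proposition-Hdensity} applies verbatim, since it only uses the factorised form~\eqref{Facrtorized-R} and the normalisation at the permutation point. We therefore set
\begin{equation}
\ecdH^{\pm}=-\frac{\partial}{\partial x}\ecdR^{\pm}\Big|_{x=0},
\end{equation}
evaluated at the specialisations $\ecdR^-(x+x_1,x_2|x_1)$ and $\ecdR^+(x+x_2|x_1,x_2)$. The overall sign follows the convention used for the hidden parameter model, where $\ecL=-\ecH$.

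We then check that the normalisations~\eqref{normali2} make $\ecdR^\pm=\ID$ at $x=0$. In the integral form~\eqref{redR2sl2int-dual}, the prefactor combined with the Beta integral $\int_0^1\alpha^{\M+y_1-y_2}(1-\alpha)^{y_2-x_2-1}d\alpha$ gives $\Gamma(\M+1+y_1-x_2)/\Gamma(\M+1+y_1-y_2)$. This cancels against $k_+=\bar\kappa_+$ exactly, so $\ecdR^+$ is a Markov averaging operator. As $y_2-x_2\to0$, the measure concentrates at $\alpha=1$, so $\ecdR^+\to\ID$. The same holds for $\ecdR^-$.

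\textbf{Step 2: differentiate the integral representations.} Write $\epsilon=y_2-x_2$ for $\ecdR^+$ and $\epsilon=y_1-x_1$ for $\ecdR^-$. Since $\ecdR^+f$ is an expectation of $f$ under a Beta-type law on $\alpha$, we have
\begin{equation}
\ecdR^+f-f=\frac{k_+}{\Gamma(\epsilon)}\int_0^1\alpha^{\M+y_1-y_2}(1-\alpha)^{\epsilon-1}\big[f(\heat^1+(1-\alpha)\heat^2,\alpha\heat^2)-f(\heat^1,\heat^2)\big]d\alpha .
\end{equation}
The bracket vanishes at $\alpha=1$ linearly in $1-\alpha$ for polynomial $f$, so the integral is regular at $\epsilon=0$. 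Meanwhile $1/\Gamma(\epsilon)=\epsilon+O(\epsilon^2)$ and $k_+\to1$. Differentiating at $\epsilon=0$ therefore yields
\begin{equation}
\int_0^1\frac{\alpha^{\M+y_1-y_2}}{1-\alpha}\big[f(\heat^1+(1-\alpha)\heat^2,\alpha\heat^2)-f(\heat^1,\heat^2)\big]d\alpha .
\end{equation}

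\textbf{Step 3: identify the exponent and the sign.} At the permutation point $y_1-y_2=x_1-x_2=\mu_2-\mu_1$. With the stated convention $2s=\lambda_1-\lambda_2=\mu_1-\mu_2+\M+1$, we get $\M+y_1-y_2=2s-1$ once the sign convention of the $\dualrep$ spectral parameters ($x_i=x-\mu_i$) is taken into account. This exponent bookkeeping, together with the overall sign of the derivative, is the step most prone to error, and I would verify it carefully against the shift $\lambda_1=\mu_1+\M$, $\lambda_2=\mu_2-1$. The case $\ecdH^-$ is identical, using~\eqref{redR2sl2int2-dual} with exponent $x_1-x_2+\M$.

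\textbf{Cross-check.} I would confirm the result by applying $\ecdH$ to the monomials $g_{m,n}$ and comparing, via Remark~\ref{InterRe-dual}, with $Q^{-1}\eH^tQ$ computed from Corollary~\ref{corollary-bulk-elements}.
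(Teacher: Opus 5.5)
The analytic core of your argument (Step 2) is fine. But the two pieces of bookkeeping that decide whether you land on the stated formula are both set up wrongly, so as written the argument proves a different identity.

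\emph{Sign.} Applied verbatim, the argument of Proposition~\ref{proposition-Hdensity} gives $\ecdH^{\pm}=+\partial_x\ecdR^{\pm}|_{x=0}$, evaluated at $\ecdR^+(x+\mu_2|\mu_1,\mu_2)$ and $\ecdR^-(x+\mu_1,\mu_2|\mu_1)$. The relation $\ecL=-\ecH$ is an output of that computation, not a sign you may insert. In these specialisations $\epsilon=y_2-x_2=-x$ (resp.\ $\epsilon=y_1-x_1=-x$), so your $-\partial_x$ equals $+\partial_\epsilon$. Your Step 2 result, $\int_0^1\frac{\alpha^{2s-1}}{1-\alpha}\bigl[f(\heat^1+(1-\alpha)\heat^2,\alpha\heat^2)-f(\heat^1,\heat^2)\bigr]d\alpha$, is therefore the generator $\ecdL^{+}f=-\ecdH^{+}f$, with the bracket in the wrong order. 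Using $+\partial_x=-\partial_\epsilon$ reverses the bracket and gives the statement. For $\ecdR^-$ the exponent $\M+x_1-x_2=2s-1+x$ also varies with $x$. This does not contribute at first order, because the prefactor $k_-/\Gamma(\epsilon)$ vanishes at $\epsilon=0$.

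\emph{Exponent.} Your Step 3 claim is false. With $x_i=x-\mu_i$, $y_i=y-\mu_i$ you get $\M+y_1-y_2=\M+\mu_2-\mu_1$. This is not $2s-1=\lambda_1-\lambda_2-1=\M+\mu_1-\mu_2$ unless $\mu_1=\mu_2$. The parametrisation that works is the one of Theorem~\ref{Theorem-main}, $x_i=x+\mu_i$, $y_i=y+\mu_i$, and applying $\dualrep$ to $\abrOSCR^{\pm}$ does not change it. Then $\M+y_1-y_2=\M+\mu_1-\mu_2=2s-1$. This is also what makes the digamma difference $\psi(E+\M+y_1-y_2+1)-\psi(\M+y_1-y_2+1)$, obtained by differentiating \eqref{Rp-continuous2} with \eqref{normali2}, equal to $\psi(E+2s)-\psi(2s)$, where $E=\sum_a\heat^{2}_a\partial_{\heat^{2}_a}$.

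With both corrections your proof is complete, and it takes a genuinely different route from the paper's. The paper differentiates the Gamma-ratio form to obtain $\psi(E+2s)-\psi(2s)$ conjugated by $e^{\pm\OT^{[1]}\OTD^{[2]}}$. Only then does it convert to an integral via \eqref{eq:intrp} and \eqref{shift-formulas}. You instead differentiate the already integrated, exactly normalised Beta kernel \eqref{redR2sl2int-dual}. This skips the digamma step and makes the Markov structure and the regularisation transparent: $f(\mathrm{new})-f=O(1-\alpha)$ against $1/\Gamma(\epsilon)=\epsilon+O(\epsilon^2)$.

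One small slip: $\Gamma(y_2-x_2)^{-1}B(\M+1+y_1-y_2,\,y_2-x_2)=\Gamma(\M+1+y_1-y_2)/\Gamma(\M+1+y_1-x_2)=k_+^{-1}$, the reciprocal of what you wrote. The conclusion $\ecdR^{+}1=1$ still holds.
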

\begin{proof}We only prove only the integral relation for $\ecdH^{+}$, since the one for $\ecdH^{-}$ is analogous. We compute the derivative in $x$ of $\ecdR^{+}(x+\mu_{1}|\mu_{1},\mu_{2})$ and we evaluate it at $x=0$. 
    \begin{equation}
        \begin{split}
            \ecdH^{+}f(\heat^{1},\heat^{2})=&\frac{\partial}{\partial x}\ecdR^{+}(x+\mu_{1}|\mu_{1},\mu_{2})\bigg|_{x=0}f(\heat^{1},\heat^{2})
                                          \\=&
           e^{\OT^{[1]}\OTD^{[2]}}\left[\psi((\OTD^{[2]})^t(\OT^{[2]})^t+\lambda_{1}-\lambda_{2})-\psi(\lambda_{1}-\lambda_{2})\right]e^{-\OT^{[1]}\OTD^{[2]}}f(\heat^{1},\heat^{2})
                \\=
            &e^{\OT^{[1]}\OTD^{[2]}}\left[\psi((\OTD^{[2]})^t(\OT^{[2]})^t+2s)-\psi(2s)\right]e^{-\OT^{[1]}\OTD^{[2]}}f(\heat^{1},\heat^{2})\,.
        \end{split}
    \end{equation}
    Here, we used $\mu_{1}=\lambda_{1}-\M$ and $\mu_{2}=\lambda_{2}+1$.
By using~\eqref{eq:intrp}, we compute 
\begin{equation}
\begin{split}
    &\left[\psi((\OTD^{[2]})^t(\OT^{[2]})^t+2s)-\psi(2s)\right]f(\heat^{1}-\heat^{2},\heat^{2})
    \\=&\int_{0}^{1}\frac{\alpha^{2s-1}}{(1-\alpha)}(1-\alpha^{\sum_{a=1}\heat_{a}^{[2]}\partial_{a}^{[2]}})f(\heat^{1}-\heat^{2},\heat^{2})d\alpha
    \\=&\int_{0}^{1}\frac{\alpha^{2s-1}}{(1-\alpha)}\left[f(\heat^{1}-\heat^{2},\heat^{2})-f(\heat^{1}-\alpha\heat^{2},\alpha \heat^{2})\right]d\alpha\,,
    \end{split}
\end{equation}
concluding the proof. 
\end{proof}
\section{The open spin chain}\label{sec:openchain}
In this section, we show that the stochastic Hamiltonian introduced in~\eqref{Hamiltonian-MHarmonic} admits a commuting family of operators that can be constructed within the framework of the quantum inverse scattering method using double-row transfer matrices~\cite{Sklyanin:1988yz}. For the extension to higher-rank cases we refer the reader to e.g.~\cite{Belliard:2010nhl}. In particular, the construction of transfer matrices with triangular boundary conditions in the $q$-deformed setting has been carried out recently in~\cite{2024JPhA...57x5201K}; see also~\cite{2019LMaPh.109.2049K}.

We proceed as follows. In Section~\ref{Section-BYBE}, we derive certain diagonal $K$-matrices by solving directly the boundary-Yang-Baxter equation (BYBE). By adapting the method introduced in~\cite{Frassek:2019vjt} to $gl(\M+1)$ we rederive a solution obtained by Tsuboi in \cite{Tsuboi:2018gfd}. The stochastic $K$-matrix that underlies the harmonic process is then obtained by means of a suitable similarity transformation. Together with the R-matrix obtained in Theorem~\ref{Theorem-main}, this allows to construct the transfer matrix for non-compact representations relevant here. In Section~\ref{Section-transfer-matrix}, we recover the Hamiltonian~$\eH$ defined in~\eqref{Hamiltonian-MHarmonic} for the multispecies harmonic process including the stochastic boundary terms from the logarithmic derivative of this transfer matrix, thereby establishing the integrability of the corresponding non-equilibrium Markov chain.
 
Throughout this section, we work in the Fock representation $\pi$ with highest-weight Dynkin labels given in~\eqref{HW-discreteReps}, and we fix
\begin{equation}\label{repsLabelFixing}
\mu_1 = -s + \tfrac{1}{2}, 
\qquad  
\mu_2 = s + \tfrac{1}{2}\,,
\end{equation}
cf.~\eqref{label-choice-compact}.
Equivalently, in terms of the parametrization~\eqref{CandS}, this corresponds to
\begin{equation}
    c = \frac{\mu_{1} + \mu_{2}}{2} = \frac{1}{2}\,,
    \qquad 
    s = \frac{\mu_{2} - \mu_{1}}{2} > 0\,.
\end{equation}
 This choice is advantageous for the analysis of the BYBE, as Proposition~\ref{prop:Lbar} implies that $L(x)L(-x) \propto \ID$ and  $L^t(x)L^t(-x-(M+1)) \propto \ID$, see Proposition \ref{crossL}.

\subsection{Boundary Yang-Baxter equations}\label{Section-BYBE}
Consider the fundamental R-matrix $\fundR(x)$ written in~\eqref{fundamental-R}. Let $q_{1},q_{2}\in \mathbb{R}$ with $q_{2}\neq 0$, then by a direct computation one verifies that the diagonal K-matrix 
\begin{equation}\label{K-fund}
    \Khatfund(x)=(q_{1}+xq_{2})e_{00}+(q_{1}-xq_{2})\sum_{a=1}^{\M}e_{aa}
\end{equation}  
solves the fundamental BYBE (see \cite{Sklyanin:1988yz,crampe2014integrable}) over $\mathbb{C}^{\M+1}\otimes\mathbb{C}^{\M+1}$
\begin{equation}\label{fundamental-BYBE}
    \fundR_{1,2}(x-y)K_{1}^{0}(x)\fundR_{1,2}(x+y)K_{2}^{0}(y)=K_{2}^{0}(y)\fundR_{1,2}(x+y)K_{1}^{0}(x)\fundR_{1,2}(x-y)\,.
\end{equation}
Above $K_{i}^{0}(x)$, with $i=1,2$, denotes the K-matrix $\Khatfund(x)$ acting non-trivially only on space $i$.
\begin{proposition}\label{Proposition-right-Kmatrix}
Let $q_{1},q_{2}\in \mathbb{R}$ with $q_{2}\neq 0$ be two parameters. The K-matrix  
\begin{equation}\label{K-hatShifted}
 \Khat(x)=\frac{\Gamma\left(-x+s+\frac{1}{2}-\frac{q_{1}}{q_{2}}\right)}{\Gamma\left(x+s+\frac{1}{2}-\frac{q_{1}}{q_{2}}\right)}\frac{\Gamma\left(x+\OAD\OA+s+\frac{1}{2}-\frac{q_{1}}{q_{2}}\right)}{\Gamma\left(-x+\OAD\OA+s+\frac{1}{2}-\frac{q_{1}}{q_{2}}\right)}
\end{equation}
 		solves the boundary Yang-Baxter equation
\begin{equation}\label{shifted-BYBE}
L(x-y)\Khat(x){L}(x+y)\Khatfund(y)=\Khatfund(y)L(x+y)\Khat(x){L}(x-y)\,.
\end{equation}

\end{proposition}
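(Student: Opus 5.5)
We take the ansatz $\Khat(x)=k(x,\OAD\OA)$, a function of the colour-blind number operator $\mathcal N:=\OAD\OA$ only, and reduce~\eqref{shifted-BYBE} to a scalar recursion in $\mathcal N$. A function of $\mathcal N$ commutes with every $\oad_a\oa_b$ and satisfies the exchange relations $k(\mathcal N)\oad_a=\oad_a k(\mathcal N+1)$ and $k(\mathcal N)\oa_a=\oa_a k(\mathcal N-1)$. This is the mechanism already used in Lemma~\ref{Lemma-solution-def}, and it is the natural generalisation of the $M=1$ construction of \cite{Frassek:2019vjt}.

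Both sides of~\eqref{shifted-BYBE} are $(\M+1)\times(\M+1)$ matrices with entries in $\mathcal A_M$. We write $L(x)=x+J^t$, with $J^t$ given blockwise by~\eqref{GL-oscRealization} at $\mu_1=-s+\tfrac12$ and $\mu_2=s+\tfrac12$. We also write $\Khatfund(y)=(q_1-yq_2)\ID+2yq_2e_{00}$. Setting $\xi:=q_1/q_2$ and dividing by $q_2$, the identity-part of $\Khatfund(y)$ contributes
\[
(\xi-y)\bigl[L(x-y),\,\Khat(x)L(x+y)\bigr]
\]
after reordering. The cleanest route is to use the unitarity $L(x)L(-x)\propto\ID$, which follows from Proposition~\ref{prop:Lbar} with $\mu_1+\mu_2=1$. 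With it, we rewrite~\eqref{shifted-BYBE} as a statement that $\Khat(x)L(x+y)\bigl(\Khatfund(y)\bigr)L(y-x)$ is symmetric under a suitable exchange. Multiplying both sides on the left and right by $L(y-x)$ should give
\[
\Khat(x)\,L(x+y)\,\Khatfund(y)\,L(y-x)=L(y-x)\,\Khatfund(y)\,L(x+y)\,\Khat(x),
\]
up to the scalar factor. Then only $\Khat$ and one combination of $L$'s remain on each side.

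Next we expand in powers of $y$. Both sides are polynomials of degree $3$ in $y$, with operator coefficients depending on $x$. The top-degree and constant terms should be automatic, by the $gl(M+1)$ characteristic identity~\eqref{eq:genrec}. The genuine conditions should come from the terms linear and quadratic in $y$. Those terms involve $\Khat$ sandwiched between blocks of $J^t$. We split into the four blocks (00), (0b), (a0), (ab):
\begin{itemize}
\item The diagonal blocks give relations involving only $\mathcal N$ and $\oad_a\oa_b$. These hold trivially, because $\Khat$ commutes with them.
\item The off-diagonal block with $\oa_a$ in the lower-left, after moving $\Khat$ through $\oa_a$, yields one scalar functional equation.
\item The upper-right block, containing $\oad_a\bigl((1-2s)\ID-\OA\OAD\bigr)=\oad_a(-2s-\mathcal N)$ up to ordering, yields the same scalar functional equation.
\end{itemize}
The single recursion we expect is
\[
k(x,\mathcal N+1)\,\bigl(-x+\mathcal N+s+\tfrac12-\xi\bigr)=k(x,\mathcal N)\,\bigl(x+\mathcal N+s+\tfrac12-\xi\bigr).
\]
It is solved by the ratio of Gamma functions in~\eqref{K-hatShifted}. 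The $x$-dependent prefactor there is the normalisation $\Khat(x)|0\rangle=|0\rangle$.

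The main obstacle is the bookkeeping in the off-diagonal blocks. There we must check that the components with $\oad_a$ and those with $\oa_a$ produce the \emph{same} recursion rather than two incompatible ones. We must also check that the cubic cross terms really collapse via~\eqref{eq:genrec}. To control this, we first carry out the computation in the factorised form~\eqref{eq:laxfac}. Conjugating by the triangular matrices $\bigl(\begin{smallmatrix}1&\pm\OAD\\0&\ID\end{smallmatrix}\bigr)$ turns the middle factor into a lower-triangular matrix. This reduces the check to a lower-triangular identity, in which the scalar recursion is visible directly from the $(a0)$ entry. As a fallback, we note that for $M=1$ the statement reduces to the known $gl(2)$ result of \cite{Frassek:2019vjt}. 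Since everything depends on $\OA,\OAD$ only through $\mathcal N$ and $\oad_a,\oa_a$, the $M=1$ computation lifts verbatim, with $\oad\oa$ replaced by $\mathcal N$.
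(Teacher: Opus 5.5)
Your proposal is correct and is essentially the paper's proof: the ansatz $\Khat(x)=f(x,\OAD\OA)$, the unitarity $L(u)L(-u)\propto\ID$ available at $\mu_1+\mu_2=1$, expansion in $y$, and one scalar recursion from the off-diagonal blocks; the only difference is that you apply unitarity to the full reflection equation before expanding in $y$, whereas the paper first extracts the $y$-linear condition $2q_1[L(x),\Khat(x)]=[Q,L(x)\Khat(x)L(x)]$ and then conjugates it by $L(-x)$. The block check you flag as the main obstacle does go through: since $L(x+y)\Khatfund(y)L(y-x)-L(y-x)\Khatfund(y)L(x+y)=4q_2xy\,[e_{00},J^t]$, your $y$-linear condition is $[\Khat,\xi J^t+J^te_{00}J^t]+x\,\{\Khat,[e_{00},J^t]\}=0$, whose $(0b)$ and $(a0)$ entries both reduce exactly to your recursion while the diagonal blocks are automatic (this anticommutator is also why the right-hand bracket in \eqref{eqK-final} should read $2q_1L(-x)+L(-x)QL(x)$; with $L(x)QL(-x)$ on both sides the recursion would trivialise).
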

\begin{proof}
   Consider the K-matrix $\Khatfund(y)$ defined in~\eqref{K-fund} and substitute it into the BYBE~\eqref{shifted-BYBE}. This yields a matrix equation in the canonical basis of $\mathbb{C}^{\M+1}$, which decomposes into $(\M+1) \times (\M+1)$ component-wise relations. By matching the powers of the spectral parameter $y$, we obtain the following set of equations:
\begin{equation}\label{eq1-K}
    \begin{split}
        &\left[\Khat(x),\oad_{a}\oa_{b}\right]=0\qquad \forall \quad a,b\in\{1,\ldots,\M\}\,,
         \end{split}
        \end{equation} 
        \begin{equation}\label{eq2-K}
        \begin{split}
        &        \left((x+\mu_{1}+\OAD \OA)\Khat(x)h(\OAD,\OA)\right.
          +\left.h(\OAD,\OA)\Khat(x)((x+\mu_{2}-1)\ID+\OA\OAD)\right)
        =\frac{q_{1}}{q_2}\left[h(\OAD,\OA),\Khat(x)\right]
        \end{split}
        \end{equation}
        and 
        \begin{equation}\label{eq3-K}
        \begin{split}
        &\left(\OA \Khat(x)(x+\mu_{1}-\OAD\OA)+((x+\mu_{2}-1)\ID+\OA\OAD)\Khat(x)\OA \right)
        =\frac{q_{1}}{q_2}\left[\Khat(x),\OA\right]\,.
    \end{split}
\end{equation}
Above, for the sake of notation, we have defined $h(\OAD,\OA):= \OAD((\mu_{1}-\mu_{2}+1)\ID-\OA\OAD)$. 
Introducing the diagonal matrix
\begin{equation}
    Q:=q_{2}e_{00}-q_{2}\sum_{a=1}^{\M}e_{aa}\,,      \end{equation} equations~\eqref{eq2-K} and~\eqref{eq3-K} may be rewritten as a single equation
\begin{equation}
    \begin{split}
    2q_{1}\left[L(x),\Khat(x)\right]=\left[Q,L(x)\Khat(x)L(x)\right]\,.
    \end{split}
\end{equation}
 Multiplying both sides of the above equation on the left and on the right by $L(-x)$, and using Proposition~\ref{prop:Lbar} specialized at $c=\tfrac{1}{2}$, which implies that $L(x)L(-x)$ is proportional to the identity matrix, we obtain
\begin{equation}\label{eqK-final}
    \begin{split}
   \Khat(x)\left(2q_{1}L(-x)+L(x)QL(-x)\right)=\left(2q_{1}L(-x)+L(x)QL(-x)\right)\Khat(x)\,.
    \end{split}
\end{equation}
From equation~\eqref{eq1-K} we obtain that $\Khat(x)$ is a function of the spectral parameter $x$ and of the total number operator only, namely we write the ansatz
\begin{equation}
    \Khat(x)=f(x,\OAD\OA)\,,
\end{equation}
where $f(x,\OAD\OA)$ is a function to determine. Plugging this into equation~\eqref{eqK-final}, we obtain the recursion relation
\begin{equation}
    \frac{f(x,\OAD\OA+1)}{f(x,\OAD\OA)}=\frac{x+\OAD\OA+s+\frac{1}{2}-\frac{q_{1}}{q_{2}}}{-x+\OAD\OA+s+\frac{1}{2}-\frac{q_{1}}{q_{2}}}\,.
\end{equation}
This recursion relation is solved by
    \begin{equation}
    f(x,\OAD\OA)=C(x,\mu,q_{1},q_{2})\frac{\Gamma\left(x+\OAD\OA+s+\frac{1}{2}-\frac{q_{1}}{q_{2}}\right)}{\Gamma\left(-x+\OAD\OA+s+\frac{1}{2}-\frac{q_{1}}{q_{2}}\right)}
\end{equation}
where $C(x,\mu,q_{1},q_{2})$ is a normalization that drops in the Yang-Baxter equation. Choosing
\begin{equation}
  C(x,\mu,q_{1},q_{2})=  \frac{\Gamma\left(-x+s+\frac{1}{2}-\frac{q_{1}}{q_{2}}\right)}{\Gamma\left(x+s+\frac{1}{2}-\frac{q_{1}}{q_{2}}\right)}\,,
\end{equation}
one obtains~\eqref{K-hatShifted}.
\end{proof}
\begin{remark}
The normalization chosen for $\Khat$ in~\eqref{K-hatShifted} is motivated by the requirement of recovering the stochastic Hamiltonian from the transfer matrix, as will be made precise in Theorem~\ref{Thm-transfer-Ham}. We stress, however, that different normalizations leave the solution of the BYBE unaffected, as they cancel out identically within the equation.
\end{remark}

\begin{corollary}\label{Corollary-left-Kmatrix}
The K-matrix
 \begin{equation}\label{K0-def2}
\K(x)=\frac{\Gamma\left(x+\frac{\M}{2}+1
+s-\frac{q'_{1}}{q'_{2}}\right)}{\Gamma\left(-x-\frac{\M}{2}
+s-\frac{q'_{1}}{q'_{2}}\right)}\frac{\Gamma\left(-x-\frac{\M}{2}
+s-\frac{q'_{1}}{q'_{2}}+\OAD\OA\right)}{\Gamma\left(x+\frac{\M}{2}+1
+s-\frac{q'_{1}}{q'_{2}}+\OAD\OA\right)}\,.
 \end{equation}
 		solve the (dual-)boundary Yang-Baxter equation
\begin{equation}\label{D-BYBE}
       L(y-x)\K(x)L(-x-y-(\M+1))\Kfund(y)= \Kfund(y)L(-x-y-(\M+1))\K(x)L(y-x)\,,
    \end{equation}
    with
\begin{equation}\label{eq:Kfund2}
      \Kfund(x)=\left(q'_{1}-q'_{2}\left(x+\frac{\M+1}{2}\right)\right)e_{0,0}+\left(q'_{1}+q'_{2}\left(x+\frac{\M+1}{2}\right)\right)\sum_{a=1}^{\M} e_{aa}\,,
 \end{equation}
 where the shift in the Lax matrix is fixed by Proposition~\eqref{crossL}. 
  		\end{corollary}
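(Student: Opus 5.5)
The natural route is to reduce the dual BYBE \eqref{D-BYBE} to the ordinary BYBE \eqref{shifted-BYBE} of Proposition~\ref{Proposition-right-Kmatrix}, using the crossing symmetry of Proposition~\ref{crossL}, rather than solving it from scratch. Concretely, I would take the transpose (in the auxiliary space $\mathbb{C}^{M+1}$) of \eqref{D-BYBE}. Since $\K(x)$ is a function of the number operator only, it is a scalar in auxiliary space, and $\Kfund$ is diagonal, so transposition only reverses the order of the Lax matrices and replaces $L$ by $L^t$. This gives
\begin{equation*}
L^t(-x-y-(M+1))\,\K(x)\,L^t(y-x)\,\Kfund(y)=\Kfund(y)\,L^t(y-x)\,\K(x)\,L^t(-x-y-(M+1)).
\end{equation*}
Next I would use Proposition~\ref{crossL} with $c=\tfrac12$, i.e.\ $\mu_1+\mu_2=1$, which states that $L^t(u)L^t(-u-(M+1))$ is a scalar multiple of $\ID$. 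Multiplying on both sides by suitable $L^t$'s then converts the equation into one of the form $L(\tilde x-\tilde y)\,\K\,L(\tilde x+\tilde y)\,\tilde K^0(\tilde y)=\dots$ with shifted arguments $\tilde x=-x-\tfrac{M+1}{2}$ and $\tilde y=y+\tfrac{M+1}{2}$. This matches the form \eqref{shifted-BYBE}, up to the replacement of $L$ by the "crossed" Lax matrix $L^t$ with shifted arguments.

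The cleaner alternative, which I would pursue if the crossing manipulation becomes messy, is to redo the proof of Proposition~\ref{Proposition-right-Kmatrix} directly. The steps are the same.
\begin{enumerate}
\item Expand \eqref{D-BYBE} in powers of $y$. The leading order forces $[\K(x),\oad_a\oa_b]=0$, so $\K(x)=g(x,\OAD\OA)$.
\item The remaining orders combine into a single commutator identity of the form $2q'_1[L(\cdot),\K]=[Q',L\,\K\,L]$, with $Q'=-q'_2e_{00}+q'_2\sum_a e_{aa}$ and the $L$'s evaluated at the shifted points.
\item Use the crossing unitarity of Proposition~\ref{crossL} to strip off one Lax matrix, giving $[\K,\,2q'_1L(\cdot)+L(\cdot)Q'L(\cdot)]=0$.
\item Evaluate this off-diagonal entry on $\OAD\OA$-eigenstates, using $g(\OAD\OA)\oad_a=\oad_a g(\OAD\OA+1)$. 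This produces the first-order recursion
\[
\frac{g(x,n+1)}{g(x,n)}=\frac{n-x-\tfrac M2+s-q'_1/q'_2}{n+x+\tfrac M2+1+s-q'_1/q'_2},
\]
whose solution is the ratio of Gamma functions in \eqref{K0-def2}.
\end{enumerate}
The normalization prefactor is free; it is chosen so that $\K$ equals $1$ on the vacuum.

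A shortcut check is to substitute $x\to -x-\tfrac{M+1}{2}$ and $q_1/q_2\to -q'_1/q'_2$-type relabelings into \eqref{K-hatShifted}. Note that $-x-\tfrac M2+s-\tfrac{q_1'}{q_2'}$ and $x+\tfrac M2+1+s-\tfrac{q_1'}{q_2'}$ are exactly $\pm(x+\tfrac{M+1}{2})+s+\tfrac12-\tfrac{q'_1}{q'_2}$. So $\K(x)=\Khat(-x-\tfrac{M+1}{2})$ with $q_1/q_2$ replaced by $q'_1/q'_2$. Likewise, $\Kfund(y)$ equals $\Khatfund(y+\tfrac{M+1}2)$ with $q_2\to-q'_2$, which is consistent with the sign flip of the argument.

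The main obstacle is the crossing step. Transposition must be applied only in auxiliary space, and the oscillator entries of $L^t$ do not commute, so $L^t(u)L^t(v)\neq (L(v)L(u))^t$ in general. I therefore expect the reduction to the unitarity form to require care. If it fails, I would fall back on the direct recursion derivation in the enumerated steps above, which only needs Proposition~\ref{crossL} to hold as an operator identity.
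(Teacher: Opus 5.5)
Your final formulas are right, but the identity your argument rests on is the wrong one, and as written the decisive step fails. Proposition~\ref{crossL} controls products of the auxiliary-space transposes $L^t$, whereas \eqref{D-BYBE} contains only $L$. Transposition in $\mathbb{C}^{M+1}$ is not anti-multiplicative for oscillator-valued entries, as you note yourself. So your first route breaks at the transposition step, and even if it did not, it would end in an equation in $L^t$, to which Proposition~\ref{Proposition-right-Kmatrix} does not apply.

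The same misattribution sits in Step~3 of your fallback. After expanding in $y+\frac{M+1}{2}$, the coefficients involve only $\Lambda:=L(-x-\frac{M+1}{2})$, and stripping one $\Lambda$ off requires $\Lambda^{-1}$. That inverse comes from the unitarity relation $L(u)L(-u)=-(u+\mu_1)(u+\mu_2-1)\ID$ of Proposition~\ref{prop:Lbar} at $c=\frac12$. Crossing unitarity does not give the inverse of $L$ itself, so your closing claim that the direct derivation ``only needs Proposition~\ref{crossL}'' is false.

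There is also a sign slip in Step~2. The order linear in $y+\frac{M+1}{2}$ gives $2q'_1[\Lambda,\K]=[Q',\Lambda\K\Lambda]$ with $Q'=q'_2e_{00}-q'_2\sum_a e_{aa}$, of the same form as $Q$. With your sign, the recursion would come out with $q'_1/q'_2$ replaced by $-q'_1/q'_2$. That contradicts the recursion you state in Step~4, which is the correct one.

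What you call a ``shortcut check'' is in fact the whole proof, and it is essentially what the paper does. Invert \eqref{shifted-BYBE}, using three facts:
\begin{itemize}
\item $L(u)^{-1}\propto L(-u)$, by Proposition~\ref{prop:Lbar} at $c=\frac12$;
\item $\Khat(X)^{-1}=\Khat(-X)$, with the normalization of \eqref{K-hatShifted};
\item $\Khatfund(Y)^{-1}\propto\Khatfund(-Y)$.
\end{itemize}
The scalar factors appear identically on both sides and cancel, leaving $L(Y-X)\Khat(-X)L(-X-Y)\Khatfund(-Y)=\Khatfund(-Y)L(-X-Y)\Khat(-X)L(Y-X)$. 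Setting $X=x+\frac{M+1}{2}$, $Y=y+\frac{M+1}{2}$ and $q\to q'$ gives \eqref{D-BYBE} with $\K(x)=\Khat(-x-\frac{M+1}{2})$ and $\Kfund(y)=\Khatfund(-y-\frac{M+1}{2})$. These are exactly the identifications you found. Even more directly, \eqref{shifted-BYBE} evaluated at $X=-x-\frac{M+1}{2}$, $Y=-y-\frac{M+1}{2}$ is literally \eqref{D-BYBE}, so no computation is needed. Proposition~\ref{crossL} only motivates the particular shift $M+1$ required by the transfer-matrix construction; it plays no role in proving the corollary.
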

\begin{proof} By inverting the boundary Yang-Baxter equation~\eqref{shifted-BYBE} and exploiting Proposition~\ref{prop:Lbar} in the case $c=\tfrac{1}{2}$, we find that the $K$-matrices that solve~\eqref{D-BYBE} are, up to an overall normalization, related to those given in~\eqref{K-fund} and~\eqref{K-hatShifted} through
     \begin{equation}
         \Kfund(x)=\left.\Khatfund\left(-x-\frac{\M+1}{2}\right)\right|_{q_{1}\to q'_{1},\,q_{2}\to q'_{2}}\,,\qquad    \K(x)=\left.\Khat\left(-x-\frac{\M+1}{2}\right)\right|_{q_{1}\to q'_{1},\,q_{2}\to q'_{2}}\,.
    \end{equation}
    This yields the K-matrix presented in~\eqref{K0-def2}.
\end{proof}

\begin{lemma}\label{lemmadd}
 The Lax matrix is invariant under the transformations
 \begin{equation}
 [L(x), d_\alpha\otimes D_\alpha]=0
 \end{equation}
 where
 \begin{equation}\label{dalpha}
  D_\alpha=\exp{\left(-\sum_{a=1}^{\M}J_{0,a}\right)}\exp{\left(-\sum_{a=1}^{\M}\alpha_{a}J_{a,0}\right)}
 \end{equation}
 and
 \begin{equation}\label{similarity-fundamental}
  d_\alpha=\exp{\left(-\sum_{ a=1}^{\M}e_{0,a}\right)}\exp{\left(-\sum_{a=1}^{\M}\alpha_{a}e_{a,0}\right)}
 \end{equation}
 where $\alpha_{a}\in\mathbb{R}$. Above $J_{A,B}$ stands for $\pi(J_{A,B})$, i.e. the algebra generator of $gl(\M+1)$ in the Fock representation, c.f.~\eqref{E-matrix}. 
\end{lemma}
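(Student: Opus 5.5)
The plan is to reduce the statement to the $gl(\M+1)$ invariance of the Lax matrix already used in the proof of Lemma~\ref{Lemma-equivalence}, namely \eqref{glinv}:
\begin{equation*}
\left(e_{A,B}+J_{A,B}\right)L(x)=L(x)\left(e_{A,B}+J_{A,B}\right),\qquad \forall A,B\in\{0,\ldots,\M\}.
\end{equation*}
Here $e_{A,B}$ acts on the auxiliary space $\mathbb{C}^{\M+1}$ and $J_{A,B}$ on the quantum space. Equivalently, $L(x)$ commutes with $\Delta(X):=X\otimes 1+1\otimes X$ for every element $X$ of $gl(\M+1)$, where the first tensor factor is represented by $e$ and the second by $J$. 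This identity is just the statement that the Casimir-type element $\sum_{A,B}e_{AB}\otimes J_{BA}$ commutes with the diagonal action, and it follows from the commutation relations \eqref{gl-commutators}. It holds in any representation, and in particular in the Fock representation $\pi$.

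First, exponentiate. For $X=-\sum_a e_{0,a}$ on the auxiliary side and $-\sum_a J_{0,a}$ on the quantum side, the two pieces $X\otimes1$ and $1\otimes X$ commute with each other. Hence
\begin{equation*}
\exp(\Delta(X))=\exp\Big(-\sum_a e_{0,a}\Big)\otimes\exp\Big(-\sum_a J_{0,a}\Big),
\end{equation*}
and this operator commutes with $L(x)$ because $\Delta(X)$ does. The same argument applied to $Y=-\sum_a\alpha_a e_{a,0}$, respectively $-\sum_a\alpha_aJ_{a,0}$, gives a second factorised exponential commuting with $L(x)$. Taking the product of the two, the auxiliary factors multiply to $d_\alpha$ and the quantum factors multiply to $D_\alpha$, exactly in the order appearing in \eqref{dalpha} and \eqref{similarity-fundamental}. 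Therefore $d_\alpha\otimes D_\alpha$ commutes with $L(x)$. Since the spectral parameter enters $L(x)$ only through a multiple of the identity, no $x$-dependence arises.

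The step I expect to be the main obstacle is making the exponentials well defined on the Fock space. On the auxiliary side everything is finite: $e_{0,a}e_{0,b}=0$ and $e_{a,0}e_{b,0}=0$, so $d_\alpha=(\ID-\sum_a e_{0,a})(\ID-\sum_a\alpha_ae_{a,0})$. On the quantum side, $J_{0,a}=\oa_a$ by \eqref{E-matrix}, so $\exp(-\sum_a\oa_a)$ is locally nilpotent on each basis vector and gives a finite sum. The factor $J_{a,0}=\oad_a((\mu_1-\mu_2+1)-\OA\OAD)$ is a raising operator, and its exponential produces infinite sums on each basis vector. I would treat this with the same formal power series (graded) argument the paper implicitly uses for $e^{\OFD^{[2]}\OF^{[1]}}$. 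Concretely, I would verify $[L(x),d_\alpha\otimes D_\alpha]=0$ order by order in the parameters $\alpha_a$. At each order only finitely many terms contribute, so the commutation holds coefficientwise.

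Alternatively, the relation $G^{-1}L G=\exp(-t\OA)L\exp(t\OA)$ from \eqref{useful-relation} already handles the first factor directly. An analogous Hadamard-formula computation would handle the second factor.
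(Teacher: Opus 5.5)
Your proposal is correct and follows the paper's proof. The paper likewise establishes $[e_{AB}+J_{AB},L(y)]=0$ and then writes $d_\alpha\otimes D_\alpha$ as the product of the exponentials of $-\sum_a(e_{0,a}+J_{0,a})$ and $-\sum_a\alpha_a(e_{a,0}+J_{a,0})$. The only differences are that the paper reads the invariance off the linear order of the RLL relation rather than from the commutation relations directly, and that your order-by-order-in-$\alpha$ treatment of the raising-operator exponential on the Fock space addresses a well-definedness point the paper leaves implicit.
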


\begin{proof}
When expanding  the RLL relation
\begin{equation}
		\fundR(x)(L(x+y)\otimes1)(1\otimes L(y))=(1\otimes L(y))(L(x+y)\otimes1)\fundR(x)
	\end{equation}
in the spectral parameter $x$ we obtain at linear order that for all $A,B\in \{0,1,\ldots,\M\}$
\begin{equation}
	\left[e_{AB}+J_{AB},L(y)\right]=0\,.
\end{equation}
Thus the Lax matrix commutes with any function of $e_{AB}+J_{AB}$. Writing
 \begin{equation}
  d_\alpha\otimes D_\alpha=\exp{\left(-\sum_{ a=1}^{\M}(e_{0,a}+J_{0,a})\right)}\exp{\left(-\sum_{a=1}^{\M}\alpha_{a}(e_{a,0}+J_{a,0})\right)}
 \end{equation}
 we see that the lemma above holds.
\end{proof}

\begin{corollary}\label{corollary-Full-K}
As a consequence of Lemma~\ref{lemmadd} it follows that the conjugated K-matrices
\begin{equation}\label{Kfulls}
   \KhatFull(x)=D_{\rho_{r}}\Khat(x)D_{\rho_{r}}^{-1}\,,\quad \Kfull(x)=D_{\rho_{l}}\K(x)D_{\rho_{l}}^{-1}
\end{equation}
\begin{equation}\label{Kfull-Fund}
  \KhatFundFull(x)=d_{\rho_{r}}\Khatfund(x)d_{\rho_{r}}^{-1}\,\quad \KFundFull(x)=d_{\rho_{l}} \Kfund(x)d_{\rho_{l}}^{-1}\,,\qquad 
\end{equation}
obey the same boundary Yang-Baxter equations as their diagonal counterparts, namely they solve
\begin{equation}
    L(x-y)\KhatFull(x){L}(x+y)\KhatFundFull(y)=\KhatFundFull(y)L(x+y)\KhatFull(x){L}(x-y)
\end{equation}
and 
\begin{equation}
    L(y-x)\Kfull(x)L(-x-y-(\M+1))\KFundFull(y)= \KFundFull(y)L(-x-y-(\M+1))\Kfull(x)L(y-x)\,.
\end{equation}
In particular the parameters $\rho_l$ and $\rho_{r}$ can be chosen arbitrarily.
\end{corollary}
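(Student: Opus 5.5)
The plan is to conjugate the diagonal boundary Yang--Baxter equations by the operator $d_\alpha\otimes D_\alpha$ of Lemma~\ref{lemmadd}. Everything else should follow from the commutation $[L(x),d_\alpha\otimes D_\alpha]=0$, valid for every spectral parameter $x$. We read this as an identity in $\mathrm{End}(\mathbb{C}^{\M+1})\otimes\mathrm{End}(\mathcal{F})$, with $d_\alpha$ acting on the auxiliary space and $D_\alpha$ on the Fock space. It can be rewritten as
\begin{equation*}
(d_\alpha\otimes D_\alpha)\,L(x)\,(d_\alpha\otimes D_\alpha)^{-1}=L(x)\,,\qquad\text{equivalently}\qquad D_\alpha L(x)D_\alpha^{-1}=d_\alpha^{-1}L(x)d_\alpha\,.
\end{equation*}
The second form trades conjugation in the quantum space for inverse conjugation in the auxiliary space.

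For the right boundary, take $\alpha=\rho_r$ and conjugate equation~\eqref{shifted-BYBE} by $U:=d_{\rho_r}\otimes D_{\rho_r}$. Insert $U^{-1}U$ between every pair of factors on both sides. Each factor $L(\cdot)$ is unchanged by Lemma~\ref{lemmadd}. The factor $\Khat(x)$ acts only on the quantum space, so it becomes $D_{\rho_r}\Khat(x)D_{\rho_r}^{-1}=\KhatFull(x)$. The factor $\Khatfund(y)$ acts only on the auxiliary space, so it becomes $d_{\rho_r}\Khatfund(y)d_{\rho_r}^{-1}=\KhatFundFull(y)$. This gives exactly the claimed equation for $(\KhatFull,\KhatFundFull)$.

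The left boundary is handled the same way. Conjugate equation~\eqref{D-BYBE} by $d_{\rho_l}\otimes D_{\rho_l}$. The Lax matrices evaluated at the shifted arguments $y-x$ and $-x-y-(\M+1)$ are again invariant, since Lemma~\ref{lemmadd} holds for all arguments. Since $\rho_l$, $\rho_r$ enter only through the choice of $\alpha\in\mathbb{R}^\M$, which is arbitrary, they may be chosen freely and independently.

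The main obstacle is making sense of $D_\alpha$ and its inverse on the Fock space, where the conjugation is required to be legitimate. The factor $\exp(-\sum_a J_{0,a})=\exp(-\sum_a\oa_a)$ is locally nilpotent on $\mathcal{F}$, because $J_{0,a}=\oa_a$ by~\eqref{E-matrix}, so it is a finite sum on each basis vector. The factor $\exp(-\sum_a\alpha_aJ_{a,0})$ raises the particle number and gives infinite series. I would treat these as formal power series in $\alpha$, or note that matrix elements of $\KhatFull$ are finite sums once the exponentials are combined. I would also check that the identity $[e_{AB}+J_{AB},L(y)]=0$ exponentiates to the ordered product in~\eqref{dalpha}. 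This holds because each exponential separately commutes with $L$ by the Baker--Campbell--Hausdorff formula applied term by term in this formal sense. Invertibility is immediate: $d_\alpha^{-1}$ and $D_\alpha^{-1}$ are the same products with the signs in the exponents reversed and the order of the factors reversed.
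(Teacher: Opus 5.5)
Your argument is correct and is the paper's proof: conjugate \eqref{shifted-BYBE} and \eqref{D-BYBE} by $d_\alpha\otimes D_\alpha$ and use Lemma~\ref{lemmadd}, so that $\Khat$, $\K$ pick up $D_\alpha$ while $\Khatfund$, $\Kfund$ pick up $d_\alpha$. One correction to your side remark: the matrix elements of $\KhatFull$ are not finite sums, because $\langle n|\exp(-\sum_a J_{0,a})$ spreads over all $\langle n'|$ with $n'\geq n$ and gives an infinite series in $\rho$ (compare the Lauricella sum and the analytic continuation in Lemma~\ref{lem:b2}), so only your formal-power-series reading in $\alpha$ legitimises the conjugation---a point the paper itself does not address.
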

\begin{proof}
 The proof is obtained by conjugating the boundary Yang-Baxter equations~\eqref{shifted-BYBE} and~\eqref{D-BYBE} and using Lemma~\ref{lemmadd}.
\end{proof}
\begin{remark}
 The K-matrix $\KhatFull(x)$ in~\eqref{Kfulls} for $\rho_r=0$ becomes triangular. Moreover, when choosing $\frac{q_{1}}{q_{2}}=\frac{1}{2}-J$, it coincides with the rational limit of the K-matrix found in \cite[(5.7)]{2024JPhA...57x5201K}.
\end{remark}
\begin{remark}
For $2s=-1$ and restricting to the vector space spanned by the canonical basis of $\mathbb{C}^{\M+1}$ denoted by $\{|A\rangle\}_{A=0,1,\ldots,\M}$, the components of the diagonal K-matrices $\Khat(x)$ and $\K(x)$ defined in \eqref{K-hatShifted} and \eqref{K0-def2} match, up to an overall scalar factor, with those of \eqref{K-fund} and \eqref{eq:Kfund2}, respectively. 
{ 
To recover the non-diagonal $K$-matrices associated with the multispecies stirring process \cite{vanicat2017exact,casini2024duality}, we consider \eqref{Kfulls} on the restricted subspace with  $\rho_{a}=-\beta_{a}$ for all $a\in \{1,\ldots, \M\}$.
} 
\end{remark}

\subsection{Transfer matrix and Hamiltonian}\label{Section-transfer-matrix}

In this section we derive the Hamiltonian of the boundary-driven multispecies harmonic process within the framework of the quantum inverse scattering method for integrable open spin chains \cite{Sklyanin:1988yz}.

To this end, we introduce the double-row transfer matrix $T(x)$, defined through an infinite-dimensional representation with labels $(\mu_{1},\mu_{2},\ldots,\mu_{2})$ in the quantum and auxiliary space
\begin{equation}
T(x)=\tr_{a}\!\left(
\Kfull_{a}(x)
\eR_{a,1}(x)\cdots \eR_{a,N}(x)
\KhatFull_{a}(x)
\eR_{N,a}(x)\cdots \eR_{1,a}(x)
\right).
\end{equation}
In addition, we consider the transfer matrix $T_{f}(x)$ associated with the fundamental representation of $gl(\mathcal{M}+1)$ in the auxiliary space\begin{equation}
T_{f}(x)=\tr_{f}\!\left(
\KFundFull(x)
L_{1}(x)\cdots L_{{N}}(x)
\KhatFundFull(x)
L_{{N}}(x)\cdots L_{1}(x)
\right).
\end{equation}
While the transfer matrix $T(x)$ yields the (stochastic) Hamiltonian via logarithmic differentiation, $T_{f}(x)$ plays a central role in the framework of the Bethe ansatz \cite{Faddeev:1996iy}. In order for these two families of transfer matrices to be simultaneously diagonalizable, it is necessary to impose their mutual commutativity
\begin{equation}
\left[T(x),T_{f}(y)\right]=0.
\end{equation}
This relation holds if the $R$-matrix $\eR(x)$ and the K-matrices $\hat{\mathcal{K}}(x)$, $\mathcal{K}(x)$ satisfy the $RLL$ relations~\eqref{RLL},~\eqref{shifted-BYBE}, and~\eqref{D-BYBE}. These equations were solved in Proposition~\ref{Proposition-R-components} and Corollary~\ref{corollary-Full-K}, respectively, yielding explicit expressions for the relevant $R$- and $K$-matrices. We are thus in a position to construct $T(x)$ explicitly and to derive the corresponding (stochastic) Hamiltonian by means of its logarithmic derivative.

As will be demonstrated below, the stochasticity of the boundary Hamiltonian requires a specific choice of the boundary parameters, namely we fix
\begin{equation}\label{paramter-choice_K}
\frac{q_{1}}{q_{2}}=\frac{1}{2}-s, 
\qquad 
\frac{q'_{1}}{q'_{2}}=s-\frac{\M}{2}\,.
\end{equation}
For this choice, the diagonal boundary matrices obtained in Proposition~\ref{Proposition-right-Kmatrix} and Corollary~\ref{Corollary-left-Kmatrix} read
\begin{equation}\label{Stochastic-Diagonal-Kmatrices}
\Khat(x)= 
\frac{\Gamma(-x+2s)}{\Gamma(x+2s)}
\frac{\Gamma(x+2s+\OFD\OF)}{\Gamma(-x+2s+\OFD\OF)},
\qquad 
\K(x)=
\frac{\Gamma(x+\M+1)}{\Gamma(-x)}
\frac{\Gamma(-x+\OFD\OF)}{\Gamma(x+\M+1+\OFD\OF)}.
\end{equation}

\begin{theorem}\label{Thm-transfer-Ham}
The Hamiltonian reads
\begin{equation}
  \eH = \mathcal{H}_\text{left} + \sum_{\ell=1}^{\N-1}\mathcal{H}_{\ell,\ell+1} + \mathcal{H}_\text{right}\,,
\end{equation}
where the bulk Hamiltonian density $\mathcal{H}_{\ell,\ell+1}$ is given in Corollary~\ref{Hamalg} and the boundary terms read
\begin{equation}\label{right-left-BD-hamiltonian}
 \mathcal{H}_{\text{left},\text{right}} = D_{\rho_{l,r}}\left(\psi(2s+\OFD\OF)-\psi(2s)\right)D_{\rho_{l,r}}^{-1}\,.
\end{equation}
Here $D_{\rho_{l,r}}$ is the similarity transformation \eqref{dalpha} and $\rho_{l},\rho_{r}$ are defined in \eqref{rhos}. This Hamiltonian is generated by the double-row transfer matrix
\begin{equation}
T(x)=\tr_{a}\Big(\Kfull_{a}(x)\eR_{a,1}(x)\cdots \eR_{a,N}(x)\KhatFull_{a}(x)\eR_{N,a}(x)\cdots \eR_{1,a}(x)\Big)\,,
\end{equation}
where $\tr_{a}$ denotes the trace over the auxiliary space $\mathcal{F}$ (cf.~\eqref{fock-space}). The $R$-matrix is given by $\eR(x)$ with components~\eqref{eq:fullR}, while the K-matrices are written in~\eqref{Kfulls}, 
with boundary parameters $\beta_{l,a}$ and $\beta_{r,a}$ introduced in Section \ref{sec:markov}. 
More precisely, the Hamiltonian is obtained from the logarithmic derivative of the transfer matrix as
\begin{equation}
 \eH = \frac{1}{2}\left(\frac{\partial}{\partial x}\ln T(x)\bigg|_{x=0} + h(\M)\ID \right)\,,
\end{equation}
where $h(\M)=\sum_{a=1}^{\M}\frac{1}{a}$ denotes the $\M$-th harmonic number.
\end{theorem}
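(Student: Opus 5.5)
I will follow Sklyanin's standard computation of the logarithmic derivative of a double-row transfer matrix at the regular point $x=0$. The inputs are $\eR(0)=P$ (fixed by the normalisation \eqref{normali}) and $\KhatFull(0)=\ID$, which holds because the diagonal $\Khat(0)=\ID$ in \eqref{Stochastic-Diagonal-Kmatrices} and conjugation by $D_{\rho_r}$ preserves the identity.

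\textbf{Step 1: value at $x=0$.} Setting $x=0$, the two rows of permutations collapse. This gives $T(0)=\tr_a(\Kfull_a(0))\,\ID$, and I will need this scalar to be finite and nonzero. The diagonal $\K(x)$ in \eqref{Stochastic-Diagonal-Kmatrices} carries the prefactor $1/\Gamma(-x)$, so at $x=0$ it is a rank-one projector onto the vacuum: its vacuum eigenvalue is $\Gamma(x+\M+1)/\Gamma(x+\M+1)=1$, while its other eigenvalues are $\Gamma(n-x)/\Gamma(-x)\cdot\Gamma(x+\M+1)/\Gamma(x+\M+1+n)$ for $n\ge1$, which are $O(x)$. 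Since conjugation preserves the trace, $\tr\,\Kfull(0)=1$. I will treat $\tr_a$ as a formal trace defined on the diagonal basis $\OFD\OF=n$; convergence of this trace should be addressed in a short remark.

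\textbf{Step 2: first derivative.} I differentiate $T(x)$ and split $T'(0)$ into three contributions.
\begin{itemize}
\item Derivatives of the $\eR$ factors in the bulk produce $2\sum_\ell \eR'_{\ell,\ell+1}(0)P=2\sum_\ell\mathcal{H}_{\ell,\ell+1}$. The factor 2 comes from the two rows, and I will use Corollary~\ref{Hamalg} together with the fact that $\tr_a\Kfull_a(0)=1$ passes through.
\item The derivative of $\KhatFull$ gives $\KhatFull'_N(0)$. Since $\Khat'(0)=-2\psi(2s)+2\psi(2s+\OFD\OF)$, this yields $2\mathcal H_{\text{right}}$ with $D_{\rho_r}$ conjugation.
\item The derivatives of $\Kfull$ together with the outermost $\eR_{a,1}$, $\eR_{1,a}$ factors produce $\tr_a(\Kfull'_a(0)P_{a1}P_{1a})$ plus cross terms with $\eR'$. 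Here I will use that $\Kfull(0)=D_{\rho_l}|0\rangle\langle0|D_{\rho_l}^{-1}$ is rank one, so $\tr_a(\Kfull_a(0)P_{a1}XP_{a1})$ reduces to a partial expectation.
\end{itemize}
I will then compute $\K'(0)$ on the diagonal basis. On the vacuum, $\partial_x\log$ of the prefactor combination gives $\psi(\M+1)-\psi(1)=h(\M)$. For $n\ge1$ the derivative is $-\Gamma(n)\Gamma(\M+1)/\Gamma(\M+1+n)$ (using $1/\Gamma(-x)\approx -x$). Combining this with the $\eR'(0)$ terms at the boundary site should reproduce $2\mathcal{H}_{\text{left}}$ up to the additive constant $-h(\M)$, which explains the shift $h(\M)\ID$ in the formula.

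\textbf{Main obstacle.} The hardest step is this last one: showing that the left-boundary combination equals $2D_{\rho_l}(\psi(2s+\OFD\OF)-\psi(2s))D_{\rho_l}^{-1}-h(\M)$. The operator $\tr_a(\Kfull'_a(0)P_{a1})$ naively acts as $\Kfull'_1(0)$, which has the wrong spectrum ($\psi$ with argument $\M+1$ rather than $2s$). So the cross term $\tr_a\big(\Kfull_a(0)[\eR'_{a1}(0)P+P\eR'_{1a}(0)]\big)$ must be computed using the rank-one structure. Acting with $\eR^\pm$ and contracting one leg against the vacuum $\langle0|D_{\rho_l}^{-1}$ and $D_{\rho_l}|0\rangle$ should give, via Proposition~\ref{Proposition-R-components}, a one-site operator.

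My first attempt will be to first set $\rho_l=0$. Then the contraction is explicit: only $\eR^-$ and $\eR^+$ terms moving particles into or out of the vacuum survive, giving $\varphi_s$-type rates. I will then restore $\rho_l$ via Lemma~\ref{lemmadd}, since $d\otimes D$ commutes with $\eR$. This transports the conjugation onto the quantum site and yields $D_{\rho_l}(\cdots)D_{\rho_l}^{-1}$.

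The final check is matching the $D$-conjugated form with \eqref{hleft}. This requires $\beta\leftrightarrow\rho$ via \eqref{rhos}, obtained by expanding $D_\rho$ using $J_{0,a}=\oa_a$ and $J_{a,0}=\oad_a(\ldots)$, and identifying the diagonal constant $-\log(1-|\beta|)$ through \eqref{log-rateBoundary}.
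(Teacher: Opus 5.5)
Your overall route is the paper's: regularity $\eR(0)=P$, $\KhatFull(0)=\ID$, $\K(0)=|0\rangle\langle 0|$, moving $D_{\rho_l}$ onto site $1$ by invariance, and contracting against the vacuum. The gap is in where the constant $h(\M)$ comes from, and as written your computation gives the wrong constant.

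The term with the derivative on $\Kfull$ is $\tr_a\big(\Kfull'_a(0)P_{a1}\cdots P_{aN}P_{Na}\cdots P_{1a}\big)$. The permutation string is the identity, so this term is just the scalar $\tr\,\Kfull'(0)=\tr\,\K'(0)$. No operator $\tr_a(\Kfull'_a(0)P_{a1})$ with the ``wrong spectrum'' ever occurs, and nothing has to be combined with the $\eR'$ cross terms.

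Your vacuum value is also false. The vacuum entry of $\K(x)$ is identically $1$, so $\K'(0)|0\rangle=0$, not $h(\M)|0\rangle$. The constant comes entirely from the excited levels, weighted by their multiplicity $\binom{k+\M-1}{\M-1}$:
\[
\tr\,\K'(0)=-\sum_{k\geq1}\binom{k+\M-1}{\M-1}\frac{\Gamma(k)\Gamma(\M+1)}{\Gamma(k+\M+1)}=-\sum_{k\geq 1}\frac{\M}{k(k+\M)}=-h(\M)\,.
\]
The paper gets the same value by evaluating $\tr\,\Kfull(x)$ in closed form. With your stated vacuum eigenvalue this trace would vanish, and the final formula would be off by $h(\M)/2$.

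The cross terms then give exactly
\[
2\tr_a\big(\Kfull_a(0)\mathcal{H}_{a,1}\big)=2D_{\rho_l,1}\langle 0|_a\mathcal{H}_{a,1}|0\rangle_a D_{\rho_l,1}^{-1}\,,
\]
with no additive constant. Here $\langle 0|_a\mathcal{H}_{a,1}|0\rangle_a=\psi(2s+\OFD\OF)-\psi(2s)$ is purely diagonal: no hopping term survives the double vacuum projection, only the total out-rate of site $1$. So even at $\rho_l=0$ you will not find the ``$\varphi_s$-type'' off-diagonal terms you expect. All off-diagonal rates of $\mathcal{H}_{\text{left}}$ come from the $D_{\rho_l}$ conjugation, which is the content of Lemmas~\ref{lem:b1}--\ref{lem:b2}, not of this contraction.

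Finally, the invariance you need is $[\mathcal{H}_{a,1},D_{\rho_l,a}D_{\rho_l,1}]=0$ for the Fock--Fock density, i.e.\ the $gl(\M+1)$ invariance of $\eR$. Lemma~\ref{lemmadd} does not supply this, since it concerns $L(x)$ with a fundamental auxiliary space.
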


\begin{proof}
We write the logarithmic derivative of the transfer matrix as
\begin{equation}\label{eq:logderiv}
\frac{\partial}{\partial x}\ln T(x)\big|_{x=0}=\frac{\tr_a \Kfull_a'(0)}{\tr_a \Kfull_a(0)}+2\frac{\tr_a \Kfull_a(0)\mathcal{H}_{a,1}}{\tr_a \Kfull_a(0)}+\frac{\KhatFull_\N'(0)}{\KhatFull_\N(0)}+2\sum_{\ell=1}^{\N-1}\frac{\partial}{\partial x}\ln \eR_{\ell,\ell+1}(x)\big|_{x=0}\,,
\end{equation}
and study the resulting terms below.
\paragraph{First term.}
 Using the Gaussian hypergeometric function ${}_2F_1(-x + n_2 + \cdots + n_{\M},1;x + M + 1 + n_2 + \cdots + n_M;1)$, one obtains
\begin{equation}\sum_{n_1=0}^\infty \frac{\Gamma(-x +n_1+\ldots+n_\M)}{\Gamma(x+\M+1+n_1+\ldots+n_\M)}=\frac{\Gamma (M+2 x) \Gamma (-x+n_2+\ldots+n_\M)}{\Gamma (M+2 x+1) \Gamma (x+M+n_2+\ldots+n_\M)}\,.
\end{equation}
As a consequence, we find that 
\begin{equation}
 \tr\Kfull(x)=\frac{\Gamma (x+M+ 1)  }{\Gamma (x+1) }\frac{\Gamma (2x+1)  }{\Gamma (2x+M+1) }\,,
\end{equation}
which in turn implies that
\begin{equation}
 \tr\Kfull(0)=1\,,\qquad \tr\Kfull'(0)=-\sum_{k=1}^M\frac{1}{k}
\end{equation}
Finally, by means of derivation we conclude that
\begin{equation}
 \frac{\tr \Kfull'(0)}{\tr \Kfull(0)}=-\sum_{k=1}^M\frac{1}{k}\,.
\end{equation}
\paragraph{Second term.}
Using the cyclicity of the trace  and the invariance of the Hamiltonian density, the  numerator of the second term can be written as
\begin{equation}
\begin{split}
\tr_a \Kfull_{a}(0)\eH_{a,1} &=\tr_a D_{\rho_l,a}\K_{a}(0)D^{-1}_{\rho_l,a}\eH_{a,1}\\&=\tr_a \K_{a}(0)D^{-1}_{\rho_l,a}\eH_{a,1}D_{\rho_l,a}\\&=D_{\rho_l,1}\tr_a \K_a(0)\eH_{a,1}D_{\rho_l,1}^{-1}
 \end{split}
\end{equation} 
We now note that $\mathcal{K}^0$ becomes a projector on the Fock vacuum for spectral parameter $x=0$, such that  
\begin{equation}\label{eq:projK}
\begin{split}
\langle n|
\tr_a \K_{a}(0)\mathcal{H}_{a,1}|n'\rangle&=\sum_{m_1=0}^{\infty}\cdots\sum_{m_\M=0}^\infty
\langle m,n| \K_{a}(0)\eH_{a,1}|m,n'\rangle\\&= \langle 0,n|  \eH_{a,1}|0,n'\rangle
\\&=\left[\psi(|n|+2s)-\psi (2s)\right] \prod_{a=1}^{\M} \delta_{n'_a,n_a}
 \end{split}
\end{equation}
Thus we can write
\begin{equation}
\begin{split}
\tr_a \K_{a}(0) \eH_{a,1} &=\psi(\OAD\OA+2s)-\psi (2s)
 \end{split}
\end{equation} and obtain
\begin{equation}\label{left-BD-Hamiltonian-from-K}
\begin{split}
2\frac{\tr_a \K_a(0)\eH_{a,1}  }{ \tr\K(0)}=2 D_{\rho_l}\left[\psi(\OAD\OA+2s)-\psi (2s)\right] D_{\rho_l}^{-1}\,,
 \end{split}
\end{equation} which corresponds to the left boundary Hamiltonian written in~\eqref{right-left-BD-hamiltonian}.
\paragraph{Third term.} To evaluate the third term we note that
\begin{align}
		   \KhatFull(0)=\ID\,.
	\end{align}
and
\begin{align}\label{right-BD-Hamiltonian-from-K}
		   \KhatFull'(0)= 2D_{\rho_{r}}\left(\psi(\OAD\OA+2s)-\psi(2s)\right )D_{\rho_{r}}^{-1}\,,
	\end{align}
    which corresponds to the right boundary Hamiltonian written in~\eqref{right-left-BD-hamiltonian}.
    \paragraph{Fourth term.} By Corollary~\ref{corollary-bulk-elements}, this term gives the bulk Hamiltonian. 
\end{proof}
We now show that the matrix components of the boundary Hamiltonians obtained in Theorem~\eqref{Thm-transfer-Ham} are given by~\eqref{hleft} and~\eqref{hright}, leading to the boundary generators~\eqref{left-boundary-generator} and~\eqref{right-boundary-generator}. 
\begin{proposition}\label{proposition-Boundary-elemnents}
 The matrix elements of the boundary Hamiltonian~\eqref{right-left-BD-hamiltonian} are
 \begin{equation}
     \langle n|\mathcal{H}_{\text{left},\text{right}} |m\rangle =\begin{cases}
					\psi(|m|+2s)-\psi(2s)+\log(1-|\beta_{l,r}|)& \text{ if }\,n_a=m_a \\
					-\varphi_s(m-n,m)& \text{ if }\,n_a\leq m_a \text{ and }\, |n|<|m|\\
					-\Gamma (|n|-|m|)\prod_{a=1}^{\M}\frac{\beta_{l,r,a}^{n_{a}-m_a}}{(n_{a}-m_a)!}
					&\text{ if }\, n_{a}\geq m_{a}\text{ and }\,|n|>|m|\;				\end{cases}
	 \end{equation}
\end{proposition}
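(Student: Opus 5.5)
The plan is to compute $D_{\rho}\bigl(\psi(2s+\OFD\OF)-\psi(2s)\bigr)D_{\rho}^{-1}$ in the polynomial picture and then read off matrix elements; the left and right cases are identical up to $\rho_l\leftrightarrow\rho_r$.

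\textbf{Step 1: integral representation.} First I would use the integral representation \eqref{eq:intrp} of the digamma function:
\begin{equation*}
\psi(2s+\OFD\OF)-\psi(2s)=\int_0^1 \frac{t^{2s-1}}{1-t}\bigl(1-t^{\OFD\OF}\bigr)\,dt .
\end{equation*}
The boundary Hamiltonian then becomes
\begin{equation*}
\int_0^1 \frac{t^{2s-1}}{1-t}\bigl(\ID-D_\rho\, t^{\OFD\OF}D_\rho^{-1}\bigr)\,dt ,
\end{equation*}
so it suffices to know the conjugated dilation $D_\rho t^{\OFD\OF}D_\rho^{-1}$.

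\textbf{Step 2: Möbius transformation in the $\rep$ picture.} I would compute this conjugated dilation in the polynomial representation $\rep$ of Section~\ref{sec:intrepp}, transferring back to the Fock basis via the intertwining relation \eqref{intertw}. There $J_{0,a}=\partial_{\hidden_a}$, so $\exp(-\sum_a J_{0,a})$ is the translation $f(\hidden)\mapsto f(\hidden-\mathbf 1)$. Further, $J_{a,0}=\hidden_a(\mu_1-\mu_2-\bm\Hidden\bm\partial_\hidden)$ generates the projective flow: by exponentiating the first-order operator along its characteristics,
\begin{equation*}
\exp\Bigl(-\sum_a\rho_aJ_{a,0}\Bigr)f(\hidden)=(1+\rho\cdot\hidden)^{-2s}\,f\!\Bigl(\frac{\hidden}{1+\rho\cdot\hidden}\Bigr),
\end{equation*}
with the sign to be checked. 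Finally, $t^{\OFD\OF}$ is the dilation $f(\hidden)\mapsto f(t\hidden)$. Composing the three maps and their inverses gives $D_\rho t^{\OFD\OF}D_\rho^{-1}$ as a multiplier times a composition with a fractional-linear map of $\hidden$. Combined with the $t$-integral, this should give an integral operator of the same type as the hidden-parameter boundary generator \eqref{redR2sl2int4w}.

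\textbf{Step 3: matrix elements.} Applying the resulting operator to the monomials $g_m(\hidden)=\prod_a\hidden_a^{m_a}$ and expanding in monomials, I would read off $\langle n|\cdot|m\rangle$ by Remark~\ref{InterRe}, and then separate three contributions.
\begin{itemize}
\item Terms with $n\le m$ componentwise come from the ``out'' part. They should reproduce $\varphi_s(m-n,m)$ through the Beta-integral form \eqref{rates2}.
\item Terms with $n\ge m$ come from expanding the factor $(1+\rho\cdot\hidden)^{-2s}$ and the denominators. Here the multinomial expansion of $\bigl(\sum_a\beta_a\hidden_a\bigr)^{|k|}$, integrated against $dt/(1-t)$, produces $\Gamma(|k|)\prod_a\beta_a^{k_a}/k_a!$. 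This is where $\beta_a=\rho_a/(1+|\rho|)$ emerges, which inverts \eqref{rhos}.
\item Mixed terms, with some $n_a>m_a$ and some $n_b<m_b$, should cancel.
\end{itemize}
The diagonal entry is fixed last. Conjugation preserves the zero-column-sum structure up to the constant created by the in-term, so using \eqref{digamma-ratesBulk} and \eqref{log-rateBoundary}, the diagonal equals $\psi(|m|+2s)-\psi(2s)+\log(1-|\beta|)$.

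\textbf{Main obstacle.} The hardest step is the explicit composition in Step 2. Showing that the mixed terms cancel, and that the in-part depends only on the increment $n-m$ and not on $m$, requires carefully tracking the multiplier $(1+\rho\cdot\hidden)^{-2s}$ against the translation. I would first verify this for $M=1$, where it should reduce to the known monospecies computation of \cite{Frassek:2019vjt}, and then extend it using the observation that everything depends on $\hidden$ only through $\rho\cdot\hidden$ and homogeneous scalings. That observation is what makes the multinomial structure factor out.
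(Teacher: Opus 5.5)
There is a genuine gap: Step~3 carries the entire content of the proposition, and there it is only asserted (``should reproduce'', ``should cancel''). The composed operator you build in Step~2 does not deliver it. First fix the sign of the flow: $J_{a,0}=-\theta_a(2s+E)$ with $E=\sum_b\theta_b\partial_{\theta_b}$, so $e^{-\sum_a\rho_aJ_{a,0}}f(\theta)=(1-\rho\cdot\theta)^{-2s}f\bigl(\theta/(1-\rho\cdot\theta)\bigr)$. Composing the three maps then gives
\[
D_\rho\,t^{E}D_\rho^{-1}f(\theta)=\Bigl(\frac{1-|\beta|}{\Delta}\Bigr)^{2s}f\Bigl(\frac{(1-t)(1-\beta\cdot\theta)\mathbf{1}+t(1-|\beta|)\,\theta}{\Delta}\Bigr),\qquad \Delta=1-t|\beta|-(1-t)\,\beta\cdot\theta .
\]
This is not of the type \eqref{redR2sl2int4w}. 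That integral operator is, up to sign, the $\tau$-image of the boundary term of $\mathfrak{D}^{-1}\mathcal{H}^t\mathfrak{D}$ in \eqref{similarity-hidden}, not of $\mathcal{H}_{\text{left}}$. Pointwise in $t$, the Taylor coefficients of this kernel depend on $\beta$ and have no triangular structure. Already for $M=1$, $m=1$, $n=0$ the integrand is $t^{2s-1}(1-\beta)^{2s}(1-\beta t)^{-2s-1}$, not the Beta density of \eqref{rates2}; it gives $1/(2s)$ only after integration. Likewise, mixed entries such as $m=(1,0)$, $n=(0,1)$ are nonzero functions of $t$ whose integrals vanish. So you cannot read off the $\varphi_s$ entries, the vanishing of mixed entries, or the $m$-independence of the injection amplitudes in the way you describe. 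Your observation that everything depends only on $\rho\cdot\theta$ and scalings holds before the translation by $\mathbf 1$, and fails after it.

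The missing idea is not to compose. Follow the factorisation $D_\rho=e^{-\sum_aJ_{0,a}}e^{-\sum_a\rho_aJ_{a,0}}$ and conjugate in two steps, as the paper does in Lemmas~\ref{lem:b1} and~\ref{lem:b2}. The inner step preserves rays:
$e^{-\sum_a\rho_aJ_{a,0}}\,t^{E}\,e^{\sum_a\rho_aJ_{a,0}}\theta^m=t^{|m|}\bigl(1-(1-t)\rho\cdot\theta\bigr)^{-2s-|m|}\theta^m$.
Set $c=2s+|m|$ and $r=\rho\cdot\theta$. The identity $\int_0^1 t^{c-1}(1-r+rt)^{-c-1}dt=1/(c(1-r))$ gives $\int_0^1\frac{t^{c-1}}{1-t}\bigl[1-(1-(1-t)r)^{-c}\bigr]dt=\log(1-r)$. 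Hence the inner conjugate is $\psi(2s+E)-\psi(2s)+\log(1-\rho\cdot\theta)$, which is Lemma~\ref{lem:b1}.

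The outer step is just the shift $\theta\mapsto\theta-\mathbf 1$. It turns $\psi(2s+E)-\psi(2s)$ into $f\mapsto\int_0^1\frac{t^{2s-1}}{1-t}\bigl[f(\theta)-f(t\theta+(1-t)\mathbf 1)\bigr]dt$, which gives exactly the $\varphi_s$ entries and $h_s(|m|)$ on the diagonal. It turns the multiplier into $\log(1+|\rho|-\rho\cdot\theta)=-\log(1-|\beta|)+\log(1-\beta\cdot\theta)$. Its Taylor coefficients are $-\Gamma(|k|)\prod_a\beta_a^{k_a}/k_a!$, independent of $m$. Mixed entries vanish because neither piece has any. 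In your polynomial picture this reduces the paper's Lauricella computation in Lemma~\ref{lem:b2} to a one-line shift, which is a real simplification.

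This computation gives the diagonal as $\psi(|m|+2s)-\psi(2s)-\log(1-|\beta|)$, as in \eqref{hleft}. The ``$+$'' in the statement is a sign slip: Lemma~\ref{lem:b2} drops the minus sign it suppresses in front of its second term. Your column-sum argument, done correctly, gives the same result. Put $\langle\mathbf 1|=\sum_n\langle n|$; then $\langle\mathbf 1|D_\rho=\langle 0|$ and $\langle 0|\bigl(\psi(2s+\OFD\OF)-\psi(2s)\bigr)=0$, so column sums vanish exactly, not ``up to a constant''. Combined with \eqref{digamma-ratesBulk} and \eqref{log-rateBoundary}, this forces $-\log(1-|\beta|)$ on the diagonal, not $+\log(1-|\beta|)$.
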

\begin{proof}
The proof immediately follows from the subsequent Lemma~\ref{lem:b1} and Lemma~\ref{lem:b2}
\end{proof}
For the sake of notation, we will not write in the following Lemmas the subscripts $r$ and $l$, but simply $\beta$ and $\rho$. 
\begin{lemma}\label{lem:b1}
The operator
\begin{equation}\label{O}
\begin{split}
\mathcal{O}=
 e^{-\sum_{a=1}^{\M}\rho_{a}J_{a,0}}\left(\psi(\OAD\OA+2s)-\psi(2s)\right )e^{\sum_{a=1}^{\M}\rho_{a}J_{a,0}}
 \end{split}
\end{equation}
has matrix elements
\begin{equation}\label{eq:compb1}
\begin{split}
\langle n|\mathcal{O}|m\rangle=\langle n_1,\ldots,n_\M|\mathcal{O}|m_1,\ldots,m_\M\rangle=\begin{cases}
                                \psi(|m|+2s)-\psi(2s)& n_a=m_a \text{ for all } a\\
						0& n_a<m_a \text{ for any } a\\
                              -\Gamma (|n|-|m|)\prod_{a=1}^{\M}\frac{\rho_{a}^{n_{a}-m_a}}{(n_{a}-m_a)!}
						& \text{else}
                              \end{cases}
 \end{split}
\end{equation}
\end{lemma}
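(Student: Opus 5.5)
The plan is to compute the matrix elements directly. First I write $\rho\cdot J:=\sum_{a}\rho_aJ_{a,0}$ explicitly as a raising operator in the Fock representation, then conjugate the function $\psi(\OFD\OF+2s)$ through the exponentials. For the conjugation I use the integral representation \eqref{eq:intrp},
\begin{equation*}
\psi(\OFD\OF+2s)-\psi(2s)=\int_0^1\frac{t^{2s-1}}{1-t}\left(1-t^{\OFD\OF}\right)dt,
\end{equation*}
which reduces the problem to the matrix elements of $e^{-\rho\cdot J}\,t^{\OFD\OF}\,e^{\rho\cdot J}$. The identity contributes only to the diagonal.

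\textbf{Step 1: the raising operator.} From \eqref{E-matrix}, the $(0,a)$ entry gives $J_{a,0}=\sum_b\ofd_b\big((1-2s)\delta_{ba}-\of_b\ofd_a\big)$. Normal ordering this, I expect $J_{a,0}=-\ofd_a(\OFD\OF+2s)$. Hence $\rho\cdot J=-Y$ with $Y:=(\rho\cdot\OFD)(\OFD\OF+2s)$, and on basis vectors
\begin{equation*}
Y^k|m\rangle=(|m|+2s)_k\,(\rho\cdot\OFD)^k|m\rangle .
\end{equation*}
Since every term raises the particle numbers, the element $\langle n|\mathcal{O}|m\rangle$ vanishes unless $n_a\ge m_a$ for all $a$. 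This gives the middle case of \eqref{eq:compb1}.

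\textbf{Step 2: expand the conjugated operator.} Set $d=n-m$ and $j=|d|$. I expand $e^{Y}t^{\OFD\OF}e^{-Y}|m\rangle$ as a double sum over $k$ (from $e^{-Y}$) and $l$ (from $e^{Y}$), keeping only $k+l=j$. The factor $t^{\OFD\OF}$ contributes $t^{|m|+k}$. The two Pochhammer factors combine as $(|m|+2s)_k(|m|+k+2s)_{j-k}=(|m|+2s)_j$. Expanding $(\rho\cdot\OFD)^j$ by the multinomial theorem and using $\ofd^{d}|m\rangle=|n\rangle$ from \eqref{basis-vectors-Fock}, the binomial sum in $k$ collapses. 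I expect
\begin{equation*}
\langle n|e^{-\rho\cdot J}t^{\OFD\OF}e^{\rho\cdot J}|m\rangle=\pm\,t^{|m|}(1-t)^j\,(|m|+2s)_j\prod_a\frac{\rho_a^{d_a}}{d_a!},
\end{equation*}
with the overall sign fixed by the sign convention of $\rho\cdot J$.

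\textbf{Step 3: integrate.} For $j=0$ the conjugated operator is just $t^{|m|}$, and the integral returns $\psi(|m|+2s)-\psi(2s)$ on the diagonal. For $j>0$ only the $-t^{\OFD\OF}$ part contributes, giving the Beta integral $\int_0^1 t^{2s+|m|-1}(1-t)^{j-1}dt=\Gamma(2s+|m|)\Gamma(j)/\Gamma(2s+|m|+j)$. Multiplying by $(|m|+2s)_j$ cancels the Gamma ratio and leaves $\Gamma(j)=\Gamma(|n|-|m|)$. This produces $-\Gamma(|n|-|m|)\prod_a\rho_a^{n_a-m_a}/(n_a-m_a)!$.

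\textbf{Main obstacle.} The delicate point is sign bookkeeping. I must check the normal-ordering sign in $J_{a,0}$, the direction of the conjugation ($e^{-\rho\cdot J}$ on the left), and the sign of $(t-1)^j$ versus $(1-t)^j$. Together these must produce exactly the stated minus sign for every $j$, not $(-1)^{j}$. I would verify this against the $\M=1$, $j=1$ case by hand. Convergence is not an issue, since each matrix element is a finite sum.
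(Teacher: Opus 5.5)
Your proposal is correct and is essentially the paper's computation. Both arguments expand the two exponentials in the Fock basis (the paper does this via the matrix elements \eqref{mp1right}) and use the integral representation \eqref{eq:intrp} of the digamma function. Both then collapse the alternating binomial sum to a factor $t^{|m|}(1-t)^{|n|-|m|}$ and finish with the Beta integral. The only difference is bookkeeping: the paper sums over multi-index intermediate states species by species, whereas you grade by total particle number and apply the multinomial theorem at the end.

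The sign you flag comes out as needed. With $\sum_a\rho_aJ_{a,0}=-Y$, the factor $(-1)^k$ comes from $e^{-Y}$, which acts first, and it pairs with $t^k$. The conjugated operator therefore has matrix element $+t^{|m|}(1-t)^j(|m|+2s)_j\prod_a\rho_a^{d_a}/d_a!$, so the stated minus sign holds for every $j\ge 1$.
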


\begin{proof}
Using the relation~\eqref{mp1right} we can write the components as
\begin{equation}\label{eq:compb11}
\begin{split}
&\langle n|
 \mathcal{O}|m\rangle
  =\frac{\Gamma(|n|+2s)}{\Gamma(|m|+2s)}\left[\prod_{a=1}^{\M}(-1)^{m_a}\rho_{a}^{n_{a}-m_a}\right]\sum_{k=0}^\infty \left[\prod_{a=1}^{\M}\frac{(-1)^{k_{a}}}{(n_{a}-k_a)!(k_{a}-m_a)! }\right]\left(\psi(|k|+2s)-\psi(2s)\right )
 \end{split}
\end{equation}
We see that $\langle n|
 \mathcal{O}|m\rangle$ vanishes if $n_a<m_a$  for any $a$ and exclude this case in the following.
We now use the integral representation~\eqref{eq:intrp} such that
\begin{equation}
 \begin{split}
  &\sum_{k=0}^\infty \left[\prod_{a=1}^{\M}\frac{(-1)^{k_{a}}}{(n_{a}-k_a)!(k_{a}-m_a)! }\right]\left(\psi(|k|+2s)-\psi(2s)\right )\\&=\int_0^1 dt\frac{t^{2s-1}}{1-t}\sum_{k=0}^\infty \left[\prod_{a=1}^{\M}\frac{(-1)^{k_{a}}}{(n_{a}-k_a)!(k_{a}-m_a)! }-\prod_{a=1}^{\M}\frac{(-t)^{k_{a}}}{(n_{a}-k_a)!(k_{a}-m_a)! }\right]
 \end{split}
\end{equation}
Then using that
\begin{equation}
\sum_{i=0}^\infty \frac{(-1)^{i}}{(n-i)!(i-m)! }=(-1)^m\delta _{n,m}
\end{equation}
and
\begin{equation}
\sum_{i=0}^\infty \frac{(-t)^{i}}{(n-i)!(i-m)! }=\frac{(-t)^m (1-t)^{n-m}}{(n-m)!}
\end{equation}
we find
\begin{equation}
 \begin{split}
  &\sum_{k=0}^\infty \left[\prod_{a=1}^{\M}\frac{(-1)^{k_{a}}}{(n_{a}-k_a)!(k_{a}-m_a)! }\right]\left(\psi(|k|+2s)-\psi(2s)\right )\\&=(-1)^{|m|}\int_0^1 dt\frac{t^{2s-1}}{1-t} \left[\prod_{a=1}^{\M}  \delta _{n_a,m_a} -\prod_{a=1}^{\M}\frac{t^{m_a} (1-t)^{n_a-m_a}}{(n_a-m_a)!}\right]\\
  &=(-1)^{|m|}\int_0^1 dt\frac{t^{2s-1}}{1-t}\left[(1-t^{|m|}) \prod_{a=1}^{\M}\delta _{n_a,m_a}-\left[\prod_{a=1}^{\M}\frac{t^{m_a} (1-t)^{n_a-m_a}}{(n_a-m_a)!}\right]\left(1-\prod_{a=1}^{\M}\delta _{n_a,m_a}\right)\right]
 \end{split}
\end{equation}
Finally using the integral representations in~\eqref{eq:intrp} we find
\begin{equation}
 \int_0^1 dt\frac{t^{2s-1}(1-t^{|m|})}{1-t} =\psi(|m|+2s)-\psi(2s)
\end{equation}
and
\begin{equation}
 \int_0^1 dt\frac{t^{2s-1}}{1-t}  \left[\prod_{a=1}^{\M}t^{m_a} (1-t)^{n_a-m_a}\right]=\frac{ \Gamma (|n|-|m|) \Gamma (|m|+2 s)}{\Gamma (|n|+2 s)}
\end{equation}
so we obtain~\eqref{eq:compb1} from~\eqref{eq:compb11}.
\end{proof}

\begin{lemma}\label{lem:b2}
The operator
\begin{equation}
\begin{split}\label{O'}
\mathcal{O}'=
 \exp(-\sum_{a=1}^{\M}J_{0,a})\mathcal{O}\,\exp(\sum_{a=1}^{\M}J_{0,a})
 \end{split}
\end{equation}
has matrix elements

\begin{equation}
			\begin{split}
				\langle n|\mathcal{O}'|m\rangle=&\langle n_1,\ldots,n_\M|\mathcal{O}'|m_1,\ldots,m_\M\rangle\\=&\begin{cases}
					\psi(|m|+2s)-\psi(2s)+\log(1-|\beta|)& \text{ if }n_a=m_a \text{ for all } a\\
					-\varphi_s(m-n,m)& \text{ if }n_a\leq m_a \text{ and } |n|<|m|\\
					-\Gamma (|n|-|m|)\prod_{a=1}^{\M}\frac{\beta_{a}^{n_{a}-m_a}}{(n_{a}-m_a)!}
					&\text{ if } n_{a}\geq m_{a}\text{ and }|n|>|m|\;				\end{cases}
			\end{split}
		\end{equation}
\end{lemma}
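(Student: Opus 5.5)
The plan is to compute the matrix elements of $\mathcal{O}'$ directly from those of $\mathcal{O}$ (Lemma~\ref{lem:b1}), by conjugating with the extra operator $\exp(\sum_a J_{0,a})$. In the Fock representation \eqref{E-matrix} we have $J_{0,a}=\of_a$, so the conjugation is by $e^{\pm\sum_a \of_a}$. These are lower/upper unitriangular in the occupation basis, with explicit entries
\begin{equation*}
\langle k|e^{\sum_a\of_a}|m\rangle=\prod_{a=1}^{\M}\binom{m_a}{k_a},\qquad
\langle n|e^{-\sum_a\of_a}|j\rangle=\prod_{a=1}^{\M}(-1)^{j_a-n_a}\binom{j_a}{n_a}.
\end{equation*}
Hence
\begin{equation*}
\langle n|\mathcal{O}'|m\rangle=\sum_{j,k}\langle n|e^{-\sum_a\of_a}|j\rangle\,\langle j|\mathcal{O}|k\rangle\,\langle k|e^{\sum_a\of_a}|m\rangle ,
\end{equation*}
with $k\le m$ componentwise, $j\ge n$, and, by Lemma~\ref{lem:b1}, $j\ge k$. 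I would also fix from the outset the identification $\rho_a=\beta_a/(1-|\beta|)$, equivalently $\beta_a=\rho_a/(1+|\rho|)$, from \eqref{rhos}; it must come out of the computation.

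Rather than resumming this triple sum term by term, I would first rewrite $\mathcal{O}$ using the integral representation \eqref{eq:intrp}, exactly as in the proof of Lemma~\ref{lem:b1}:
\begin{equation*}
\mathcal{O}'=\int_0^1 dt\,\frac{t^{2s-1}}{1-t}\Big(1-W\,t^{\OFD\OF}\,W^{-1}\Big),\qquad W=e^{-\sum_a\of_a}e^{-\sum_a\rho_aJ_{a,0}} .
\end{equation*}
The problem then reduces to computing matrix elements of $W t^{\OFD\OF}W^{-1}$. For this I would pass to the polynomial representation \eqref{rep-infiniteDim-Poly}, using the intertwiner \eqref{intertw} so that Fock matrix elements become Taylor coefficients. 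There $\of_a=\partial_{z_a}$, $t^{\OFD\OF}$ acts as $f(z)\mapsto f(tz)$, $e^{\pm\sum_a\of_a}$ acts as a translation $z\mapsto z\pm\mathbf{1}$, and $J_{a,0}=z_a(-2s-z\cdot\partial_z)$ is a first-order operator. Its exponential therefore acts as a projective (Möbius) substitution $z\mapsto z/(1+\rho\cdot z)$ times a multiplier $(1+\rho\cdot z)^{-2s}$, up to signs that I would verify.

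Composing these substitutions gives $W t^{\OFD\OF}W^{-1}$ as an explicit weighted composition operator. Its kernel on monomials $z^m\mapsto\sum_n c_{nm}(t)z^n$ should split into two pieces:
\begin{itemize}
\item a part that lowers each $m_a$, which reproduces the binomial/Beta-type weights $\prod_a\binom{m_a}{n_a}t^{|n|}(1-t)^{|m|-|n|}$ (times a power of $1-|\beta|$);
\item a part that raises particle numbers, of the form $\prod_a(\beta_at')^{n_a-m_a}/(n_a-m_a)!$ after an appropriate change of variable $t\to t'$.
\end{itemize}
Inserting this into the $t$-integral and using the Beta integral \eqref{eq:intrp} should yield $-\varphi_s(m-n,m)$ in the case $n\le m$, $|n|<|m|$, and $-\Gamma(|n|-|m|)\prod_a\beta_a^{n_a-m_a}/(n_a-m_a)!$ in the case $n\ge m$, $|n|>|m|$. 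All remaining entries, with mixed signs of $n_a-m_a$, should vanish.

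The diagonal entry is the step I expect to be the main obstacle. The $\psi(|m|+2s)-\psi(2s)$ piece is inherited unchanged from $\mathcal{O}$. The extra constant $\log(1-|\beta|)$ must arise from resumming the contributions of all intermediate states $j>k$ created by the raising part. I would obtain it most cleanly by a consistency argument: the rows of $\mathcal{O}'$ must satisfy a stochasticity-type sum rule inherited from $\mathcal{O}$, whose columns sum to $0$ after the conjugation. Combining this sum rule with the identity \eqref{log-rateBoundary} and \eqref{digamma-ratesBulk} fixes the diagonal to be $\psi(|m|+2s)-\psi(2s)+\log(1-|\beta|)$. As a fallback, I would compute this entry directly from the constant term of the composed multiplier, which should produce $\int_0^1 dt\,t^{2s-1}(1-t)^{-1}\big[1-(\ldots)\big]$ and reduce to $-\log(1-|\beta|)$ through the identity $\int_0^1 \frac{dt}{t}\big(1-(1+|\rho|t)^{-1}\cdots\big)$-type elementary integrals.
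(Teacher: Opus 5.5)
Your route (generating functions in the polynomial representation, then Beta integrals) differs from the paper's. The paper expands \eqref{to-find-exact} in Fock components and evaluates each case with Beta-sum identities and a Pfaff transformation of Lauricella's $F_D$. Your route can be completed, but two steps fail as written.

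\textbf{The diagonal entry.} The sum rule you need is $\sum_n\langle n|\mathcal{O}'|m\rangle=0$. It holds because $\sum_n\langle n|e^{-\sum_a\of_a}=\langle 0|$, $\langle 0|J_{a,0}=0$ and $\langle 0|\big(\psi(\OFD\OF+2s)-\psi(2s)\big)=0$ (interchange sums for small $|\rho|$, then continue analytically). Combined with the off-diagonal entries, \eqref{digamma-ratesBulk} and \eqref{log-rateBoundary}, it forces $\langle m|\mathcal{O}'|m\rangle=\psi(|m|+2s)-\psi(2s)-\log(1-|\beta|)$. This is the opposite sign to the value you say it fixes. A direct check:
\[
\langle 0|\mathcal{O}'|0\rangle=-\sum_{j\neq 0}(-1)^{|j|}\Gamma(|j|)\prod_a\frac{\rho_a^{j_a}}{j_a!}=\log(1+|\rho|)=-\log(1-|\beta|)>0 .
\]
Your fallback, which lands on $-\log(1-|\beta|)$, is the consistent one, and it contradicts your stated target.

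So the printed $+\log(1-|\beta|)$ is a sign slip. It contradicts \eqref{hleft} and the vanishing column sums of $\mathcal{H}_{\text{left}}=\mathcal{O}'$. In the paper's proof it arises because the case $m=n$ evaluates the second term of \eqref{to-find-exact} without its leading minus sign and never restores it. You should prove the corrected entry; no argument can deliver the printed one.

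\textbf{The kernel split.} Carrying out your composition, with $e^{-\sum_a\rho_aJ_{a,0}}$ acting as $f(z)\mapsto(1-\rho\cdot z)^{-2s}f\big(z/(1-\rho\cdot z)\big)$, gives exactly
\[
\tau\big(Wt^{\OFD\OF}W^{-1}\big)f(z)=\big(t+(1-t)r\big)^{-2s}f\big(\alpha z+(1-\alpha)\mathbf{1}\big),\qquad r=\frac{1-\beta\cdot z}{1-|\beta|},\quad \alpha=\frac{t}{t+(1-t)r}.
\]
At fixed $t$ its Taylor coefficients do not separate into a lowering and a raising part. For example, the coefficient of $z_2$ in the image of $z_1$ is proportional to $(1-t)\beta_2\big[(2s+1)(1-t)-(1-t|\beta|)\big]\neq 0$. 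Mixed entries vanish only after integration.

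What makes the computation work is the $z$-dependent substitution $t\mapsto\alpha$, under which $\frac{t^{2s-1}}{1-t}\big(t+(1-t)r\big)^{-2s}dt=\frac{\alpha^{2s-1}}{1-\alpha}d\alpha$. Since $1/(1-t)$ is not integrable at $t=1$, you must cut off at $T$ and $\alpha(T)$. Because $(1-\alpha(T))/(1-T)\to r$, this produces an endpoint term $\log r$. The result is
\[
\tau(\mathcal{O}')f(z)=\int_0^1\frac{\alpha^{2s-1}}{1-\alpha}\big[f(z)-f\big(\alpha z+(1-\alpha)\mathbf{1}\big)\big]d\alpha+\log\frac{1-\beta\cdot z}{1-|\beta|}\,f(z).
\]
So the raising part is this multiplication operator, not a $t$-integrand. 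Expanding at $f=z^m$:
\begin{itemize}
\item the first term gives $h_s(|m|)$ on the diagonal and $-\varphi_s(m-n,m)$ for $n\le m$, via \eqref{rates2};
\item the second term gives $-\log(1-|\beta|)$ on the diagonal and $-\Gamma(|n|-|m|)\prod_a\beta_a^{n_a-m_a}/(n_a-m_a)!$ for $n\ge m$.
\end{itemize}
The vanishing of mixed entries is then manifest, so the sum-rule detour is unnecessary.

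Compared with the paper, this handles all cases at once. It also makes the analytic continuation in $\beta$ transparent: the raw Fock sums over intermediate states converge only for $|\rho|<1$.
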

\begin{proof}
Using,~\eqref{exponential-annhilation} we have
\begin{equation}\label{to-find-exact}
\begin{split}
\langle n|\mathcal{O}'|m\rangle
 &=\sum_{m'}(-1)^{|m'|-|n|}\left[\psi(|m'|+2s)-\psi(2s)\right]\prod_{a=1}^{\M}\binom{m'_{a}}{m'_{a}-n_a}\binom{m_{a}}{m_{a}-m'_a} \\
 &-\sum_{n'_a\leq m'_a:|n'|< |m'|} (-1)^{|m'|-|n|}   \Gamma (|m'|-|n'|)    \prod_{a=1}^{\M}\frac{\rho_{a}^{m'_{a}-n'_a}}{(m'_{a}-n'_a)!}\binom{m'_{a}}{m'_{a}-n_a} \binom{m_{a}}{m_{a}-n'_a}
 \end{split}
\end{equation}
	We consider separately the first and second terms of the equation above.
\paragraph{First term.}

	The first term is evaluated following Lemma~\ref{lem:b1}. We get 		\begin{equation}
			\begin{split}
				& \sum_{m'}(-1)^{|m'|-|n|}\left[\psi(|m'|+2s)-\psi(2s)\right]\prod_{a=1}^{\M}\binom{m'_{a}}{m'_{a}-n_a}\binom{m_{a}}{m_{a}-m'_a} \\
				& =(-1)^{|n|}\left[\prod_{a=1}^{\M}\frac{m_a!}{n_a!}\right]\sum_{m'}\left[\psi(|m'|+2s)-\psi(2s)\right]\prod_{a=1}^{\M}\frac{(-1)^{m_a'}}{(m'_a-n_a)!(m_a-m_a')!}\\
				&=\begin{cases}
					\psi(|m|+2s)-\psi(2s)& n_a=m_a \text{ for all } a\\
					0& m_a<n_a \text{ for any } a\\
					-\frac{ \Gamma (|m|-|n|) \Gamma (|n|+2 s)}{\Gamma (|m|+2 s)}\prod_{a=1}^\M\binom{m_a}{n_a}
					& \text{else}
				\end{cases}
							\end{split}
		\end{equation}

				\paragraph{Second term.}
		For the sake of notation, we do not write here the minus sing in front. We have to analyze $3$ possible cases, depending on the order relations between the vectors $m$ and $n$. 
		\begin{itemize}
			\item \textbf{Case $n\geq m$ and $|n|>|m|$.}
				We re-write this term as
		\begin{equation}
			\begin{split}
						&\sum_{n'_a\leq m'_a:|n'|< |m'|} (-1)^{|m'|-|n|}   \Gamma (|m'|-|n'|)    \prod_{a=1}^{\M}\frac{\rho_{a}^{m'_{a}-n'_a}}{(m'_{a}-n'_a)!}\binom{m'_{a}}{n_a} \binom{m_{a}}{n'_a}
			\\=&
			\sum_{n'=n}^{\infty}
						\sum_{m'=0}^{m}
						(-1)^{|n'|-|n|}\Gamma(|n'|-|m'|)\left(\prod_{a=1}^{\M}\frac{\binom{m_{a}}{m'_{a}}\binom{n'_{a}}{n_{a}}\rho_{a}^{n'_{a}-m'_{a}}}{\Gamma(n'_{a}-m'_{a}+1)}\right)\mathbbm{1}_{\{n'\geq m'\}}\mathbbm{1}_{\{|n'|> |m'|\}}\,.
			\end{split}
		\end{equation}
	Here, in the last line we have changed the name of the summation indices by $n'\to m'$ and $m'\to n'$ for notation convenience.  Moreover, for the sake of notation we will often write $\sum_{n=n'}^{\infty}$ as shortcut that means $\sum_{n'_{1}=n_{1}}^{\infty}\cdots\sum_{n'_{\M}=n_{\M}}^{\infty}$. 

	We recall (see for instance see equation (104) of \cite{lauricella1893sulle}) that the Lauricella's function of type $D$ with $N$ variables   
	\begin{align}\label{definition-lauricella}
		F_D^{(N)}\bigl(a; b_j; c_j; x_j\bigr)
		= \sum_{m=0}^{\infty}
			\frac{(a)_{m_1 + \cdots + m_N} \prod_{j=1}^N (b_j)_{m_j}}
		{\prod_{j=1}^N (c_j)_{m_j} \; m_j!}
		\prod_{j=1}^N x_j^{m_j}\,
	\end{align}
	satisfies the property 
	\begin{align}\label{pfaff-Lau}
		F_D^{(N)}(a; b_j; c_j; x_j)
		= \left(1 - |x|\right)^{-a} 
		F_D^{(N)} \left(
		a;  c_ j- b_j; c_j;\,
		\frac{x_j}{|x| - 1}	\right)
	\end{align}
	
	By performing a the change of summation indices $n'_{a}=k_{a}+n_{a}$, we obtain 
	\begin{equation}\label{F-intermediate}
		\begin{split}
		&F_{D}^{(\M)}(|n|-|m'|;n_{j}+1;n_{j}-m'_{j}+1;-\rho_{j})
		\\=&
		\sum_{k=0}^{\infty}
				\frac{(|n|-|m'|)_{|k|}\prod_{a=1}^{\M}(n_{a}+1)_{k_{a}}}{\prod_{a=1}^{\M}(n_{a}-m'_{a}+1)_{k_{a}}k_{a}!}(-1)^{|k|}\prod_{a=1}^{\M}\rho_{a}^{k_{a}}
							\\=&	\sum_{n'=n}^{\infty}
				\frac{\Gamma(|n'|-|m'|)}{\Gamma(|n|-|m'|)}(-1)^{|n'|-|n|}\prod_{a=1}^{\M}\frac{\Gamma(n_{a}-m'_{a}+1)}{\Gamma(n'_{a}-m'_{a}+1)}\binom{n'_{a}}{n_{a}}\rho_{a}^{n'_{a}-n_{a}}
	\end{split}
		\end{equation}

	By using first~\eqref{F-intermediate} and then~\eqref{pfaff-Lau}, we obtain 
	\begin{equation}
		\begin{split}
		&
		\sum_{n'=n}^{\infty}\sum_{m'=0}^{m}
				(-1)^{|n'|-|n|}\Gamma(|n'|-|m'|)
			\left(\prod_{a=1}^{\M}\frac{\binom{m_{a}}{m'_{a}}\binom{n'_{a}}{n_{a}}\rho_{a}^{n'_{a}-m'_{a}}}{\Gamma(n'_{a}-m'_{a}+1)}\right)\mathbbm{1}_{\{n'\geq m'\}}\mathbbm{1}_{\{|n'|> |m'|\}}
						\\=&
	\sum_{m'=0}^{m}
	F_{D}^{(\M)}(|n|-|m'|;-m'_{j};n_{j}-m'_{j}+1;\beta_{j})
	\\&\qquad \cdot
		\Gamma(|n|-|m'|)(1-|\beta|)^{|n|-|m'|}\prod_{a=1}^{\M}	\frac{\binom{m_{a}}{m'_{a}}\rho_{a}^{n_{a}-m'_{a}}}{\Gamma(n_{a}-m'_{a}+1)}\,.\end{split}
	\end{equation}
	By using~\eqref{definition-lauricella}, by performing the change of summation indices $k_{a}=m'_{a}-n'_{a}$ for all $a\in \{1,\ldots,\M\}$ and by recalling that
	\begin{equation}
		(-\alpha)_{k}
			=\frac{\alpha!}{(\alpha-k)!}(-1)^{k}\,,
	\end{equation} we have that 
	\begin{equation}
		\begin{split}
		&	\sum_{m'=0}^{m}
				F_{D}^{(\M)}(|n|-|m'|;-m'_{j};n_{j}-m'_{j}+1;\beta_{j})
		\\&\qquad \cdot
		\Gamma(|n|-|m'|)(1-|\beta|)^{|n|-|m'|}\prod_{a=1}^{\M}	\frac{\binom{m_{a}}{m'_{a}}\rho_{a}^{n_{a}-m'_{a}}}{\Gamma(n_{a}-m'_{a}+1)}																\\=&
		\left(\prod_{a=1}^{\M}m_{a}!\right)
		\sum_{k=0}^{m}
			\sum_{m'=k}^{m}	(-1)^{|m'|-|k|}\Gamma(|n|-|k|)
			\\&\qquad \qquad\qquad\qquad \quad  \cdot\prod_{a=1}^{\M}\frac{\beta_{a}^{n_{a}-k_{a}}}{(m_{a}-m'_{a})!k_{a}!(m'_{a}-k_{a})!\Gamma(n_{a}-k_{a}+1)}
		\\=&
		\sum_{k=0}^{m}
				\Gamma(|n|-|k|)\left(\prod_{a=1}^{\M}\frac{\binom{m_{a}}{k_{a}}\beta_{a}^{n_{a}-k_{a}}}{\Gamma(n_{a}-k_{a}+1)}\right)
			\left(\sum_{m'=k}^{m}
				(-1)^{|m'|-|k|}\prod_{a=1}^{\M}\binom{m_{a}-k_{a}}{m'_{a}-k_{a}}\right)\,.
\end{split}
	\end{equation}
	 By the binomial theorem, we obtain that
\begin{equation}
	\sum_{m'_{a}=k_{a}}^{m_{a}}(-1)^{m'_{a}-k_{a}}\binom{m_{a}-k_{a}}{m'_{a}-k_{a}}=\sum_{m'_{a}=0}^{m_{a}-k_{a}}(-1)^{m'_{a}}\binom{m_{a}-k_{a}}{m'_{a}}=\mathbbm{1}_{\{m_{a}=k_{a}\}}\,.
\end{equation}
Therefore, we finally obtain that 
\begin{equation}
	\begin{split}
				&\sum_{k=0}^{m}
				\Gamma(|n|-|k|)\left(\prod_{a=1}^{\M}\frac{\binom{m_{a}}{k_{a}}\beta_{a}^{n_{a}-k_{a}}}{\Gamma(n_{a}-k_{a}+1)}\right)
				\left(\sum_{m'=k}^{m}
				(-1)^{|m'|-|k|}\prod_{a=1}^{\M}\binom{m_{a}-k_{a}}{m'_{a}-k_{a}}\right)\,.
		\\=&
		\Gamma(|n|-|m|)\left(\prod_{a=1}^{\M}\frac{\beta_{a}^{n_{a}-m_{a}}}{\Gamma(n_{a}-m_{a}+1)}\right)\,.
	\end{split}
\end{equation} 

\item \textbf{Case $m=n$.}
  We recall two identities that we use in the following:
 \begin{align}\label{Beta-single-variable-property}
 	\sum_{r=0}^{\ell}(-1)^{r}\binom{\ell}{r}\frac{\Gamma(r+b)}{\Gamma(r+c)}=\frac{B(\ell+c-b,b)}{\Gamma(c-b)}
 \end{align}
 and, for $p,k\in \mathbb{N}_{0}$,
 \begin{align}\label{beta-ration-property}
 	\lim_{q\to p}\frac{B(p-q,k+1)}{\Gamma(-q)}=(-1)^{p}p!\,.
 \end{align}
 We write the second term of~\eqref{to-find-exact} after having performed a change of summation variables $ k_{a}=n'_{a}-m'_{a}$ for all $a\in \{1,\ldots,\M\}$:
\begin{equation}
	\begin{split}
		&\sum_{n'=m}^{\infty}
				\sum_{m'=0}^{m}(-1)^{|n'|-|n|}\Gamma(|n'|-|m'|)
				\\&\qquad\qquad\cdot \left(\prod_{a=1}^{\M}\binom{m_{a}}{m'_{a}}\binom{n'_{a}}{n_{a}}\frac{\rho_{a}^{n'_{a}-m'_{a}}}{\Gamma(n'_{a}-m'_{a}+1)}\right)\mathbbm{1}_{\{n'\geq m'\}}\mathbbm{1}_{\{|n'|>|m'|\}}\mathbbm{1}_{\{m=n\}}
												\\=&
		\sum_{k=0}^{\infty}
			\mathbbm{1}_{\{|k|>0\}}(-1)^{|k|-|n|}\Gamma(|k|)\left(\prod_{a=1}^{\M}\frac{\rho_{a}^{k_{a}}}{\Gamma(k_{a}+1)n_{a}!}\right)
		\\&\qquad \cdot \left(\sum_{m'=0}^{m}
			\prod_{a=1}^{\M}(-1)^{m'_{a}}\binom{m_{a}}{m'_{a}}\frac{\Gamma(k_{a}+m'_{a}+1)}{\Gamma(k_{a}+m'_{a}-n_{a}+1)}\right)\mathbbm{1}_{\{m=n\}}
		\end{split}
	\end{equation}
	By using equality~\eqref{Beta-single-variable-property} (with $b=k_{a}+1$, $c=k_a-q_a+1$) and using~\eqref{beta-ration-property}, we have that 
	\begin{equation}
	\begin{split}
		&\sum_{k=0}^{\infty}
				\mathbbm{1}_{\{|k|>0\}}(-1)^{|k|-|n|}\Gamma(|k|)\left(\prod_{a=1}^{\M}\frac{\rho_{a}^{k_{a}}}{\Gamma(k_{a}+1)n_{a}!}\right)
		\\&\qquad \cdot \left(\sum_{m'=0}^{m}
				\prod_{a=1}^{\M}(-1)^{m'_{a}}\binom{m_{a}}{m'_{a}}\frac{\Gamma(k_{a}+m'_{a}+1)}{\Gamma(k_{a}+m'_{a}-n_{a}+1)}\right)\mathbbm{1}_{\{m=n\}}
	\\=&
	\sum_{k=0}^{\infty}
			\mathbbm{1}_{\{|k|>0\}}(-1)^{|k|-|n|}\Gamma(|k|)\left(\prod_{a=1}^{\M}\frac{\rho_{a}^{k_{a}}}{\Gamma(k_{a}+1)n_{a}!}\right) \left(\prod_{a=1}^{\M}\lim_{q_{a}\to n_{a}}\frac{B(m_{a}-q_{a},k_{a}+1)}{\Gamma(-q_{a})}\right)
		\\=&
		\sum_{k=0}^{\infty}\mathbbm{1}_{\{|k|>0\}}(-1)^{|k|}\Gamma(|k|)\left(\prod_{a=1}^{\M}\frac{\rho_{a}^{k_{a}}}{\Gamma(k_{a}+1)}\right)
		\\=&
		-\log\left(1+\frac{|\beta|}{1-|\beta|}\right)
		\\=&
		\log(1-|\beta|)\,.
	\end{split}
\end{equation}		
The identity can then be extended to $|\beta|<1$ by analytic continuation.
\item \textbf{Remaining cases.}
The remaining cases for the second addend of~\eqref{to-find-exact} are given by the situations in which 
\begin{equation}
	\exists\, a\in \{1,\ldots \M\}\;:\; n_{a}<m_{a}\,.
\end{equation}
Without loss of generality, assume that $n_1<m_1$. After performing the change of summation variable $k_1=n'_1-m'_1$, we obtain
\begin{equation}
	\begin{split}
		&
		\sum_{n'=n}^{\infty}
		\sum_{m'=0}^{m}
		(-1)^{|n'|-|n|}
		\Gamma(|n'|-|m'|)
		\left(
		\prod_{a=1}^{\M}
		\binom{m_a}{m'_a}
		\binom{n'_a}{n_a}
		\frac{\rho_a^{n'_a-m'_a}}
		{\Gamma(n'_a-m'_a+1)}
		\right)
		\mathbbm{1}_{\{n'\geq m'\}}
		\mathbbm{1}_{\{|n'|>|m'|\}}
		\\
		&=
		\sum_{m'_2=0}^{m_2}
		\sum_{n'_2=n_2}^{\infty}
		\cdots
		\sum_{m'_{\M}=0}^{m_{\M}}
		\sum_{n'_{\M}=n_{\M}}^{\infty}
		\left(
		\prod_{a=2}^{\M}
		\binom{m_a}{m'_a}
		\binom{n'_a}{n_a}
		\frac{
			\rho_a^{n'_a-m'_a}
			(-1)^{n'_a-n_a}
		}{
			\Gamma(n'_a-m'_a+1)
		}
		\right)
		\\
		&\;\cdot
		\left(
		\prod_{a=2}^{\M}
		\mathbbm{1}_{\{n'_a\geq m'_a\}}
		\right)
					\sum_{k_1=0}^{\infty}
		\mathbbm{1}_{\left\{
			k_1+\sum_{a=2}^{\M}(n'_a-m'_a)>0
			\right\}}
		(-1)^{k_1-n_1}
		\frac{
			\Gamma\left(
			k_1+\sum_{a=2}^{\M}(n'_a-m'_a)
			\right)
		}{
			\Gamma(k_1+1)
		}
		\rho_1^{k_1}
		\frac{1}{n_1!}
		\\
		&\;\cdot
		\sum_{m'_1=0}^{m_1}
		(-1)^{m'_1}
		\binom{m_1}{m'_1}
		\frac{\Gamma(k_1+m'_1+1)}
		{\Gamma(k_1+m'_1-n_1+1)}.
	\end{split}
\end{equation}
For every $k_1\in\mathbb{N}_0$, using
\eqref{Beta-single-variable-property} with
$\ell=m_1$, $b=k_1+1$, and $c=k_1-q+1$, we obtain
\begin{equation}
	\begin{aligned}
		&
		\sum_{m'_1=0}^{m_1}
		(-1)^{m'_1}
		\binom{m_1}{m'_1}
		\frac{\Gamma(k_1+m'_1+1)}
		{\Gamma(k_1+m'_1-n_1+1)}
		=
		\lim_{q\to n_1}
		\frac{B(m_1-q,k_1+1)}
		{\Gamma(-q)}
		=
		0,
	\end{aligned}
\end{equation}
since $m_1>n_1$. Hence, the whole expression vanishes.
\end{itemize}
\end{proof}

\begin{remark}
 As seen in this section, the derivation of the Hamiltonian for the Fock representation $\pi$ uses that the K-matrix becomes a projector on the Fock vaccum, cf.~\eqref{eq:projK}. Given the intertwining relations \eqref{intertw} and \eqref{intertw-dual}, one may expect that a similar mechanism may exists for the integral representations $\rep$ and $\dualrep$ which would allow to prove integrability of the dual models presented in Section~\ref{sec:hidden} and Section~\ref{sec:heat} directly. To this aim one would need to take the trace over an appropriate basis of polynomials. For our purpose it is sufficient to focus on the transfer matrix construction of the Hamiltonian in representation $\pi$. The dual models with boundaries will be constructed using duality functions in Section~\ref{section-duality}.
\end{remark}

\subsection{Isospectral models with simpler boundary conditions}
Following \cite{exact-harmonic}, we can define two Hamiltonians that are isospectral to the stochastic Hamiltonian of Theorem \eqref{Thm-transfer-Ham}. One of the Hamiltonians $\mathcal{H}'$ is directly related to an absorbing dual process that is discussed in Section~\ref{sec:abs}. The other Hamiltonian $\mathcal{H}''$ has been used in \cite{exact-harmonic} to obtain the steady state for the singlespecies case. The local transformations that relate the Hamiltoians are
\begin{equation}\label{similarities}
    S'=\exp{\left(-\sum_{a=1}^{\M}J_{0,a}\right)}\,,\qquad S''_{\alpha}=\exp{\left(-\sum_{a=1}^{\M}\alpha_{a}J_{a,0}\right)}
\end{equation}
such that
\begin{equation}
  D_\alpha=S'S''_\alpha\,,
 \end{equation}
 cf.~\eqref{dalpha}.
\begin{proposition}\label{Proposition-Hp}
The Hamiltonian of Theorem \eqref{Thm-transfer-Ham} is isospectral to the Hamiltonian
\begin{equation}\label{prop-Ham}
  \eH' = \mathcal{H}_\text{left}' + \sum_{\ell=1}^{\N-1}\mathcal{H}_{\ell,\ell+1} + \mathcal{H}_\text{right}'\,,
\end{equation}
where the bulk Hamiltonian density $\mathcal{H}_{\ell,\ell+1}$ is given in Corollary~\ref{Hamalg}. The boundary terms are triangular and read
\begin{equation}\label{right-left-BD-hamiltonian-I}
 \mathcal{H}'_{\text{left},\text{right}} = S''_{\rho_{l,r}}\left(\psi(2s+\OFD\OF)-\psi(2s)\right)(S''_{\rho_{l,r}})^{-1}\,,
\end{equation}
\end{proposition}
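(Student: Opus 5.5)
The idea is to conjugate the whole Hamiltonian $\eH$ of Theorem~\ref{Thm-transfer-Ham} by the global similarity transformation
\begin{equation*}
\mathbb{S}'=\prod_{\ell=1}^{\N} S'_\ell\,,\qquad S'_\ell=\exp\Big(-\sum_{a=1}^{\M}J^{[\ell]}_{0,a}\Big)\,,
\end{equation*}
and to show that $(\mathbb{S}')^{-1}\eH\,\mathbb{S}'=\eH'$. Isospectrality is then immediate, as long as $\mathbb{S}'$ is invertible on the relevant space. Note that $J_{0,a}=\OA_a$ in the Fock representation. Hence $S'_\ell$ is lower-unitriangular with respect to the total particle number: it only lowers $|m|$ at site $\ell$. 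Its inverse $\exp(\sum_a J^{[\ell]}_{0,a})$ is given by a finite sum on each basis vector, so $\mathbb{S}'$ is a well-defined bijection of the span of basis vectors.

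\textbf{Step 1: the bulk terms.} By Lemma~\ref{lemmadd} (the linear-order expansion of the RLL relation), the R-operator and hence the bulk density $\eH_{\ell,\ell+1}$ commute with the coproduct $J^{[\ell]}_{AB}+J^{[\ell+1]}_{AB}$. The same argument with the Lax matrix replaced by $\abrOSCR$ works as well; equivalently, one can use the $gl(M+1)$ invariance of Corollary~\ref{Hamalg}, which I would verify directly. Because $[J^{[\ell]}_{0,a},J^{[\ell+1]}_{0,b}]=0$, we have
\begin{equation*}
S'_\ell S'_{\ell+1}=\exp\Big(-\sum_a\big(J^{[\ell]}_{0,a}+J^{[\ell+1]}_{0,a}\big)\Big)\,,
\end{equation*}
and this commutes with $\eH_{\ell,\ell+1}$. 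All other factors of $\mathbb{S}'$ act on different sites. Therefore each bulk term is left unchanged by the conjugation.

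\textbf{Step 2: the boundary terms.} Only the factor at the boundary site acts nontrivially on a boundary term. Using $D_\rho=S'S''_\rho$, we get
\begin{equation*}
(S'_1)^{-1}\eH_{\text{left}}S'_1=(S')^{-1}S'S''_{\rho_l}\big(\psi(2s+\OFD\OF)-\psi(2s)\big)(S''_{\rho_l})^{-1}(S')^{-1}S'=\eH'_{\text{left}}\,,
\end{equation*}
and the same holds at site $\N$ for the right boundary. One can read off the triangularity of $\eH'_{\text{left},\text{right}}$ from Lemma~\ref{lem:b1}. Up to the sign convention of $S''$, the operator $\mathcal{O}$ there is exactly this conjugate, and its matrix elements vanish whenever $n_a<m_a$ for some $a$.

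\textbf{Main obstacle.} The main difficulty is analytic rather than algebraic. The $S''_\rho$ factors contain $\exp(-\rho J_{a,0})$ with $J_{a,0}$ raising the particle number. Their infinite sums, together with the convergence of the products $D_\rho\cdots D_\rho^{-1}$, have to be controlled on the Fock space. I would avoid this issue by working with matrix elements. Conjugation by $S'$ only involves finite sums, because it lowers particle numbers. The needed entries of $\eH'$ are given by Lemma~\ref{lem:b1}, and those of $\eH$ by Lemma~\ref{lem:b2}, where the latter was already obtained as the $S'$-conjugate of the former. It then suffices to check that the relation $\eH=\mathbb{S}'\eH'(\mathbb{S}')^{-1}$ holds entrywise, which is a finite computation.
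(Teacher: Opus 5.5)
Your Steps 1 and 2 are exactly the paper's proof. You conjugate by $S'\otimes\cdots\otimes S'$, use the $gl(\M+1)$ invariance of the bulk density so the bulk terms are unchanged, and use $D_\rho=S'S''_\rho$ on the two boundary sites. Your identification of $\mathcal{H}'_{\text{left}}$ with the operator $\mathcal{O}$ of Lemma~\ref{lem:b1} is also correct, and it holds exactly with no sign ambiguity: $S''_\rho=\exp(-\sum_a\rho_aJ_{a,0})$ is precisely the left factor in $\mathcal{O}$. The paper's proof stops at this algebraic level.

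What is wrong is your ``main obstacle'' paragraph: the entrywise check is not a finite computation. It is true that $(S')^{\pm1}$ send each basis ket to a finite combination. But the entry
\[
\langle n|S'\mathcal{H}'_{\text{left}}(S')^{-1}|m\rangle=\sum_{q\le m}\;\sum_{p\ge n,\,p\ge q}\langle n|S'|p\rangle\,\langle p|\mathcal{H}'_{\text{left}}|q\rangle\,\langle q|(S')^{-1}|m\rangle
\]
contains a sum over all $p\ge n$. This is because the injection part of $\mathcal{H}'_{\text{left}}$ has infinitely many nonzero entries in every column.

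Already for $\M=1$ and $n=m=0$ this entry is $\sum_{p\ge1}(-1)^p\bigl(-\rho^p/p\bigr)=\log(1+\rho)$. That series converges only for $\rho\le1$, i.e.\ $\beta\le\tfrac12$. In general, in the direction you wrote, the series converge absolutely for $|\rho|<1$ (i.e.\ $|\beta|<\tfrac12$). Beyond that the identity holds only by analytic continuation. This is exactly what the proof of Lemma~\ref{lem:b2} does: it uses the Lauricella $F_D$ representation, the Pfaff transformation, and an explicit appeal to analytic continuation. So citing Lemmas~\ref{lem:b1} and~\ref{lem:b2} does give the entrywise identity, but only because Lemma~\ref{lem:b2} already handles these infinite sums. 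It does not avoid the analytic issue.

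If you want an entrywise statement valid for all admissible parameters, check the opposite direction, $(\mathbb{S}')^{-1}\mathcal{H}\,\mathbb{S}'=\mathcal{H}'$. There $\langle n|(S')^{-1}|p\rangle=\prod_a\binom{p_a}{n_a}\ge0$. The injection entries of $\mathcal{H}_{\text{left}}$ out of a fixed $|q\rangle$ at total distance $K$ sum to $|\beta_l|^K/K$. Hence these (still infinite) series converge absolutely for $|\beta_l|<1$. For example, for $\M=1$ the $(0,0)$ entry is $-\log(1-\beta)-\sum_{p\ge1}\beta^p/p=0$, matching $\langle 0|\mathcal{O}|0\rangle$.

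Even then, an entrywise similarity by the unbounded operator $\mathbb{S}'$ does not by itself identify spectra on any particular function space. The paper does not address this either. Your Steps 1--2 therefore already match its level of rigor, and the last paragraph should be dropped or corrected along these lines.
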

\begin{proof}
    Given that the Hamiltonian density is invariant under the transformation 
    \begin{equation}
      (  (S')^{-1}\otimes (S')^{-1})\mathcal{H}_{\ell,\ell+1}(S'\otimes S')=0
    \end{equation}
    the similarity transformation only acts non-trivially on the boundaries and we immediately find 
    \begin{equation}
      (  (S')^{-1}\otimes\ldots\otimes (S')^{-1})\mathcal{H}(S'\otimes\ldots\otimes S')=\mathcal{H}'\,.
    \end{equation}

\end{proof}

\begin{proposition}\label{Proposition-Hs}
The Hamiltonian of Proposition \eqref{Proposition-Hp} is isospectral to the Hamiltonian
\begin{equation}
  \eH'' = \mathcal{H}_\text{left}'' + \sum_{\ell=1}^{\N-1}\mathcal{H}_{\ell,\ell+1} + \mathcal{H}_\text{right}''\,,
\end{equation}
where the bulk Hamiltonian density $\mathcal{H}_{\ell,\ell+1}$ is given in Corollary~\ref{Hamalg} and the boundary terms are diagonal on the right and triangular on the left. They read
\begin{equation}\label{right-left-BD-hamiltonian-II}
 \mathcal{H}''_{\text{left}} = S''_{\rho_{l}-\rho_{r}}\left(\psi(2s+\OFD\OF)-\psi(2s)\right)( S''_{\rho_{l}-\rho_{r}})^{-1}\,,\quad  \mathcal{H}''_{\text{right}} = \psi(2s+\OFD\OF)-\psi(2s)\,,
\end{equation}
\end{proposition}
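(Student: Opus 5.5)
We prove Proposition~\ref{Proposition-Hs} by a further global similarity transformation of $\eH'$, in the same way as Proposition~\ref{Proposition-Hp}. The natural candidate is
\[
S=S''_{\rho_r}\otimes\cdots\otimes S''_{\rho_r},\qquad \eH''=S^{-1}\eH' S .
\]
Isospectrality is then automatic; what must be checked is that the bulk is untouched and that the boundary terms take the stated form.

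\textbf{Step 1: bulk invariance.} By Lemma~\ref{lemmadd} (more precisely its proof), $[e_{AB}+J_{AB},L(y)]=0$. Together with the Yang--Baxter relation~\eqref{RLL}, this implies that $\abrOSCR$, and hence its logarithmic derivative $\eH_{\ell,\ell+1}$, commutes with the coproduct $J^{[1]}_{AB}+J^{[2]}_{AB}$. This $gl(M+1)$ invariance of the density is the same fact already used in Proposition~\ref{Proposition-Hp} and in the second term of Theorem~\ref{Thm-transfer-Ham}. Applying it with $A=a,\,B=0$ and exponentiating, $\exp(-\sum_a\alpha_a(J^{[1]}_{a,0}+J^{[2]}_{a,0}))$ commutes with $\eH_{\ell,\ell+1}$. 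Since the $J_{a,0}$ for different $a$ commute (they are the $\OAD$-block generators, which commute among themselves in~\eqref{GL-oscRealization}), this exponential equals $S''_\alpha\otimes S''_\alpha$. Hence $S^{-1}\eH_{\ell,\ell+1}S=\eH_{\ell,\ell+1}$.

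\textbf{Step 2: boundaries.} The same commutativity of the $J_{a,0}$ gives the group law $S''_\alpha S''_\beta=S''_{\alpha+\beta}$ and $(S''_\alpha)^{-1}=S''_{-\alpha}$. For the right boundary,
\[
(S''_{\rho_r})^{-1}S''_{\rho_r}\,(\psi(2s+\OFD\OF)-\psi(2s))\,(S''_{\rho_r})^{-1}S''_{\rho_r}=\psi(2s+\OFD\OF)-\psi(2s),
\]
which is diagonal. For the left boundary we obtain $S''_{\rho_l-\rho_r}(\psi(2s+\OFD\OF)-\psi(2s))(S''_{\rho_l-\rho_r})^{-1}$. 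It is triangular because $J_{a,0}=\OAD$-type operators raise particle number, as in Lemma~\ref{lem:b1}.

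\textbf{Main obstacle.} The delicate point is that $S''$ acts on the infinite-dimensional Fock space, so the formal conjugation must be justified. Since $J_{a,0}$ strictly increases $|m|$, each matrix element of $S''_\alpha$ and of its inverse is a finite sum. Therefore products of these matrices with the (lower-triangular-in-$|m|$) Hamiltonian are well defined entrywise, and the manipulations are legitimate at the level of matrix elements. Isospectrality is understood in the same formal sense as in Proposition~\ref{Proposition-Hp}.
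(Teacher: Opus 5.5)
Your proposal is correct and is the paper's proof: conjugate $\eH'$ by $(S''_{\rho_r})^{\otimes N}$, use the $gl(M+1)$ invariance of the bulk density so that only the boundary terms change, and combine the exponentials using that the $J_{a,0}$ commute. Your finiteness remarks on the Fock space are extra care the paper does not provide. For the invariance step, the most direct justification is to compare the coefficients of $y$ in \eqref{RLL} at fixed $x-y$; this gives $[\abrOSCR(x-y),J^{[1]}_{AB}+J^{[2]}_{AB}]=0$ without appealing to the invariance of $L$ itself.
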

\begin{proof}
   As before we use the invariance of the Hamiltonian density and find 
    \begin{equation}
      (  (S''_{\rho_{r}})^{-1}\otimes\ldots\otimes (S''_{\rho_{r}})^{-1})\mathcal{H}'(S''_{\rho_{r}}\otimes\ldots\otimes S''_{\rho_{r}})=\mathcal{H}''\,.
    \end{equation}    
\end{proof}
The transformation above allows to obtain the steady state in closed form, see \cite{Frassek:2019imp,Frassek:2020omo}. The problem of finding  the non-equilibrium steady state  $|\Psi\rangle$  of the boundary-driven Hamiltonian $\eH$, with
\begin{equation}
    \eH|\Psi\rangle =0\,,
\end{equation}
is reduced to finding the ground state $|\Psi''\rangle$ with $\eH''|\Psi''\rangle =0$ 
By Propositions~\ref{Proposition-Hp} and~\ref{Proposition-Hs}, the steady state and the ground states $|\Psi\rangle$ are related through $|\Psi'\rangle$ via
\begin{equation}
    |\Psi'\rangle= \big((S')^{-1}\otimes \cdots \otimes (S')^{-1}\big)|\Psi\rangle,
    \qquad
    \text{where}\qquad{
    |\Psi''\rangle =\big((S''_{\rho_r})^{-1}\otimes \cdots \otimes (S''_{\rho_r})^{-1}\big)|\Psi'\rangle \,.}
\end{equation}
As in the monospecies case, the  problem of determining $|\Psi''\rangle$ is significantly simpler than the original one as the eigenstate $|\Psi''\rangle$ factorises. We will address the study of the steady state in a follow up article \cite{toap}.

\section{Stochastic dualities}\label{section-duality}
In this section, we establish several duality relations for the models introduced in Section~\ref{sec:model}. Exploiting the $gl(\M+1)$ symmetry of the bulk Hamiltonian, we first show that the multispecies harmonic process is dual to an absorbing process, which exhibits the same bulk dynamics but where the boundary reservoirs are replaced by absorbing sites. This duality provides a natural framework for characterizing the non-equilibrium steady state of the boundary-driven model.
More precisely, it allows to express
the moments of the stationary measure in terms of the absorption probabilities of dual particles.
Furthermore, as a consequence of the intertwining relation~\eqref{intertw} between discrete and continuous representations of $gl(\M+1)$, we prove that the multispecies harmonic process with absorbing boundaries is dual both to the hidden-parameter model and to the open-boundary heat conduction model, each equipped with its corresponding duality function.

Before proving our duality result, we briefly recall the notion of duality for Markov processes~\cite{dualityBook}. We denote by $\mathbb{E}_{x}[\cdot]$ the expectation with respect to the law of the Markov process $(X(t))_{t\geq 0}$ initialized at configuration $x$, and by $\mathbb{E}_{y}[\cdot]$ the expectation with respect to the law of the Markov process $(Y(t))_{t\geq 0}$ initialized at configuration $y$.
\begin{definition}
Let $(X(t))_{t\geq 0}$ be a Markov process on a state space $V_{X}$, with generator $\mathcal{L}_X$, and let $(Y(t))_{t\geq 0}$ be a Markov process on a state space $V_{Y}$, with generator ${\mathcal{L}_Y}$. We say that these two processes are in duality relation with duality function $D:V_{X}\times V_{Y}\to \mathbb{R}$ if, for all $t\geq 0$,
\begin{equation}\label{duality-defintion-via-expectation}
	\mathbb{E}_{x}[D(X(t),y)]=\mathbb{E}_{y}[D(x,Y(t))]\qquad \forall x\in V_{X},\;y\in V_{Y}\,.
\end{equation}
\end{definition}
In our set-up, this definition can equivalently be formulated as a relation between the corresponding Markov generators (see \cite{dualityBook}):
\begin{equation}\label{general-duality-relation-generators}
	\left(\mathcal{L}_{X}D(\cdot,y)\right)(x)=\left(\mathcal{L}_{Y}D(x,\cdot)\right)(y)\,.
\end{equation}
In particular, when the state spaces of the processes are discrete, the generators can be represented as matrices. Therefore, by using relation~\eqref{gen-ham1}, equation~\eqref{general-duality-relation-generators} can be rephrased as
\begin{equation}\label{hamiltonian-duality-general}
	H_{X}^{t}D=D{H}_{Y}\,,
\end{equation}
where $D$ denotes the duality matrix.

\subsection{Absorbing duality}\label{sec:abs}
For the rest of this section, we fix the representation labels as in equation~\eqref{repsLabelFixing}, namely
\begin{equation}
   \mu_1 = -s + \frac{1}{2}, \qquad \mu_2 = s + \frac{1}{2}.
\end{equation}

Depending on convenience, the algebraic expressions for Hamiltonian operators may be written either in terms of the oscillator representation $\pi(\oa_a)$, $\pi(\oad_a)$ in Fock space, or using the $gl(\M+1)$ generators $(\pi(J_{A,B}))_{A,B=0,\ldots,\M}$. For brevity, we denote $\pi(J_{A,B})$ simply as $\eJ_{A,B}$, see~\eqref{E-matrix}.  
To establish the duality, the following identities will be useful:
\begin{equation}\label{good-relation-forFockGLN}
    -\eJ_{0,0} + \mu_2 = -\eJ_{0,0} + \frac{1}{2} + s = 2s + \OAD \OA, 
    \quad 
    \eJ_{a,0} (\eJ_{0,0} - \mu_2)^{-1} = \eJ_{a,0} \left(\eJ_{0,0} - \frac{1}{2} - s\right)^{-1} = \oad_a.
\end{equation}

In order to prove the duality for the bulk process, the reversible measure is required. 
We recall that the probability mass function of the $\text{Negative-Multinomial}(2s,p_{0},\beta_{1},\ldots,\beta_{\M})$ distribution is given by
\begin{equation}\label{mass-negBin}
    \nu^{\,\beta}(m_{1},\ldots,m_{\M})
    =
    \Gamma(|m|+2s)\frac{p_{0}^{2s}}{\Gamma(2s)}
    \prod_{a=1}^{\M}\frac{\beta_{a}^{m_{a}}}{m_{a}!}\,.
\end{equation}
\begin{lemma}\label{lemma-reversibleMeasure-M-harmonic-boundaries}
	The boundary driven multispecies harmonic process with generator~\eqref{boundary-driven-generator} admits a reversible product measure given by
	\begin{equation}\label{reversibleMeasure-M-harmonic-boundary}
		\nu^{\rm{rev},\beta}=\bigotimes_{\ell\in V}\nu_{\ell}^{\,\beta}\qquad \nu^{\,\beta}_{\ell}\sim \textnormal{Negative-Multinomial}(2s,p_{0},\beta_{1},\ldots,\beta_{\M})\;:\;p_{0}+|\beta|=1
	\end{equation} 
	if and only if $\beta_{l,a}=\beta_{r,a}=\beta_{a}$ for all $a\in\{1,\ldots,\M\}$. 
\end{lemma}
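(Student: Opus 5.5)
We verify detailed balance term by term for the product measure $\nu^{\rm rev,\beta}$ with one-site marginal \eqref{mass-negBin}. Since the generator \eqref{boundary-driven-generator} is a sum of bond and boundary terms, and each piece acts on at most two sites, it suffices to show two things. First, each bulk term $\eL_{\ell,\ell+1}$ is reversible with respect to $\nu^\beta_\ell\otimes\nu^\beta_{\ell+1}$ for every $\beta$. Second, each boundary generator $\eL_{\rm left}$ (resp. $\eL_{\rm right}$) is reversible with respect to $\nu^\beta_1$ (resp. $\nu^\beta_N$) exactly when $\beta=\beta_l$ (resp. $\beta=\beta_r$). Sufficiency of $\beta_l=\beta_r=\beta$ then follows at once. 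For necessity, note that a reversible product measure forces detailed balance on each individual transition $\bm m\to\bm m+k\bm\delta_1$. These transitions are produced only by $\eL_{\rm left}$ when $N\ge2$ (for $N=1$ one compares the two injection mechanisms directly). Hence the marginal at site $1$ must be Negative-Multinomial with parameter $\beta_l$, and likewise the marginal at site $N$ must have parameter $\beta_r$. Since the measure is of the stated form with a common $\beta$, this gives $\beta_l=\beta_r$.

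\emph{Bulk step.} Consider the transition $(m,n)\to(m-k,n+k)$ at rate $\varphi_s(k,m)$ and its reverse $(m-k,n+k)\to(m,n)$ at rate $\varphi_s(k,n+k)$. Detailed balance requires
\begin{equation*}
\nu^\beta(m)\nu^\beta(n)\varphi_s(k,m)=\nu^\beta(m-k)\nu^\beta(n+k)\varphi_s(k,n+k).
\end{equation*}
We use the Beta form \eqref{rates2} and write $\binom{m_a}{k_a}=m_a!/(k_a!(m_a-k_a)!)$. The left side then becomes a constant times
\begin{equation*}
\Gamma(|m|+2s)\Gamma(|n|+2s)\,\frac{\Gamma(|k|)\Gamma(2s+|m|-|k|)}{\Gamma(2s+|m|)}\prod_a\frac{\beta_a^{m_a+n_a}}{n_a!\,k_a!\,(m_a-k_a)!}.
\end{equation*}
The $\Gamma(2s+|m|)$ cancels, leaving an expression symmetric under the exchange $(m,n)\leftrightarrow(n+k,m-k)$. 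This exchange is exactly what produces the right side, so the bulk step is routine.

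\emph{Boundary step.} This is the main point, and it is where the $\beta$-dependence enters. Take $m'=m+k$ with $|k|>0$. Injection $m\to m'$ has rate $\Gamma(|k|)\prod_a\beta_{l,a}^{k_a}/k_a!$, and removal $m'\to m$ has rate $\varphi_s(k,m')$. Detailed balance requires
\begin{equation*}
\Gamma(|m|+2s)\prod_a\frac{\beta_a^{m_a}}{m_a!}\,\Gamma(|k|)\prod_a\frac{\beta_{l,a}^{k_a}}{k_a!}=\Gamma(|m'|+2s)\prod_a\frac{\beta_a^{m'_a}}{m'_a!}\,\frac{\Gamma(|k|)\Gamma(2s+|m|)}{\Gamma(2s+|m'|)}\prod_a\frac{m'_a!}{k_a!\,m_a!}.
\end{equation*}
After cancellation this reduces to $\prod_a\beta_{l,a}^{k_a}=\prod_a\beta_a^{k_a}$ for all $k$. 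That identity holds if and only if $\beta=\beta_l$.

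The same computation at site $N$ gives $\beta=\beta_r$. The diagonal terms are irrelevant to detailed balance. The condition $|\beta|<1$ guarantees normalisability with $p_0=1-|\beta|$.
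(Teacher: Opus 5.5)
Your proposal is correct and follows the same route as the paper. Both arguments check detailed balance for the product Negative-Multinomial measure bond by bond, where it holds for every $\beta$, and then at the two boundaries, where balancing injection against removal forces $\beta_{a}=\beta_{l,a}$ and $\beta_{a}=\beta_{r,a}$. Your explicit cancellations, the observation that for $N\ge 2$ each boundary transition is produced by a single generator piece, and the aside on $N=1$ fill in details the paper leaves as ``one shows''.
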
 
\begin{proof}We impose the detailed balance condition on the bulk and on the boundary.
In the bulk, for any bond $(\ell, \ell+1)$ and for any transition from $(m^{\ell},m^{\ell+1})$ to $({m}^{\ell}-k,m^{\ell+1}+k)$, with $0\le k\le m^{\ell}$,
the detailed balance reads
\begin{align}
	&\nu_{\ell}^{\,\beta}(m^{\ell})\,\nu_{\ell+1}^{\,\beta}(m^{\ell+1})\,\varphi_{s}(k,m^{\ell})=	\nu_{\ell}^{\,\beta}(m^{\ell}-k)\,\nu_{\ell+1}^{\,\beta}(m^{\ell+1}+k)\,\varphi_{s}(k,m+k)\,.
\end{align}
By replacing the transition rates~\eqref{rates} and the probability mass function~\eqref{mass-negBin} one shows that the relation above holds. Analogously, one obtains the verification of the detailed balance for any transition  $(m^{\ell},m^{\ell+1})$ to $({m}^{\ell}+k,m^{\ell+1}-k)$, with $0\le k\le m^{\ell+1}$.

On the left boundary, for any transition that injects $k$ particles at site $1$, the detailed balance reads
\begin{align}
	\nu_{1}^{\,\beta}(m^{1})\Gamma(|k|)\prod_{a=1}^{\M}\frac{\beta_{l,a}^{k_{a}}}{k_{a}!}=\nu_{1}^{\,\beta}(m^{1}+k)\varphi(k,m^{1}+k)\,.
\end{align}
Substituting the transition rate~\eqref{rates} and the mass function~\eqref{mass-negBin} one shows that the above relation holds if and only if $\beta_{a}=\beta_{l,a}$ for all $a\in \{1,\ldots, \M\}$. Analogous computations allow to verify the detailed balance for the removal of $k\in\{1,\ldots, m^{1}\}$ particles from site $1$. Finally, a similar argument on the right boundary leads to $\beta_{r,a}=\beta_{a}$ for all $a\in \{1,\ldots,\M\}$, concluding the proof.
\end{proof}

\begin{theorem}[Absorbing duality]\label{Thm-duality}
	The boundary driven multispecies harmonic process $(\bm{m}(t))_{t\geq 0}$ with generator $\mathcal{L}$ in~\eqref{boundary-driven-generator} is dual to the multispecies harmonic process with absorbing boundaries $(\bm{\xi}(t))_{t\geq 0}$ with generator $\widetilde{\mathcal{L}}$ in~\eqref{abs-Harmonic-Generator}, with duality function $\mathfrak{F}:\Omega\times \widetilde{\Omega}\to \mathbb{R}$ given by
	\begin{equation}\label{duality-funciton}
		\mathfrak{F}(\bm{m},\bm{\xi})=\left(\prod_{a=1}^{\M}\rho_{l,a}^{\xi_{a}^{0}}\right)\left(\prod_{\ell=1}^{\N}\frac{\Gamma(2s)}{\Gamma(2s+|\xi^{\ell}|)}\prod_{a=1}^{\M}\frac{\Gamma(m_{a}^{\ell}+1)}{\Gamma(m_{a}^{\ell}-\xi_{a}^{\ell}+1)}\right)\left(\prod_{a=1}^{\M}\rho_{r,a}^{\xi_{a}^{N+1}}\right)\,.
	\end{equation}
    Above, $\rho_{l,r}$ are the boundary densities introduced in \eqref{rhos}. 
\end{theorem}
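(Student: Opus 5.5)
We verify the generator form of duality \eqref{general-duality-relation-generators}, $\mathcal{L}\mathfrak{F}(\cdot,\bm{\xi})(\bm{m})=\widetilde{\mathcal{L}}\mathfrak{F}(\bm{m},\cdot)(\bm{\xi})$. Both generators are sums of local terms, and $\mathfrak{F}$ is a product of single-site factors
\[
d(m,\xi)=\frac{\Gamma(2s)}{\Gamma(2s+|\xi|)}\prod_{a=1}^{\M}\frac{m_a!}{(m_a-\xi_a)!}
\]
times the boundary factors $\rho_{l}^{\xi^0}\rho_r^{\xi^{N+1}}$. So it suffices to prove three local identities: a bulk identity on each bond $(\ell,\ell+1)$, a left identity involving sites $0,1$, and a right identity involving sites $N,N+1$.

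\textbf{Bulk identity.} We need $\mathcal{L}_{\ell,\ell+1}$ acting on $m$ to equal $\mathcal{L}_{\ell,\ell+1}$ acting on $\xi$, for the duality function $d\otimes d$. We use the algebraic picture. By Corollary~\ref{Hamalg} and Lemma~\ref{lemmadd}, $\mathcal{H}_{\ell,\ell+1}$ commutes with the coproducts $J^{[\ell]}_{AB}+J^{[\ell+1]}_{AB}$. Consider the operator with matrix elements $d(m,\xi)$. Up to the diagonal weight $\Gamma(2s)/\Gamma(2s+|\xi|)$, it is $\exp(\sum_a \oa_a)$ composed with the reversible-measure diagonal matrix. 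This is the standard ``cheap duality'' construction: a symmetry applied to the trivial duality coming from reversibility.

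Concretely, Lemma~\ref{lemma-reversibleMeasure-M-harmonic-boundaries} gives the bulk detailed balance, which yields the self-duality $\mathcal{H}^t\Delta=\Delta\mathcal{H}$. Here $\Delta$ is diagonal, with entries $1/\nu^{\beta}$ up to the factors $\beta^{m}$, which cancel by particle conservation; the relevant weight is $\prod m_a!/\Gamma(2s+|m|)$. Next, $\exp(\sum_a J_{0,a})=\exp(\sum_a\oa_a)$ is a symmetry of each bond. Its matrix elements $\langle m|e^{\sum\oa}|\xi\rangle$ equal $\prod_a\binom{m_a}{\xi_a}$. Combining the two gives $d$. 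As a sanity check, we also plan to verify the bulk identity directly from the rates \eqref{rates} using the Beta integral representation \eqref{rates2}. The binomial sampling structure with a common Beta-distributed $P$ makes falling factorial moments factorize, and this direct check is a useful backup.

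\textbf{Boundary identity.} The boundary identity is the main obstacle. We need
\[
\mathcal{L}_{\rm left}\,d(\cdot,\xi^1)(m)=\sum_{k\le\xi^1}\varphi_s(k,\xi^1)\big[\rho_l^{k}\,d(m,\xi^1-k)-d(m,\xi^1)\big].
\]
Algebraically, $\mathcal{H}_{\rm left}=D_{\rho_l}(\psi(2s+\OFD\OF)-\psi(2s))D_{\rho_l}^{-1}$ by Theorem~\ref{Thm-transfer-Ham}, and $D_\rho=S'S''_\rho$ with $S'=e^{-\sum J_{0,a}}$. The factor $S'$ is exactly the bulk symmetry used to build $d$, so conjugating by it reduces the left term to $\mathcal{H}'_{\rm left}$ of Proposition~\ref{Proposition-Hp}. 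By Lemma~\ref{lem:b1}, $\mathcal{H}'_{\rm left}$ is triangular, with off-diagonal entries $-\Gamma(|n|-|m|)\prod\rho^{n_a-m_a}/(n_a-m_a)!$. After the diagonal weight $\Gamma(2s)/\Gamma(2s+|\xi|)$ and transposition, these entries should become $\varphi_s(k,\xi)\rho_l^{k}$, which is the absorbing transition to site $0$ carrying the factor $\rho_l^{\xi^0}$. The diagonal entries give $h_s(|\xi|)$, matching \eqref{digamma-ratesBulk}.

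Checking this matrix-element identity means verifying $\Gamma(|k|)\frac{\Gamma(2s+|\xi|-|k|)}{\Gamma(2s+|\xi|)}\prod\binom{\xi_a}{k_a}=\varphi_s(k,\xi)$ with the correct normalization. The bookkeeping of transposes, the weights $Q$, and the sign conventions of $S'$ is where errors are likely. If the algebraic route becomes tangled, we would instead compute $\mathcal{L}_{\rm left}d(\cdot,\xi)(m)$ directly, in two parts.

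First, the out-part: the binomial-thinning structure turns it into a sum over $k\le\xi$ of $\varphi_s(k,\xi)[d(m,\xi-k)\cdot 0\ldots]$. In other words, removed particles reduce the falling factorials.

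Second, the in-part: we expand $\frac{(m+j)!}{(m+j-\xi)!}$ by Vandermonde and sum over $j$ with weights $\Gamma(|j|)\beta^j/j!$. This produces $\rho=\beta/(1-|\beta|)$ through the negative multinomial sums $\sum_j\Gamma(|j|+r)\beta^j/j!\propto(1-|\beta|)^{-r}$. The Lauricella manipulations are those already carried out in Lemma~\ref{lem:b2}.

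The right boundary is identical with $\rho_r$ and site $N+1$, which completes the proof.
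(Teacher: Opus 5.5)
Your proposal is correct. Its bulk step is the paper's: cheap duality $\mathfrak{d}_\ell$ from detailed balance, dressed by the coproduct symmetry, so that $\mathfrak{D}_\ell=\mathfrak{d}_\ell e^{-\sum_a J^{[\ell]}_{a,0}}=\bigl(e^{\sum_a\oa_a}\bigr)^t\mathfrak{d}_\ell$. Your boundary step, however, is a genuinely different argument.

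The paper never computes boundary matrix elements. It uses Lemma~\ref{Lemma-transposition-relation} to write $\eH_{\mathrm{left}}^t=\mathfrak{D}_1 e^{-\sum_a\rho_{l,a}J^{[1]}_{0,a}}\bigl(\psi(2s+\OAD\OA)-\psi(2s)\bigr)e^{\sum_a\rho_{l,a}J^{[1]}_{0,a}}\mathfrak{D}_1^{-1}$. It then realises the numbers $\rho_{l,a}$ as $\rep(J_{a,0})\bigl(\rep(J_{0,0})-\tfrac12-s\bigr)^{-1}$ on an auxiliary site $0$ and moves them through the intertwiner of Lemma~\ref{Lemma-intertwining}. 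In this way the absorbing boundary appears as the bulk bond term $\eH^{+}_{0,1}$.

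You instead use $\mathfrak{D}_1=\bigl((S')^{-1}\bigr)^t\mathfrak{d}_1$, with $S'$ from \eqref{similarities}, to get $\eH_{\mathrm{left}}^t\mathfrak{D}_1=\mathfrak{D}_1\,\mathfrak{d}_1^{-1}(\eH'_{\mathrm{left}})^t\mathfrak{d}_1$, and then compare entries using Lemma~\ref{lem:b1}. This closes:
\begin{itemize}
\item for $\eta=\xi-k$ with $|k|>0$, the entry of $\mathfrak{d}^{-1}(\eH'_{\mathrm{left}})^t\mathfrak{d}$ is $-\rho_l^{k}\,\Gamma(|k|)\frac{\Gamma(2s+|\xi|-|k|)}{\Gamma(2s+|\xi|)}\prod_a\binom{\xi_a}{k_a}=-\rho_l^{k}\varphi_s(k,\xi)$;
\item the diagonal is $h_s(|\xi|)$;
\item after multiplying by $\rho_l^{\xi^0}$, these are exactly the entries of $\mathcal{I}_0\widetilde{\eH}_{\mathrm{left}}$.
\end{itemize}
Your route is more elementary once Lemma~\ref{lem:b1} is available: it needs neither the transposition nor the intertwining lemma, and your Lauricella fallback is not needed. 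The paper's route explains structurally why the dual boundary is a bulk move into a reservoir site, and its intertwiner is reused verbatim for the hidden-parameter and heat-conduction dualities.

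Some bookkeeping corrections:
\begin{itemize}
\item $\langle m|e^{\sum_a\oa_a}|\xi\rangle=\prod_a\binom{\xi_a}{m_a}$, so it is the transpose that produces $\prod_a\binom{m_a}{\xi_a}$.
\item The diagonal weight you need is the full $\mathfrak{d}$, including $\prod_a\xi_a!$, not only $\Gamma(2s)/\Gamma(2s+|\xi|)$. Without it the binomials in $\varphi_s$ do not appear.
\item Lemma~\ref{lemmadd} concerns $L(x)$ on $\mathbb{C}^{M+1}\otimes\mathcal{F}$. The coproduct invariance of $\eH_{\ell,\ell+1}$ comes instead from the large-spectral-parameter expansion of the Yang--Baxter relation~\eqref{RLL} for the full R-operator; the paper simply asserts it in~\eqref{symmetry-exp-H}.
\end{itemize}
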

\begin{proof}
	It is convenient to reformulate duality in terms of the Hamiltonian operator, which are related to the generators via~\eqref{gen-ham1}. To this aim, to each configuration $\bm{\xi}\in \widetilde{\Omega}$ we associate orthonormal vectors
	\begin{equation}
		|\bm{\xi}\rangle =\bigotimes_{\ell=0}^{\N+1}|\xi_{1}^{\ell},\ldots,\xi_{\M}^{\ell}\rangle\,.
	\end{equation}
	with 
	\begin{equation}
		\langle n^{\ell}|\xi^{\ell}\rangle=\langle n_{1}^{\ell},\ldots,n_{\M}^{\ell}|\xi_{1}^{\ell},\ldots,\xi_{\M}^{\ell}\rangle=\prod_{a=1}^{\M}\delta_{n_{a}^{\ell},\xi_{a}^{\ell}}, \qquad \qquad \ell\in \{0,1,\ldots,\N,\N+1\}\,.
	\end{equation}
	The Hamiltonian of the dual process reads
	\begin{equation}\label{Hamiltonian-ABS-Harmonic}
		\widetilde{\eH}=\widetilde{\eH}_{\text{left}}+\sum_{\ell=1}^{\N-1}\eH_{\ell,\ell+1}+\widetilde{\eH}_{\text{right}}
	\end{equation}
	where $\eH_{\ell,\ell+1}$ is defined in~\eqref{bulk-density-Hamiltonian}. The dual boundary Hamiltonian  $\widetilde{\eH}_{\text{left}}$ 
		has the following expression in terms of the $gl(\M+1)$ algebra generators in the Fock representation~\eqref{E-matrix} 
		\begin{equation}\label{hleftt}
		\begin{split}
			&\widetilde{\eH}_{\text{left}}=\exp{\left(-\sum_{a=1}^{\M}\eJ_{a,0}^{[0]}\left(\eJ_{0,0}^{[0]}-\left(1/2+s\right)\right)^{-1}\eJ_{0,a}^{[1]}\right)}\left(\psi(-\eJ_{0,0}^{[1]}+s+\frac{1}{2})-\psi(2s)\right)
			\\&\;\qquad \quad\cdot
			\exp{\left(\sum_{a=1}^{\M}\eJ_{a,0}^{[0]}\left(\eJ_{0,0}^{[0]}-\left(1/2+s\right)\right)^{-1}\eJ_{0,a}^{[1]}\right)}\,.
		\end{split}
	\end{equation}
    Here, we have denoted $\eJ_{A,B}^{[\ell]}$ the algebra generator acting as $\eJ_{A,B}$ at site $\ell$ and trivially on all the other sites.
    We observe that, by using~\eqref{good-relation-forFockGLN}, the dual left Hamiltonian 
				  ~\eqref{hleftt} coincides with $\eH^{+}_{0,1}$, i.e. the bulk Hamiltonian density given in Corollary~\ref{corollary-bulk-elements} acting on sites $0$ and $1$.
    Similarly, the right boundary reads
    \begin{equation}\label{right-dual-hamiltonian}
    \begin{split}
        &\widetilde{\eH}_{\text{right}}=\exp{\left(-\sum_{a=1}^{\M}\eJ_{a,0}^{[\N+1]}\left(\eJ_{0,0}^{[\N+1]}-\left(1/2+s\right)\right)^{-1}\eJ_{0,a}^{[\N]}\right)}\left(\psi(-\eJ_{0,0}^{[\N]}+s+\frac{1}{2})-\psi(2s)\right)
			\\&\;\qquad \quad\cdot
			\exp{\left(\sum_{a=1}^{\M}\eJ_{a,0}^{[\N+1]}\left(\eJ_{0,0}^{[\N+1]}-\left(1/2+s\right)\right)^{-1}\eJ_{0,a}^{[\N]}\right)}\,.
            \end{split}
    \end{equation}
    Analogously, the dual right Hamiltonian~\eqref{right-dual-hamiltonian} coincides with $\eH^{-}_{N,N+1}$, i.e. the bulk Hamiltonian density given in Corollary~\ref{corollary-bulk-elements} acting on sites $\N$ and $\N+1$. 
    
	We define the duality matrix
	\begin{equation}\label{duality-matrix}
		\mathfrak{F}=\mathcal{I}_{0}\left(\prod_{\ell=1}^{\N}\mathfrak{D}_{\ell}\right)\mathcal{I}_{\N+1}\,,
	\end{equation}
	where the product over sites is interpreted as a tensor product along the chain $\{1,\ldots, \N\}$. Here, we have introduced the single-site duality matrix 
	\begin{equation}\label{duality-matrix-x}
		\mathfrak{D}_{\ell}=\mathfrak{d}_{\ell}\exp{\left(-\sum_{a=1}^{\M}\eJ_{a,0}^{[\ell]}\right)}\,,
	\end{equation}
	and the so-called cheap duality matrix\footnote{Its name comes from the fact that it is obtained "for free" from reversibility, see \cite{dualityBook}.}
	\begin{equation}\label{cheap-duality}
		\mathfrak{d}_{\ell}:=\sum_{\xi^{\ell}}\frac{\Gamma(2s)}{\Gamma\left(2s+|\xi^{\ell}|\right)}\left(\prod_{a=1}^{\M}\Gamma(\xi_{a}^{\ell}+1)\right)|\xi^{\ell}\rangle \langle \xi^{\ell}|\,.
	\end{equation} 
	Observe that the cheap duality matrix~\eqref{cheap-duality} is diagonal with elements
    \begin{equation}
        \langle \xi^{\ell}|\mathfrak{d}_{\ell}|\xi^{\ell}\rangle = (\M+1)^{|\xi^{\ell}|}\nu_{\ell}^{1/(\M+1)}\,.
    \end{equation}
                        	Furthermore, we have introduced the intertwiner
	\begin{equation}\label{intertwining}
		\mathcal{I}_{0}=\sum_{n_{1}^{0},\ldots,n_{\M}^{0}=0}^{\infty}\left(\prod_{a=1}^{\M}\rho_{l,a}^{n_{a}^{0}}\right)\langle n_{1}^{0},\ldots,n_{\M}^{0}|\,,
	\end{equation}
	and similarly $\mathcal{I}_{\N+1}$.
The elements of this duality matrix, for $\ell\in\{1,\ldots, \N\}$, read 
	\begin{align}\label{element-duality-matrix}
		\langle m^{\ell}|\mathfrak{D}_{\ell}|\xi^{\ell}\rangle=
				\frac{\Gamma(2s)}{\Gamma(2s+|\xi^{\ell}|)}\left(\prod_{a=1}^{\M}\frac{\Gamma(m_{a}^{\ell}+1)}{\Gamma(m_{a}^{\ell}-\xi_{a}^{\ell}+1)}\right)\,,
	\end{align}
	while on sites $\ell=0,$ and $\ell=\N+1$ we have, respectively  
	\begin{equation}
		        \mathcal{I}_{0}|\xi^{0}\rangle=\prod_{a=1}^{\M}\rho_{l,a}^{\xi_{a}^{0}}\quad\text{and}\quad  \mathcal{I}_{\N+1}|\xi^{\N+1}\rangle =\prod_{a=1}^{\M}\rho_{r,a}^{\xi_{a}^{\M+1}}\,.
	\end{equation}
    To show duality, we shall prove separately the \textit{bulk duality}
		\begin{equation}\label{bulk-duality}
			\eH_{\ell,\ell+1}^{t}\mathfrak{D}_{\ell}\mathfrak{D}_{\ell+1}=\mathfrak{D}_{\ell}\mathfrak{D}_{\ell+1}\eH_{\ell,\ell+1}\, \qquad \ell\in \{1,\ldots,\N-1\} 
		\end{equation} 
		and the  \textit{boundary duality} 
		\begin{equation}\label{boundary-duality}
			\eH_{\text{left}}^{t}\mathcal{I}_{0}\mathfrak{D}_{1}=\mathcal{I}_{0}\mathfrak{D}_{1} \widetilde{\eH}_{\text{left}}\,,
		\end{equation} 
        		\begin{equation}\label{boundary-duality-right}
			\eH_{\text{right}}^{t}\mathcal{I}_{\N+1}\mathfrak{D}_{\N}=\mathcal{I}_{\N+1}\mathfrak{D}_{\N} \widetilde{\eH}_{\text{right}}\,.
		\end{equation}
	\paragraph{Bulk duality.}
We introduce the diagonal matrix, c.f.~\eqref{cheap-duality},
\begin{equation}
    \Xi=\frac{p_{0}^{2s}}{\Gamma(2s)}\sum_{m\in \mathbb{N}_{0}^{\M}}\Gamma(|m^{\ell}|+2s)\left[\prod_{a=1}^{\M}\frac{\beta_{a}^{m_{a}}}{m_{a}!}\right]|m\rangle \langle m|\,.
\end{equation}
Then, the detailed balance condition of  Lemma~\ref{lemma-reversibleMeasure-M-harmonic-boundaries}, can be written as 
\begin{equation}\label{reversibility-H}
    \eH_{\ell,\ell+1}\Xi_{\ell}\Xi_{\ell+1}=\Xi_{\ell}\Xi_{\ell+1}\eH_{\ell,\ell+1}^{t}\,.
\end{equation}
	From the $gl(\M+1)$ invariance of the R-matrix  $\eR(x)$, one has that
	\begin{equation}\label{symmetry-exp-H}
		\left[\eH_{\ell,\ell+1},\exp{\left(-\eJ_{a,0}^{[\ell]}\right)}\exp{\left(-\eJ_{a,0}^{[\ell+1]}\right)}\right]=0\qquad \forall a\in \{1,\ldots,\M\}\,.
	\end{equation}
	Therefore the bulk duality relation~\eqref{bulk-duality} is equivalent to the relation
	\begin{equation}\label{bulk-duality2}
			\eH_{\ell,\ell+1}^{t}\mathfrak{d}_{\ell}\mathfrak{d}_{\ell+1}=\mathfrak{d}_{\ell}\mathfrak{d}_{\ell+1}\eH_{\ell,\ell+1}\,,
		\end{equation}
        which is implied by~\eqref{reversibility-H}, by choosing  $\beta_{a}=\frac{1}{\M}$ for all $a$, $p_{0}=\frac{1}{\M}$ and by taking the inverse of $\Xi$.

	\paragraph{Boundary duality.} We shall prove only~\eqref{boundary-duality} since the proof of~\eqref{boundary-duality-right} is analogous. 
	To prove the boundary duality relation~\eqref{boundary-duality} we need two intermediate lemmas. 
   \begin{lemma}\label{Lemma-transposition-relation}
                Let $(\eJ_{A,B})_{A,B\in \{0,1,\ldots,\M\}}$ be the $gl(\M+1)$ algebra generators~\eqref{GL-oscRealization}, then the following relations hold
\begin{equation}\label{transposition-relation}
		\begin{cases}
			\mathfrak{d}\eJ_{B,A}\mathfrak{d}^{-1}=-\left(\eJ_{A,B}\right)^{t}\qquad \text{if}&\qquad A=0 \quad\dot \vee \quad B=0\\
			\mathfrak{d}\eJ_{B,A}\mathfrak{d}^{-1}=\left(\eJ_{A,B}\right)^{t}\qquad\;\;\; \text{if}&\qquad \text{otherwise}
		\end{cases}\,.
	\end{equation}
            \end{lemma}
            \begin{proof}
                By a direct computation one shows the relations
\begin{equation}\label{transposition-relation2}
		\begin{cases}
			\langle m|\mathfrak{d}\eJ_{B,A}\mathfrak{d}^{-1}|n\rangle =-\langle n|\eJ_{A,B}|m\rangle \qquad \text{if}&\qquad A=0 \quad\dot \vee \quad B=0\\
			\langle m|\mathfrak{d}\eJ_{B,A}\mathfrak{d}^{-1}|n\rangle =\langle n|\eJ_{A,B}|m\rangle\qquad\;\;\; \text{if}&\qquad \text{otherwise}
		\end{cases}\,.
	\end{equation}
    Then, the lemma follows. 
            \end{proof}    
    Let consider a representation of $gl(\M+1)$ with algebra generators $(\rep(J_{A,B}))_{A,B\in \{0,1,\ldots, \M\}}$ defined in~\eqref{GL-continuous-reps} and acting the space of the polynomials of $\M$ variables $\prod_{a=1}^{\M}\rho_{a}^{m_{a}}$, with $m_{a}\in \mathbb{N}_{0}$.
	\begin{lemma}\label{Lemma-intertwining}
		For all $A,B\in \{0,1,\ldots,\M\}$ the following intertwining relation holds between the algebra generators $\rep(J_{A,B})$  of equation~\eqref{GL-continuous-reps} and $\eJ_{A,B}$ of equation~\eqref{E-matrix}:
		\begin{align}
			\rep(J_{A,B})\mathcal{I}=&\mathcal{I}\eJ_{A,B}\,.
		\end{align}
		Here, $\mathcal{I}$	is the intertwiner introduced in~\eqref{intertwining}, namely
		\begin{align}
			\mathcal{I}=\sum_{m_{1},\ldots,m_{N}=0}^{\infty}\rho_{1}^{m_{1}}\cdots \rho_{\M}^{m_{\M}}\langle m_{1},\ldots,m_{\M}|\,.
		\end{align}
	\end{lemma}
	\begin{proof}
 Using~\eqref{right-action}, the proof consists of a straightforward computation. 
    																																																																															\end{proof}  
    		         
	Consider the algebraic expression of the boundary Hamiltonian~\eqref{right-left-BD-hamiltonian}. 
        Using Lemma~\ref{Lemma-transposition-relation}, we obtain
    \begin{equation}\label{duality-sim1}
			\begin{split}
				\eH_{\text{left}}^{t}=&
				\mathfrak{D}_{1}\exp{\left(-\sum_{a=1}^{\M}\rho_{l,a}\eJ_{0,a}^{[1]}\right)}\left(\psi(-\eJ_{0,0}^{[1]}+s+\frac{1}{2})-\psi(2s)\right)
				\\&\;\cdot
				\exp{\left(\sum_{a=1}^{\M}\rho_{l,a}\eJ_{0,a}^{[1]}\right)}\mathfrak{D}_{1}^{-1}\,,
			\end{split}
		\end{equation}
		where $\mathfrak{D}_1$ is defined in~\eqref{duality-matrix-x}. 								Using the $gl(\M+1)$ generators in the representation $\rep(\cdot)$, see~\eqref{GL-continuous-reps}, we have that
	\begin{equation}\label{rho-continuousReps}
	\rho_{l,a} =\rep(J_{a,0}^{[0]})\left(\rep(J_{0,0}^{[0]})-\frac{1}{2}-s\right)^{-1}
	\end{equation}
	Therefore, inserting~\eqref{rho-continuousReps} into~\eqref{duality-sim1} and using Lemma~\ref{Lemma-intertwining} we obtain
 									\begin{equation}
		\begin{split}
			\eH_{\text{left}}^{t}\mathcal{I}_{0}\mathfrak{D}_{1}=&
            \mathfrak{D}_{1}\exp{\left(-\sum_{a=1}^{\M}\rep(J_{a,0}^{[0]})\left(\rep(J_{0,0}^{[0]})-\frac{1}{2}-s\right)^{-1}\eJ_{0,a}^{[1]}\right)}\left(\psi(-\eJ_{0,0}^{[1]}+s+\frac{1}{2})-\psi(2s)\right)
			\\&\;\cdot
			\exp{\left(\sum_{a=1}^{\M}\rep(J_{a,0}^{[0]})\left(\rep(J_{0,0}^{[0]})-\frac{1}{2}-s\right)^{-1}\eJ_{0,a}^{[1]}\right)}\mathcal{I}_{0}			\\=&
            \mathcal{I}_{0}\mathfrak{D}_{1}\exp{\left(-\sum_{a=1}^{\M}\eJ_{a,0}^{[0]}\left(\eJ_{0,0}^{[0]}-\frac{1}{2}-s\right)^{-1}\eJ_{0,a}^{[1]}\right)}\left(\psi(-\eJ_{0,0}^{[1]}+s+\frac{1}{2})-\psi(2s)\right)
			\\&\;\cdot
			\exp{\left(\sum_{a=1}^{\M}\eJ_{a,0}^{[0]}\left(\eJ_{0,0}^{[0]}-\frac{1}{2}-s\right)^{-1}\eJ_{0,a}^{[1]}\right)}			\\=&
			\mathcal{I}_{0}\mathfrak{D}_{1}\widetilde{\eH}_{\text{left}}\,.
		\end{split}
	\end{equation}
	This concludes the proof.
\end{proof}
\begin{corollary}[Consequences of absorbing duality]\label{Corollary-consequenceDuality}
Given $\bm{\xi}\in \Omega$, consider the scaled  factorial moments of order $(|\xi_{1}|,\ldots,|\xi_{\M}|)$ defined as 
	\begin{equation}\label{scaled-factorial-mom}
		G(|\xi_{1}|,\ldots, |\xi_{\M}|):=\sum_{\bm{m}\in \Omega}\mu(\bm{m})\prod_{\ell=1}^{\N}\frac{(2s)!}{(2s+|\xi^{\ell}|)!}\prod_{a=1}^{\M}m_{a}^{\ell}(m_{a}^{\ell}-1)\cdots (m_{a}^{\ell}-\xi_{a}^{\ell}+1)
				\,.
	\end{equation} 
    where $\mu:\Omega\to [0,1]$ is the  probability mass function of the non-equilibrium steady state of the boundary driven multispecies harmonic process.
Then, \begin{equation}
G(|\xi_{1}|,\ldots,|\xi_{\M}|)=\sum_{j_{1}=0}^{|\xi_{a}|}\cdots\sum_{j_{\M}=0}^{|\xi_{\M}|}\left(\prod_{a=1}^{\M}\rho_{l,a}^{j_{a}}\rho_{r,a}^{|\xi_{a}|-j_{a}}\right)\mathcal{P}_{\hat{\bm{\xi}}}\left(j_{1},\ldots,j_{\M}\right)
\end{equation}
where  $\mathcal{P}_{\hat{\bm{\xi}}}\left(j_{1},\ldots,j_{\M}\right)$ denotes the absorption probability of dual particles, namely 
\begin{equation}
	\mathcal{P}_{\hat{\bm{\xi}}}\left(j_{1},\ldots,j_{\M}\right):=\mathbb{P}\left(\bm{\xi}(\infty)=(j,0,\ldots, 0,|\bm{\xi}|-j)\bigg| \bm{\xi}(0)=\hat{\bm{\xi}}\right)\,.
\end{equation}
Above we denoted  \begin{equation}\label{dualConfig-moments}
	\hat{\bm{\xi}}:=\begin{cases}
		\hat{\xi}_{a}^{\ell}=\xi_{a}^{\ell}\quad &\text{if}\quad \ell\in \{1,\ldots,\N\}\\			\hat{\xi}_{a}^{\ell}=0\quad &\text{else}	\end{cases},\qquad \forall a\in \{1,\ldots,\M\}
\end{equation}
and $(j,0,\ldots, 0,|\bm{\xi}|-j)$ denotes a dual configuration where we placed $j$ dual particles at site $0$, $|\bm{\xi}|-j$ dual particles at site $\N+1$ and no dual particles elsewhere. 
\end{corollary}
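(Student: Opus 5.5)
The plan is to apply Theorem~\ref{Thm-duality} in its expectation form \eqref{duality-defintion-via-expectation} and then send $t\to\infty$ on both sides. Fix $\hat{\bm{\xi}}$ as in \eqref{dualConfig-moments}, with no dual particles on the absorbing sites $0$ and $\N+1$, so that $\mathfrak{F}(\bm m,\hat{\bm\xi})$ is exactly the summand in \eqref{scaled-factorial-mom}: the boundary factors are $\rho^{0}=1$, and $\Gamma(m+1)/\Gamma(m-\xi+1)$ is the falling factorial, which vanishes when $\xi_a^\ell>m_a^\ell$. Averaging the duality relation $\mathbb{E}_{\bm m}[\mathfrak{F}(\bm m(t),\hat{\bm\xi})]=\mathbb{E}_{\hat{\bm\xi}}[\mathfrak{F}(\bm m,\bm\xi(t))]$ over the initial law $\mu$, which is stationary, makes the left-hand side independent of $t$. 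It equals $G$.

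On the right-hand side I will use three facts about the absorbing process. First, the bulk and absorbing boundary dynamics conserve the number of dual particles of each species, since the rates \eqref{rates} move vectors $k$ componentwise. Second, particles at sites $0$ and $\N+1$ never move. Third, on the finite chain every particle is absorbed in finite time almost surely. The third fact follows because from any configuration with particles in $\{1,\dots,\N\}$ there is a path of positive-rate moves carrying a particle into site $0$ or $\N+1$, so the number of unabsorbed particles decreases at geometric times. Hence $\bm\xi(t)$ converges almost surely to a configuration supported on $\{0,\N+1\}$, with $j_a$ particles of species $a$ at site $0$ and $|\xi_a|-j_a$ at site $\N+1$. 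On such a configuration, with all bulk sites empty, $\mathfrak{F}(\bm m,\cdot)=\prod_a\rho_{l,a}^{j_a}\rho_{r,a}^{|\xi_a|-j_a}$, independent of $\bm m$.

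Taking $t\to\infty$, and noting that the state space of $\bm\xi(t)$ is finite for a fixed initial condition by conservation, dominated convergence gives the right-hand side as the sum over $j$ of these weights times the absorption probabilities $\mathcal{P}_{\hat{\bm\xi}}(j)$. This is the claimed formula.

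The main obstacle is justifying the exchange of the average over $\mu$ with the duality identity. The left-hand side requires the moments $\sum_{\bm m}\mu(\bm m)\,\mathfrak{F}(\bm m,\hat{\bm\xi})$ to be finite and the duality to hold in expectation for unbounded $\mathfrak F$ on the infinite state space $\Omega$. I would handle this by first using the generator-level relation to get the identity for each fixed initial $\bm m$. Then I would argue that the right-hand side is bounded uniformly in $t$ by a polynomial in $\bm m$, and that the stationary measure has finite polynomial moments; the latter follows from domination of the total particle number by that of an equilibrium Negative Multinomial system with $\beta=\max(\beta_l,\beta_r)$. With these bounds Fubini applies.
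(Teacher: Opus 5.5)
Your argument is correct, but it runs differently from the paper's. The paper fixes a deterministic initial configuration $\bm{m}$ and uses ergodicity to write $G=\lim_{t\to\infty}\mathbb{E}_{\bm{m}}[\mathfrak{F}(\bm{m}(t),\hat{\bm{\xi}})]$. It then evaluates this limit on the dual side, deferring to Proposition~2.6 of \cite{exact-harmonic}; absorption makes the limit independent of $\bm{m}$. You instead start the process from $\mu$ itself, so stationarity pins the left-hand side at $G$ for every $t$, and only the dual side is sent to infinity.

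The two routes buy different things. Yours needs no convergence to equilibrium for the identity, and it shows that every stationary law with finite moments has these factorial moments. The paper's gives the stronger dynamical statement that moments from every initial condition converge to the absorption formula. However, identifying that limit with $G$ needs convergence in law plus uniform integrability; duality with larger dual configurations supplies the latter.

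The price of your route is the a priori integrability of $\mu$, and your proposed justification is the one step I would not accept as written. Stochastic domination of the steady state by an equilibrium product measure requires a monotone coupling (attractiveness) of the harmonic process in the boundary parameters. The paper does not establish this, and it is not obvious for these multi-particle jump rates. Moreover, the comparison has to be made for the colour-blind projection, which is the single-species harmonic process with parameters $|\beta_l|$ and $|\beta_r|$. The reference parameter is therefore $\max(|\beta_l|,|\beta_r|)$, not a componentwise maximum, which can violate $|\beta|<1$.

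A cheaper bound comes from the duality itself. $\mathfrak{F}(\bm{0},\bm{\xi})$ vanishes unless the bulk of $\bm{\xi}$ is empty, so $\mathbb{E}_{\bm{0}}[\mathfrak{F}(\bm{m}(t),\hat{\bm{\xi}})]\le\max_j\prod_a\rho_{l,a}^{j_a}\rho_{r,a}^{|\xi_a|-j_a}$ uniformly in $t$. Fatou's lemma then gives $\sum_{\bm{m}}\mu(\bm{m})\,\mathfrak{F}(\bm{m},\hat{\bm{\xi}})<\infty$. This step uses $\bm{m}(t)\Rightarrow\mu$, which is exactly the paper's ergodicity; to keep your route ergodicity-free you would need, for instance, a Foster--Lyapunov bound.

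Finally, $\mathfrak{F}(\bm{m},\hat{\bm{\xi}})$ equals the summand of \eqref{scaled-factorial-mom} only if $(2s)!/(2s+|\xi^\ell|)!$ is read as $\Gamma(2s)/\Gamma(2s+|\xi^\ell|)$. Taken literally, the two differ by a factor $2s/(2s+|\xi^\ell|)$ per site. This is a typo in the statement, which your identification silently corrects.
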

\begin{proof}
Consider the initial dual configuration $\bm{\hat{\xi}}$ written in \eqref{dualConfig-moments}. By ergodicity we have that 
\begin{equation}
    G(|\xi_{1}|,\ldots,|\xi_{\M}|)=\lim_{t\to \infty}\mathbb{E}[\mathfrak{F}(\bm{m}(t),\bm{\hat{\xi}})]\,.
\end{equation}
The rest of the proof is the multispecies counterpart of the proof of Proposition 2.6 of \cite{exact-harmonic}, then it is omitted.
			\end{proof}

We now relate the scaled factorial moments $G(|\xi_{1}|,\ldots,|\xi_{\M}|)$ of the non-equilibrium distribution to the ground state $|\Psi'\rangle$ of the matrix $\eH'$ introduced in Proposition~\ref{Proposition-Hp}. To this end, we observe that the similarity transformation $S'$, defined in \eqref{similarities}, satisfies
\begin{equation}
    (S')^{-t}=\exp{\left(\sum_{a=1}^{\M}J_{0,a}^{t}\right)}\,.
\end{equation}
By Lemma~\ref{Lemma-transposition-relation}, together with the identity $\mathfrak{d}^{t}=\mathfrak{d}$, it follows that
\begin{equation}
    (S')^{-t}\mathfrak{d}^{t}=\mathfrak{d}\, \exp{\left(-\sum_{a=1}^{\M}J_{a,0}\right)}=\mathfrak{D}\,.
\end{equation}
It is convenient to introduce the compact notation
\begin{equation}
    \bm{S}':=S'\otimes \cdots \otimes S',
    \qquad
    \bm{\mathfrak{d}}(\bm{S}')^{-1}:= \mathfrak{d}(S')^{-1}\otimes \cdots \otimes \mathfrak{d}(S')^{-1}\,,
\end{equation}
which allows us to express the duality matrix \eqref{duality-matrix} in terms of the transformation $S'$ as
\begin{equation}
    \prod_{\ell=1}^{N}\mathfrak{D}_{\ell}=\big(\bm{\mathfrak{d}}(\bm{S}')^{-1}\big)^{t}\,.
\end{equation}

Let us now initialize the dual process with the configuration $\bm{\hat{\xi}}$ introduced in \eqref{dualConfig-moments}, and set the probability mass function of the non-equilibrium stationary distribution $\mu(\bm{\eta})=\langle \bm{m}|\Psi\rangle$. Then, by Corollary~\ref{Corollary-consequenceDuality}, the scaled factorial moments take the form
\begin{equation}
\begin{split}
    G(|\xi_{1}|,\ldots,|\xi_{\M}|)
    &=\sum_{\bm{\eta}\in \Omega}\langle \bm{\eta}|\prod_{\ell=1}^{N}\mathfrak{D}_{\ell}|\bm{\hat{\xi}}\rangle\, \langle \bm{\eta}|\Psi\rangle \\[2pt]
    &=\sum_{\bm{\eta}\in \Omega}\langle\bm{\hat{\xi}}|\bm{\mathfrak{d}}(\bm{S}')^{-1}|\bm{\eta}\rangle\, \langle \bm{\eta}|\Psi\rangle \\[2pt]
    &= \langle \bm{\hat{\xi}} |\bm{\mathfrak{d}}|\Psi^{'}\rangle\,,
\end{split}
\end{equation}
which establishes the correspondence between the scaled factorial moments and the ground state $|\Psi'\rangle$.
\subsection{Duality for the hidden parameter model}\label{subsection-hiddenDuality}

 We begin this section by writing the algebraic expression of the hidden parameter model. Consider the representation of $gl(\M+1)$ on the polynomial space defined in~\eqref{rep-infiniteDim-Poly}, with algebra generators $\bm{\Hidden}=(\hidden_{1},\ldots,\hidden_{\M})$ and $\bm{\partial}_{\hidden}=(\partial_{\hidden_1},\ldots,\partial_{\hidden_\M})^{t}$. Let $f:\mathbb{R}^{\M\times \N}\to \mathbb{R}$ be a polynomial function, we introduce the Hamiltonian operator
 \begin{equation}\label{algebraic-Hidden}
 		\ecH f(\bm{\hidden}) =\ecH _{\text{left}}f(\bm{\hidden})+\sum_{\ell=1}^{\N-1}\ecH_{\ell,\ell+1}f(\bm{\hidden})+\ecH _{\text{right}}f(\bm{\hidden})\,,
 \end{equation}
  where $\ecH_{\ell,\ell+1}$ coincides with~\eqref{eq:inthamm}, while the boundary operators read
 \begin{equation}\label{hleftt4-alg}
 	\begin{split}
 		&\ecH _{\text{left}}f(\bm{\hidden})=\exp{\left(-\rho_{l}\bm{\partial}^{[1]}_{\hidden}\right)}\left(\psi(\bm{\Hidden}^{[1]}\bm{\partial}_{\hidden}^{[1]}+2s)-\psi(2s)\right)
 		\exp{\left(\rho_{l}\bm{\partial}^{[1]}_{\hidden}\right)}f(\bm{\hidden})
          	\end{split}
 \end{equation}
 and 
 \begin{equation} 	\begin{split}
 		&\ecH _{\text{right}}f(\bm{\hidden})=\exp{\left(-\rho_{r}\bm{\partial}^{[\N]}_{\hidden}\right)}\left(\psi(\bm{\Hidden}^{[\N]}\bm{\partial}^{[\N]}_{\hidden}+2s)-\psi(2s)\right)
 		\exp{\left(\rho_{r}\bm{\partial}^{[\N]}_{\hidden}\right)}f(\bm{\hidden})\,.
          	\end{split}
 \end{equation}
 Above we denoted $\rho_{\ell}\bm{\partial}_{\hidden}=\sum_{a=1}^{\M}\rho_{l,a}\partial_{\hidden_a}$. 
  \begin{lemma}
     Let $\ecL$ be the generator~\eqref{generator-Hidden} of the hidden parameter model. Then, when acting on polynomial functions $f:\mathbb{R}_{+}^{\N\times \M}\to \mathbb{R}$, we have that 
    \begin{equation}
        \ecH f(\bm{\hidden})=-\ecL f(\bm{\hidden})\,.
    \end{equation}
 \end{lemma}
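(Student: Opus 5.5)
The plan is to treat bulk and boundary terms separately, since both $\ecH$ in~\eqref{algebraic-Hidden} and $\ecL$ in~\eqref{generator-Hidden} split into the same pieces. For the bulk, Corollary~\ref{cor:int} already gives the integral form~\eqref{eq:inthamm}. Comparing~\eqref{redR2sl2int24} and~\eqref{redR2sl2int4} with~\eqref{generatorMinus-heat} and~\eqref{generatorPlus-heat}, we read off $\ecH^{\mp}_{\ell,\ell+1}=-\ecL^{\mp}_{\ell,\ell+1}$ term by term; the integrands agree up to the overall sign. The bulk part therefore needs no new work.

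For the left boundary I would compute~\eqref{hleftt4-alg} directly on a polynomial $f$, working with the operators as three successive substitutions. Let $g(\hidden):=\exp(\rho_l\bm{\partial}_\hidden)f(\hidden)=f(\hidden+\rho_l)$, by the shift formula~\eqref{shift-formulas}. Next, the digamma representation~\eqref{eq:intrp} gives
\[
\bigl(\psi(\bm{\Hidden}\bm{\partial}_\hidden+2s)-\psi(2s)\bigr)g(\hidden)=\int_0^1 d\alpha\,\frac{\alpha^{2s-1}}{1-\alpha}\bigl(1-\alpha^{\bm{\Hidden}\bm{\partial}_\hidden}\bigr)g(\hidden).
\]
Here $\alpha^{\bm{\Hidden}\bm{\partial}_\hidden}g(\hidden)=g(\alpha\hidden)$ by the dilation formula in~\eqref{shift-formulas}; on monomials of total degree $n$ this is simply $\alpha^n$. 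The result is $\int_0^1 d\alpha\,\frac{\alpha^{2s-1}}{1-\alpha}\,[f(\hidden+\rho_l)-f(\alpha\hidden+\rho_l)]$. Finally, applying $\exp(-\rho_l\bm{\partial}_\hidden)$ replaces $\hidden$ by $\hidden-\rho_l$ and yields
\[
\int_0^1 d\alpha\,\frac{\alpha^{2s-1}}{1-\alpha}\bigl[f(\hidden)-f(\alpha\hidden+(1-\alpha)\rho_l)\bigr]=-\ecL_{\text{left}}f(\hidden),
\]
in agreement with~\eqref{redR2sl2int4w}. The right boundary is identical with $\rho_l\to\rho_r$ at site $\N$; the $\rho_l$ printed in the paper's right generator is presumably a typo for $\rho_r$.

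The main point needing care is justifying that the operator identities make sense on polynomials. On polynomials the exponentials of derivatives are finite sums, so they are legitimate. $\psi(\bm{\Hidden}\bm{\partial}_\hidden+2s)$ acts diagonally on homogeneous components of degree $n$ as $\psi(n+2s)$. The integral interchange is harmless because the integrand $(1-\alpha^n)\alpha^{2s-1}/(1-\alpha)$ is integrable for $s>0$. One should also note that the composition of substitutions is applied in the correct order. Operators act on $f$ from the left, so the rightmost exponential substitutes first, while the outer shift acts last on the variable $\hidden$ and not on $\rho_l$. Once these points are checked the lemma follows.
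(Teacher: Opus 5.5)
Your proposal is correct and follows the same route as the paper. The paper gets the bulk identity from Corollary~\ref{cor:int} and the boundary terms from~\eqref{shift-formulas}; you spell the boundary step out as three successive substitutions via~\eqref{shift-formulas} and the digamma representation~\eqref{eq:intrp}, and you are right that the $\rho_l$ in the right boundary generator should read $\rho_r$.
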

 \begin{proof}
     For the bulk part we refer to Corollary~\ref{cor:int}. For the boundary the proof directly follows from~\eqref{shift-formulas}. 
 \end{proof}

\begin{proposition}\label{Proposition_Hidden-ABS}
The hidden parameter model $(\bm{\hidden}(t))_{t\geq 0}$ with generator~\eqref{generator-Hidden} is dual to the absorbing harmonic process $(\bm{\xi}(t))_{t\geq 0}$ with generator~\eqref{abs-Harmonic-Generator}, with the following duality function:
\begin{equation}\label{duality-function-hidden}
	\mathbb{D}(\bm{\hidden},\bm{\xi})=\left(\prod_{a=1}^{\M}(\rho_{l,a})^{\xi_{a}^{0}}\right)\left(\prod_{\ell=1}^{\N}\prod_{a=1}^{\M}(\hidden_{a}^{\ell})^{\xi_{a}^{\ell}}\right)\left(\prod_{a=1}^{\M}(\rho_{r,a})^{\xi_{a}^{\N+1}}\right)\,.
\end{equation}
 \end{proposition}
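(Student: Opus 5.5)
We prove the generator identity $(\ecL\,\mathbb{D}(\cdot,\bm{\xi}))(\bm{\hidden})=(\widetilde{\mathcal{L}}\,\mathbb{D}(\bm{\hidden},\cdot))(\bm{\xi})$, working term by term: bulk bonds $(\ell,\ell+1)$ with $1\le\ell\le\N-1$, then the left boundary (site $1$ paired with site $0$), then the right boundary (site $\N$ paired with site $\N+1$). Since $\mathbb{D}$ is a product of single-site factors and each generator piece acts on at most two sites, it suffices to check local identities on monomials $g_{\xi}(\hidden)=\prod_a \hidden_a^{\xi_a}$. Using $\ecH=-\ecL$ on polynomials (lemma above) and $\mathcal{L}f(\bm m)=-\langle f|\eH^t|\bm m\rangle$ (and similarly for $\widetilde{\mathcal L}$ with $\widetilde\eH$), the claim becomes $\ecH\,\mathbb{D}(\cdot,\bm\xi)=\sum_{\bm\xi'}\mathbb{D}(\cdot,\bm\xi')\langle\bm\xi'|\widetilde\eH|\bm\xi\rangle$. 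This is precisely an intertwining relation of the type \eqref{intertw}, with $g_m$ the monomial basis.

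\textbf{Bulk.} Here Remark~\ref{InterRe} applies directly. Differentiating $\rep(\abrOSCR^\pm)g_{m,n}=\sum g_{m',n'}\langle m',n'|\pi(\abrOSCR^\pm)|m,n\rangle$ at the permutation point, with the normalisation \eqref{normali}, gives $\ecH^\pm_{\ell,\ell+1} g_{\xi^\ell,\xi^{\ell+1}}=\sum g_{\xi'^\ell,\xi'^{\ell+1}}\langle\xi'|\eH^\pm_{\ell,\ell+1}|\xi\rangle$. Equivalently, one can expand the integral form of Corollary~\ref{cor:int} on monomials via the multinomial theorem and compare with Corollary~\ref{corollary-bulk-elements}. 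This yields exactly the bulk part of $\widetilde{\mathcal L}$.

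\textbf{Boundaries.} For the left boundary, we rewrite $\ecH_{\text{left}}$ in \eqref{hleftt4-alg} as a two-site operator on sites $0,1$. We regard $\rho_l$ as the value of a variable $\hidden^0$ frozen at $\rho_l$, so that $\mathbb{D}$ restricted to site $0$ is $g_{\xi^0}(\rho_l)$, which is the intertwiner $\mathcal I_0$ of \eqref{intertwining}. Then
\[
\ecH_{\text{left}}=e^{-\bm\Hidden^{[0]}\bm\partial^{[1]}_\hidden}\bigl(\psi(\bm\Hidden^{[1]}\bm\partial^{[1]}_\hidden+2s)-\psi(2s)\bigr)e^{\bm\Hidden^{[0]}\bm\partial^{[1]}_\hidden}\Big|_{\hidden^0=\rho_l}.
\]
Since this operator does not differentiate in $\hidden^0$, the evaluation commutes with it. The expression is exactly $\rep(\abrOSCH^+_{0,1})$ from \eqref{Rp-solutionD} (with spaces $[1],[2]$ relabelled as $0,1$), or equivalently the integral form $\ecH^+_{0,1}$ of Corollary~\ref{cor:int} with $\hidden^0=\rho_l$ fixed. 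Applying Remark~\ref{InterRe} again gives $\ecH^+_{0,1}g_{\xi^0,\xi^1}=\sum g_{\xi'^0,\xi'^1}\langle\xi'|\eH^+_{0,1}|\xi\rangle$. By the proof of Theorem~\ref{Thm-duality}, $\eH^+_{0,1}=\widetilde\eH_{\text{left}}$, which moves particles from site $1$ to site $0$ at rate $\varphi_s$. Evaluating at $\hidden^0=\rho_l$ yields the left boundary identity. The right boundary is symmetric, using $\eH^-_{\N,\N+1}=\widetilde\eH_{\text{right}}$ and $\ecH^-$.

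\textbf{Main obstacle.} The delicate step is the boundary matching of directions and orientations. One must check that the dual rates correctly act by transposition: $\mathbb{D}$ as a function of $\bm\xi$ is hit by $\widetilde\eH$, not $\widetilde\eH^t$, so the matrix element order in \eqref{intertw} must be right. One must also check that the $+$/$-$ labels, which the paper uses with varying conventions, pair the integral kernel $f(\alpha\hidden^1+(1-\alpha)\rho_l)$ with absorption into site $0$. If the abstract route becomes muddled, we would fall back on a direct computation. Applying the generator to $g_\xi(\alpha\hidden+(1-\alpha)\rho)$ and expanding multinomially gives
\[
\sum_{k\le\xi}\prod_a\binom{\xi_a}{k_a}\,\hidden^{\xi-k}\rho^{k}\,\alpha^{|\xi|-|k|}(1-\alpha)^{|k|}.
\]
The Beta integral $\int_0^1\alpha^{2s-1+|\xi|-|k|}(1-\alpha)^{|k|-1}d\alpha$ then reproduces $\varphi_s(k,\xi)$ from \eqref{rates2}. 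The $k=0$ term combines with the subtracted $f$ to give the diagonal term $h_s(|\xi|)$ via \eqref{digamma-ratesBulk}. This matches $\widetilde{\mathcal L}_{\text{left}}$ acting on $\xi^0\mapsto\xi^0+k$, and the same computation handles the bulk.
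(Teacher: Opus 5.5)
Your proposal is correct and follows the paper's proof. You recast the duality as $\ecH\,\mathbb{D}(\cdot,\bm{\xi})=\sum_{\bm{\xi}'}\mathbb{D}(\cdot,\bm{\xi}')\langle\bm{\xi}'|\widetilde{\eH}|\bm{\xi}\rangle$, obtain the bulk part from the polynomial--Fock intertwining \eqref{intertw}, and your device of freezing $\hidden^{0}=\rho_l$ is exactly the intertwiner $\mathcal{I}_0$ of Lemma~\ref{Lemma-intertwining} that the paper uses at the boundaries; your fallback Beta-integral computation is also correct and makes the jump directions (site $1\to 0$, site $\N\to\N+1$) explicit.
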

\begin{proof} 
We aim to show that 
\begin{equation}
   \left( \ecL \mathbb{D}(\cdot,\bm{\xi})\right)(\bm{\hidden})=\left(\eL\mathbb{D}(\bm{\hidden},\cdot)\right)(\bm{\xi})\qquad \forall \bm{\hidden}\in\mathbb{R}^{\M\times \N},\,\bm{\xi}\in \widetilde{\Omega}\,.
   \end{equation}
By using~\eqref{gen-ham1}, we reformulate the above duality relation via the corresponding Hamiltonian operators
\begin{equation}\label{duality-hidden-algebraic}
    \left( \ecH\mathbb{D}(\cdot,\bm{\xi})\right)(\bm{\hidden})=\sum_{\bm{\xi}'\in \widetilde{\Omega}}\langle \bm{\xi}'|\mathbb{D}(\bm{\hidden},\bm{\xi}')\widetilde{\eH}|\bm{\xi}\rangle\qquad \forall\,\bm{\hidden}\in \mathbb{R}^{\M\times \N},\,\bm{\xi}\in \widetilde{\Omega}\,,
\end{equation}
where $\widetilde{\eH}$ is the Hamiltonian of the absorbing dual process~\eqref{Hamiltonian-ABS-Harmonic} and $\ecH$ is the operator defined in~\eqref{algebraic-Hidden}. 
The duality relation~\eqref{duality-hidden-algebraic} then follows form equation~\eqref{intertw} and Lemma~\ref{Lemma-intertwining}.

\end{proof}
  \begin{corollary}
  The Hamiltonian $\ecH$ of the hidden parameter model and the Hamiltonian $\eH^{t}$ of the boundary driven harmonic model satisfy the intertwining relation
	\begin{equation}
			\begin{split}
				\hmH \left(\sum_{\bm{\xi}} \mathbb{D}(\bm{\hidden},\bm{\xi})\langle \bm{\xi}|\mathfrak{D}^{-1}\right) =\left(\sum_{\bm{\xi}} \mathbb{D}(\bm{\hidden},\bm{\xi})\langle \bm{\xi}|\mathfrak{D}^{-1}\right) \eH^t\qquad \forall\, \bm{\hidden}\in \mathbb{R}^{\M\times \N}\,.
			\end{split}
		\end{equation}
        Above, on the left-hand-side $\ecH$ acts on the variable $\bm{\hidden}$. 
\end{corollary}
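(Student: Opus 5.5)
The plan is to compose the two dualities already available: Proposition~\ref{Proposition_Hidden-ABS}, which links the hidden parameter model to the absorbing process, and Theorem~\ref{Thm-duality}, which links the boundary-driven harmonic process to the absorbing process. Both are first written at the operator level. For fixed $\bm{\hidden}$, introduce the row vector
\[
\langle \mathbb{D}_{\bm{\hidden}}|:=\sum_{\bm{\xi}\in\widetilde\Omega}\mathbb{D}(\bm{\hidden},\bm{\xi})\langle\bm{\xi}|\,.
\]
Proposition~\ref{Proposition_Hidden-ABS}, in the Hamiltonian form~\eqref{duality-hidden-algebraic}, then reads $\hmH\langle\mathbb{D}_{\bm{\hidden}}|=\langle\mathbb{D}_{\bm{\hidden}}|\widetilde{\eH}$. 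Theorem~\ref{Thm-duality}, in the matrix form~\eqref{hamiltonian-duality-general}, reads $\eH^t\mathfrak{D}=\mathfrak{D}\widetilde{\eH}$, where $\mathfrak{D}$ is the duality matrix~\eqref{duality-matrix}.

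Next comes the inversion step. Right-multiplying the absorbing duality by $\mathfrak{D}^{-1}$ on both sides gives $\mathfrak{D}^{-1}\eH^t=\widetilde{\eH}\mathfrak{D}^{-1}$, where $\mathfrak{D}^{-1}$ is understood as a right inverse from the enlarged space to $V_N$. Inserting this into the hidden duality yields
\[
\hmH\langle\mathbb{D}_{\bm{\hidden}}|\mathfrak{D}^{-1}=\langle\mathbb{D}_{\bm{\hidden}}|\widetilde{\eH}\mathfrak{D}^{-1}=\langle\mathbb{D}_{\bm{\hidden}}|\mathfrak{D}^{-1}\eH^t,
\]
which is the claim.

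The main obstacle is giving $\mathfrak{D}^{-1}$ a meaning. $\mathfrak{D}$ maps the enlarged space on sites $0,\dots,N+1$ to $V_N$ and contains the non-injective intertwiners $\mathcal{I}_0,\mathcal{I}_{N+1}$, so it cannot be inverted as a whole. I would make the following concrete choice. Take $\mathfrak{D}^{-1}:=|0\rangle_0\otimes\prod_{\ell=1}^N\mathfrak{D}_\ell^{-1}\otimes|0\rangle_{N+1}$, which embeds $V_N$ into the subspace with empty boundary sites. Here each $\mathfrak{D}_\ell^{-1}=\exp(\sum_a \eJ^{[\ell]}_{a,0})\mathfrak{d}_\ell^{-1}$ is explicit and upper triangular, with finite sums on basis vectors. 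With this choice $\mathfrak{D}\mathfrak{D}^{-1}=\mathrm{id}$, since $\mathcal{I}|0\rangle=1$.

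It remains to check $\mathfrak{D}^{-1}\eH^t=\widetilde{\eH}\mathfrak{D}^{-1}$, and this does not follow from $\mathfrak{D}$ being a left inverse alone. The bulk part follows from the bulk duality~\eqref{bulk-duality}, because the single-site matrices $\mathfrak{D}_\ell$ are genuinely invertible. For the boundary, $\widetilde{\eH}_{\text{left}}=\eH^+_{0,1}$ moves particles into site $0$, so $\widetilde{\eH}\mathfrak{D}^{-1}$ leaves the empty-boundary subspace and the naive identity fails.

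I would therefore interpret the statement after pairing with $\langle\mathbb{D}_{\bm{\hidden}}|$. On the boundary sectors $\mathbb{D}$ evaluates site $0$ through $\mathcal{I}_0$, exactly as $\mathfrak{D}$ does, because the factors $\rho_{l}^{\xi^0}$ coincide. Hence what is actually needed is
\[
\langle\mathbb{D}_{\bm{\hidden}}|\widetilde{\eH}\mathfrak{D}^{-1}=\langle\mathbb{D}_{\bm{\hidden}}|\mathfrak{D}^{-1}\eH^t.
\]
I would prove this by the same computation as in the boundary part of Theorem~\ref{Thm-duality}, namely \eqref{duality-sim1} together with Lemma~\ref{Lemma-intertwining}, read backwards with $\mathfrak{D}_1^{-1}$ in place of $\mathfrak{D}_1$. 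If that fails, the fallback is to define the corollary's $\mathfrak{D}^{-1}$ through the factorisation $\mathbb{D}=\mathcal{I}_0\otimes\big(\bigotimes_\ell g_{\xi^\ell}(\hidden^\ell)\big)\otimes\mathcal{I}_{N+1}$. Then $\langle\mathbb{D}_{\bm{\hidden}}|\mathfrak{D}^{-1}$ reduces to the single-site intertwiners of Remark~\ref{InterRe} composed with $\mathfrak{D}_\ell^{-1}$, and the boundary terms are checked directly.
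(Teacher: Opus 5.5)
Composing the hidden--absorbing duality with the absorbing duality of Theorem~\ref{Thm-duality} is a different route from the paper's, and it can be made to work. As written, however, the argument contains a false step and leaves the only substantive identity unproved.

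First, in this corollary $\mathfrak{D}$ is the bulk product $\prod_{\ell=1}^{N}\mathfrak{D}_\ell$ (this is how the paper's proof fixes it), not the full matrix $\mathfrak{F}$ of \eqref{duality-matrix}. It is invertible, and the sum runs over bulk configurations, where the boundary factors of $\mathbb{D}$ equal $1$. Your embedding $|0\rangle_0\otimes\mathfrak{D}^{-1}\otimes|0\rangle_{N+1}$ reproduces exactly that row vector, so the obstacle you spend most effort on is an artifact of the misreading.

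Second, the inversion step is invalid. Right-multiplying $\eH^t\mathfrak{F}=\mathfrak{F}\widetilde{\eH}$ by a right inverse only gives $\eH^t=\mathfrak{F}\widetilde{\eH}\mathfrak{F}^{-1}$. The relation $\mathfrak{F}^{-1}\eH^t=\widetilde{\eH}\mathfrak{F}^{-1}$ would require a left inverse of the non-injective $\mathfrak{F}$, and you observe yourself that it fails at site $0$. So the displayed chain is not a derivation. The paired identity $\langle\mathbb{D}_{\bm{\theta}}|\widetilde{\eH}\mathfrak{D}^{-1}=\langle\mathbb{D}_{\bm{\theta}}|\mathfrak{D}^{-1}\eH^t$ --- which, given the hidden duality, \emph{is} the corollary --- is left at ``I would \dots; if that fails \dots''.

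The gap closes in one line, with no boundary recomputation. Set $\mathcal{J}:=\mathcal{I}_0\otimes\mathrm{id}\otimes\mathcal{I}_{N+1}$, $E|v\rangle:=|0\rangle_0\otimes|v\rangle\otimes|0\rangle_{N+1}$ and $\langle G_{\bm{\theta}}|:=\sum_{\bm{\xi}\in\Omega}\prod_{\ell,a}(\theta^{\ell}_a)^{\xi^{\ell}_a}\langle\bm{\xi}|$, so that the row vector in the statement is $\langle G_{\bm{\theta}}|\mathfrak{D}^{-1}$. Then $\mathfrak{F}=\mathfrak{D}\mathcal{J}$ and $\langle\mathbb{D}_{\bm{\theta}}|=\langle G_{\bm{\theta}}|\mathcal{J}$. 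Also $\mathcal{J}E=\mathrm{id}$, because $\mathcal{I}_0|0\rangle=\mathcal{I}_{N+1}|0\rangle=1$. Multiplying $\eH^t\mathfrak{F}=\mathfrak{F}\widetilde{\eH}$ by $E$ on the right and by $\mathfrak{D}^{-1}$ on the left gives $\mathcal{J}\widetilde{\eH}E=\mathfrak{D}^{-1}\eH^t\mathfrak{D}$, hence
\[
\ecH\langle G_{\bm{\theta}}|\mathfrak{D}^{-1}=\ecH\langle\mathbb{D}_{\bm{\theta}}|E\mathfrak{D}^{-1}=\langle\mathbb{D}_{\bm{\theta}}|\widetilde{\eH}E\mathfrak{D}^{-1}=\langle G_{\bm{\theta}}|\mathcal{J}\widetilde{\eH}E\mathfrak{D}^{-1}=\langle G_{\bm{\theta}}|\mathfrak{D}^{-1}\eH^t .
\]

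Your plan A (reading \eqref{duality-sim1} backwards together with Lemma~\ref{Lemma-intertwining}) would also succeed. It evaluates both sides of $\mathcal{J}\widetilde{\eH}_{\text{left}}E=\mathfrak{D}_1^{-1}\eH^t_{\text{left}}\mathfrak{D}_1$ explicitly as $e^{-\rho_l\cdot\OA^{[1]}}\bigl(\psi(\OAD^{[1]}\OA^{[1]}+2s)-\psi(2s)\bigr)e^{\rho_l\cdot\OA^{[1]}}$, which is essentially what the paper does.

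The paper's proof runs as follows:
\begin{itemize}
\item It computes $\mathfrak{D}^{-1}\eH^t\mathfrak{D}$ directly; the bulk terms become $\eH_{\ell,\ell+1}$ and the boundary terms are as above.
\item It recognizes this as the Fock image of the abstract operator whose image under $\tau$ is $\ecH$.
\item It obtains $\ecH\langle G_{\bm{\theta}}|=\langle G_{\bm{\theta}}|\mathfrak{D}^{-1}\eH^t\mathfrak{D}$ from \eqref{intertw} and multiplies by $\mathfrak{D}^{-1}$.
\end{itemize}
Both arguments rest on the same identity $\mathcal{J}\widetilde{\eH}E=\mathfrak{D}^{-1}\eH^t\mathfrak{D}$. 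The composed-duality argument gets it abstractly from the two proven dualities. The paper's computation instead exhibits the transformed generator explicitly and shows the corollary is a change of representation of one abstract Hamiltonian.
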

\begin{proof}
Let $\mathfrak{D}= \prod_{\ell=1}^{L}\mathfrak{D}_{\ell}$, where $\mathfrak{D}_{\ell}$ is defined in~\eqref{duality-matrix-x}. Then, working as in the proof of Theorem~\ref{Thm-duality} and using~\eqref{good-relation-forFockGLN}, we have
 \begin{equation}\label{similarity-hidden}
 \begin{split}
     	\mathfrak{D}^{-1}\eH^t \mathfrak{D}=&\exp{\left(-\rho_{l}\OA^{[1]}
                  \right)}\left(\psi(\OAD^{[1]}\OA^{[1]}+2s)-\psi(2s)\right)
 		\exp{\left(\rho_{l}\OA^{[1]}
                  \right)}
         \\&+\sum_{\ell=1}^{\N-1}\eH_{\ell,\ell+1}
         \\&+
         \exp{\left(-\rho_{r}\OA^{[\N]}
                  \right)}\left(\psi(\OAD^{[\N]}\OA^{[\N]}+2s)-\psi(2s)\right)
 		\exp{\left(\rho_{r}\OA^{[\N]}
                  \right)}\,.
 \end{split}
 \end{equation}
Using the~\eqref{duality-sim1}, Lemma~\ref{Lemma-intertwining} and the duality relation~\eqref{duality-hidden-algebraic}, we have
\begin{equation}
    \begin{split}
        \left(\ecH \mathbb{D}(\cdot,\bm{\xi})\right)(\bm{\hidden})=&\sum_{\bm{\xi}'\in\widetilde{\Omega}}\langle \bm{\xi}'|\mathcal{D}^{-1}\eH^{t}\mathcal{D}|\bm{\xi}\rangle\mathbb{D}(\bm{\hidden},\bm{\xi}')\,,    \end{split}
\end{equation}
which implies 
\begin{equation}
    \begin{split}
          \ecH\left(\sum_{\bm{\xi}\in \widetilde{\Omega}}\langle\bm{\xi}|\mathbb{D}(\bm{\hidden},\bm{\xi})\mathcal{D}^{-1}\right)=&\left(\sum_{\bm{\xi}'\in\widetilde{\Omega}}\langle \bm{\xi}'|\mathcal{D}^{-1}\mathbb{D}(\bm{\hidden},\bm{\xi}')\right)\eH^{t}\,.
    \end{split}
\end{equation}
\end{proof}
\subsection{Duality for the heat conduction model}
First we introduce the algebraic form for the Hamiltonian of the boundary driven integrable heat conduction model. Consider the $gl(\M+1)$ representation on the polynomial space defined in~\eqref{bargen}, with algebra generators $\bm{\Heat}=(\heat_{1},\ldots,\heat_{\M})^{t}$ and $\bm{\partial}_{\heat}=(\partial_{\heat_1},\ldots,\partial_{\heat_\M})$.  Let $f:\mathbb{R}^{\M\times \N}\to \mathbb{R}$ be a polynomial function, we introduce the Hamiltonian operator 
    \begin{equation}\label{Heat-Hamiltonian}
	\ecdH f(\bm{\heat})=\ecdH_{\text{left}}f(\bm{\heat})+\sum_{\ell=1}^{\N-1}\ecdH_{\ell,\ell+1}f(\bm{\heat})+\ecdH_{\text{right}}f(\bm{\heat})\,,
\end{equation}
where the bulk part decomposes in
\begin{equation}
    \ecdH_{\ell,\ell+1}f(\bm{\heat})=\ecdH_{\ell,\ell+1}^{-}f(\bm{\heat})+\ecdH_{\ell,\ell+1}^{+}f(\bm{\heat})
\end{equation}
 \begin{equation}
  \begin{split}
	&\ecdH_{\ell,\ell+1}^{-}f(\bm{\heat})=\exp{\left(\bm{\partial}^{[\ell+1]}_{\heat}\bm{\Heat}^{[\ell]}\right)}
       \left(\psi(\bm{\Heat}^{[\ell]}\bm{\partial}^{[\ell]}_{\heat}+2s)-\psi(2s)\right)\exp{\left(-\bm{\partial}^{[\ell+1]}_{\heat}\bm{\Heat}^{[1]}\right)}f(\bm{\heat})\,,
      \end{split}
\end{equation}
and 
  \begin{equation}
  \begin{split}
	&\ecdH_{\ell,\ell+1}^{+}f(\bm{\heat})=\exp{\left(\bm{\partial}^{[\ell]}_{\heat}\bm{\Heat}^{[\ell+1]}\right)}
        \left(\psi(\bm{\Heat}^{[\ell+1]}\bm{\partial}^{[\ell+1]}_{\heat}+2s)-\psi(2s)\right)\exp{\left(-\bm{\partial}^{[\ell]}_{\heat}\bm{\Heat}^{[\ell+1]}\right)}f(\bm{\heat})
        \end{split}
\end{equation}
Moreover, the left and right boundary read, respectively,
\begin{equation}
\begin{split}
	&\ecdH_{\text{left}}f(\bm{\heat})
    \\=&
      \exp{\left(-    \left(2s+\bm{\Heat}^{[1]}\bm{\partial}^{[1]}_{\heat}
        \right)
    \rho_{l}\bm{\partial}^{[1]}_{\heat}\right)}
    		\left(\psi(\bm{\Heat}^{[1]}\bm{\partial}^{[1]}_{\heat}+2s)
        -\psi(2s)\right)
	    \exp{\left(    \left(2s+\bm{\Heat}^{[1]}\bm{\partial}^{[1]}_{\heat}
        \right)
    \rho_{l}\bm{\partial}^{[1]}_{\heat}\right)}f(\bm{\heat})\,,
	    \end{split}
\end{equation}
\begin{equation}
\begin{split}
	&\ecdH_{\text{right}}f(\bm{\heat})
    \\=&
      \exp{\left(-    \left(2s+\bm{\Heat}^{[\N]}\bm{\partial}^{[\N]}_{\heat}
        \right)
    \rho_{r}\bm{\partial}^{[\N]}_{\heat}\right)}
    		\left(\psi(\bm{\Heat}^{[\N]}\bm{\partial}^{[\N]}_{\heat}+2s)
        -\psi(2s)\right)
	    \exp{\left(    \left(2s+\bm{\Heat}^{[\N]}\bm{\partial}^{[\N]}_{\heat}
        \right)
    \rho_{r}\bm{\partial}^{[\N]}_{\heat}\right)}f(\bm{\heat})\,.
	    \end{split}
\end{equation}
Above, we denoted $\rho_{l}\bm{\partial}_{\heat}=\sum_{a=1}^{\M}\rho_{l,a}\partial_{\heat_{a}}$. 
\begin{lemma}
  Let $\ecdL$ be the generator~\eqref{heatGenerato} of the boundary driven integrable heat conduction model. Then, when acting on functions $f:\mathbb{R}_{+}^{\N\times \M}\to \mathbb{R}$, we have that 
   \begin{equation}
       \ecdH f(\bm{\heat})=-\ecdL f(\bm{\heat})\,.
   \end{equation}
\end{lemma}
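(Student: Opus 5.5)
The identity $\ecdH f=-\ecdL f$ is linear and local, so I will check it term by term, separately for the bulk and for the two boundaries. The main tools are the shift and dilation formulas \eqref{shift-formulas} and the integral representations \eqref{eq:intrp} of the digamma function and of the logarithm.

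\textbf{Bulk terms.} For $\ecdH^{+}_{\ell,\ell+1}$ I follow the computation in the Corollary on the dual integral form. The operator $\exp(-\bm{\partial}^{[\ell]}_{\heat}\bm{\Heat}^{[\ell+1]})$ acts on $f$ as the shift $f(\heat^{\ell}-\heat^{\ell+1},\heat^{\ell+1})$. The digamma factor is written via \eqref{eq:intrp} as $\int_0^1\frac{\alpha^{2s-1}}{1-\alpha}(1-\alpha^{\bm{\Heat}^{[\ell+1]}\bm{\partial}^{[\ell+1]}_\heat})\,d\alpha$, where the dilation $\alpha^{\heat\partial_\heat}$ rescales $\heat^{\ell+1}\to\alpha\heat^{\ell+1}$. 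Finally $\exp(\bm{\partial}^{[\ell]}_{\heat}\bm{\Heat}^{[\ell+1]})$ shifts $\heat^{\ell}\to\heat^{\ell}+\heat^{\ell+1}$. Composing the three steps gives $\int_0^1\frac{\alpha^{2s-1}}{1-\alpha}[f-f(\heat^\ell+(1-\alpha)\heat^{\ell+1},\alpha\heat^{\ell+1})]\,d\alpha=-\ecdL^+_{\ell,\ell+1}f$. The term $\ecdH^-$ is handled the same way with $\ell$ and $\ell+1$ exchanged. The operator ordering is fine because the shifts and dilations act on different variables.

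\textbf{Boundary terms.} This is the main obstacle, since the conjugating exponent $X=(2s+\heat\cdot\partial_\heat)\,\rho\cdot\partial_\heat$ mixes dilation and translation. I will first compute the flow $e^{tX}f$ directly. The vector field $X$ is not first order, so I will instead act on monomials, or better on exponentials $e^{\lambda\cdot\heat}$, which span a dense class. Set $E=\heat\cdot\partial_\heat$ and $D=\rho\cdot\partial_\heat$. Using $[E,D]=-D$, we get $X=D(2s+E-1)$ after ordering, and so
\[
e^{-X}g(E)e^{X}=\sum_k \frac{(-1)^k}{k!}\operatorname{ad}_X^k\, g(E).
\]
The cleaner route is the analogue of Lemma~\ref{lem:b1}. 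In the basis $\heat^{m}$, $D$ lowers the degree, and $e^{X}$ acts with coefficients $\Gamma(2s+|n|)/\Gamma(2s+|m|)$ times binomial factors, exactly as $\exp(\sum_a\rho_aJ_{a,0})$ does in the Fock picture, transported through $\dualrep$. So I will identify $X$ with $\dualrep(\sum_a\rho_a J_{a,0})$ up to sign, using \eqref{bargen}; its lower-left block already contains $\bm{\Heat}$ terms. I will then invoke the intertwining relation \eqref{intertw-dual} together with Lemma~\ref{lem:b1}. This gives the matrix elements of $\ecdH_{\rm left}$ on monomials as the $Q$-conjugated transpose of the matrix of $\mathcal{O}$: diagonal entries $\psi(|m|+2s)-\psi(2s)$, and entries $-\Gamma(|n|-|m|)\prod_a\rho_a^{n_a-m_a}/(n_a-m_a)!$ times factorial ratios.

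\textbf{Matching with the generator.} I will compute $\ecdL_{\rm left}$ on $\heat^m$ directly. The out part gives $\int_0^1\frac{\alpha^{2s-1}}{1-\alpha}(\alpha^{|m|}-1)\,d\alpha=-(\psi(|m|+2s)-\psi(2s))$, which matches the diagonal. For the in part, expanding $(\heat+\alpha\rho)^m$ multinomially and using $\int_0^\infty e^{-\alpha}\alpha^{|k|-1}\,d\alpha=\Gamma(|k|)$ gives $\sum_{0<k\le m}\Gamma(|k|)\prod_a\binom{m_a}{k_a}\rho_a^{k_a}\heat^{m-k}$. This matches the transposed, $Q$-conjugated off-diagonal entries, since $\binom{m_a}{k_a}$ is precisely the ratio $m_a!/((m_a-k_a)!\,k_a!)$ produced by $Q$. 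The right boundary is identical. Extension beyond polynomials is then by density or by taking this as the definition, in line with the domain remark.
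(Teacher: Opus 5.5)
Your argument is correct. For the bulk it is the same computation as the paper's: the shift/dilation formulas \eqref{shift-formulas} plus the integral representation \eqref{eq:intrp} of the digamma function.

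For the boundary you take a genuinely different route. The paper asserts, without derivation, the operator identity
\begin{equation*}
\ecdH_{\text{left}}=\psi(E+2s)-\psi(2s)+\log\Bigl(1-\textstyle\sum_{a}\rho_{l,a}\partial_{\heat_a}\Bigr),\qquad E=\textstyle\sum_b \heat_b\partial_{\heat_b}.
\end{equation*}
It then reads off the out- and in-parts from the digamma integral and a Frullani-type integral for the logarithm. You instead compute the matrix of $\ecdH_{\text{left}}$ on monomials by recycling Lemma~\ref{lem:b1}, and evaluate $\ecdL_{\text{left}}$ on monomials directly. Your off-diagonal entries $-\Gamma(|k|)\prod_a\binom{m_a}{k_a}\rho_{l,a}^{k_a}$ are precisely the monomial matrix of $\log(1-\sum_a\rho_{l,a}\partial_{\heat_a})$. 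So your route actually proves the identity the paper takes for granted, with everything reduced to finite sums on polynomials. The price is dependence on the summation in Lemma~\ref{lem:b1}; the paper's operator form gives the integral representation in one line.

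One correction to how you justify the transport: do not pass through $\dualrep$ and \eqref{intertw-dual}.
\begin{itemize}
\item Taken literally, \eqref{intertw-dual} already fails for $\mathcal{O}=\oad_a$. Its right-hand side gives $+m_a\,\heat^m/\heat_a$, while $\dualrep(\oad_a)\heat^m=-m_a\,\heat^m/\heat_a$.
\item $\dualrep$ sends $\psi(\OAD\OA+2s)$ to $\psi(2s-\M-E)$, not to $\psi(2s+E)$.
\item The identification $X=-\dualrep(\sum_a\rho_aJ_{a,0})$ holds only at the shifted label $\mu_2-\mu_1=\M+1-2s$.
\end{itemize}
What you need is the direct observation you also make. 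On monomials, $E$ and $X$ have matrices $Q^{-1}\pi(\OAD\OA)^tQ$ and $Q^{-1}\pi\bigl(-\sum_a\rho_aJ_{a,0}\bigr)^tQ$ at the \emph{same} $s$. This follows from \eqref{mp1right}, or from \eqref{transpositoin-oscillator} together with $-J_{a,0}=\oad_a(2s+\OAD\OA)$. Because $A\mapsto Q^{-1}A^tQ$ reverses products, $e^{-X}\bigl(\psi(E+2s)-\psi(2s)\bigr)e^{X}$ then has matrix $Q^{-1}\pi(\mathcal{O})^tQ$. Here $\mathcal{O}$ is exactly the operator of Lemma~\ref{lem:b1}, which gives the entries you claim.
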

\begin{proof}
    We work in complete analogy with \cite{Franceschini:2022vmr}. Using the integral representation, it is straightforward to verify that in the bulk \begin{equation}
        \ecdH_{\ell,\ell+1}f(\heat^{\ell},\heat^{\ell+1})=-\ecdL_{\ell,\ell+1}f(\heat^{\ell},\heat^{\ell+1})\,.
    \end{equation}    On the left boundary we have that
		\begin{equation}
			\ecdH_\text{left}f(\heat^{1})=\left(\psi\left(\sum_{a=1}^{\M}\heat_{a}\partial_{\heat_{a}}+2s\right)-\psi(2s)\right)f(\heat^{1})+\log \left(1-\sum_{a=1}^\M \rho_{l,a}\partial_{\heat_{a}}\right)f(\heat^{1})\,.
		\end{equation}
												The integral representation then becomes
		\begin{equation}\label{integral-reps-leftHeat}
			\ecdH_\text{left}f(\heat^{1})=\int_0^1 \frac{\alpha^{2s-1}}{1-\alpha}(f(\heat^{1})-f(\alpha \heat^{1})) dt+\int_{0}^\infty\frac{d\alpha}{\alpha}e^{-\alpha}\left(f(\heat^{1})- f(\heat^{1}+\alpha\rho_{l})\right)\,.
		\end{equation}
		where we used
		\begin{equation}
			\left[ \psi(\bm{\heat}\bm{\partial}_{\heat}+2s)-\psi(2s)\right]f(\heat)=\int_0^1 d\alpha \frac{\alpha^{2s-1}}{1-\alpha}(f(\heat)-f(\alpha \heat)) 
		\end{equation}
		and
		\begin{equation}\label{log-Mercator}
			\log \left(1-\sum_{a=1}^\M \rho_{l,a}\partial_{\heat_{a}}\right)f(\heat)=\int_{0}^\infty\frac{d\alpha}{\alpha}e^{-\alpha}\left(f(\heat)- f(\heat+\alpha{\rho}_{l})\right)
		\end{equation}
		Equation~\eqref{log-Mercator} follows from the Taylor series expansion.
								 The equality on the right boundary can be proven similarly.
\end{proof}
\begin{proposition}\label{Proposition_ABS-Heat_Duality}
	The boundary driven integrable heat conduction model $(\bm{\heat}(t))_{t\geq 0}$ is dual to the absorbing harmonic process $(\bm{\xi}(t))_{t\geq 0}$ with the following duality function:
	\begin{equation}
		\mathscr{D}(\bm{\heat},\bm{\xi})=\left(\prod_{a=1}^{\M}\rho_{l,a}^{\xi_{a}^{0}}\right)\left(\prod_{\ell=1}^{\N}\frac{\Gamma(2s)}{\Gamma(2s+|\xi^{\ell}|)}\prod_{a=1}^{\M}(\heat_{a}^{\ell})^{\xi_{a}^{\ell}}\right)\left(\prod_{a=1}^{\M}\rho_{r,a}^{\xi_{a}^{\N+1}}\right)\,.
	\end{equation}
				\end{proposition}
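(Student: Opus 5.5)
We prove the generator identity
\[
\left(\ecdL\,\mathscr{D}(\cdot,\bm{\xi})\right)(\bm{\heat})=\left(\widetilde{\mathcal{L}}\,\mathscr{D}(\bm{\heat},\cdot)\right)(\bm{\xi}).
\]
By the preceding lemma and \eqref{gen-ham1}, this is equivalent to the Hamiltonian form
\[
\left(\ecdH\,\mathscr{D}(\cdot,\bm{\xi})\right)(\bm{\heat})=\sum_{\bm{\xi}'}\mathscr{D}(\bm{\heat},\bm{\xi}')\langle\bm{\xi}'|\widetilde{\eH}|\bm{\xi}\rangle .
\]
Our strategy is to reduce this to the hidden parameter duality of Proposition~\ref{Proposition_Hidden-ABS}, combined with the intertwining \eqref{intertw-dual}. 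The key observation is a factorisation of the duality function. On each bulk site,
\[
\frac{\Gamma(2s)}{\Gamma(2s+|\xi|)}\prod_a z_a^{\xi_a}=\sum_{\xi'}g_{\xi'}(z)\,\langle\xi'|\mathfrak{d}\,Q^{-1}|\xi\rangle ,
\]
where $\mathfrak{d}$ is the cheap duality matrix \eqref{cheap-duality} and $Q$ is given in \eqref{Q-matrix}; their product $\mathfrak{d}Q^{-1}=\Gamma(2s)/\Gamma(2s+|\xi|)$ is diagonal. The boundary factors $\rho_{l,r}^{\xi^{0,N+1}}$ are exactly the intertwiners $\mathcal{I}_0,\mathcal{I}_{N+1}$ used in Theorem~\ref{Thm-duality}.

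\textbf{Bulk.} By Remark~\ref{InterRe-dual}, $\dualrep(\abrOSCR^\pm)$ acts on monomials via $Q^{-1}\pi(\abrOSCR^\pm)^tQ$. Differentiating at the permutation point gives that $\ecdH_{\ell,\ell+1}$ acts on monomials $g_{m,n}$ through $(Q^{-1}\otimes Q^{-1})\eH_{\ell,\ell+1}^t(Q\otimes Q)$, with the normalisations \eqref{normali2} matching \eqref{normali}. To remove the transpose, we use the bulk reversibility relation \eqref{bulk-duality2}, $\eH^t\mathfrak{d}\mathfrak{d}=\mathfrak{d}\mathfrak{d}\eH$. Combining the two, $\ecdH_{\ell,\ell+1}$ applied to $\sum_{\xi'}g_{\xi'}\langle\xi'|\mathfrak{d}Q^{-1}$ gives $\sum_{\xi'}g_{\xi'}\langle\xi'|\mathfrak{d}Q^{-1}\eH_{\ell,\ell+1}$, which is the required bulk duality.

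\textbf{Boundary.} It suffices to check the left boundary. The dual Hamiltonian satisfies $\widetilde{\eH}_{\text{left}}=\eH^+_{0,1}$, so we apply the same bulk argument on the bond $(0,1)$ and then contract site $0$ with $\mathcal{I}_0$. On site $0$ we must use the hidden representation $\rep$, through Lemma~\ref{Lemma-intertwining}, replacing $\oa^{[0]}$ by $\partial_\rho$ and $\oad^{[0]}$ by $\rho$. We then need to show that the result equals $\ecdH_{\text{left}}$ acting on $z^{1}$, namely
\[
\mathcal{I}_0\,\bigl(\eH^+_{0,1}\text{ transported}\bigr)=e^{-(2s+\bm{\Heat}\bm{\partial}_\heat)\rho_l\bm{\partial}_\heat}\bigl(\psi(\bm{\Heat}\bm{\partial}_\heat+2s)-\psi(2s)\bigr)e^{(2s+\bm{\Heat}\bm{\partial}_\heat)\rho_l\bm{\partial}_\heat}.
\]
To verify this, write $\eH^+_{0,1}=e^{-\OAD^{[0]}\OA^{[1]}}[\psi(\cdots)]e^{\OAD^{[0]}\OA^{[1]}}$. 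Under $\mathcal{I}_0$, the factor $\OAD^{[0]}$ becomes $\rho_l$, by \eqref{rho-continuousReps} and \eqref{good-relation-forFockGLN}. Under the $\dualrep$-transport on site $1$, with conjugation by $\mathfrak{d}Q^{-1}$, the operator $\OA^{[1]}$ becomes the lowering operator $(2s+\bm{\Heat}\bm{\partial}_\heat)\bm{\partial}_\heat$ in the appropriate ordering. This is a one-site computation of $\mathfrak{d}Q^{-1}\,\oa^t\,Q\mathfrak{d}^{-1}$ on monomials, and it uses $\Gamma(2s+|\xi|+1)/\Gamma(2s+|\xi|)=2s+|\xi|$.

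\textbf{Main obstacle.} We expect the hardest step to be this last one: tracking the ordering and transposition conventions so that the operator $(2s+\bm{\Heat}\bm{\partial}_\heat)\rho_l\bm{\partial}_\heat$ appears exactly, and not with a shifted argument. We would first check it directly on a monomial $\prod_a z_a^{\xi_a}$ with $|\xi|=1$, then obtain the general case from the Lie-algebra identities of Lemma~\ref{Lemma-transposition-relation}. Finally, the previous lemma identifies $\ecdH$ with $-\ecdL$, which turns the Hamiltonian identity into the generator form of duality.
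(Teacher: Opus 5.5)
Your architecture is the same as the paper's. The duality function is the monomial basis times the diagonal matrix $C=\mathfrak{d}Q^{-1}=\Gamma(2s)/\Gamma(2s+\OAD\OA)$, with $\mathcal{I}_0,\mathcal{I}_{\N+1}$ supplying the end factors. In the bulk you use the monomial matrix $Q^{-1}\eH^tQ$ together with \eqref{bulk-duality2}. At the boundary, $\mathcal{I}_0$ turns $\eH^+_{0,1}$ into $H_{\rm eff}=e^{-\rho_l\OA}\bigl(\psi(\OAD\OA+2s)-\psi(2s)\bigr)e^{\rho_l\OA}$ on site $1$, and one needs the monomial matrix of $\ecdH_{\text{left}}$ to equal $CH_{\rm eff}C^{-1}$. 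The paper packages this computation as $\bar\eH=\mathcal{D}^{-1}\eH^t\mathcal{D}$, noting $\mathcal{D}_\ell^{-1}\mathfrak{D}_\ell=C$, and then composes with Theorem~\ref{Thm-duality}. You check directly against $\widetilde{\eH}$ and bypass $\eH_{\text{left}}$ and the $e^{J_{0,a}^t}$ factor; that difference is harmless.

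The gap is the transport to polynomials, which is exactly the step you flag. It must go through $\rep$ and \eqref{intertw}, i.e.\ $z^\xi\leftrightarrow|\xi\rangle$ with $\oa\mapsto\partial_z$ and $\oad\mapsto z$, as the paper does. Routing it through $\dualrep$ and \eqref{intertw-dual} breaks for three reasons:
\begin{itemize}
\item The map $\mathcal{O}\mapsto Q^{-1}\pi(\mathcal{O})^tQ$ is an anti-homomorphism, and the relation already fails by a sign for $\mathcal{O}=\oad_a$.
\item $\dualrep(\OAD\OA)=-(\M+\bm{Z}\bm{\partial}_z)$, so $\psi(\OAD\OA+2s)$ becomes $\psi(2s-\M-\bm{Z}\bm{\partial}_z)$. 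This shifts the spin to $\M+1-2s$, which is why that section has $\lambda_1-\lambda_2=\mu_1-\mu_2+\M+1$.
\item Consequently, ``normalisations \eqref{normali2} matching \eqref{normali}'' does not give the same $s$ on both sides.
\end{itemize}

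The one-site computation has a similar error. The correct conjugation is $C\oa_aC^{-1}=(2s+\OAD\OA)\oa_a$, with no transpose. Under $\rep$ this becomes $(2s+\bm{Z}\bm{\partial}_z)\partial_{z_a}$ and reproduces $\ecdH_{\text{left}}$ exactly. Your $\mathfrak{d}Q^{-1}\oa^tQ\mathfrak{d}^{-1}=C\oa^tC^{-1}$ is a raising operator instead.

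The boundary is also not ``the same bulk argument''. Relation \eqref{bulk-duality2} holds only for $\eH^++\eH^-$. For $\eH^+_{0,1}$ alone, transposing gives $\eH^-$ with the wrong diagonal ($h_s(|n|)$ versus $h_s(|m|)$), so the $\mathcal{I}_0$ contraction is genuinely needed.

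Your bulk identity is nevertheless true and can be verified directly from the integral form:
\[
\ecdH^-g_{m,n}=h_s(|m|)\,g_{m,n}-\sum_{0\neq k\le n}\prod_a\binom{n_a}{k_a}B(2s+|m|,|k|)\,g_{m+k,n-k}.
\]
This is the monomial matrix of $Q^{-1}(\eH^-)^tQ$ with the same $s$. Equivalently, $Q^{-1}\eH^tQ=\bar\eH$ by \eqref{transpositoin-oscillator}, and $\rep(\bar\eH_{\ell,\ell+1})=\ecdH_{\ell,\ell+1}$.
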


\begin{proof}
We aim to show that 
\begin{equation}
\left(\ecdL\mathscr{D}(\cdot,\bm{\xi})\right)(\bm{\heat})   = \left(\widetilde{\eL}\mathscr{D}(\bm{\heat},\cdot)\right)(\bm{\xi})\qquad \forall\, \bm{\heat}\in\mathbb{R}^{\M\times \N},\,\bm{\xi}\in \widetilde{\Omega}\,.
\end{equation}
Using~\eqref{gen-ham1}, we reformulate the duality above as a relation between Hamiltonian operators as
\begin{equation}
	\begin{split}
&\left(\ecdH\mathscr{D}(\cdot,\bm{\xi})\right)(\bm{\heat})=
\sum_{\tilde{\bm{\xi}}}\langle \tilde{\bm{\xi}}|\mathscr{D}(\bm{\heat},\tilde{\bm{\xi}})\widetilde{\eH}|\bm{\xi}\rangle\qquad \forall\, \bm{\heat}\in\mathbb{R}^{\M\times \N},\,\bm{\xi}\in \widetilde{\Omega}\,.
\end{split}
\end{equation}
Consider the bulk duality matrix and the cheap duality matrix defined in~\eqref{duality-matrix-x} and in~\eqref{cheap-duality} respectively and consider the diagonal matrix $Q$ defined in~\eqref{Q-matrix}. We introduce
\begin{equation}
    \mathcal{D}_{\ell}:= \mathfrak{D}_{\ell}\mathfrak{d}_{\ell}^{-1}Q_{\ell}=\exp{\left((J_{0,a}^{[\ell]})^{t}\right)}Q_{\ell}\,.
\end{equation}
The last equality follows from~\eqref{transposition-relation}. Then, taking into account the whole lattice, we define
 \begin{equation}\label{similarity-heat}
\mathcal{D}:=\prod_{\ell=1}^{\N}\mathcal{D}_{\ell}\,.
\end{equation} 
We further introduce the matrix $\bar\eH$ 
\begin{equation}\label{similarity-relation-bar}
	\bar{\eH}:=\mathcal{D}^{-1}\eH^{t} \mathcal{D}\,,
\end{equation}
which decomposes into bulk and boundary matrices as 
\begin{equation}\label{bar-hamiltonian}
	\bar{\eH}=\bar{\eH}_{\text{left}}+\sum_{\ell=1}^{\N-1}\bar{\eH}_{\ell,\ell+1}+\bar{\eH}_{\text{right}}\,.
\end{equation}
Using the $gl(M+1)$ symmetry of the bulk Hamiltonian and the transposition relation for the oscillators~\eqref{transpositoin-oscillator}, we have that, on the bulk, 
\begin{equation}
    \bar{\eH}_{\ell,\ell+1}=\bar{\eH}^{-}_{\ell,\ell+1}+\bar{\eH}^{+}_{\ell,\ell+1}
\end{equation}
where 
\begin{equation}\label{Rm-solutionDss}
\begin{split}
    \bar{\eH}^{-}=       e^{\OA^{[2]}\OAD^{[1]}}\left[\psi(\OAD^{[1]}\OA^{[1]}+2s)-\psi(2s)\right]e^{-\OA^{[2]}\OAD^{[1]}}
    \end{split}
\end{equation}
and 
		\begin{equation}\label{Rp-solutionDss}
\bar{\eH}^{+} = e^{\OA^{[1]}\OAD^{[2]}}\left[\psi(\OAD^{[2]}\OA^{[2]}+2s)-\psi(2s)\right]e^{-\OA^{[1]}\OAD^{[2]}}\,.
\end{equation}
On the boundaries one obtains
	\begin{equation}
	\begin{split}
	&\bar{\eH}_{\text{left}}=							\\&
        \exp{\left(-\sum_{a=1}^{\M}\rho_{l,a}  \left(2s+\OAD^{[1]}\OA^{[1]}\right)\oa_{a}^{[1]}\right)}
				\left(\psi(\OAD^{[1]}\OA^{[1]}+2s)-\psi(2s)\right)
							\exp{\left(\sum_{a=1}^{\M}\rho_{l,a}  \left(2s+\OAD^{[1]}\OA^{[1]}\right)\oa_{a}\right)}\,.
				 						 						 						 					\end{split}
\end{equation}
and 
	\begin{equation}
	\begin{split}
	&\bar{\eH}_{\text{right}}=\\
		&
        \exp{\left(-\sum_{a=1}^{\M}\rho_{l,a}  \left(2s+\OAD^{[\N]}\OA^{[\N]}\right)\oa_{a}^{[\N]}\right)}		\left(\psi(\OAD^{[\N]}\OA^{[\N]}+2s)-\psi(2s)\right)
		\exp{\left(\sum_{a=1}^{\M}\rho_{l,a}  \left(2s+\OAD^{[\N]}\OA^{[\N]}\right)\oa_{a}^{[\N]}\right)}\,.
	\end{split}
\end{equation}
Using the intertwining relations~\eqref{intertw} one has that 
\begin{equation}
	\left(\ecdH \mathbb{D}(\cdot,\bm{\xi})\right)(\bm{\heat})=\sum_{\bm{\xi}'}\langle \bm{\xi}'|\bar{\eH}|\bm{\xi}\rangle \mathbb{D}(\bm{\heat},\bm{\xi}) \,.
\end{equation}
where (c.f.\eqref{duality-function-hidden})
 \begin{equation}
	\mathbb{D}(\bm{\heat},\bm{\xi})=\left(\prod_{a=1}^{\M}(\rho_{l,a})^{\xi_{a}^{0}}\right)\left(\prod_{\ell=1}^{\N}\prod_{a=1}^{\M}(\heat_{a}^{\ell})^{\xi_{a}^{\ell}}\right)\left(\prod_{a=1}^{\M}(\rho_{r,a})^{\xi_{a}^{\N+1}}\right)\,.
\end{equation}
Using the similarity relations~\eqref{similarity-relation-bar}, equation~\eqref{duality-sim1} and Lemma~\eqref{Lemma-intertwining} the equation above may be rewritten as
\begin{equation}\label{HbarHtilde}
	\left(\ecdH\mathbb{D}(\cdot,\bm{\xi})\right)(\bm{\heat})=\sum_{\bm{\xi}'}\langle \bm{\xi}'|\mathbb{D}(\bm{\heat},\bm{\xi})\left(\prod_{\ell=1}^{\N}Q_{\ell}^{-1}\mathfrak{d}_{\ell}\right)\widetilde{\eH}\left(\prod_{\ell=1}^{\N}\mathfrak{d}_{\ell}^{-1}Q_{\ell}\right)|\bm{\xi}\rangle \,.
\end{equation}
Finally, for all $\ell\in \{1,\ldots,\N\}$, we have that 
\begin{equation}
	\mathfrak{d}_{\ell}^{-1}Q_{\ell}=\frac{\Gamma(2s+\OAD^{\ell}\OA^{\ell})}{\Gamma(2s)}\,,
    \end{equation} then we may re-write equation~\eqref{HbarHtilde} as 
\begin{equation}
	\begin{split}
&\left(\ecdH\mathscr{D}(\cdot,\bm{\xi})\right)(\bm{\heat})=
\sum_{\tilde{\bm{\xi}}}\langle \tilde{\bm{\xi}}|\mathscr{D}(\bm{\heat},\tilde{\bm{\xi}})\widetilde{\eH}|\bm{\xi}\rangle
\end{split}\,,
\end{equation}
concluding the proof. 
\end{proof}

\begin{corollary}
 The Hamiltonian $\ecdH$ of the boundary-driven heat conduction model and the Hamiltonian $\eH$ of the boundary driven harmonic model satisfy the following intertwining relation
	\begin{equation}
	 	\begin{split}
	 		\ecdH \left(\sum_{\bm{\xi}} \mathbb{D}(\bm{\heat},\bm{\xi})\langle \bm{\xi}|\mathcal{D}^{-1}\right) =\left(\sum_{\bm{\xi}} \mathbb{D}(\bm{\heat},\bm{\xi})\langle \bm{\xi}|\mathcal{D}^{-1}\right) \eH^t\qquad  \forall \,\bm{\heat}\in \mathbb{R}^{\N\times \M}.
	 	\end{split}
	 \end{equation}
    Above, on the left-hand-side $\ecdH$ acts on the variable $\bm{\heat}$. 

\end{corollary}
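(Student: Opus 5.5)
The plan is to copy the argument of the analogous corollary for the hidden parameter model, replacing $\mathfrak{D}$ by the matrix $\mathcal{D}=\prod_{\ell}\mathcal{D}_\ell$ of \eqref{similarity-heat}. The starting point is the relation established in the proof of Proposition~\ref{Proposition_ABS-Heat_Duality}, before the final rewriting in terms of $\widetilde{\eH}$:
\begin{equation*}
\left(\ecdH\,\mathbb{D}(\cdot,\bm{\xi})\right)(\bm{\heat})=\sum_{\bm{\xi}'}\langle \bm{\xi}'|\bar{\eH}|\bm{\xi}\rangle\, \mathbb{D}(\bm{\heat},\bm{\xi}')\,.
\end{equation*}
Here $\mathbb{D}$ is the monomial function with boundary weights, and $\bar{\eH}=\mathcal{D}^{-1}\eH^t\mathcal{D}$ by \eqref{similarity-relation-bar}. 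Both facts are taken from that proof: the intertwining relation \eqref{intertw-dual} handles the bulk terms, and Lemma~\ref{Lemma-intertwining} together with \eqref{duality-sim1} handles the boundary pieces at sites $0,\N+1$. One point needs care: the sum over $\bm{\xi}'$ is over the lattice sites $1,\ldots,\N$ only, with the boundary coordinates already absorbed into $\mathbb{D}$ through the intertwiners $\mathcal{I}_0,\mathcal{I}_{\N+1}$. I would make this explicit so that the matrix products below make sense on $\bigotimes_{\ell=1}^{\N}V_\ell$.

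Next I substitute $\bar{\eH}=\mathcal{D}^{-1}\eH^t\mathcal{D}$ into this relation. Multiplying on the right by $\langle\bm{\xi}|\mathcal{D}^{-1}$ and summing over $\bm{\xi}$, the left-hand side becomes $\ecdH\big(\sum_{\bm{\xi}}\mathbb{D}(\bm{\heat},\bm{\xi})\langle\bm{\xi}|\mathcal{D}^{-1}\big)$; this step uses linearity of $\ecdH$, which acts only on $\bm{\heat}$. The right-hand side becomes
\begin{equation*}
\sum_{\bm{\xi}'}\mathbb{D}(\bm{\heat},\bm{\xi}')\langle\bm{\xi}'|\mathcal{D}^{-1}\eH^t\mathcal{D}\,\mathcal{D}^{-1}=\Big(\sum_{\bm{\xi}'}\mathbb{D}(\bm{\heat},\bm{\xi}')\langle\bm{\xi}'|\mathcal{D}^{-1}\Big)\eH^t\,,
\end{equation*}
which is the claimed identity after renaming the summation index.

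The main obstacle is justifying these manipulations for infinite matrices. $\mathcal{D}_\ell=\exp((J_{0,a}^{[\ell]})^t)Q_\ell$ is triangular with respect to the partial order on particle numbers, and its inverse $Q_\ell^{-1}\exp(-(J_{0,a}^{[\ell]})^t)$ is also triangular. The plan is therefore to check that each matrix element of $\sum_{\bm{\xi}}\mathbb{D}\langle\bm{\xi}|\mathcal{D}^{-1}$ is a finite sum, hence a polynomial in $\bm{\heat}$, so that $\ecdH$ acts on it as on polynomials. I would also verify that products such as $\langle\bm{\xi}'|\mathcal{D}^{-1}\eH^t\mathcal{D}\mathcal{D}^{-1}$ can be reassociated; this holds because at fixed matrix entries only finitely many terms are nonzero, given the particle-number-lowering structure of $\exp(-(J_{0,a})^t)$ and the finite column support of $\eH^t$ in the lowering direction. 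If the boundary part of $\eH^t$ produces infinite sums through its injection terms, I would fall back on absolute convergence, using $|\beta|<1$ and the analytic-continuation argument already used in Lemma~\ref{lem:b2}.
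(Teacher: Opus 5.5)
At the formal level your argument is the paper's. The paper states the corollary as a direct consequence of Proposition~\ref{Proposition_ABS-Heat_Duality}: one multiplies $\big(\ecdH\,\mathbb{D}(\cdot,\bm{\xi})\big)(\bm{\heat})=\sum_{\bm{\xi}'}\langle\bm{\xi}'|\bar{\eH}|\bm{\xi}\rangle\,\mathbb{D}(\bm{\heat},\bm{\xi}')$ by $\langle\bm{\xi}|\mathcal{D}^{-1}$, sums over $\bm{\xi}$, and uses $\bar{\eH}=\mathcal{D}^{-1}\eH^t\mathcal{D}$, as in the hidden-parameter corollary. The paper does not discuss convergence. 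The problem is the justification you add, which rests on a false claim.

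For fixed $\bm{\xi}$, $\langle\bm{\xi}|\mathcal{D}^{-1}=\langle\bm{\xi}|Q^{-1}\exp(-\sum_a\oa_a^t)$ is indeed a finite combination of bras $\langle\bm{m}|$ with $\bm{m}\le\bm{\xi}$. However, the $\bm{m}$-component of $\sum_{\bm{\xi}}\mathbb{D}(\bm{\heat},\bm{\xi})\langle\bm{\xi}|\mathcal{D}^{-1}$ collects every $\bm{\xi}\ge\bm{m}$. Since $\langle\bm{\xi}|\mathcal{D}^{-1}|\bm{m}\rangle=\prod_{\ell,a}(-1)^{\xi^\ell_a-m^\ell_a}/\big((\xi^\ell_a-m^\ell_a)!\,m^\ell_a!\big)$, this component is $P(\bm{\heat},\bm{m})=\prod_{\ell,a}e^{-\heat^\ell_a}(\heat^\ell_a)^{m^\ell_a}/m^\ell_a!$. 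That is a product of Poisson weights, not a polynomial.

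So pulling $\ecdH$ out of the $\bm{\xi}$-sum is a limit interchange, not linearity, and it genuinely fails in part of the parameter range. Take $\M=\N=1$ and $\bm{m}=0$. The injection part $\log(1-\rho_l\partial_\heat)$ of $\ecdH_{\text{left}}$ contributes $\sum_{\xi}\frac{(-1)^{\xi}}{\xi!}\log(1-\rho_l\partial_\heat)\heat^{\xi}$. Its terms grow like $\rho_l^{\xi}/\xi$, so the series diverges for $\rho_l>1$, whereas $\log(1-\rho_l\partial_\heat)e^{-\heat}=\log(1+\rho_l)e^{-\heat}$ is finite for every $\rho_l>0$.

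Your second finiteness claim also fails. $\mathcal{D}|\bm{\xi}\rangle=\exp(\sum_a\oa_a^t)Q|\bm{\xi}\rangle$ has infinitely many components, and the injection terms make entries of $\mathcal{D}^{-1}\eH^t\mathcal{D}$ infinite series. For example, $\langle 0|\mathcal{D}^{-1}\eH_{\text{left}}^t\mathcal{D}|0\rangle=\sum_{m}\langle m|\eH_{\text{left}}|0\rangle$, which vanishes only by \eqref{log-rateBoundary}. Absolute convergence of the reassociation $(\mathcal{D}^{-1}\eH^t\mathcal{D})\mathcal{D}^{-1}=\mathcal{D}^{-1}\eH^t$ requires $|\beta|<\tfrac12$, i.e.\ $|\rho|<1$. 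Hence your fallback, ``absolute convergence using $|\beta|<1$'', does not cover the full range $\rho_{l,a},\rho_{r,a}>0$.

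There are two ways to close this.

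\textbf{Repair your route.} Run the argument for $|\rho_l|,|\rho_r|<1$, where all interchanges are absolutely convergent. Then extend to all $\rho>0$ by analyticity in $\rho$ of both sides of the component identity. The right side depends on $\rho$ only through $\beta_a=\rho_a/(1+|\rho|)$ and $-\log(1-|\beta|)=\log(1+|\rho|)$.

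\textbf{Verify directly on the explicit kernel.} This avoids the infinite-matrix manipulations altogether. Component-wise the claim reads $\ecdL P(\cdot,\bm{m})(\bm{\heat})=-\sum_{\bm{m}'}P(\bm{\heat},\bm{m}')\langle\bm{m}|\eH|\bm{m}'\rangle$, which is the forward equation of the harmonic process applied to Poisson mixtures.
\begin{itemize}
\item The bulk and out parts follow from $\varphi_s(k,m)=\int_0^1\frac{t^{2s-1}}{1-t}\prod_a\binom{m_a}{k_a}t^{m_a-k_a}(1-t)^{k_a}dt$ for $|k|>0$, together with Poisson thinning. Retention $t$ turns intensities $(\heat^\ell,\heat^{\ell+1})$ into $(t\heat^\ell,\heat^{\ell+1}+(1-t)\heat^\ell)$.
\item The in part follows from $\int_0^\infty\frac{e^{-\alpha}}{\alpha}e^{-\alpha|\rho|}\prod_a\frac{(\alpha\rho_a)^{k_a}}{k_a!}d\alpha=\Gamma(|k|)\prod_a\frac{\beta_a^{k_a}}{k_a!}$ for $|k|>0$. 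The $k=0$ term, paired with $-P(\bm{\heat},\bm{m})$, supplies the diagonal $\log(1-|\beta|)$.
\end{itemize}
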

\begin{proof}
    The proof is a direct consequence of Proposition~\ref{Proposition_ABS-Heat_Duality}. 
					     \end{proof}

\vspace{1cm}
\appendix
\noindent{\huge \textbf{Appendix}}
\section{Useful formulas for the Fock representation of the Lie algebra}
In this section we collect some formulas that are useful for calculations. 
\subsection{Action of the \textit{gl(M+1)} algebra generators}
Consider the $gl(\M+1)$ representation with algebra generators~\eqref{E-matrix}. For the sake of notation we denote $\eJ_{A,B}=\pi(J_{A,B})$. Using the action of the oscillators on the Fock state vector written in equation~\eqref{discrete-representation} we have that 
\begin{equation}
\begin{split}
	&\eJ_{0,0}|m\rangle=(\mu_{1}-|m|)|m\rangle\,, \\
    &\eJ_{a,b}|m\rangle =(\mu_2\delta_{ab}+m_{b})|m_{1},\ldots,m_{a}+1,\ldots,m_{b}-1,\ldots,m_{\M}\rangle\,, \\
	&\eJ_{a,0}|m\rangle =(\mu_1-\mu_2-|m|)|m_{1},\ldots,m_{a}+1,\ldots, m_{N}\rangle \,,	\\
	&\eJ_{0,a}|m\rangle = m_{a}|m_{1},\ldots,m_{a}-1,\ldots,m_{\M}\rangle\,, \\
          \end{split}
\end{equation}
where $a,b\in \{1,\ldots,\M\}$.

The action of the oscillators $\oad_{a}$ and $\oa_{a}$ on bra-vectors read
\begin{equation}\label{action-oscillator-bra}
\begin{split}
    &\langle m|\oa_{a}=(n_{a}+1)\langle m_{1},\ldots,m_{a}+1,\ldots,m_{\M} |\,,\\
    &\langle m|\oad_{a}=\langle m_{1},\ldots, m_{a}-1,\ldots, m_{\M}|\,,
    \end{split}
\end{equation}
Therefore, we compute the action of the $J_{A,B}$ on bra-vectors $\langle m|$, obtaining, for $a,b\in \{1,2,\ldots,\M\}$
\begin{equation}\label{right-action}
    \begin{split}
        &\langle m|\eJ_{0,0}=\left(\mu_1-|m|\right)\langle m|\,,\\
        &\langle m|\eJ_{a,b}=\left(\mu_2+m_{a}\right)\langle m|\\
        &\langle m|\eJ_{a,0}=(\mu_1-\mu_2-|m|+1)\langle m_{1},\ldots,m_{a}-1,\ldots,m_{\M}|\;,\\
        &\langle m|\eJ_{0,a}=(m_{a}+1)\langle m_{1},\ldots,m_{a}+1,\ldots,m_{\M}|\,,\\
        &\langle m|\eJ_{a,b}=(m_{b}+1)\langle m_{1},\ldots,m_{a}-1,\ldots,m_{b}+1,\ldots,m_{\M}|
    \end{split}
\end{equation}
																																										        Consider the diagonal matrix $Q$ defined in~\eqref{Q-matrix}, then the following relation hold for all $a\in \{1,\ldots,\M\}$:
        	\begin{equation}\label{transpositoin-oscillator}
	Q\oa_{a}Q^{-1}=\oad_{a}^{t}\,,\qquad  Q\oad_{a}Q^{-1}=\oa_{a}^{t}\,.
\end{equation}
Above $\oad^{t}$ and $\oa^{t}$ denote the transposed of the matrices $\oad$ and $\oa$ respectively. 
 									\subsection{Component of matrix-exponentials}
Let consider the algebra generators~\eqref{E-matrix} in the Fock representation. We report here some formulas concerning the components of some exponential of matrices.
We have that 
\begin{align}
	\exp{\left(\sum_{a=1}^{\M}\alpha_{a}J_{0,a}\right)}|m\rangle=&
	\sum_{k_{1}=0}^{m_{1}}\cdots\sum_{k_{\M}=0}^{m_{\M}}\mathbbm{1}_{\{|k|>0\}}\left[\prod_{a=1}^{\M}\alpha_{a}^{k_{a}}\binom{m_{a}}{k_{a}}\right]|m-k\rangle\,,
\end{align}
such that
\begin{align}\label{exponential-annhilation}
	\langle n|\exp{\left(\sum_{a=1}^{\M}\alpha_{a}J_{0,a}\right)}&|m\rangle=\prod_{a=1}^{\M}\alpha_{a}^{m_a-n_a}\binom{m_{a}}{m_{a}-n_a} \,.
\end{align}
Moreover, we have that 
\begin{align}
	\exp{\left(-\sum_{a=1}^{\M}\alpha_{a}J_{a,0}\right)}|m\rangle=
	\sum_{k_{1}=0}^{\infty}\cdots\sum_{k_{\M}=0}^{\infty}\left[\prod_{a=1}^{\M}\frac{\alpha_{a}^{k_{a}}}{k_{a}! }\right]\frac{\Gamma(|m|+|k|+\lambda_2-\lambda_1)}{\Gamma(|m|+\lambda_2-\lambda_1)}|m+k\rangle\,,
\end{align}
such that
\begin{align}\label{mp1right}
	\langle n|\exp{\left(-\sum_{a=1}^{\M}\alpha_{a}J_{a,0}\right)}|m\rangle=\left[\prod_{a=1}^{\M}\frac{\alpha_{a}^{n_{a}-m_a}}{(n_{a}-m_a)! }\right]\frac{\Gamma(|n|+\lambda_2-\lambda_1)}{\Gamma(|m|+\lambda_2-\lambda_1)}\,.
 \end{align}

\bibliography{refs}

@article{Bosnjak:2016oze,
    author = "Bosnjak, Gary and Mangazeev, Vladimir V.",
    title = "{Construction of $R$-matrices for symmetric tensor representations related to ${U}_{q}(\widehat{{sl}_n})$}",
    eprint = "1607.07968",
    archivePrefix = "arXiv",
    primaryClass = "math-ph",
    doi = "10.1088/1751-8113/49/49/495204",
    journal = "J. Phys. A",
    volume = "49",
    number = "49",
    pages = "495204",
    year = "2016"
}

@article{Lipatov:1993yb,
    author = "Lipatov, L. N.",
    title = "{Asymptotic behavior of multicolor QCD at high energies in connection with exactly solvable spin models}",
    eprint = "hep-th/9311037",
    archivePrefix = "arXiv",
    reportNumber = "DFPD-93-TH-70",
    journal = "JETP Lett.",
    volume = "59",
    pages = "596--599",
    year = "1994"
}

@article{8203134,
    author = "Corwin, Ivan",
    title = "{The q-Hahn Boson Process and q-Hahn TASEP}",
    eprint = "1401.3321",
    archivePrefix = "arXiv",
    primaryClass = "math.PR",
    doi = "10.1093/imrn/rnu094",
    journal = "Int. Math. Res. Not.",
    volume = "2015",
    number = "14",
    pages = "5577--5603",
    year = "2015"
}

@article{Faddeev:1994zg,
    author = "Faddeev, L. D. and Korchemsky, G. P.",
    title = "{High-energy QCD as a completely integrable model}",
    eprint = "hep-th/9404173",
    archivePrefix = "arXiv",
    reportNumber = "ITP-SB-94-14",
    doi = "10.1016/0370-2693(94)01363-H",
    journal = "Phys. Lett. B",
    volume = "342",
    pages = "311--322",
    year = "1995"
}

@article{Braun:1999te,
    author = "Braun, Vladimir M. and Derkachov, Sergey E. and Korchemsky, G. P. and Manashov, A. N.",
    title = "{Baryon distribution amplitudes in QCD}",
    eprint = "hep-ph/9902375",
    archivePrefix = "arXiv",
    reportNumber = "LPT-ORSAY-98-83, NORDITA-99-11-HE, SPBU-IP-99-04",
    doi = "10.1016/S0550-3213(99)00265-5",
    journal = "Nucl. Phys. B",
    volume = "553",
    pages = "355--426",
    year = "1999"
}

@incollection{schutz2000exactly,
    author = "Schütz, Gunter M.",
    title = "{Exactly Solvable Models for Many-Body Systems Far from Equilibrium}",
    editor = "Domb, C. and Lebowitz, J. L.",
    series = "Phase Transitions and Critical Phenomena",
    volume = "19",
    publisher = "Academic Press",
    pages = "1--251",
    year = "2001",
    issn = "1062-7901",
    doi = "10.1016/S1062-7901(01)80015-X"
}

@article{Frassek:2022fjs,
    author = "Frassek, Rouven",
    title = "{Integrable boundaries for the q-Hahn process}",
    eprint = "2205.10512",
    archivePrefix = "arXiv",
    primaryClass = "math-ph",
    doi = "10.1088/1751-8121/ac901b",
    journal = "J. Phys. A",
    volume = "55",
    number = "40",
    pages = "404008",
    year = "2022"
}

@article{2018arXiv180801866B,
    author = "Borodin, Alexei and Wheeler, Michael",
    title = "{Coloured stochastic vertex models and their spectral theory}",
    eprint = "1808.01866",
    archivePrefix = "arXiv",
    primaryClass = "math.PR",
    doi = "10.24033/ast.1180",
    journal = "Ast\'erisque",
    number = "437",
    year = "2022"
}

@article{crampe2016matrix,
    author = "Cramp\'e, Nicolas and Evans, M. R. and Mallick, K. and Ragoucy, E. and Vanicat, M.",
    title = "{Matrix product solution to a 2-species TASEP with open integrable boundaries}",
    eprint = "1606.08148",
    archivePrefix = "arXiv",
    doi = "10.1088/1751-8113/49/47/475001",
    journal = "J. Phys. A",
    volume = "49",
    number = "47",
    pages = "475001",
    year = "2016"
}

@article{crampe2016integrable,
    author = "Cramp\'e, Nicolas and Finn, C. and Ragoucy, E. and Vanicat, M.",
    title = "{Integrable boundary conditions for multi-species ASEP}",
    eprint = "1606.01018",
    archivePrefix = "arXiv",
    primaryClass = "math-ph",
    doi = "10.1088/1751-8113/49/37/375201",
    journal = "J. Phys. A",
    volume = "49",
    number = "37",
    pages = "375201",
    year = "2016"
}

@article{Mallick,
    author = "Golinelli, Olivier and Mallick, Kirone",
    title = "{The asymmetric simple exclusion process: an integrable model for non-equilibrium statistical mechanics}",
    eprint = "cond-mat/0611701",
    archivePrefix = "arXiv",
    primaryClass = "cond-mat.stat-mech",
    doi = "10.1088/0305-4470/39/41/S03",
    journal = "J. Phys. A",
    volume = "39",
    number = "41",
    pages = "12679",
    year = "2006"
}

@article{blythe2007nonequilibrium,
    author = "Blythe, R. A. and Evans, M. R.",
    title = "{Nonequilibrium steady states of matrix-product form: a solver's guide}",
    eprint = "0706.1678",
    archivePrefix = "arXiv",
    primaryClass = "cond-mat.stat-mech",
    doi = "10.1088/1751-8113/40/46/R01",
    journal = "J. Phys. A",
    volume = "40",
    number = "46",
    pages = "R333--R441",
    year = "2007"
}

@article{Derrida,
    author = "Derrida, Bernard",
    title = "{Non-equilibrium steady states: fluctuations and large deviations of the density and of the current}",
    eprint = "cond-mat/0703762",
    archivePrefix = "arXiv",
    primaryClass = "cond-mat.stat-mech",
    doi = "10.1088/1742-5468/2007/07/P07023",
    journal = "J. Stat. Mech.",
    volume = "2007",
    pages = "P07023",
    year = "2007"
}

@article{NJMacKay_1991,
    author = "MacKay, N. J.",
    title = "{Rational R-matrices in irreducible representations}",
    doi = "10.1088/0305-4470/24/17/018",
    journal = "J. Phys. A",
    volume = "24",
    number = "17",
    pages = "4017",
    year = "1991"
}

@article{Derkachov:2005hw,
    author = "Derkachov, Sergey E.",
    title = "{Factorization of the R-matrix. I.}",
    eprint = "math/0503396",
    archivePrefix = "arXiv",
    primaryClass = "math.QA",
    doi = "10.1007/s10958-007-0164-8",
    journal = "J. Math. Sci.",
    number = "143",
    pages = "2773--2790",
    year = "2007"
}

@article{Bytsko:2001uh,
    author = "Bytsko, Andrei G.",
    title = "{On integrable Hamiltonians for higher spin XXZ chain}",
    eprint = "hep-th/0112163",
    archivePrefix = "arXiv",
    doi = "10.1063/1.1591054",
    journal = "J. Math. Phys.",
    volume = "44",
    pages = "3698--3717",
    year = "2003"
}

@article{borodin2014duality,
    author = "Borodin, Alexei and Corwin, Ivan and Sasamoto, Tomohiro",
    title = "{From duality to determinants for q-TASEP and ASEP}",
    eprint = "1207.5035",
    archivePrefix = "arXiv",
    primaryClass = "math.PR",
    doi = "10.1214/13-AOP868",
    journal = "Ann. Probab.",
    volume = "42",
    number = "6",
    pages = "2314--2382",
    year = "2014"
}

@article{Mangazeev:2014gwa,
    author = "Mangazeev, Vladimir V.",
    title = "{On the Yang-Baxter equation for the six-vertex model}",
    eprint = "1401.6494",
    archivePrefix = "arXiv",
    primaryClass = "math-ph",
    doi = "10.1016/j.nuclphysb.2014.02.019",
    journal = "Nucl. Phys. B",
    volume = "882",
    pages = "70--96",
    year = "2014"
}

@article{2019NuPhB.94514665M,
    author = "Mangazeev, Vladimir V. and Lu, Xilin",
    title = "{Boundary matrices for the higher spin six vertex model}",
    eprint = "1903.00274",
    archivePrefix = "arXiv",
    primaryClass = "math-ph",
    doi = "10.1016/j.nuclphysb.2019.114665",
    journal = "Nucl. Phys. B",
    volume = "945",
    pages = "114665",
    year = "2019"
}

@article{Franceschini:2022vmr,
    author = "Franceschini, Chiara and Frassek, Rouven and Giardin\`a, Cristian",
    title = "{Integrable heat conduction model}",
    eprint = "2210.13627",
    archivePrefix = "arXiv",
    primaryClass = "cond-mat.stat-mech",
    doi = "10.1063/5.0138013",
    journal = "J. Math. Phys.",
    volume = "64",
    number = "4",
    pages = "043304",
    year = "2023"
}

@article{deGier:2005zz,
    author = "de Gier, Jan and Essler, Fabian H. L.",
    title = "{Bethe Ansatz Solution of the Asymmetric Exclusion Process with Open Boundaries}",
    eprint = "cond-mat/0508707",
    archivePrefix = "arXiv",
    primaryClass = "cond-mat.stat-mech",
    doi = "10.1103/PhysRevLett.95.240601",
    journal = "Phys. Rev. Lett.",
    volume = "95",
    pages = "240601",
    year = "2005"
}

@article{Srivastava1987NeumannExpansions,
    author = "Srivastava, H. M.",
    title = "{Neumann expansions for a certain class of generalised multiple hypergeometric series arising in physical and quantum chemical applications}",
    doi = "10.1088/0305-4470/20/4/020",
    journal = "J. Phys. A",
    volume = "20",
    number = "4",
    pages = "847--858",
    year = "1987"
}

@article{SPITZER1970246,
    author = "Spitzer, Frank",
    title = "{Interaction of Markov processes}",
    doi = "10.1016/0001-8708(70)90034-4",
    journal = "Adv. Math.",
    volume = "5",
    number = "2",
    pages = "246--290",
    year = "1970"
}

@article{2017CMaPh.352..773G,
    author = "Garbali, Alexandr and de Gier, Jan and Wheeler, Michael",
    title = "{A New Generalisation of Macdonald Polynomials}",
    eprint = "1605.07200",
    archivePrefix = "arXiv",
    primaryClass = "math-ph",
    doi = "10.1007/s00220-016-2818-1",
    journal = "Commun. Math. Phys.",
    volume = "352",
    number = "2",
    pages = "773--804",
    year = "2017"
}

@article{Frassek:2019vjt,
    author = "Frassek, Rouven and Giardin\`a, Cristian and Kurchan, Jorge",
    title = "{Non-compact quantum spin chains as integrable stochastic particle processes}",
    eprint = "1904.01048",
    archivePrefix = "arXiv",
    primaryClass = "math-ph",
    doi = "10.1007/s10955-019-02375-4",
    journal = "J. Stat. Phys.",
    volume = "180",
    pages = "135--171",
    year = "2020"
}

@phdthesis{Bosnjak,
    author = "Bo\v{s}njak, Gary",
    title = "{On solutions to the Yang--Baxter equation related to $sl(n)$}",
    eprint = "1412.3339",
    archivePrefix = "arXiv",
    primaryClass = "hep-th",
    school = "Australian National University",
    doi = "10.25911/5d70f1fd50878",
    year = "2017"
}

@article{Frassek:2019imp,
    author = "Frassek, Rouven",
    title = "{Eigenstates of triangularisable open XXX spin chains and closed-form solutions for the steady state of the open SSEP}",
    eprint = "1910.13163",
    archivePrefix = "arXiv",
    primaryClass = "math-ph",
    doi = "10.1088/1742-5468/ab7af3",
    journal = "J. Stat. Mech.",
    volume = "2005",
    pages = "053104",
    year = "2020"
}

@article{Frassek:2020omo,
    author = "Frassek, Rouven and Giardina, Cristian and Kurchan, Jorge",
    title = "{Duality and hidden equilibrium in transport models}",
    eprint = "2004.12796",
    archivePrefix = "arXiv",
    primaryClass = "cond-mat.stat-mech",
    doi = "10.21468/SciPostPhys.9.4.054",
    journal = "SciPost Phys.",
    volume = "9",
    pages = "054",
    year = "2020"
}

@article{toap,
    author = "Frassek, Rouven and Giardin\`a, Cristian and Redig, Frank and van Tol, Berend",
    title = "{work in progress}"
}

@article{2019LMaPh.109.2049K,
    author = "Kuniba, Atsuo and Okado, Masato and Yoneyama, Akihito",
    title = "{Matrix product solution to the reflection equation associated with a coideal subalgebra of $U_q(A^{(1)}_{n-1})$}",
    eprint = "1812.03767",
    archivePrefix = "arXiv",
    primaryClass = "math-ph",
    doi = "10.1007/s11005-019-01175-x",
    journal = "Lett. Math. Phys.",
    volume = "109",
    number = "9",
    pages = "2049--2067",
    year = "2019"
}

@article{2025JSP...192...21G,
    author = "Giardin\`a, Cristian and Redig, Frank and van Tol, Berend",
    title = "{Intertwining and propagation of mixtures for generalized KMP models and harmonic models}",
    eprint = "2406.01160",
    archivePrefix = "arXiv",
    primaryClass = "math.PR",
    doi = "10.1007/s10955-025-03393-1",
    journal = "J. Stat. Phys.",
    volume = "192",
    number = "2",
    pages = "21",
    year = "2025"
}

@incollection{molev2003yangians,
    author = "Molev, Alexander I.",
    title = "{Yangians and their applications}",
    booktitle = "Handbook of Algebra",
    volume = "3",
    publisher = "Elsevier",
    pages = "907--959",
    year = "2003",
    doi = "10.1016/S1570-7954(03)80058-9"
}

@inproceedings{Faddeev:1996iy,
    author = "Faddeev, L. D.",
    title = "{How algebraic Bethe ansatz works for integrable model}",
    eprint = "hep-th/9605187",
    archivePrefix = "arXiv",
    booktitle = "{Les Houches School of Physics: Astrophysical Sources of Gravitational Radiation}",
    pages = "149--219",
    year = "1996"
}

@article{2009JPhA...42p5004P,
    author = "Prolhac, S. and Evans, M. R. and Mallick, K.",
    title = "{The matrix product solution of the multispecies partially asymmetric exclusion process}",
    eprint = "0812.3293",
    archivePrefix = "arXiv",
    primaryClass = "cond-mat.stat-mech",
    doi = "10.1088/1751-8113/42/16/165004",
    journal = "J. Phys. A",
    volume = "42",
    number = "16",
    pages = "165004",
    year = "2009"
}

@article{Frassek:2019isa,
    author = "Frassek, Rouven",
    title = "{The non-compact XXZ spin chain as stochastic particle process}",
    eprint = "1904.02191",
    archivePrefix = "arXiv",
    primaryClass = "math-ph",
    doi = "10.1088/1751-8121/ab2fb1",
    journal = "J. Phys. A",
    volume = "52",
    number = "33",
    pages = "335202",
    year = "2019"
}

@article{Alcaraz:1992zc,
    author = "Alcaraz, Francisco C. and Droz, Michel and Henkel, Malte and Rittenberg, Vladimir",
    title = "{Reaction - diffusion processes, critical dynamics and quantum chains}",
    eprint = "hep-th/9302112",
    archivePrefix = "arXiv",
    reportNumber = "UGVA-DPT-1992-12-799",
    doi = "10.1006/aphy.1994.1026",
    journal = "Annals Phys.",
    volume = "230",
    pages = "250--302",
    year = "1994"
}

@article{Derkachov:2006fw,
    author = "Derkachov, Sergey E. and Manashov, Alexander N.",
    title = "{R-matrix and Baxter Q-operators for the noncompact SL(N,C) invariant spin chain}",
    eprint = "nlin/0612003",
    archivePrefix = "arXiv",
    primaryClass = "nlin.SI",
    doi = "10.3842/SIGMA.2006.084",
    journal = "SIGMA",
    volume = "2",
    pages = "084",
    year = "2006"
}

@article{Tsuboi:2018gfd,
    author = "Tsuboi, Zengo",
    title = "{On diagonal solutions of the reflection equation}",
    eprint = "1811.10407",
    archivePrefix = "arXiv",
    primaryClass = "math-ph",
    doi = "10.1088/1751-8121/ab0b6d",
    journal = "J. Phys. A",
    volume = "52",
    number = "15",
    pages = "155201",
    year = "2019"
}

@article{lauricella1893sulle,
    author = "Lauricella, Giuseppe",
    title = "{Sulle funzioni ipergeometriche a pi\`u variabili}",
    doi = "10.1007/BF03015097",
    journal = "Rend. Circ. Mat. Palermo",
    volume = "7",
    number = "Suppl 1",
    pages = "111--158",
    year = "1893"
}

@article{Derkachov:1999pz,
    author = "Derkachov, S. E.",
    title = "{Baxter's Q-operator for the homogeneous XXX spin chain}",
    eprint = "solv-int/9902015",
    archivePrefix = "arXiv",
    doi = "10.1088/0305-4470/32/28/309",
    journal = "J. Phys. A",
    volume = "32",
    pages = "5299--5316",
    year = "1999"
}

@article{Sklyanin:1988yz,
    author = "Sklyanin, E. K.",
    title = "{Boundary Conditions for Integrable Quantum Systems}",
    reportNumber = "E-11-86",
    doi = "10.1088/0305-4470/21/10/015",
    journal = "J. Phys. A",
    volume = "21",
    pages = "2375--2389",
    year = "1988"
}

@article{Belliard:2010nhl,
    author = "Belliard, S. and Ragoucy, E.",
    title = "{Nested Bethe ansatz for y(gl(n)) open spin chains with diagonal boundary conditions}",
    eprint = "1001.1314",
    archivePrefix = "arXiv",
    primaryClass = "math-ph",
    doi = "10.1134/S1547477111030058",
    journal = "Phys. Part. Nucl. Lett.",
    volume = "8",
    pages = "218--227",
    year = "2011"
}

@article{schutz1997duality,
    author = "Schütz, Gunter M.",
    title = "{Duality relations for asymmetric exclusion processes}",
    doi = "10.1023/A:1004943003729",
    journal = "J. Stat. Phys.",
    volume = "86",
    number = "5",
    pages = "1265--1287",
    year = "1997"
}

@article{sasamoto1998one,
    author = "Sasamoto, Tomohiro and Wadati, Miki",
    title = "{One-dimensional asymmetric diffusion model without exclusion}",
    doi = "10.1103/PhysRevE.58.4181",
    journal = "Phys. Rev. E",
    volume = "58",
    number = "4",
    pages = "4181--4190",
    year = "1998"
}

@article{kipnis1982heat,
    author = "Kipnis, Claude and Marchioro, Carlo and Presutti, Errico",
    title = "{Heat flow in an exactly solvable model}",
    doi = "10.1007/BF01012963",
    journal = "J. Stat. Phys.",
    volume = "27",
    number = "1",
    pages = "65--74",
    year = "1982"
}

@article{vanicat2017exact,
    author = "Vanicat, Matthieu",
    title = "{Exact solution to integrable open multi-species SSEP and macroscopic fluctuation theory}",
    eprint = "1610.08388",
    archivePrefix = "arXiv",
    primaryClass = "cond-mat.stat-mech",
    doi = "10.1007/s10955-016-1705-7",
    journal = "J. Stat. Phys.",
    volume = "166",
    number = "5",
    pages = "1129--1150",
    year = "2017"
}

@article{casini2024duality,
    author = "Casini, Francesco and Frassek, Rouven and Giardin\`a, Cristian",
    title = "{Duality for the multispecies stirring process with open boundaries}",
    eprint = "2312.15532",
    archivePrefix = "arXiv",
    doi = "10.1088/1751-8121/ad5c5f",
    journal = "J. Phys. A",
    volume = "57",
    number = "29",
    pages = "295001",
    year = "2024"
}

@book{demasi2006mathematical,
    author = "De Masi, Anna and Presutti, Errico",
    title = "{Mathematical Methods for Hydrodynamic Limits}",
    series = "Lecture Notes in Mathematics",
    volume = "1500",
    publisher = "Springer",
    year = "2006",
    isbn = "978-3-540-33983-5"
}

@book{Humphreys1972,
    author = "Humphreys, James E.",
    title = "{Introduction to Lie Algebras and Representation Theory}",
    series = "Graduate Texts in Mathematics",
    volume = "9",
    publisher = "Springer-Verlag",
    address = "New York",
    year = "1972",
    isbn = "978-0-387-90052-0",
    doi = "10.1007/978-1-4612-6398-2"
}

@article{crampe2014integrable,
    author = "Cramp\'e, Nicolas and Ragoucy, Eric and Vanicat, Matthieu",
    title = "{Integrable approach to simple exclusion processes with boundaries. Review and progress}",
    eprint = "1408.5357",
    archivePrefix = "arXiv",
    doi = "10.1088/1742-5468/2014/11/P11032",
    journal = "J. Stat. Mech.",
    volume = "2014",
    number = "11",
    pages = "P11032",
    year = "2014"
}

@article{2024JPhA...57x5201K,
    author = "Kolyaskin, Dmitry and Mangazeev, Vladimir V.",
    title = "{Triangular solutions to the reflection equation for ${U}_{q}(\widehat{{sl}_n})$}",
    eprint = "2402.05442",
    archivePrefix = "arXiv",
    primaryClass = "math-ph",
    doi = "10.1088/1751-8121/ad4d2f",
    journal = "J. Phys. A",
    volume = "57",
    number = "24",
    pages = "245201",
    year = "2024"
}

@article{Kuniba:2016fpi,
    author = "Kuniba, A. and Mangazeev, V. V. and Maruyama, S. and Okado, M.",
    title = "{Stochastic R matrix for $U_q(A_n^{(1)})$}",
    eprint = "1604.08304",
    archivePrefix = "arXiv",
    primaryClass = "math.QA",
    doi = "10.1016/j.nuclphysb.2016.09.016",
    journal = "Nucl. Phys. B",
    volume = "913",
    pages = "248--277",
    year = "2016"
}

@article{Gwa-Spohn,
    author = "Gwa, Leh-Hun and Spohn, Herbert",
    title = "{Six-vertex model, roughened surfaces, and an asymmetric spin Hamiltonian}",
    doi = "10.1103/PhysRevLett.68.725",
    journal = "Phys. Rev. Lett.",
    volume = "68",
    number = "6",
    pages = "725--728",
    year = "1992"
}

@article{exact-harmonic,
    author = "Frassek, Rouven and Giardin\`a, Cristian",
    title = "{Exact solution of an integrable non-equilibrium particle system}",
    eprint = "2107.01720",
    archivePrefix = "arXiv",
    primaryClass = "math-ph",
    doi = "10.1063/5.0086715",
    journal = "J. Math. Phys.",
    volume = "63",
    number = "10",
    pages = "103301",
    year = "2022"
}

@article{Spohn_1983,
    author = "Spohn, H.",
    title = "{Long range correlations for stochastic lattice gases in a non-equilibrium steady state}",
    doi = "10.1088/0305-4470/16/18/029",
    journal = "J. Phys. A",
    volume = "16",
    number = "18",
    pages = "4275",
    year = "1983"
}

@article{2024JSP...191..150D,
    author = "de Masi, Anna and Ferrari, Pablo A. and Gabrielli, Davide",
    title = "{Hidden Temperature in the KMP Model}",
    eprint = "2310.01672",
    archivePrefix = "arXiv",
    primaryClass = "math.PR",
    doi = "10.1007/s10955-024-03363-z",
    journal = "J. Stat. Phys.",
    volume = "191",
    number = "11",
    pages = "150",
    year = "2024"
}

@book{dualityBook,
    author = "Giardin\`a, Cristian and Redig, Frank",
    title = "{Duality for Markov processes: a Lie algebraic approach}",
    series = "Grundlehren der mathematischen Wissenschaften",
    volume = "365",
    publisher = "Springer",
    address = "Cham",
    year = "2025",
    isbn = "978-3-032-04098-5",
    doi = "10.1007/978-3-032-04099-2"
}

@article{Antonenko:2024bgw,
    author = "Antonenko, P. V. and Belousov, N. M. and Derkachov, S. \`E. and Khoroshkin, S. M.",
    title = "{Reflection operator and hypergeometry I: SL(2,R) spin chain}",
    eprint = "2406.19862",
    archivePrefix = "arXiv",
    primaryClass = "math-ph",
    journal = "Zap. Nauchn. Semin.",
    volume = "532",
    pages = "5--46",
    year = "2024"
}

@article{2023JPhA...56A4001S,
    author = "Schütz, G. M.",
    title = "{A reverse duality for the ASEP with open boundaries}",
    eprint = "2211.02844",
    archivePrefix = "arXiv",
    primaryClass = "math.PR",
    doi = "10.1088/1751-8121/acda6a",
    journal = "J. Phys. A",
    volume = "56",
    number = "27",
    pages = "274001",
    year = "2023"
}

@article{2016arXiv160105770B,
    author = "Borodin, Alexei and Petrov, Leonid",
    title = "{Higher spin six vertex model and symmetric rational functions}",
    eprint = "1601.05770",
    archivePrefix = "arXiv",
    primaryClass = "math.PR",
    doi = "10.1007/s00029-016-0301-7",
    journal = "Selecta Math.",
    volume = "24",
    pages = "751--874",
    year = "2018"
}

@article{Derkachov:2008aq,
    author = "Derkachov, Sergey E. and Manashov, Alexander N.",
    title = "{Factorization of R-matrix and Baxter Q-operators for generic sl(N) spin chains}",
    eprint = "0809.2050",
    archivePrefix = "arXiv",
    primaryClass = "nlin.SI",
    doi = "10.1088/1751-8113/42/7/075204",
    journal = "J. Phys. A",
    volume = "42",
    pages = "075204",
    year = "2009"
}

@article{2011arXiv1111.4408B,
    author = "Borodin, Alexei and Corwin, Ivan",
    title = "{Macdonald processes}",
    eprint = "1111.4408",
    archivePrefix = "arXiv",
    primaryClass = "math.PR",
    doi = "10.1007/s00440-013-0482-3",
    journal = "Probab. Theory Rel. Fields",
    volume = "158",
    number = "1-2",
    pages = "225--400",
    year = "2014"
}

@article{2016CMaPh.343..651C,
    author = "Corwin, Ivan and Petrov, Leonid",
    title = "{Stochastic Higher Spin Vertex Models on the Line}",
    eprint = "1502.07374",
    archivePrefix = "arXiv",
    primaryClass = "math.PR",
    doi = "10.1007/s00220-015-2479-5",
    journal = "Commun. Math. Phys.",
    volume = "343",
    number = "2",
    pages = "651--700",
    year = "2016"
}

@article{barraquand2016q,
    author = "Barraquand, Guillaume and Corwin, Ivan",
    title = "{The q-Hahn asymmetric exclusion process}",
    eprint = "1501.03445",
    archivePrefix = "arXiv",
    doi = "10.1214/15-AAP1148",
    journal = "Ann. Appl. Probab.",
    volume = "26",
    number = "4",
    pages = "2304--2356",
    year = "2016"
}

@article{povolotsky2013integrability,
    author = "Povolotsky, Alexander M.",
    title = "{On the integrability of zero-range chipping models with factorized steady states}",
    eprint = "1308.3250",
    archivePrefix = "arXiv",
    primaryClass = "math-ph",
    doi = "10.1088/1751-8113/46/46/465205",
    journal = "J. Phys. A",
    volume = "46",
    number = "46",
    pages = "465205",
    year = "2013"
}

@book{liggett1985interacting,
    author = "Liggett, Thomas M.",
    title = "{Interacting Particle Systems}",
    series = "Grundlehren der mathematischen Wissenschaften",
    volume = "276",
    publisher = "Springer",
    year = "1985",
    isbn = "978-0-387-96069-9"
}

@article{borodin2016stochastic,
    author = "Borodin, Alexei and Corwin, Ivan and Gorin, Vadim",
    title = "{Stochastic six-vertex model}",
    eprint = "1407.6729",
    archivePrefix = "arXiv",
    primaryClass = "math.PR",
    doi = "10.1215/00127094-3166843",
    journal = "Duke Math. J.",
    volume = "165",
    number = "3",
    pages = "563--624",
    year = "2016"
}

@article{2014JPhA...47C5202L,
    author = "Lazarescu, Alexandre and Pasquier, Vincent",
    title = "{Bethe Ansatz and Q-operator for the open ASEP}",
    eprint = "1403.6963",
    archivePrefix = "arXiv",
    primaryClass = "math-ph",
    doi = "10.1088/1751-8113/47/29/295202",
    journal = "J. Phys. A",
    volume = "47",
    number = "29",
    pages = "295202",
    year = "2014"
}
\bibliographystyle{utphys2}

\end{document}